\newcommand{\jewel}{\text{\faDiamond}}
\renewcommand{\diamond}{\diamondsuit}
\renewcommand{\epsilon}{\varepsilon}
\renewcommand{\phi}{\varphi}
\newcommand{\kk}{\mathbf k}
\newcommand{\cdeg}{\gamma}
\newcommand{\otimesmin}{\otimes_{\mathrm{min}}}
\newcommand{\otimesmax}{\otimes_{\mathrm{max}}}
\newtheorem{thm}{Theorem}[section]
\newtheorem*{thm*}{Theorem}
\newtheorem{lem}[thm]{Lemma}
\newtheorem{cor}[thm]{Corollary}
\newtheorem{ex}[thm]{Example}
\newtheorem{defi}[thm]{Definition}
\newtheorem{prop}[thm]{Proposition}
\newtheorem{remark}[thm]{Remark}
\newtheorem{question}[thm]{Question}
\begin{document}
\author{Andreas Bluhm}
\email{bluhm@math.ku.dk}
\address{QMATH, Department of Mathematical Sciences, University of Copenhagen, Universitetsparken 5, 2100 Copenhagen, Denmark}

\author{Anna Jen\v{c}ov\'a}
\email{jenca@mat.savba.sk}
\address{Mathematical Institute, Slovak Academy of Sciences, Bratislava, Slovakia}

\author{Ion Nechita}
\email{nechita@irsamc.ups-tlse.fr}
\address{Laboratoire de Physique Th\'eorique, Universit\'e de Toulouse, CNRS, UPS, France}

\title[Incompatibility in GPTs, generalized spectrahedra, and tensor norms]
{Incompatibility in general probabilistic theories, generalized spectrahedra, and tensor norms}

\date{\today}

\maketitle

\begin{abstract}
    In this work, we investigate measurement incompatibility in general probabilistic theories (GPTs). We show several equivalent characterizations of compatible measurements. The first is in terms of the positivity of associated maps. The second relates compatibility to the inclusion of certain generalized spectrahedra. For this, we extend the theory of free spectrahedra to ordered vector spaces. The third characterization connects the compatibility of dichotomic measurements to the ratio of tensor crossnorms of Banach spaces. We use these characterizations to study the amount of incompatibility present in different GPTs, i.e.~their compatibility regions. For centrally symmetric GPTs, we show that the compatibility degree is given as the ratio of the injective and the projective norm of the tensor product of associated Banach spaces. This allows us to completely characterize the compatibility regions of several GPTs, and to obtain optimal universal bounds on the compatibility degree in terms of the 1-summing constants of the associated Banach spaces. Moreover, we find new bounds on the maximal incompatibility present in more than three qubit measurements. 
\end{abstract}

\tableofcontents

\section{Introduction} \label{sec:introduction}
Many of the phenomena that distinguish quantum mechanics from a classical theory can be traced back to the incompatibility of measurements \cite{Heisenberg1927, Bohr1928}. Two measurements are incompatible if there does not exist a third one which implements them simultaneously. The position and momentum observables are typical examples of this behavior. This non-classicality present in a collection of measurements is necessary for many tasks in quantum information theory, because compatible measurements cannot exhibit non-locality in terms of violation of a Bell inequality \cite{Fine1982, Brunner2014} or steering \cite{uola2015one}. In this sense, incompatibility is a resource for quantum processing tasks similar to entanglement \cite{Heinosaari2015}.

It is thus natural to ask how much of this resource is available in a given situation. As noise can destroy incompatibility in the same way as it can destroy entanglement, \emph{noise robustness} is a natural way to quantify incompatibility \cite{Designolle2019}. While many works have investigated noise robustness for concrete measurements (see \cite{Heinosaari2016} for a review), one can ask this question more generally, leading to the \emph{compatibility regions} studied in \cite{busch2013comparing, Heinosaari2014, bluhm2018joint, bluhm2020compatibility}:

\medskip

\textit{How much incompatibility can be found for a quantum system of a certain dimension, a certain number of measurements and a certain number of measurement outcomes?}

\medskip

In fact, the incompatibility of measurements is not restricted to quantum mechanics, but is present in all non-classical theories \cite{Plavala2016}. These theories can conveniently be described in the framework of \emph{general probabilistic theories} (GPTs). We refer the reader to \cite{lami2018non} for a good introduction. An example of a GPT is a theory which has a state space isomorphic to a square (see e.g. \cite{busch2013comparing, Jencova2017, jencova2018incompatible} or Example \ref{ex:hypercube} for $n = 2$). Such a GPT can be used to model a theory containing PR boxes \cite{Popescu1994} which maximally violate the CHSH inequality. One of the motivation behind the study of GPTs is to characterize quantum mechanics among the probabilistic physical theories (see \cite{Barnum2006, Barnum2007,wolf2009measurements, Stevens2014} for some examples). Thus, it is interesting to study measurement incompatibility and incompatibility regions not only in quantum mechanics but in GPTs \cite{busch2013comparing, Gudder2013, Jencova2017, jencova2018incompatible}. 

The main theme of this work is to provide several equivalent characterizations of compatibility in GPTs, from different perspectives, and using different mathematical theories. We combine ideas from \cite{jencova2018incompatible} on incompatibility in GPTs with ideas from \cite{bluhm2018joint, bluhm2020compatibility} on (free) spectrahedra and we relate them to the theory of tensor norms in Banach spaces. We characterize compatibility of GPT measurements in five different ways (see Table \ref{tab:main-results}), unearthing powerful connections between the underlying mathematical theories. Our first main contribution is thus \emph{conceptual}, placing the measurement compatibility problem at the intersection of functional analysis, free convexity theory, and Banach space theory. We expect that this realization will help identifying other points of contact between these fields of mathematics and quantum information theory, allowing for progress on both fronts. Our second main contribution is \emph{applicative}: in the second part of the paper, we exploit these new connections and use results for tensor norms from Banach space theory to new bounds on the robustness of compatibility in GPTs, by relating ratios of different tensor norms to amount of noise needed to break all forms of incompatibility of dichotomic GPT measurements. From this perspective, our work pursues a similar approach to \cite{aubrun2018universal}, where similar ideas were used in the setting of XOR non-local games.

In Section \ref{sec:main-results}, we start with an overview of the main results of this work. Subsequently, we review results on convex cones, positive maps and Banach spaces in Section \ref{sec:prelim}. These are used in Section \ref{sec:extension} to study the extension of positive maps, which is important for the rest of this work. Section \ref{sec:gen-spectrahedra} extends the theory of free spectrahedra, which was developed in \cite{helton_matricial_2013, davidson2016dilations, helton2019dilations} among others, to ordered vector spaces. These first few sections therefore lay the technical groundwork for the rest of the paper. 

From Section \ref{sec:gpt} on, we focus on GPTs, whose basics we review in this section. In the next sections, we find several equivalent characterization of measurement compatibility. First, we make the connection between compatibility of measurements and  positivity of associated maps in Section \ref{sec:maps}. The findings in this section complement the connection to entanglement breaking maps in \cite{jencova2018incompatible}. Section \ref{sec:gen-spectra-and-comp} connects compatibility of measurements to the inclusion of certain generalized spectrahedra, thereby extending the results in \cite{bluhm2018joint,bluhm2020compatibility}. After this section, we focus primarily on dichotomic measurements. Section \ref{sec:tensor-norms} connects  compatibility of measurements to norms on their associated tensors. In particular, Section \ref{sec:witnesses} introduces a form of incompatibility witnesses and connects them to related notions in \cite{jencova2018incompatible, bluhm2020compatibility}. In Section \ref{sec:centrally-symmetric-tensor}, we focus on GPTs with symmetric state space, the so-called centrally symmetric GPTs. For this class, the connection to tensor norms has a particularly nice form. In particular, the noise robustness of incompatible measurements is connected to ratios of injective and projective tensor norms of certain Banach spaces. 

This connection to tensor norms allows us to find in Section \ref{sec:incl-const} concrete bounds on the compatibility regions of certain GPTs of interest, in several cases characterizing them completely; on the way, we expose a relation to $1$-summing norms.  We collect the results concerning compatibility regions in Section \ref{sec:results} and compare them to previously known bounds in the literature.

\section{Main results}\label{sec:main-results}

The aim of this work is to study measurement compatibility in general probabilistic theories (GPTs). The GPTs provide a
framework for the study of physical theories permitting probabilistic mixtures. Important examples of GPTs are classical
probability theory, quantum mechanics and the GPT of quantum channels.

Any GPT is built on basic operational notions
of \emph{states} (or preparation procedures) and \emph{effects} (or dichotomic measurements) of the
theory, which are identified with certain positive elements in a pair of dual ordered vector spaces,
$(V,V^+)$ and $(A,A^+)$, respectively. The space of effects, $(A,A^+)$, contains a distinguished order unit
$\mathds{1}$, corresponding to the trace in quantum mechanics. Elements of the set $K := \{v \in V \, : \, \mathds 1(v)
= 1 \}$ represent  {states} of the theory and elements $f \in A$ for which $0 \leq f \leq \mathds{1}$ correspond
to  {effects}. Composite systems are described by tensor products of the corresponding ordered vector spaces.
Note 
that the definition of such a tensor product is not unique and depends on the theory in question. Nevertheless, there is
 a minimal and maximal tensor product cone, and the cone describing composite systems in any theory must lie between
these two. In particular, states in  the minimal cone are called \emph{separable}, whereas all the other states are
\emph{entangled}.

 We will present the framework of GPTs in more details in Section \ref{sec:gpt}, together with
measurements and their compatibility. Throughout this work, we assume that the space $V$ (and of course also $A$) is finite
dimensional and we will study finite tuples of measurements with a finite set of outcomes.

Let $\mathbf k=(k_1,\dots,k_g)\in \mathbb{N}^g$ be a $g$-tuple determining the number of outcomes of each measurement. Consider $g$-tuple of measurements $f=(f^{(1)},\dots,f^{(g)})$ in $A$ with, respectively, $k_1, \ldots, k_g$ outcomes. This work is concerned with the question of the \emph{compatibility} of the given measurements, that is the existence of a joint measurement $h$ with outcome set $[k_1] \times \cdots [k_g]$ such that the $f^{(i)}$ are the marginals of $h$: 
$$\forall i \in [g], \, \forall j \in [k_i], \quad 
f^{(i)}_j=\sum_{m_1,\dots,m_{i-1},m_{i+1},\dots, m_g} h_{m_1,\dots,m_{i-1},j,m_{i+1},\dots,\dots,m_g}.$$
The main insight of this paper is a five-fold characterization of compatibility, from three different perspective. We summarize the different characterizations of compatibility in Table \ref{tab:main-results}, and we detail the points (a)-(e) below.

\begin{table}
    \centering
\bgroup
\def\arraystretch{1.5}
\centering
\begin{tabular}{|l|l|l|} 
\hline
\multicolumn{3}{|c|}{{\cellcolor[rgb]{0.6,0.6,0.6}}\textbf{Equivalent formulations of compatibility in GPTs}}                                                                                                                                                                                                                                                                                                            \\ 
\hline
\multicolumn{3}{|l|}{\textit{Input}: a $g$-tuple of measurements $(f^{(1)}, \ldots, f^{(g)}) \in A^{\mathbf k}$ $\iff$ a linear map $\Phi^{(f)}$ $\iff$ a tensor $\phi^{(f)}$}                                                                                                                                                                                                                                                  \\ 
\hline
\rowcolor[rgb]{0.8,0.8,0.8} \multicolumn{1}{|c|}{Formulation}                                                                 & \multicolumn{1}{c|}{Result}                                      & \multicolumn{1}{c|}{Keywords and Methods}                                                                                                                                                                                \\ 
\hline
\rowcolor[rgb]{0.792,0.969,0.89} (a) $\Phi^{(f)}$ has a pos. ext. $(\mathbb R^{\mathbf{k}},\mathbb R^{\mathbf{k}}_+)\to (A,A^+)$ & {\cellcolor[rgb]{0.792,0.969,0.89}}                              & {\cellcolor[rgb]{0.792,0.969,0.89}}                                                                                                                                                                          \\
\rowcolor[rgb]{0.792,0.969,0.89} (b) $\Phi^{(f)}$ is entanglement breaking                                                          & {\cellcolor[rgb]{0.792,0.969,0.89}}                              & {\cellcolor[rgb]{0.792,0.969,0.89}}                                                                                                                                                                          \\
\rowcolor[rgb]{0.792,0.969,0.89} (c) $\varphi^{(f)}\in \operatorname{Ran}(J_{\mathbf{k}}\otimes \mathrm{id})$     & \multirow{-3}{*}{{\cellcolor[rgb]{0.792,0.969,0.89}}Theorem \ref{thm:etb_ext}} & \multirow{-3}{*}{{\cellcolor[rgb]{0.792,0.969,0.89}}\begin{tabular}[c]{@{}>{\cellcolor[rgb]{0.792,0.969,0.89}}l@{}}functional analysis, map extension,\\entanglement breaking, polysimplex\end{tabular}}  \\ 
\hline
\rowcolor[rgb]{0.929,1,0.925} (d) $\mathcal D_{\mathrm{GPT}\jewel}(\mathbf k; V, V^+)\subseteq \mathcal D_f(\mathbf k; V, V^+)$     & Theorem \ref{thm:inclusion_compatible}                                                      & \begin{tabular}[c]{@{}>{}l@{}} free convexity, \\ generalized free spectrahedron \end{tabular}\\
\hline
\rowcolor[rgb]{0.965,0.875,0.922} (e) $\|\bar\varphi^{(f)}\|_c\le 1\quad\qquad$ \tiny{(dichotomic measurements)}                                                              & Theorem \ref{thm:effect-tensors}                                                      & Banach space, tensor norm                                                                                                                                                                                  \\
\hline
\end{tabular}
\egroup    
\vspace{.2cm}
\caption{An overview of the five-fold characterization of compatibility of GPT measurements.}
    \label{tab:main-results}
\end{table}

We first define 
  a certain subspace $E_{\mathbf{k}}$ in   $\mathbb{R}^{\mathbf{k}}:=\mathbb{R}^{k_1\cdots k_g}$ and endow it with the positive cone $E_{\mathbf
k}^+$ inherited from the simplicial cone $\mathbb{R}^{\mathbf k}_+$ in $\mathbb{R}^{\mathbf{k}}$. The dual space is
identified as  $E_{\mathbf k}^*\equiv E_{\mathbf k}$, with duality given by the standard inner product in
$\mathbb {R}^{\mathbf{k}}$. We observe that the dual cone is obtained as $J_{\mathbf k}(\mathbb{R}^{\mathbf k}_+)$, 
 where $J_{\mathbf k}:\mathbb R^{\mathbf k}\to E_{\mathbf k}$ is the orthogonal projection.  

To a given $g$-tuple of measurements $f=(f^{(1)},\dots,f^{(g)})$ with outcomes determined by $\mathbf k$,
we associate a linear map $\Phi^{(f)}: E_{\mathbf k}\to A$. By the standard identification of linear maps with 
elements in the tensor products, there is a related vector $\varphi^{(f)}\in E_{\mathbf{k}}\otimes
A$. It is then shown  that such a map  is positive if and only if it
corresponds to a $g$-tuple of measurements, equivalently, $\varphi^{(f)}\in (E_{\mathbf k}^+)^*\otimes_{\mathrm{max}} A^+$.
 
As our first main result, we  show  in Theorem \ref{thm:etb_ext} that compatibility of $f$ is characterized by either of the following equivalent conditions:
\begin{enumerate}
\item[(a)] $\Phi^{(f)}$ has a positive extension to a map $(\mathbb R^{\mathbf{k}},\mathbb R^{\mathbf{k}}_+)\to (A,A^+)$;
\item[(b)] $\Phi^{(f)}$ is \emph{entanglement breaking}, in very much the same sense as the entanglement breaking channels in 
quantum theory;
\item[(c)] $\varphi^{(f)}\in (J_{\mathbf{k}}\otimes \mathrm{id})(\mathbb R^{\mathbf k}_+\otimes A^+)$.

\end{enumerate}
These conditions are closely related to the results of \cite{jencova2018incompatible}.
Note that the ordered vector space $(E_{\mathbf{k}},E_{\mathbf k}^+)$ plays a universal role, since it only depends on the number of measurements and the numbers of their outcomes, but not on the GPT under study.

The next part of the paper is motivated by the effort to extend the results of \cite{bluhm2018joint} and
\cite{bluhm2020compatibility} to GPTs. For this we need some generalization of the free spectrahedra considered e.g.\ in
\cite{helton_matricial_2013, davidson2016dilations, helton2019dilations}. A \emph{generalized spectrahedron} as defined
in Section \ref{sec:gen-spectrahedra} is determined  by a tuple $(a_1, \ldots, a_g) \in M^g$, where $(M,M^+)$ is an ordered vector space. Given another ordered vector space $(L, L^+)$ and a tensor cone $C$ on the tensor product in $M \otimes L$, the generalized spectrahedron is
\begin{equation*}
    \mathcal D_{a}(L,C) := \left\{(v_1, \ldots, v_g) \in L^g \, : \, \sum_{i=1}^g a_i \otimes v_i  \in C\right\}.
\end{equation*}
For a usual free spectrahedron, $M^+$, $L^+$, and $C$ are the cones of positive semidefinite matrices of different
dimensions. Note that the generalized spectrahedron can be seen as a representation of the cone $C$ in the space 
 $L^g\cong \mathbb R^g\otimes V$ and if $a$ is a basis of $M$, it is isomorphic to $C$.
The generalized spectrahedron $\mathcal D_{\mathrm{GPT}\jewel}(\mathbf k; V, V^+)$, called the \emph{GPT jewel},
is given by $(M,M^+)=(E_{\mathbf k}, E_{\mathbf k}^+)$ and $a=w$ a basis of $E_{\mathbf k}$.  It is a universal object
 corresponding to  the matrix jewel defined in \cite{bluhm2020compatibility} for quantum mechanics.  
We also define the generalized spectrahedron $\mathcal D_f(\mathbf k; V, V^+)$ determined by the measurements under study. 

The main result of this part is the following equivalent condition for compatibility of $f$, see Theorem
\ref{thm:inclusion_compatible}:
\begin{enumerate}
\item[(d)] $\mathcal D_{\mathrm{GPT}\jewel}(\mathbf k; V, V^+)\subseteq \mathcal D_f(\mathbf k; V, V^+)$.
\end{enumerate}
This shows that the conditions (a) - (c) can be expressed by inclusion of generalized spectrahedra and gives a
connection between  the results of \cite{bluhm2018joint,bluhm2020compatibility} and \cite{jencova2018incompatible}.

Another characterization is obtained if we restrict to dichotomic measurements ($k_1=\cdots = k_g = 2$). We use the symmetry of the cone $(E_g^+)^*$, where $E_g:=E_{(2,2,\ldots, 2)}$, and the 
decomposition  $ E_g = \mathbb R 1_{2^g} \oplus \ell^g_\infty$ to relate compatibility of effects to \emph{reasonable crossnorms} on the tensor product $\ell^g_\infty
\otimes A$ (where we always endow $A$ with its order unit norm). 
Namely, for any $f=(f_1,\dots,f_g)\in A^g$ we put
\[
\bar \varphi^{(f)}:= \varphi^{(f)} - 2^{-g} 1_{2^g} \otimes \mathds{1}\cong \sum_i e_i \otimes (2f_i-\mathds{1}) 
\in \ell_\infty^g \otimes A.
\]
Then $g$-tuples of effects are determined by the condition
$\|\bar\varphi^{(f)}\|_\varepsilon\le 1$,
where $\|\cdot\|_\varepsilon$ is the injective crossnorm in $\ell^g_\infty\otimes A$. As another main result, we find a reasonable crossnorm 
 $\|\cdot\|_c$ in $\ell^g_\infty\otimes A$, such that compatibility of a $g$-tuple $f$ of dichotomic measurements is equivalent to 
\begin{enumerate}
\item[(e)] $\|\bar\varphi^{(f)}\|_c\le 1$,
\end{enumerate}
see Theorem \ref{thm:effect-tensors}.  These different viewpoints will be helpful in determining the amount of incompatibility
available in a GPT. We also show in the appendices that the conditions (a) and (e) can be checked by conic programs.

In case a tuple of measurements is not compatible, they can be made compatible if one adds enough noise to them. In this work, we will consider \emph{white noise}, i.e.\ the effects of the noisy measurements are of the form 
\begin{equation}\label{eq:white-noise1}
     \tilde f^{(i)}_j = s_i f_j^{(i)} + (1-s_i) \frac{\mathds{1}}{k_i}.
\end{equation}
Here, $k_i$ is the number of outcomes of the $i$-th measurement and $s_i \in [0,1]$ is a parameter quantifying the amount of noise added to the $i$-th measurement. In other words, the device corresponding to the noisy measurement $\tilde f^{(i)}$ carries out the measurement $f^{(i)}$ with probability $s_i$. With probability $(1-s_i)$, the device outputs a random number in $\{1, \ldots, k_i\}$ where each outcome has equal probability.
The minimal amount of noise such that any collection of $g$ measurements with $k_i$ outcomes in the $i$-th measurement is compatible, is a measure of the amount of incompatibility available in the GPT. Therefore, we are interested in the set
$\Gamma(f)\subseteq [0,1]^g$  of $s_1,\dots,s_g$ such that the noisy measurements $\tilde f^{(i)}$ in Equation
\eqref{eq:white-noise1} are compatible. The set of $s\in [0,1]^g$ such that this is true for 
 any collection of $g$ measurements with outcome sets given by $\mathbf{k}$ will be called the
 \emph{compatibility region} for the GPT and denoted by $\Gamma(\mathbf k; V, V^+):=\bigcap_f \Gamma(f)$. 

The largest $s\in[0,1]$ such that $(s, \ldots, s) \in \Gamma(f)$ is the compatibility degree of $f$, denoted by
$\gamma(f)$. This  corresponds to the situation when the same amount of noise is added to each measurement in $f$.
The value such that this is true for all collections of measurements with outcomes given by $\mathbf k$
is the  \emph{compatibility degree} of the GPT:
\[
\gamma(\mathbf k; V, V^+):=\min_f \gamma(f)=\max\{s\in [0,1],\ (s,\dots,s)\in
\Gamma(\mathbf{k}, V,V^+)\}.
\]
Consider the set of vectors $s \in [0,1]^g$ that can be used to scale the GPT jewel $\mathcal D_{\mathrm{GPT}\jewel}$
such that it is contained in $\mathcal D_f(\mathbf k; V, V^+)$ for any collection of $g$ measurements $f^{(i)}$ with
$k_i$ outcomes. This set is the set of \emph{inclusion constants} of $\mathcal D_{\mathrm{GPT}\jewel}$, denoted by $\Delta(\mathbf k; V, V^+)$. Following along the lines of \cite{bluhm2018joint,bluhm2020compatibility}, we prove that the compatibility region of a GPT is equal to the set of inclusion constants of the corresponding jewel, see Theorem \ref{thm:delta-is-gamma}:
\[
\Gamma(\mathbf k; V,V^+) = \Delta(\mathbf k; V,V^+).
\]

We can get more results for dichotomic measurements.  In this case we write $\Gamma(g;V,V^+)$ for the compatibility
region, similarly $\gamma(g; V,V^+)$ for the compatibility degree. The GPT jewel in this case is called the 
 \emph{GPT diamond} and denoted by $\mathcal D_{\mathrm{GPT}\diamond}(g;V,V^+)$. 
Using the compatibility characterization (e) we immediately obtain a direct relation of the compatibility measures
 to the norm $\|\cdot\|_c$, in particular,
\[
\gamma(f)=1/\|\bar\varphi^{(f)}\|_c.
\]
From this relation, we find in Theorem \ref{thm:inclusion_Gamma} that we have 
\begin{equation}\label{eq:main_compreg}
\Gamma(g;V,V^+)=\{s\in [0,1]^g,\ \|s.\varphi\|_c\le 1,\ \forall \varphi\in \ell^g_\infty\otimes A,\
\|\varphi\|_\epsilon \le 1\}.
\end{equation}
In particular, the compatibility degree satisfies
\begin{equation}\label{eq:main_comdeg}
\gamma(g; V,V^+)= 1/{\max_{\|\varphi\|_{\epsilon}\le 1} \|\varphi\|_{c}}\ge 1/{\rho(\ell^g_\infty,A)}\ge
1/{\min\{g,\dim(V)\}},
\end{equation}
where the quantity $\rho(X,Y)$ for a pair of Banach spaces $X$, $Y$ was introduced in \cite[Eq.~(15)]{aubrun2018universal}:
\[
\rho(X,Y)= \max_{z\in X\otimes Y}\frac{\|z\|_{X\otimes_\pi Y}}{\|z\|_{X\otimes_\epsilon Y}}.
\]

A characterization which is dual to the one above can be obtained using \emph{incompatibility witnesses}. In this work,
an incompatibility witness is a tuple $z =(z_1, \ldots, z_g)$ such that there is a state $z_0 \in K$ such that
$(z_0,z_1, \ldots, z_g) \in \mathcal D_{\mathrm{GPT}\diamond}(g;V,V^+)$. The set of incompatibility witnesses
 is denoted $\mathcal P_{\mathrm{GPT}\diamond}(g;V, V^+)$ and we show that it can be identified with the unit ball of
the dual norm to $\|\cdot\|_c$. For the relation of this definition of incompatibility witnesses to the definitions
in \cite{jencova2018incompatible} and \cite{bluhm2020compatibility}, see Propositions \ref{prop:incomp_witness} and
\ref{prop:incomp_witness_2}. An incompatibility witness $z$ certifies incompatibility for some collection of effects  if
 and only if we have  $\sum_i \|z_i\|_V$ strictly larger than one. 
Theorem \ref{thm:pi=gamma=delta} gives another characterization of the compatibility region as the  set of vectors $s
\in [0,1]$ which shrink all incompatibility witnesses such that they cannot detect incompatibility anymore: 
\begin{equation*}
\Gamma(g; V, V^+)=\left\{(s_1, \ldots, s_g) \in [0,1]^g: \sum_{i = 1}^g s_i \norm{z_i}_V \leq 1 ~\forall (z_1, \ldots, z_g) \in \mathcal P_{\mathrm{GPT} \diamond}(g;V, V^+) \right\}.
\end{equation*}

Finally, we consider a special class of GPTs for which our results have a simpler form: the \emph{centrally symmetric}
GPTs. The state spaces of these GPTs are the unit balls of some norm on a vector space $\bar V$ and we have $V\cong
\mathbb R\oplus \bar V$.  Important examples are
the Bloch ball describing 2-level systems in quantum mechanics, or the hypercubic GPT. To get a better understanding of
our results, let us consider compatibility for a pair of qubit effects. Qubits are described by a centrally symmetric GPT, 
 where the corresponding normed space is $\ell^3_2$. Each qubit  effect has the form 
\[
f=\frac12(\alpha I+\mathbf a\cdot\sigma),
\]
where $\sigma=(\sigma_1,\sigma_2,\sigma_3)$ are Pauli matrices and $\mathbf a\in \mathbb R^3$ is a vector such that 
$\|\mathbf a\|_2\le \alpha\le 2-\|\mathbf a\|_2$. The effects are called \emph{unbiased} if $\alpha =1$. For a pair of
unbiased effects $f_1=\frac12(I+\mathbf a\cdot \sigma)$ and $f_2=\frac12(I+\mathbf b\cdot \sigma)$, 
we have
\[
\|\bar \varphi^{(f)}\|_c=\left\| e_1\otimes \mathbf a+ e_2\otimes \mathbf b\right\|_{\ell^2_\infty\otimes_\pi \ell^3_2}
=\frac12 (\|\mathbf a+\mathbf b\|_2+\|\mathbf a-\mathbf b\|_2),
\]
(see Section \ref{sec:euclidean})
so that the condition (e) becomes the well known compatibility condition for two unbiased qubit effects obtained in 
\cite{busch1986unsharp,busch2008approximate}. It follows that we  extended this condition  to $g$-tuples of 
arbitrary effect in any GPT.

In Theorem \ref{thm:pi=piprime}, we show that for this class of GPTs,
we can replace in Equation \eqref{eq:main_compreg} the space $A$ by $\bar A$ and the norm $\|\cdot\|_c$ by the projective norm in $\ell^g_\infty\otimes_\pi \bar A$, where $\bar A$ is the dual Banach
space to $\bar V$: 
\begin{align*}
\Gamma(g;V,V^+) &= \{s\in [0,1]^g: \|s.\bar z\|_{\ell_1^g\otimes_\pi \bar V} \leq 1, \ \forall 
 \|\bar z\|_{\ell_1^g\otimes_\epsilon\bar V}\le 1 \}\\
&=\{s\in [0,1]^g: \|s.\bar \phi\|_{\ell^g_\infty\otimes_\pi\bar A}\le 1,\ \forall 
\|\bar\phi\|_{\ell_\infty^g\otimes_\epsilon\bar A}\le 1\}.
\end{align*}
In particular, the compatibility degree is
\[
\gamma(g;V,V^+)=1/\rho(\ell_\infty^g, \bar A)\ge 1/\min\{g,\dim(\bar A)\}.
\]
In this case, the lower bound is attained by the hypercubic GPT. Note that this bound is larger than the lower bound in Equation \eqref{eq:main_comdeg}. The tightness of the general lower bound $\max\{1/g,1/\dim(V)\}$ remains an open question.

We also put forward a connection to 1-summing norms:
$$\lim_{g \to \infty} \gamma(g;V,V^+) = \frac{1}{\pi_1(\bar V)},$$
where $\pi_1$ is the 1-summing norm of the Banach space $(\bar V, \|\cdot \|_{\bar V})$. This relation allows us to prove new lower bounds for the compatibility degree of qubits, which are described by the Bloch ball, corresponding to $\bar V = \ell_2^3$.

\section{Preliminaries} \label{sec:prelim}
  
 \subsection{Notation and basic definitions}
 In this paper, we will assume all vector spaces to be finite dimensional and over the real field. For brevity, we will often write $[n]:=\{1, \ldots, n\}$ for $n \in \mathbb N$. Let $\mathcal M_n(\mathbb C)$ denote the $n \times n$ matrices with complex entries. The real vector space of Hermitian matrices will be written as $\mathcal M^{\mathrm{sa}}_n(\mathbb C)$. For a vector space $L$ and an element $a \in L^g \cong L\times \ldots \times L$, $g \in \mathbb N$, we will write $a_i$, $i\in [g]$, for its components without specifying this in advance, if no confusion can arise. For a convex set $K \subseteq \mathbb R^g$, let us write
 \begin{equation*}
     K^\circ := \{x \in  \mathbb R^g: \langle x, k \rangle \leq 1~\forall k \in K\}.
 \end{equation*}
 for the polar set of $K$. Moreover, let us write for the direct sum of two convex sets $K_i \in \mathbb R^{g_i}$, $g_i \in \mathbb N$, $i \in [2]$, 
 \begin{equation*}
     K_1 \oplus K_2 := \operatorname{conv}\{(x,0); (0,y): x \in K_1,~y \in K_2\}\subseteq \mathbb{R}^{g_1+g_2}.
 \end{equation*}
 If $K_1$, $K_2$ are polytopes containing $0$, it holds that $K_1 \oplus K_2 = (K_1^\circ \times K_2^\circ)^\circ$ \cite[Lemma 2.4]{Bremner1997}. We will denote the probability simplex in $\mathbb R^k$ (i.e.~the probability distributions on $k$ symbols) by $\Delta_k$.

\subsection{Convex cones}
Let $L$ be a finite-dimensional real vector space. A subset $L^+ \subseteq L$ is a \emph{convex cone} if $\lambda x + \mu y \in L^+$ for all $x$, $y \in L^+$ and all $\lambda$, $\mu \in \mathbb R_+$. Often, we will drop ``convex'' and talk simply about ``cones''. To avoid pathologies, we will assume all cones to be non-empty. A cone will be called \emph{generating} if $L = L^+ - L^+$. Moreover, it is \emph{pointed} if $L^+ \cap (-L^+) = \{0\}$ (sometimes this is called salient instead). A \emph{proper} cone is a convex cone which is closed, pointed and generating. A \emph{base} of a cone $C$ is a convex set $K \subset C$ such that for every $x \in C$, there is a unique $\lambda \geq 0$ such that $x \in \lambda K$. The \emph{dual cone} of $L^+$ is the closed convex cone
\begin{equation*}
(L^+)^\ast := \{f \in L^\ast: f(x) \geq 0 ~\forall x \in L \},
\end{equation*}
where $L^\ast$ is the dual vector space of $L$. Two cones $L^+_1$ and $L^+_2$ living in vector spaces $L_1$ and $L_2$,
respectively, are \emph{isomorphic} if there is a linear bijection $\Theta: L_1 \to L_2$ such that $\Theta(L^+_1) =
L^+_2$. A cone $L^+$ is called \emph{simplicial} if it is isomorphic to $\mathbb R^{d}_+$, where $d = \dim(V^+)$. 

A \emph{preordered vector space} is a tuple $(L, L^+)$, where $L$ is a vector space and $L^+$ is a convex cone. If $L^+$ is additionally pointed, it is an \emph{ordered vector space}. If $V^+$ is a proper cone, we will call $(V, V^+)$ a \emph{proper ordered vector space}. We write $y \geq x$ for $x$, $y \in V$ to mean that $y - x \in L^+$. If $(V, V^+)$ is a preordered vector space, then $(V^*, (V^+)^\ast)$ is its dual preordered vector space. If $(V, V^+)$ is a proper ordered vector space, so is $(V^*, (V^+)^\ast)$.

An \emph{order unit} $\mathds 1 \in V^+$ is an element such that for every $v \in L$, there is a $\lambda > 0$ such that $v \in \lambda[-\mathds 1, \mathds 1]$. By \cite[Lemma 1.7]{Aliprantis2007}, $\mathds 1 \in V^+$ is an order unit if and only if $\mathds 1 \in \operatorname{int} V^+$.
\begin{lem}\label{lem:trivial-cone}
Let $(L, L^+)$ be a preordered vector space and let $\mathds 1$ be an order unit. Then $L^+ = L$ if and only if $- \mathds 1 \in L^+$.
\end{lem}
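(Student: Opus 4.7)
The plan is to handle the two implications separately, with the forward direction being essentially immediate and the reverse direction relying on a clean decomposition argument.

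For the forward direction, I would simply observe that if $L^+ = L$, then in particular $-\mathds{1} \in L = L^+$, so there is nothing to prove.

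For the reverse direction, the idea is that once $-\mathds{1}$ lies in the cone, the order unit lets us ``absorb'' every vector into the cone. More precisely, fix an arbitrary $v \in L$. By the order unit property, there exists $\lambda > 0$ with $-\lambda \mathds{1} \leq v \leq \lambda \mathds{1}$, which in particular gives $v + \lambda \mathds{1} \in L^+$. On the other hand, since $-\mathds{1} \in L^+$ by assumption and $L^+$ is closed under nonnegative scaling, we have $-\lambda \mathds{1} \in L^+$. Writing
\begin{equation*}
v = (v + \lambda \mathds{1}) + (-\lambda \mathds{1})
\end{equation*}
and using that $L^+$ is closed under addition, we conclude $v \in L^+$. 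As $v$ was arbitrary, $L^+ = L$.

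There is essentially no obstacle here; the only point worth flagging is that the argument relies exclusively on $L^+$ being a convex cone (closure under addition and nonnegative scaling) together with the defining property of an order unit, and does \emph{not} require $L^+$ to be pointed or closed, which is appropriate since the statement is formulated for preordered vector spaces.
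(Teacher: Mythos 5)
Your proof is correct and follows essentially the same route as the paper's: both use the order unit property to get $v + \lambda\mathds{1} \in L^+$ and then add $-\lambda\mathds{1} \in L^+$ to conclude $v \in L^+$. The only difference is that you spell out the decomposition and the cone axioms slightly more explicitly, which is fine.
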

\begin{proof}
One implication is clear. Thus, let $- \mathds 1 \in L^+$. Let $v \in L$. Since $\mathds 1$ is an order unit, there is a $\lambda > 0$ such that $v + \lambda \mathds 1 \in L^+$. Since $- \lambda \mathds 1 \in L^+$, it follows that $v \in L^+$. This proves the assertion since $v \in L$ was arbitrary.
\end{proof}

The following facts about cones and their duals will be useful:
\begin{thm}[Bipolar theorem, {\cite[Theorem 14.1]{Rockafellar1970}}]
Every non-empty closed convex cone $C \subseteq L$ satisfies $C^{\ast \ast} \cong C$, where we have identified $L \cong L^{\ast \ast}$.
\end{thm}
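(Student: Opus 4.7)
The plan is to prove the two inclusions $C \subseteq C^{**}$ and $C^{**} \subseteq C$ separately, using the canonical identification $L \cong L^{**}$ via the evaluation pairing. The inclusion $C \subseteq C^{**}$ is a definitional unfolding: for any $x \in C$ and any $f \in C^*$, we have $f(x) \geq 0$ by definition of the dual cone, which is exactly the condition for $x$ to belong to $C^{**}$.

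For the reverse inclusion $C^{**} \subseteq C$, I would argue by contrapositive via a Hahn--Banach separation. Suppose $x \notin C$. Since $C$ is a non-empty, closed, convex subset of the finite-dimensional space $L$ and $\{x\}$ is compact, convex, and disjoint from $C$, the standard separation theorem yields a functional $f \in L^*$ and a scalar $\alpha \in \mathbb{R}$ with $f(y) \geq \alpha$ for all $y \in C$ and $f(x) < \alpha$. Plugging in $y = 0 \in C$ (which lies in every non-empty closed cone, since $\lambda y \to 0$ as $\lambda \downarrow 0$ for any $y \in C$) yields $\alpha \leq 0$.

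The key step is then to upgrade $f$ to an element of $C^*$ using homogeneity: if $f(y_0) < 0$ for some $y_0 \in C$, then $\lambda y_0 \in C$ for all $\lambda > 0$, and $f(\lambda y_0) = \lambda f(y_0) \to -\infty$ contradicts the uniform lower bound $\alpha$ on $f|_C$. Hence $f \in C^*$, while $f(x) < \alpha \leq 0$ exhibits an element of $C^*$ that evaluates negatively at $x$, so $x \notin C^{**}$. This completes the contrapositive.

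I do not expect any genuine obstacle here; this is the textbook bipolar argument in finite dimensions. The only points to keep straight are (i) matching the duality pairing used to define $C^*$ and $C^{**}$ with the identification $L \cong L^{**}$, and (ii) the passage from an affine separating hyperplane to a homogeneous one, which is precisely the scaling step above and uses essentially that $C$ is a cone.
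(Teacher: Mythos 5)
Your proof is correct: the inclusion $C \subseteq C^{**}$ is definitional, and the separation-plus-homogeneity argument for $C^{**} \subseteq C$ is the standard finite-dimensional bipolar argument (note that $0 \in C$ already follows from the paper's definition of a convex cone by taking $\lambda = \mu = 0$, so closedness is not even needed for that particular step). The paper itself gives no proof, citing Rockafellar's Theorem 14.1, and your argument is exactly the textbook route that reference takes, so there is nothing to compare beyond noting agreement.
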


\begin{lem}[\cite{Mulansky1997}]
Let $C \subseteq L$ be a convex cone. The cone $C$ is generating if and only if $\operatorname{int}{C} \neq \emptyset$. If $\operatorname{cl}C$ is pointed, then $C^\ast$ is generating. If $C$ is generating, then $C^\ast$ is pointed. In particular, if $C$ is proper, then $C^\ast$ is proper.
\end{lem}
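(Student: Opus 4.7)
The plan is to prove the four assertions in the stated order. Each follows from elementary finite-dimensional convex analysis combined with the bipolar theorem recalled just above. First, for the equivalence $C$ generating $\iff \operatorname{int} C \neq \emptyset$, the direction $(\Leftarrow)$ is immediate: any interior point $c_0 \in C$ has an open neighborhood $c_0 + U \subseteq C$, whence $U = (c_0 + U) - c_0 \subseteq C - C$ and therefore $C - C = L$. For $(\Rightarrow)$, note that if $C - C = L$ then $C$ cannot be contained in any proper linear subspace of $L$, so $C$ must contain a basis $v_1, \ldots, v_d$ of $L$. The point $v_0 := v_1 + \cdots + v_d$ is then interior to $C$: any $w = \sum \alpha_i v_i \in L$ satisfies $v_0 + \epsilon w = \sum (1 + \epsilon \alpha_i) v_i \in C$ for $\epsilon > 0$ small enough that all coefficients remain positive.

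For the second assertion, I would first observe that $C^\ast = (\operatorname{cl} C)^\ast$, since the defining inequalities $f(x) \geq 0$ are preserved under limits; hence we may assume $C$ is closed and pointed. Suppose, for contradiction, that $C^\ast$ is not generating. Then $C^\ast - C^\ast$ is a proper linear subspace of $L^\ast$, so there exists a nonzero $x \in L \cong L^{\ast\ast}$ annihilating it; in particular, $f(x) = 0$ for every $f \in C^\ast$. This gives $f(\pm x) \geq 0$ for all $f \in C^\ast$, and the bipolar theorem yields $\pm x \in C^{\ast\ast} = C$, contradicting pointedness. For the third assertion, if $\pm f \in C^\ast$, then $f$ vanishes on $C$, hence on $C - C = L$ by the generating hypothesis, so $f = 0$; thus $C^\ast$ is pointed.

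The fourth assertion is then immediate: $C^\ast$ is always closed, being the intersection of closed half-spaces, and parts (2)--(3) show it is additionally generating and pointed, hence proper. No step here is truly hard; the only mild subtlety is the passage to the closure in assertion (2), which rests on the identity $C^\ast = (\operatorname{cl} C)^\ast$ and on the fact that the bipolar theorem requires a closed cone.
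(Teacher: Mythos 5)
Your proof is correct. Note that the paper does not actually prove this lemma; it imports it with a citation to Mulansky--Neamtu, so there is no in-paper argument to compare against. Your self-contained treatment is the standard finite-dimensional one: the only points worth making fully explicit are that $C-C$ is a linear subspace (so containing a neighborhood of $0$ forces it to be all of $L$) and that your interior-point argument in the forward direction of assertion (1) produces a point of the core, which coincides with the topological interior only because $\dim L<\infty$ --- both are one-line remarks, and the reduction $C^\ast=(\operatorname{cl}C)^\ast$ before invoking the bipolar theorem is handled properly.
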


\subsection{Tensor products of cones}
There are in general infinitely many natural ways to define the tensor product of two cones $L^+_1 \subseteq L_1$ and $L^+_2 \subseteq L_2$. Among these, there is a minimal and a maximal choice: The \emph{minimal tensor product} of $L^+_1$ and $L^+_2$ is the cone
\begin{equation*}
L^+_1 \otimes_{\min} L^+_2 := \mathrm{conv}\{x \otimes y: x \in L^+_1, y \in L^+_2 \},
\end{equation*}
whereas the \emph{maximal tensor product} of $L^+_1$ and $L^+_2$ is defined as
\begin{equation*}
L^+_1 \otimes_{\max} L^+_2 := ((L^+_1)^\ast \otimes_{\min} (L^+_2)^\ast)^\ast.
\end{equation*}
It can be seen that if $L^+_1$ and $L^+_2$ are proper, $L^+_1 \otimes_{\min} L^+_2$ and $L^+_1 \otimes_{\max} L^+_2$ are proper as well \cite[Fact S23]{Aubrun2019}. Moreover, $L_1 \otimes_{\min} L_2$ is closed if $L_1$ and $L_2$ are \cite[Exercise 4.14]{aubrun_alice_2017}. We call $C$ a \emph{tensor cone}  for $L_1^+$ and $L_2^+$ if
\begin{equation*}
L_1^+ \otimes_{\min} L_2^+ \subseteq C \subseteq L_1^+ \otimes_{\max} L_2^+.
\end{equation*} 
From the recent work \cite{Aubrun2019a}, we know that the tensor product of two cones is unique if and only if one of the cones is simplicial. This solves a longstanding open problem from \cite{Namioka1969,barker1981theory}:
\begin{thm}[{\cite[Theorem A]{Aubrun2019a}}]
Let $L^+_1$ and $L^+_2$ be proper cones. Then, $L^+_1 \otimes_{\min} L^+_2 = L^+_1 \otimes_{\max} L^+_2$ if and only if $L^+_1$ or $L^+_2$ is simplicial.
\end{thm}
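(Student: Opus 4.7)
The plan is to handle the two directions separately. The easy direction assumes one of the cones, say $L_1^+$, is simplicial. Fix a basis $e_1, \ldots, e_d$ of $L_1$ with $L_1^+ = \operatorname{cone}(e_1, \ldots, e_d)$; the dual basis $e_1^*, \ldots, e_d^*$ then generates $(L_1^+)^*$. Any $z \in L_1 \otimes L_2$ decomposes uniquely as $z = \sum_i e_i \otimes z_i$ with $z_i \in L_2$. Membership of $z$ in $L_1^+ \otimes_{\max} L_2^+$ means that $\langle f \otimes g, z \rangle \geq 0$ for every $f \in (L_1^+)^*$ and $g \in (L_2^+)^*$; specializing $f = e_i^*$ and varying $g$, we obtain $g(z_i) \geq 0$ for all $g \in (L_2^+)^*$, hence $z_i \in L_2^+$ by the bipolar theorem. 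Therefore $z \in L_1^+ \otimes_{\min} L_2^+$, and equality of the two tensor cones follows.

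The hard direction, that non-simpliciality of both factors forces $L_1^+ \otimes_{\min} L_2^+ \subsetneq L_1^+ \otimes_{\max} L_2^+$, is the substantive content and was a long-standing open problem of Barker and Namioka. A viable high-level approach has two steps. First, extract from each $L_i^+$ a minimal ``non-simplicial gadget'': either a two-dimensional slice of its base consisting of a strict arc of extreme rays, or a polytope base with at least one more vertex than the dimension requires, so that some element admits two genuinely different decompositions as a non-negative combination of extreme rays. Second, combine these gadgets across the tensor product to form a candidate element $z \in L_1 \otimes L_2$, analogous to the maximally entangled state in quantum mechanics. Showing $z \in L_1^+ \otimes_{\max} L_2^+$ amounts to checking that every product functional $f \otimes g$ evaluates non-negatively on $z$, which typically follows from a symmetry built into the construction together with the compatibility of the two chosen gadgets.

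The main obstacle is the separation: certifying that $z \notin L_1^+ \otimes_{\min} L_2^+$. This requires producing a non-product functional in $L_1^* \otimes L_2^*$, an \emph{entanglement witness} in GPT language, which is non-negative on every simple tensor $x \otimes y$ with $x \in L_1^+$, $y \in L_2^+$ but strictly negative on $z$. The non-simpliciality of both cones is what allows such a non-product witness to exist, since in a simplicial cone the extreme decompositions are unique and any positive functional on product tensors collapses back to a product. In full generality this step is delicate: the proof of Aubrun, Lami, Palazuelos and Pl\'avala proceeds by first establishing the result for a canonical family of minimal counterexamples (strictly convex cones, and polytopes one vertex away from being simplicial) through explicit construction, and then bootstrapping to arbitrary non-simplicial $L_i^+$ by realizing these minimal examples as exposed faces of the $L_i^+$ and using that the minimal tensor product behaves correctly under face restriction.
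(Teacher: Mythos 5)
The statement you are proving is not proved in the paper at all: it is quoted verbatim as \cite[Theorem A]{Aubrun2019a}, the resolution of the Barker--Namioka problem, and the paper only uses it as a black box. So there is no in-paper argument to compare against; what can be judged is whether your proposal stands on its own.

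Your easy direction is correct and complete. If $L_1^+$ is simplicial, a generating set $e_1,\ldots,e_d$ of its extreme rays is a basis of $L_1$ (properness forces $d=\dim L_1$), the dual basis generates $(L_1^+)^*$, and testing $z=\sum_i e_i\otimes z_i\in L_1^+\otimes_{\max}L_2^+$ against $e_i^*\otimes g$ for all $g\in(L_2^+)^*$ gives $z_i\in (L_2^+)^{**}=L_2^+$ by the bipolar theorem (using that $L_2^+$ is closed). This is the standard argument and needs nothing more.

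The hard direction, however, is not a proof but a narration of the strategy of Aubrun, Lami, Palazuelos and Pl\'avala, and every step that carries actual content is left unexecuted. Concretely: (i) you never construct the candidate element $z$, nor verify it lies in $L_1^+\otimes_{\max}L_2^+$ --- the phrase ``typically follows from a symmetry built into the construction'' is not an argument, and for general non-simplicial cones there is no symmetry to invoke; (ii) the separation step, producing an entanglement witness that is positive on all product tensors but negative on $z$, is exactly the open problem the theorem resolves, and you give no mechanism for it beyond asserting that non-simpliciality ``is what allows'' the witness to exist; (iii) the bootstrapping claim --- that it suffices to treat a canonical family of minimal examples and then lift via exposed faces --- requires proving that the property $\otimes_{\min}=\otimes_{\max}$ is inherited by the relevant substructures (in the actual proof this involves both faces and retracts/projections of cones, and the inheritance is nontrivial in each case), none of which is checked. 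As it stands, the ``only if'' half is a plausible research plan, not a proof; if you intend to rely on the cited result, the honest course is to do what the paper does and quote it, and if you intend to prove it, essentially all of the work remains to be done.
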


\subsection{Positive maps}\label{sec:positive_maps}
Let $(L,L^+)$ be a preordered vector space and let $(L^\ast,(L^+)^\ast)$ be its dual.  
Consider the identity map $\mathrm{id}_L : L \to L$ and the associated canonical evaluation tensor
$\chi_L \in L\otimes L^\ast$, defined by the following remarkable property:
\begin{equation} \label{eq:property-of-the-max-ent-state}
\forall v \in L, \, \forall \alpha \in L^\ast, \qquad \langle \chi_L, \alpha\otimes v \rangle = \alpha(v).
\end{equation}
Using coordinates, we have
\begin{equation}\label{eq:def-chi}
\chi_L = \sum_{i=1}^{\dim L} v_i \otimes \alpha_i \in L \otimes L^\ast \end{equation}
for $\{v_i\}_{i = 1}^{\dim L}$ a basis of $L$ and $\{\alpha_i\}_{i = 1}^{\dim L}$ the corresponding dual basis in $L^\ast$ (we have $\alpha_i(v_j) = \delta_{ij}$). Let $(M,M^+)$ be another preordered vector space and let $\Phi:M\to L^\ast$ be a linear map. We define the linear functional 
$s_\Phi: M\otimes L\to \mathbb R$ as
\begin{equation}\label{eq:def_sPhi}
s_\Phi(z)=\langle \chi_L, (\Phi\otimes \mathrm{id})(z)\rangle, \qquad z\in M\otimes L.
\end{equation}
We then have
\begin{equation}\label{eq:Phi_sPhi}
\Phi(w)=\sum_{i=1}^{\dim L} s_\Phi(w\otimes v_i)\alpha_i,\qquad \forall w\in M.
\end{equation}
Since $(M\otimes L)^*\cong M^*\otimes L^\ast$, $s_\Phi$ corresponds to an element $\varphi^\Phi\in M^*\otimes L^\ast$, determined
as
\begin{equation}\label{eq:def_phiPhi}
\langle \varphi^\Phi, w\otimes v\rangle= s_\Phi(w\otimes v)=\langle \Phi(w), v \rangle, \qquad w\in M,\ v\in L.
\end{equation}
One can check that 
\begin{equation}\label{eq:phiPhi-chi}
\varphi^\Phi = (\mathrm{id} \otimes \Phi)(\chi_{M^*}) = (\Phi^\ast \otimes \mathrm{id} )(\chi_{L}).
\end{equation}
Here, $\Phi^\ast:L \to M^*$ is the dual linear map. Note that we have $\chi_L=\varphi^{\mathrm{id}_{L^\ast}}$. In this paper, we shall often switch between the three different equivalent points of view:
\begin{center}
    \begin{tabular}{ccccc}
        linear map & $\qquad \longleftrightarrow \qquad$ &   linear form & $\qquad \longleftrightarrow \qquad$ & tensor \\
        $\Phi:M\to L^\ast$ && $s_\Phi: M\otimes L\to \mathbb R$ && $\varphi^\Phi\in M^*\otimes L^\ast$
    \end{tabular}
\end{center}

We  say that $\Phi$ is a \emph{positive map} $(M,M^+)\to (L^\ast,(L^+)^\ast)$ if $\Phi(M^+)\subseteq (L^+)^\ast$. It is quite clear that 
 this happens if and only if $s_\Phi:(M\otimes L, M^+\otimesmin L^+)\to \mathbb R$ is positive, equivalently, 
$\varphi^\Phi\in (M^+)^*\otimesmax (L^+)^\ast$; this gives the well-known correspondence between positive maps and the maximal
tensor product. In particular, we have
\begin{lem}\label{lem:chi-in-max}
	The identity map $\mathrm{id}_L : (L,L^+) \to (L, L^+)$ is positive. Moreover, $\chi_L \in L^+\otimesmax
(L^+)^\ast$.
\end{lem}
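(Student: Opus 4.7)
My plan splits the two assertions. The first is essentially trivial: since $\mathrm{id}_L$ sends $L^+$ to $L^+$ by definition, positivity is immediate.

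For the second claim, my plan is to apply the correspondence recorded in the paragraph just above the lemma, namely that a linear map $\Phi : M \to L^*$ is positive if and only if its associated tensor $\varphi^\Phi$ lies in $(M^+)^* \otimesmax (L^+)^*$. I would instantiate this with $M = L^*$ (carrying the cone $(L^+)^*$) and $\Phi = \mathrm{id}_{L^*}$. The first part of the lemma, now applied to the preordered vector space $(L^*, (L^+)^*)$, shows that $\mathrm{id}_{L^*}$ is positive, so the correspondence yields $\varphi^{\mathrm{id}_{L^*}} \in ((L^+)^*)^* \otimesmax (L^+)^*$. Combining with the already noted identity $\chi_L = \varphi^{\mathrm{id}_{L^*}}$, and with $((L^+)^*)^* = L^+$ from the bipolar theorem (after passing to the closure of $L^+$ if necessary), one concludes $\chi_L \in L^+ \otimesmax (L^+)^*$.

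A self-contained alternative, which bypasses both the correspondence and the bipolar step, is to compute the pairing directly. Expanding $\chi_L = \sum_i v_i \otimes \alpha_i$ in a basis of $L$ and its dual basis, and evaluating on an arbitrary elementary tensor $\beta \otimes w \in (L^+)^* \otimesmin (L^+)^{**}$, one obtains
\[
\langle \chi_L, \beta \otimes w\rangle = \sum_i \beta(v_i)\, w(\alpha_i) = w\bigl(\textstyle\sum_i \beta(v_i)\,\alpha_i\bigr) = w(\beta) \ge 0,
\]
where the second equality uses linearity of $w$ and the third uses $\sum_i \beta(v_i)\alpha_i = \beta$. Dualizing the min-cone then gives $\chi_L \in L^+ \otimesmax (L^+)^*$. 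I do not anticipate any serious obstacle; the only mild subtlety is the use of closedness of $L^+$ needed in the first route, which the direct computation sidesteps.
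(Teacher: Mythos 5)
Your proposal is correct, and your ``self-contained alternative'' (the direct pairing computation $\langle \chi_L, \beta\otimes w\rangle = w(\beta) \ge 0$) is exactly the paper's own argument, which simply invokes the defining property \eqref{eq:property-of-the-max-ent-state} of $\chi_L$. Your first route via the positive-map/max-tensor correspondence applied to $\mathrm{id}_{L^*}$ is an equivalent repackaging (and, as you note, the closure issue is harmless since $\otimesmax$ only sees a cone through its dual), so there is nothing to add.
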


\begin{proof}
The first assertion is obvious. The second one follows from Equation~\eqref{eq:property-of-the-max-ent-state}.
\end{proof}

For a positive linear map $\Psi:(M,M^+)\to (L,L^+)$, the dual map $\Psi^{\ast}:(L^\ast,(L^+)^\ast)\to (M^\ast,(M^+)^\ast)$ is positive as well. Moreover, it holds that for preordered vector spaces $(L_i, L_i^+)$, $(M_i, M_i^+)$ and positive maps $\Phi_i: (L_i, L_i^+) \to (M_i, M_i^+)$, $i \in [2]$, both 
\begin{equation*}
    \Phi_1 \otimes \Phi_2: (L_1 \otimes L_2, L_1^+ \otimes_{\min} L_2^+) \to (M_1 \otimes M_2, M_1^+ \otimes_{\min} M_2^+) 
\end{equation*}
and 
\begin{equation*}
    \Phi_1 \otimes \Phi_2: (L_1 \otimes L_2, L_1^+ \otimes_{\max} L_2^+) \to (M_1 \otimes M_2, M_1^+ \otimes_{\max} M_2^+) 
\end{equation*}
are positive \cite[Section II.C]{jencova2018incompatible}. Both statements also follow from Lemma \ref{lem:ABCD-inclusion-cones} below.

The minimal tensor product is associated with a special kind of maps. We first prove the following lemma, clarifying
 what happens under reordering of tensor products of cones.

\begin{lem}\label{lem:ABCD-inclusion-cones}
	Let $A^+,B^+,C^+,D^+$ be cones. Then
	$$(A^+ \otimesmin B^+) \otimesmin (C^+ \otimesmax D^+) \subseteq (A^+ \otimesmin C^+) \otimesmax (B^+ \otimesmin
D^+).$$
\end{lem}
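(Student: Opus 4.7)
The plan is to prove the inclusion by duality, exploiting the standard identification $(P^+ \otimesmin Q^+)^* = (P^+)^* \otimesmax (Q^+)^*$ (which follows from the bipolar theorem and the definition of $\otimesmax$) together with the fact that the min tensor product is generated by elementary tensors of elements of the factor cones.

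First I would reduce to testing on generators of the left-hand side. Since $A^+\otimesmin B^+$ is the conic hull of elementary tensors $a\otimes b$ with $a\in A^+$, $b\in B^+$, and likewise for the outer min tensor product, it suffices to show that every $z$ of the form $a\otimes b\otimes w$ with $a\in A^+$, $b\in B^+$, $w\in C^+\otimesmax D^+$, viewed inside $A\otimes C\otimes B\otimes D$ after the canonical swap of the middle two factors, lies in $(A^+\otimesmin C^+)\otimesmax(B^+\otimesmin D^+)$. Dually, an element belongs to this max cone iff it pairs nonnegatively with every tensor of the form $T_1\otimes T_2$ where $T_1\in (A^+)^*\otimesmax(C^+)^*$ and $T_2\in (B^+)^*\otimesmax(D^+)^*$, since the min cone in the dual is generated by such product tensors.

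The key step is then a partial evaluation. Given $T_1\in (A^+)^*\otimesmax(C^+)^*$ and our fixed $a\in A^+$, define $T_1(a,\cdot)\in C^*$ by $c\mapsto T_1(a\otimes c)$. For any $c\in C^+$ the elementary tensor $a\otimes c$ lies in $A^+\otimesmin C^+$, and since $T_1$ is a positive functional on this cone we obtain $T_1(a,\cdot)\in (C^+)^*$. An identical argument gives $T_2(b,\cdot)\in (D^+)^*$. Hence
\[
T_1(a,\cdot)\otimes T_2(b,\cdot)\in (C^+)^*\otimesmin (D^+)^* = \bigl(C^+\otimesmax D^+\bigr)^*,
\]
and pairing against $w\in C^+\otimesmax D^+$ yields a nonnegative scalar. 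A short unwinding of the definitions shows that this scalar equals $\langle T_1\otimes T_2,z\rangle$, which is precisely what we needed.

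I expect the main obstacle to be bookkeeping rather than mathematical: one has to keep careful track of the reordering identification between $(A\otimes B)\otimes(C\otimes D)$ and $(A\otimes C)\otimes(B\otimes D)$ and verify that under this identification the pairing $\langle T_1\otimes T_2,z\rangle$ really coincides with $\bigl\langle T_1(a,\cdot)\otimes T_2(b,\cdot),\,w\bigr\rangle$. Once the indices are lined up, the content of the proof is just the observation that ``plugging in'' the positive elements $a,b$ into $T_1,T_2$ produces a tensor of positive functionals, which then sees $w$ through the max/min duality.
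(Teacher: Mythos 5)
Your proposal is correct and follows essentially the same route as the paper's proof: reduce to generators $a\otimes b\otimes w$, test against product functionals $T_1\otimes T_2$ from the dual min cone, and use the partial-evaluation fact that plugging a positive element into a positive bilinear functional yields a positive functional, so that $T_1(a,\cdot)\otimes T_2(b,\cdot)\in (C^+\otimesmax D^+)^*$ pairs nonnegatively with $w$. The only cosmetic difference is that the paper states the partial-evaluation step as "evaluation of a positive map at a positive element is positive" and leaves it as an exercise, whereas you justify it directly.
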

\begin{proof}
	Consider arbitrary $a \in A^+$, $b \in B^+$ and $e \in C^+ \otimesmax D^+$.
 We have to check that for any $\phi \in (A^*)^+ \otimesmax (C^*)^+$ and
$\psi \in (B^*)^+ \otimesmax (D^*)^+$, we have 
	$$\braket{a \otimes b \otimes e, \phi \otimes \psi} \geq 0.$$
	 We shall use (twice) the following fact: given cones $X^+,Y^+$ in vector spaces $X$, $Y$, $z \in X^+ \otimesmax
Y^+$ and $\sigma \in (X^+)^\ast$, the vector $y = \braket{\sigma, z} \in Y$ defined by
	$$\braket{\tau, y} = \braket{\sigma \otimes \tau,z}, \qquad \forall \tau \in Y^*$$
	is positive (i.e.~$y \in \overline{Y^+}\cong (Y^+)^{\ast\ast}$). This corresponds to the fact that the
evaluation of a positive map at a positive element is positive, and we leave its proof as an exercise for the reader. 
	
	Using the fact above, and writing $\gamma:= \braket{a,\phi} \in (C^+)^\ast$ and $\delta:=\braket{b,\psi} \in
(D^+)^\ast$, we have 
	$$\braket{a \otimes b \otimes e, \phi \otimes \psi} = \braket{e, \gamma \otimes \delta} \geq0,$$
	proving the statement of the lemma. 
\end{proof}

\begin{prop}\label{prop:EB}
	 Let $\Phi : (M,M^+) \to (L^\ast, (L^+)^\ast)$ be a positive map between preordered vector spaces. The following conditions are equivalent:
	\begin{enumerate}
		\item The map $\Phi \otimes \mathrm{id}_{L} : (M \otimes L, M^+ \otimesmax \overline{L^+}) 
\to (L^\ast \otimes L ,(L^+)^\ast \otimesmin \overline{L^+})$ is positive.
		\item The map $s_\Phi : (M \otimes L,M^+ \otimesmax L^+) \to \mathbb R$ is positive.
		\item $\phi^\Phi \in (M^+)^* \otimesmin (L^+)^\ast$ 
		\item The map $\Phi \otimes \mathrm{id}_N : (M \otimes N,M^+ \otimesmax N^+) \to (L^\ast \otimes N,(L^+)^\ast \otimesmin N^+)$ is positive, for any preordered vector space $(N,N^+)$ with $N^+$ closed.
	\end{enumerate}
\end{prop}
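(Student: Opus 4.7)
I plan to establish the cyclic chain (4) $\Rightarrow$ (1) $\Rightarrow$ (2) $\Leftrightarrow$ (3) $\Rightarrow$ (4). The first implication is immediate: specialize (4) to the preordered vector space $(N,N^+) = (L,\overline{L^+})$, whose positive cone is closed by construction, and (1) drops out.

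For (1) $\Rightarrow$ (2), note that since dual cones depend only on closures, $M^+ \otimes_{\max} L^+ = M^+ \otimes_{\max} \overline{L^+}$. Thus for any $z \in M^+ \otimes_{\max} L^+$, condition (1) gives $(\Phi \otimes \mathrm{id})(z) \in (L^+)^\ast \otimes_{\min} \overline{L^+}$, so by the very definition of the minimal tensor product this can be written as a finite sum $\sum_k \beta_k \otimes w_k$ with $\beta_k \in (L^+)^\ast$ and $w_k \in \overline{L^+}$. Then Equation \eqref{eq:property-of-the-max-ent-state} yields
\[
s_\Phi(z) = \langle \chi_L, (\Phi \otimes \mathrm{id})(z) \rangle = \sum_k \beta_k(w_k) \geq 0.
\]
The equivalence (2) $\Leftrightarrow$ (3) is pure duality: under $(M \otimes L)^\ast \cong M^\ast \otimes L^\ast$ the functional $s_\Phi$ corresponds to $\phi^\Phi$, and non-negativity on $M^+ \otimes_{\max} L^+$ is the same as membership in its dual cone, which by definition of the max tensor product is $(M^+)^\ast \otimes_{\min} (L^+)^\ast$ (up to a closure to be handled in finite dimensions).

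The substantive step is (3) $\Rightarrow$ (4). Decompose $\phi^\Phi = \sum_{k=1}^N \mu_k \otimes \beta_k$ with $\mu_k \in (M^+)^\ast$ and $\beta_k \in (L^+)^\ast$; expanding $\beta_k$ in the dual basis $\alpha_i$ and applying Equation \eqref{eq:Phi_sPhi} together with $s_\Phi(w \otimes v_i) = \langle \phi^\Phi, w \otimes v_i \rangle$ produces the \emph{entanglement-breaking decomposition}
\[
\Phi(w) = \sum_k \mu_k(w)\, \beta_k, \qquad \forall w \in M.
\]
For any $(N,N^+)$ with $N^+$ closed and any $z \in M^+ \otimes_{\max} N^+$, this gives
\[
(\Phi \otimes \mathrm{id}_N)(z) = \sum_k \beta_k \otimes (\mu_k \otimes \mathrm{id}_N)(z).
\]
Now I would invoke the principle established inside the proof of Lemma \ref{lem:ABCD-inclusion-cones}: contracting a max-positive element against a positive functional on one leg leaves a positive element on the remaining leg. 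Applied with $\mu_k \in (M^+)^\ast$ and $z \in M^+ \otimes_{\max} N^+$, this puts $(\mu_k \otimes \mathrm{id}_N)(z) \in \overline{N^+} = N^+$, so the whole sum lies in $(L^+)^\ast \otimes_{\min} N^+$ as required.

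The step I expect to be most delicate is the interpretation of (3): one must argue that $\phi^\Phi$ can genuinely be represented as a \emph{finite} sum of elementary positives, not merely as a limit of such sums, since the decomposition $\Phi(w) = \sum_k \mu_k(w)\beta_k$ in (3) $\Rightarrow$ (4) cannot be obtained from a mere closure argument without extra care. In finite dimensions, where the paper works throughout, closedness of $(M^+)^\ast \otimes_{\min} (L^+)^\ast$ (or a direct Carath\'eodory-style argument on extreme rays) suffices; otherwise the closure step in (2) $\Leftrightarrow$ (3) is the spot where additional attention is needed.
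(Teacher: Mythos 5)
Your proof is correct, and its overall skeleton (trivial implications $(2)\Leftrightarrow(3)$ and $(4)\Rightarrow(1)$, plus the two substantive arrows $(1)\Rightarrow(2)$ and $(3)\Rightarrow(4)$) matches the paper's. The execution of the two substantive steps differs in a worthwhile way. For $(1)\Rightarrow(2)$ the paper simply pairs $(\Phi\otimes\mathrm{id})(z)$ against $\chi_L$ and cites Lemma \ref{lem:chi-in-max} ($\chi_L \in L^+\otimesmax (L^+)^\ast$, i.e.\ $\chi_L$ is a positive functional on the minimal cone); you instead unpack the minimal tensor product into a finite sum $\sum_k \beta_k\otimes w_k$ and evaluate term by term via Equation~\eqref{eq:property-of-the-max-ent-state} --- same content, made explicit. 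For $(3)\Rightarrow(4)$ the paper forms $\phi^\Phi\otimes\chi_N$ and pushes it through Lemma \ref{lem:ABCD-inclusion-cones}, landing in $(M^+\otimesmax N^+)^\ast\otimesmax((L^+)^\ast\otimesmin N^+)$, which identifies positivity of $\Phi\otimes\mathrm{id}_N$ only up to a bidual of the target cone; you instead extract the explicit entanglement-breaking form $\Phi(w)=\sum_k\mu_k(w)\beta_k$ from a finite decomposition of $\phi^\Phi$ and apply the contraction fact proved inside Lemma \ref{lem:ABCD-inclusion-cones} (evaluation of a max-positive element against $\mu_k\in(M^+)^\ast$ lands in $\overline{N^+}=N^+$). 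Both routes rest on that same contraction principle, but yours produces a genuine finite-sum element of $(L^+)^\ast\otimesmin N^+$ directly, sidestepping the closure/bidual step in the target that the paper leaves implicit; the paper's version is shorter once its reshuffling lemma is in hand. Your closing remark correctly locates the only real technical point, namely that $(2)\Leftrightarrow(3)$ and the finite decomposition in $(3)$ require $(M^+)^\ast\otimesmin(L^+)^\ast$ to be closed (or a Carath\'eodory argument), which holds in the finite-dimensional setting the paper assumes and which the paper also relies on silently.
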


\begin{proof}
	Since both $(2) \iff (3)$ and $(4) \implies (1)$ are trivial, we only prove $(1) \implies (2)$ and $(3) \implies
(4)$. The first implication follows immediately from  Lemma \ref{lem:chi-in-max}.  For the second
implication, use Lemmas \ref{lem:chi-in-max} and \ref{lem:ABCD-inclusion-cones} to prove that 
	$$\phi^\Phi \otimes \chi_N \in ((M^*)^+ \otimesmin (L^+)^\ast) \otimesmin ((N^*)^+ \otimesmax N^+) \subseteq (M^+ 
\otimesmax N^+)^* \otimesmax ((L^+)^\ast \otimesmin N^+),$$
	which is precisely the desired conclusion, since $\phi^\Phi \otimes \chi_N$ is the tensor corresponding to the map
$\Phi \otimes \mathrm{id}_N$.
\end{proof}

\begin{defi} Let $\Phi:(M,M^+)\to (L^\ast,(L^+)^\ast)$ be a positive linear map between preordered vector spaces. We say that $\Phi$ is \emph{entanglement breaking} (EB) if any of the equivalent conditions in Proposition \ref{prop:EB} holds.
\end{defi}

\begin{remark}
Point (1) of Proposition~\ref{prop:EB} was used as a definition in \cite{jencova2018incompatible}. Note that this definition of entanglement breaking maps agrees with the usual one used in quantum mechanics \cite{Horodecki2003} for $L^\ast$, $M$ being the vector space of Hermitian matrices and their respective cones being the cones of positive semidefinite elements. This can be seen from the point (3) of Proposition~\ref{prop:EB}, which states that the corresponding Choi matrix is separable \cite[Exercise 6.1]{watrous2018theory}.
\end{remark}

We conclude this section with a small lemma connecting the trace of the composition of two maps and the inner product of the corresponding tensors. Recall that the \emph{trace} of a linear operator $\Phi : L \to L$ is defined as
$$\operatorname{Tr} \Phi = \sum_{j=1}^{\dim L} \alpha_j(\Phi(v_j)) = \langle \chi_L, \phi^\Phi \rangle,$$
where $\{v_j\}_{j = 1}^{\dim L}$, $\{\alpha_j\}_{j = 1}^{\dim L}$, are dual bases of $L$, $L^*$.

\begin{lem} \label{lem:trace-of-maps}
Let $\Psi: M \to L$, $\Phi: M^\ast\to L^\ast$ be two linear maps between vector spaces. Then,
\begin{equation*}
    \operatorname{Tr}[\Psi \Phi^\ast] = \langle \varphi^{\Phi}, \varphi^{\Psi} \rangle.
\end{equation*}
\end{lem}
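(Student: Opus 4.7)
The plan is to prove the identity by a direct calculation in fixed dual bases, which makes the bookkeeping transparent. Fix dual bases $\{v_i\}\subset M$, $\{\alpha_i\}\subset M^\ast$ and $\{w_j\}\subset L$, $\{\beta_j\}\subset L^\ast$. Applying the recipe around \eqref{eq:def-chi}--\eqref{eq:def_phiPhi} to $\Phi:M^\ast\to L^\ast$ (so that $\varphi^\Phi$ lives in $(M^\ast)^{\ast\ast}\otimes L^\ast\cong M\otimes L^\ast$) and to $\Psi:M\to L\cong(L^\ast)^\ast$ (so that $\varphi^\Psi$ lives in $M^\ast\otimes L$), one obtains
\[\varphi^\Phi=\sum_i v_i\otimes\Phi(\alpha_i),\qquad \varphi^\Psi=\sum_i \alpha_i\otimes\Psi(v_i).\]

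Next, using the natural pairing between $M\otimes L^\ast$ and $M^\ast\otimes L$ factor by factor,
\[\langle\varphi^\Phi,\varphi^\Psi\rangle=\sum_{i,j}\alpha_j(v_i)\,\Phi(\alpha_i)(\Psi(v_j))=\sum_i \Phi(\alpha_i)(\Psi(v_i)).\]
On the other hand, $\Phi^\ast:L\to M$ is determined by $\alpha(\Phi^\ast y)=\Phi(\alpha)(y)$, so $\Psi\Phi^\ast$ is an endomorphism of $L$; cyclicity of the trace yields
\[\operatorname{Tr}(\Psi\Phi^\ast)=\operatorname{Tr}(\Phi^\ast\Psi)=\sum_i\alpha_i(\Phi^\ast\Psi(v_i))=\sum_i\Phi(\alpha_i)(\Psi(v_i)),\]
and the two expressions agree.

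The calculation is essentially formal, so the main (mild) obstacle is simply to keep straight which dual spaces are identified with which: the paper defines the tensor $\varphi^\Phi$ only for maps $M\to L^\ast$, whereas here we apply it to a map with domain $M^\ast$ and to $\Psi$ whose codomain is $L$ rather than $L^\ast$. Once the identifications $(M^\ast)^\ast\cong M$ and $(L^\ast)^\ast\cong L$ are fixed, the statement drops out immediately. A basis-free alternative would use $\varphi^\Phi=(\Phi^\ast\otimes\mathrm{id})(\chi_L)$ from \eqref{eq:phiPhi-chi} together with the defining property of $\chi_L$ in \eqref{eq:property-of-the-max-ent-state}, but the explicit computation above is the cleanest presentation.
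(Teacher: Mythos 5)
Your proof is correct. It is the same identity verified by the same underlying computation as in the paper, but the presentation differs: the paper argues basis-free, writing $\operatorname{Tr}[\Psi\Phi^\ast]=\langle\chi_L,\varphi^{\Psi\Phi^\ast}\rangle$ and then shuffling adjoints across the canonical tensors $\chi_L$, $\chi_{L^\ast}$ via Equation~\eqref{eq:phiPhi-chi}, whereas you expand everything in a fixed pair of dual bases and check that both sides equal $\sum_i\Phi(\alpha_i)(\Psi(v_i))$. Your route is more elementary (only cyclicity of the trace and the duality of the bases are needed), and it has the virtue of making explicit the identifications $(M^\ast)^\ast\cong M$ and $(L^\ast)^\ast\cong L$ that are required even to define $\varphi^\Phi$ and $\varphi^\Psi$ for maps with these (co)domains --- a point the paper's proof leaves implicit. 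The trade-off is that the paper's version displays the structural reason the identity holds (it is just the adjunction property of $\chi$), which is the form reused elsewhere, e.g.\ in Proposition~\ref{prop:incomp_witness_2}. Both arguments are complete and correct.
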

\begin{proof}
    Using the definition of the trace and Equation \eqref{eq:phiPhi-chi}, we have
    \begin{align*}
        \operatorname{Tr}[\Psi \Phi^\ast] &= \langle \chi_L, \phi^{\Psi\Phi^*} \rangle\\
        &= \langle \chi_L, (\Phi \Psi^* \otimes \mathrm{id}_L)(\chi_{L^*})\rangle \\
        &=   \langle (\Phi^\ast \otimes \mathrm{id}_{L^*}) (\chi_{L}), (\Psi^\ast \otimes \mathrm{id}_{L})(\chi_{L^*}) \rangle\\
        &= \langle \varphi^{\Phi}, \varphi^{\Psi} \rangle.
    \end{align*}
\end{proof}

\subsection{Tensor products of Banach spaces} \label{sec:tensor-norms-prelim}
Besides the tensor product of convex cones, we will also need the tensor product of Banach spaces. See \cite{Ryan2002} for a good introduction. Let $X$, $Y$ be two Banach spaces with norms $\norm{\cdot}_X$ and $\norm{\cdot}_Y$, respectively. Then, there are usually infinitely many natural norms that can be used to turn the vector space $X \otimes Y$ into a Banach space. There are two choices of norms which are minimal and maximal in a sense as we shall see next.

\begin{defi}[Projective tensor norm]
The \emph{projective norm} of an element $z \in X \otimes Y$ is defined as
\begin{equation*}
    \norm{z}_{X \otimes_\pi Y}:= \inf \left\{\sum_i \norm{x_i}_X \norm{y_i}_Y: z = \sum_{i} x_i \otimes y_i \right\}.
\end{equation*}
\end{defi}
Let $X^\ast$ and $Y^\ast$ be the dual spaces of $X$ and $Y$, respectively, and let their norms be $\norm{\cdot}_{X^\ast}$ and $\norm{\cdot}_{Y^\ast}$.
\begin{defi}[Injective tensor norm]
Let $z = \sum_i x_i \otimes y_i \in X \otimes Y$. Then, its \emph{injective norm} is 
\begin{equation*}
      \norm{z}_{X \otimes_\epsilon Y}:= \sup \left\{\left| \sum_i \phi(x_i) \psi(y_i)\right|: \norm{\phi}_{X^\ast} \leq 1, \norm{\psi}_{Y^\ast} \leq 1   \right\}.
\end{equation*}
\end{defi}
If $X$ and $Y$ are clear from the context, we will sometimes only write $\norm{\cdot}_\pi$ and $\norm{\cdot}_\epsilon$, respectively. Importantly, $\norm{\cdot}_\epsilon$ and $\norm{\cdot}_\pi$ are dual norms, i.e.
\begin{equation*}
    \norm{z}_{X \otimes_\epsilon Y} = \sup_{\norm{\phi}_{X^\ast \otimes_\pi Y^\ast} \leq 1} |\phi(z)|
\end{equation*}
and vice versa.

In some cases, the projective and injective norms have simpler expressions. The projective norm $\ell^{g}_1 \otimes_\pi X$ of a vector 
$$\mathbb R^g \otimes X \ni z = \sum_{i=1}^g e_i \otimes z_i$$
with $\{e_i \}_{i \in [g]}$ the standard basis is given by (see e.g.\ \cite[Example 2.6]{Ryan2002})
\begin{equation} \label{eq:projective-norm}
    \|z\|_{\ell_1^g \otimes_\pi X} =  \sum_{i=1}^g \|z_i\|_{X}.
\end{equation}
The injective norm $\ell^g_1 \otimes_\epsilon X$ is (see e.g.\ \cite[Example 3.4]{Ryan2002})
\begin{equation} \label{eq:injective-norm}
\|z\|_{\ell_1^g \otimes_\epsilon X} = \sup_{\norm{y}_{X^\ast} \leq 1} \sum_{i = 1}^g |\langle y, z_i \rangle| = \sup_{\epsilon \in \{\pm 1\}^g} \norm{\sum_{i = 1}^g \epsilon_i z_i}_X.
\end{equation}

We will later be interested in the maximal ratio between the projective and the injective norms introduced in \cite[Equation (15)]{aubrun2018universal}:
\begin{equation} \label{eq:rho-quotient}
    \rho(X, Y) = \max_{\|z\|_{\epsilon}\le 1} \|z\|_{\pi}
\end{equation}
where the maximum runs over all $z \in X \otimes Y$.

The injective and the projective norms both belong to a class of tensor norms which are the reasonable crossnorms.

\begin{defi}[\cite{Ryan2002}]
Let $X$ and $Y$ be two Banach spaces. We say that a norm $\norm{\cdot}_\alpha$ on $X \otimes Y$ is a \emph{reasonable crossnorm} if it has the following properties:
\begin{enumerate}
\item $\norm{x \otimes y}_\alpha \leq \norm{x}_X \norm{y}_Y$ for all $x \in X$, $y \in Y$,
\item For all $\phi \in X^\ast$, for all $\psi \in Y^\ast$, $\phi \otimes \psi$ is bounded on $X \otimes Y$ and $\norm{\phi \otimes \psi}_{\alpha^\ast} \leq \norm{\phi}_{X^\ast} \norm{\psi}_{Y^\ast}$,
\end{enumerate} 
where $\norm{\cdot}_{\alpha^\ast}$ is the dual norm to $\norm{\cdot}_\alpha$.
\end{defi}

The injective and projective norms are the smallest and largest reasonable crossnorms we can put on $X \otimes Y$, respectively:
\begin{prop}[{\cite[Proposition 6.1]{Ryan2002}}] \label{prop:reasonable-cross-norms}
Let $X$ and $Y$ be Banach spaces. 
\begin{enumerate}
\item[(a)] A norm $\norm{\cdot}_\alpha$ on $X \otimes Y$ is a reasonable crossnorm if and only if
\begin{equation*}
\norm{z}_{X \otimes_\epsilon Y} \leq \norm{z}_\alpha \leq \norm{z}_{X \otimes_\pi Y}
\end{equation*}
for all $z \in X \otimes Y$.
\item[(b)] If $\norm{\cdot}_\alpha$ is a reasonable crossnorm on $X \otimes Y$, then $\norm{x \otimes y}_\alpha = \norm{x}_X \norm{y}_Y$ for every $x \in X$ and every $y \in Y$. Furthermore, for all $\phi \in X^\ast$ and all $\psi \in Y^\ast$, the norm $\norm{\cdot}_{\alpha^\ast}$ satisfies $\norm{\phi \otimes \psi}_{\alpha^\ast} = \norm{\phi}_{X^\ast} \norm{\psi}_{Y^\ast}$.
\end{enumerate}
\end{prop}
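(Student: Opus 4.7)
The plan is to prove the two directions of (a) directly from the defining axioms of a reasonable crossnorm, and then deduce (b) as a corollary using the Hahn--Banach theorem.

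For the forward direction of (a), suppose $\|\cdot\|_\alpha$ is a reasonable crossnorm. To get the upper bound $\|z\|_\alpha \le \|z\|_{X\otimes_\pi Y}$, I would take any representation $z = \sum_i x_i \otimes y_i$, apply the triangle inequality, bound each term via axiom (1), and then pass to the infimum over representations. For the lower bound $\|z\|_{X\otimes_\epsilon Y} \le \|z\|_\alpha$, I would use axiom (2): for any $\phi \in X^*, \psi \in Y^*$ with $\|\phi\|_{X^*},\|\psi\|_{Y^*}\le 1$, axiom (2) says $\|\phi\otimes\psi\|_{\alpha^*}\le 1$, so $|(\phi\otimes\psi)(z)| \le \|z\|_\alpha$; taking the supremum over such $\phi,\psi$ recovers exactly the injective norm.

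For the reverse direction of (a), assume $\|z\|_\epsilon \le \|z\|_\alpha \le \|z\|_\pi$. Axiom (1) is immediate, since $\|x\otimes y\|_\alpha \le \|x\otimes y\|_\pi \le \|x\|_X\|y\|_Y$ via the trivial one-term representation in the definition of the projective norm. For axiom (2), the decisive observation is that for every $z \in X\otimes Y$ and every $\phi\in X^*,\psi\in Y^*$ one has $|(\phi\otimes\psi)(z)|\le \|\phi\|_{X^*}\|\psi\|_{Y^*}\|z\|_{X\otimes_\epsilon Y}$, which follows from the definition of $\|\cdot\|_\epsilon$ and homogeneity. Combining this with the hypothesis $\|z\|_\epsilon \le \|z\|_\alpha$ yields $|(\phi\otimes\psi)(z)|\le \|\phi\|_{X^*}\|\psi\|_{Y^*}\|z\|_\alpha$, i.e.\ exactly $\|\phi\otimes\psi\|_{\alpha^*}\le \|\phi\|_{X^*}\|\psi\|_{Y^*}$.

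To prove (b), I would use Hahn--Banach to choose norming functionals $\phi\in X^*, \psi\in Y^*$ of dual norm one with $\phi(x) = \|x\|_X$ and $\psi(y) = \|y\|_Y$. Axiom (2) gives $\|\phi\otimes\psi\|_{\alpha^*}\le 1$, hence $\|x\|_X\|y\|_Y = (\phi\otimes\psi)(x\otimes y)\le \|x\otimes y\|_\alpha$; combined with the reverse inequality from axiom (1) this yields $\|x\otimes y\|_\alpha = \|x\|_X\|y\|_Y$. The dual equality $\|\phi\otimes\psi\|_{\alpha^*} = \|\phi\|_{X^*}\|\psi\|_{Y^*}$ is then obtained symmetrically: pick $x_n\in X, y_n\in Y$ of norm at most one with $\phi(x_n)\to\|\phi\|_{X^*}$ and $\psi(y_n)\to\|\psi\|_{Y^*}$, evaluate $\phi\otimes\psi$ on $x_n\otimes y_n$, and use the equality just established to see $\|x_n\otimes y_n\|_\alpha \le 1$, giving the matching lower bound.

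The one subtle point to watch for is the direction of monotonicity in the reverse direction of (a): one cannot bound $\|\phi\otimes\psi\|_{\alpha^*}$ from above using $\|\cdot\|_\alpha\le \|\cdot\|_\pi$ (which enlarges the dual unit ball and produces an inequality in the wrong direction). The correct step uses the other inequality $\|\cdot\|_\epsilon\le\|\cdot\|_\alpha$ together with the elementary control of decomposable tensors by the injective norm. Once this dualization is set up correctly, everything else is bookkeeping around the axioms and Hahn--Banach.
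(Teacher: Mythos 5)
Your proof is correct. The paper does not prove this proposition at all --- it is quoted verbatim from \cite[Proposition 6.1]{Ryan2002} --- and your argument is essentially the standard one given there: triangle inequality plus axiom (1) for the upper bound, dual pairing plus axiom (2) for the lower bound, the elementary estimate $|(\phi\otimes\psi)(z)|\le\|\phi\|_{X^*}\|\psi\|_{Y^*}\|z\|_\epsilon$ for the converse, and Hahn--Banach norming functionals for (b); your remark about the direction of dualization is exactly the right point to flag.
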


\section{An extension theorem}\label{sec:extension}
In this section, we prove an extension theorem which we will use in Section \ref{sec:maps} to relate the compatibility of measurements in a GPT to properties of an associated map. Along the way, we will use the extension theorem to study tensor products of certain cones.

Let $E\subseteq \mathbb R^d$ be a subspace containing a point with positive coordinates:
\begin{equation}\label{eq:E-strictly-positive}
E \cap \operatorname{ri}(\mathbb R_+^d) \neq \emptyset.
\end{equation}
We set $E^+:=E \cap \mathbb R_+^d$. Note that $E^+$ is proper in $E$  (see \cite[Corollary 6.5.1]{Rockafellar1970} to conclude that $E^+$ is generating).

We shall use the following key extension theorem (see e.g.~\cite[Theorem 1]{Castillo2005}):
\begin{thm}[M.\ Riesz extension theorem] \label{thm:riesz}
Let $(X,X^+)$ be a preordered vector space, $Y \subseteq X$ a linear subspace, and $\phi: Y \to \mathbb R$ a positive linear form on $(Y, Y^+)$, where $Y^+:=Y \cap X^+$. Assume that for every $x \in X$, there exists $y \in Y$ such that $x \leq y$. Then, there exists a positive linear form $\tilde \phi: X \to \mathbb R$ such that $\tilde \phi|_Y = \phi$.
\end{thm}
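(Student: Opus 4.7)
The plan is to reduce to a one-dimensional extension step and then iterate (which terminates because $X$ is finite dimensional). This is essentially the classical Hahn--Banach style argument adapted to preordered vector spaces.

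First I would handle the trivial case $Y = X$ and then pick any $x_0 \in X \setminus Y$, setting $Y' := Y + \mathbb R x_0$. The goal of the one-step extension is to choose a value $c := \tilde\phi(x_0) \in \mathbb R$ so that the linear extension $\tilde\phi(y + t x_0) := \phi(y) + tc$ is positive on $(Y', Y' \cap X^+)$. A short case analysis in the sign of $t$ shows that $\tilde\phi$ is positive on $Y'$ if and only if
\begin{equation*}
\sup\{\phi(u) : u \in Y,\ u \le x_0\} \;\le\; c \;\le\; \inf\{\phi(v) : v \in Y,\ v \ge x_0\}.
\end{equation*}

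Next I would verify that such a $c$ exists. Non-emptiness of the right-hand set is immediate from the majorization hypothesis applied to $x_0$; non-emptiness of the left-hand set follows by applying the same hypothesis to $-x_0$, which yields some $w \in Y$ with $-w \le x_0$. For the crucial inequality $\sup \le \inf$, pick arbitrary $u, v \in Y$ with $u \le x_0 \le v$. Then $v - u = (v - x_0) + (x_0 - u) \in X^+$, and since $v - u \in Y$ we have $v - u \in Y^+$. Positivity of $\phi$ on $Y^+$ gives $\phi(u) \le \phi(v)$, and taking sup/inf yields the needed inequality. Any $c$ in the resulting non-empty closed interval produces a positive extension $\tilde\phi$ on $Y'$.

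Finally I would iterate. The majorization hypothesis is automatically inherited by the larger subspace $Y'$, since any majorant in $Y$ is still a majorant in $Y' \supseteq Y$. Because $\dim X < \infty$, after at most $\dim X - \dim Y$ such one-dimensional extensions we obtain a positive linear form on all of $X$ that restricts to $\phi$ on $Y$.

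The only non-routine step is the sup/inf argument in the one-step extension, and the main obstacle is remembering to use the majorization hypothesis on both $x_0$ and $-x_0$ to ensure both sets are non-empty. Everything else is a bookkeeping check of positivity on the extended subspace. (In infinite dimensions one would replace the iteration with a Zorn's lemma argument over the poset of positive extensions, but this is unnecessary here since the paper works throughout in finite dimensions.)
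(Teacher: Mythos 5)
Your proof is correct. The paper does not prove this theorem itself --- it cites \cite[Theorem 1]{Castillo2005} and remarks only that the argument there goes through when $X^+$ is an arbitrary convex cone rather than an order cone --- and your one-step extension (choosing $c$ in the interval $[\sup\{\phi(u): u \le x_0\},\, \inf\{\phi(v): v \ge x_0\}]$) followed by finite iteration is exactly that standard argument. In particular, you correctly use the majorization hypothesis on both $x_0$ and $-x_0$ to get non-emptiness of both sets, and nowhere do you need $X^+$ to be pointed or closed, which is precisely the point of the paper's remark about preordered spaces.
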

\begin{remark}
Note that \cite[Theorem 1]{Castillo2005} states the theorem only for ordered vector spaces. However, the same proof works if $X^+$ is an arbitrary convex cone.
\end{remark}

We prove now the main result of this section; see also Remark \ref{rem:subspace-max} for an equivalent formulation. Note that the tensor product of $\mathbb R_+^d$ with a proper cone $L^+$ is unique since $\mathbb R_+^d$ is simplicial. 

\begin{prop} \label{prop:positive-form-extension}
Let $(L,L^+)$ be a proper ordered vector space, and $E \subseteq \mathbb R^d$ as in Equation \eqref{eq:E-strictly-positive}. Any  positive linear form 
\begin{equation}\label{eq:extend-cone}
\phi: (E \otimes L,(E \otimes L) \cap (\mathbb R_+^d \otimes L^+)) \to \mathbb R
\end{equation}
can be extended to a positive linear form $\tilde \phi: \mathbb R^d \otimes L \to \mathbb R$. 
\end{prop}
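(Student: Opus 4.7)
The plan is to apply the M.\ Riesz extension theorem (Theorem~\ref{thm:riesz}) with $X = \mathbb{R}^d \otimes L$, $X^+ = \mathbb{R}_+^d \otimes L^+$ (note this cone is unambiguous since $\mathbb{R}_+^d$ is simplicial), and $Y = E \otimes L$. The induced cone $Y \cap X^+$ is exactly the cone appearing in \eqref{eq:extend-cone}, so the only real content is verifying the \emph{domination hypothesis}: for every $x \in \mathbb{R}^d \otimes L$ there exists $y \in E \otimes L$ with $y - x \in \mathbb{R}_+^d \otimes L^+$. Granting this, Theorem~\ref{thm:riesz} yields the required extension $\tilde\phi$.

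To verify the domination hypothesis I would fix once and for all a point $p \in E \cap \operatorname{ri}(\mathbb{R}_+^d)$, which exists by hypothesis \eqref{eq:E-strictly-positive}; then every coordinate $p_i > 0$. Since $(L,L^+)$ is a proper ordered vector space, $L^+$ has nonempty interior, so $L$ admits an order unit $\mathds{1}_L \in L^+$. Write $x = \sum_{i=1}^d e_i \otimes x_i$ in the standard basis of $\mathbb{R}^d$. Using the order unit, for each $i$ there exists $\lambda_i > 0$ with $\lambda_i \mathds{1}_L - p_i^{-1} x_i \in L^+$. Setting $\lambda := \max_i \lambda_i$ and $u := \lambda \mathds{1}_L$, we obtain $p_i u - x_i \in L^+$ for every $i$.

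Now set $y := p \otimes u$. Clearly $y \in E \otimes L$ since $p \in E$, and
\begin{equation*}
y - x = \sum_{i=1}^d e_i \otimes (p_i u - x_i) \in \mathbb{R}_+^d \otimes L^+,
\end{equation*}
since each summand is a pure tensor of a positive element of $\mathbb{R}_+^d$ with an element of $L^+$. This establishes the domination hypothesis, so the M.\ Riesz extension theorem applies and produces the desired positive linear form $\tilde\phi: \mathbb{R}^d \otimes L \to \mathbb{R}$.

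The only place where one must be careful is the choice of the dominating element: if one tried to dominate each $x_i$ separately and then sum, one would not land in $E \otimes L$ because the individual $e_i$ need not belong to $E$. The key idea is to push the whole burden onto the single element $p \in E \cap \operatorname{ri}(\mathbb{R}_+^d)$ and use positivity of all its coordinates to dominate each component $x_i$ by a suitable multiple of an order unit in $L$. This is the step where both hypotheses on $E$ (nonempty relative interior intersection with $\mathbb{R}_+^d$) and on $L^+$ (properness, hence existence of an order unit) are genuinely used.
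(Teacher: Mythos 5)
Your proof is correct and follows essentially the same route as the paper: both apply the M.\ Riesz extension theorem with $X = \mathbb R^d \otimes L$, $X^+ = \mathbb R^d_+ \otimes L^+$, $Y = E \otimes L$, and verify the domination hypothesis using a strictly positive point of $E$. The only difference is in that last verification: the paper reduces to simple tensors $r \otimes v$ and splits $v = v_+ - v_-$ using that $L^+$ is generating, whereas you dominate all coordinates $x_i$ at once by a multiple of an order unit of $L$ (whose existence follows from properness of $L^+$, as recorded in the preliminaries); both are valid, and your closing remark that the dominating element must carry its $\mathbb R^d$-factor inside $E$ is precisely the point of the paper's construction as well.
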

\begin{proof}
We shall use Theorem \ref{thm:riesz} with $X = \mathbb R^d \otimes L$, $X^+ = \mathbb R^d_+ \otimes L^+$ and $Y = E \otimes L$. We have to show that for any $x \in \mathbb R^d \otimes L$, there is a $y \in E \otimes L$ such that $y-x \in  \mathbb R^d_+ \otimes L^+$. It is enough to consider simple tensors of the form $ x= r \otimes v$, where $r \in \mathbb R^d$ and $v \in L$; the general case will follow by linearity. Since $L^+$ is generating, there are $v_+$, $v_- \in L^+$ such that $v = v_+ -v_-$. Furthermore, from the assumption \eqref{eq:E-strictly-positive}, $E$ contains a vector with strictly positive coordinates $e$, hence there exist $\lambda_\pm > 0$ such that $\lambda_+ e - r \geq 0$ and  $\lambda_- e + r \geq 0$. Then,
\begin{equation*}
\lambda_+ e \otimes v_+ + \lambda_- e \otimes v_- - r \otimes v = \lambda_+ e \otimes v_+ + \lambda_- e\otimes v_- - r \otimes v_+ + r \otimes v_- \in \mathbb R^d_+ \otimes L^+
\end{equation*}
and $\lambda_\pm e \otimes v_\pm \in E \otimes L$. Thus, we can choose $ y = \lambda_+ e \otimes v_+ + \lambda_- e \otimes v_-$.
\end{proof}

 We can now identify the dual proper ordered vector space $(E^*,(E^+)^*)$. 
\begin{prop}\label{prop:E_dual}
Let us identify the dual vector space $E^*\cong E$, with duality given by the standard inner product in $\mathbb R^d$. We then have
$(E^+)^*=J(\mathbb R^d_+)$, where  $J:\mathbb R^d\to E$ is the orthogonal projection onto $E$.
For a proper ordered vector space $(L,L^+)$, we have  
\[
(E^+)^*\otimesmin L^+=(J\otimes \operatorname{id}_L)(\mathbb R^d_+\otimes L^+), 
\]
\end{prop}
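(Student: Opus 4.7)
The plan is to prove the two statements in sequence, using the extension result (Proposition \ref{prop:positive-form-extension}) as the main tool for the first assertion, and then bootstrap the second from the first.

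For the identification $(E^+)^* = J(\mathbb{R}^d_+)$, I would argue both inclusions. The easy direction ``$\supseteq$'': given $r \in \mathbb{R}^d_+$ and $x \in E^+ \subseteq E$, since $x \in E$ we have $J(x) = x$, so $\langle J(r), x \rangle = \langle r, J(x) \rangle = \langle r, x \rangle \geq 0$ (both vectors lie in $\mathbb{R}^d_+$), showing $J(r) \in (E^+)^*$. For the harder direction ``$\subseteq$'', take $y \in (E^+)^*$ and define $\phi:E \to \mathbb R$ by $\phi(x) = \langle y, x\rangle$, which is positive on $(E,E^+)$. Apply Proposition \ref{prop:positive-form-extension} with $(L,L^+) = (\mathbb R,\mathbb R_+)$ (tensoring by $\mathbb R$ is trivial, so the proposition provides a positive extension $\tilde\phi:\mathbb R^d \to \mathbb R$). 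Any positive linear form on $(\mathbb{R}^d, \mathbb R_+^d)$ is represented by inner product with some $r \in \mathbb{R}^d_+$, so $\tilde\phi = \langle r, \cdot \rangle$. Restricting to $E$ and using $y \in E$ gives $\langle y, x \rangle = \langle r, x \rangle = \langle J(r), x\rangle$ for every $x \in E$, hence $y = J(r)$.

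For the tensor identity $(E^+)^* \otimes_{\min} L^+ = (J \otimes \mathrm{id}_L)(\mathbb R^d_+ \otimes L^+)$, the argument is purely formal given part 1. The minimal tensor cone is by definition the convex cone generated by simple tensors. For ``$\supseteq$'', any generator $r \otimes v$ with $r \in \mathbb R^d_+$, $v \in L^+$ is mapped by $J \otimes \mathrm{id}_L$ to $J(r) \otimes v$; by part 1, $J(r) \in (E^+)^*$, so this simple tensor lies in $(E^+)^* \otimes_{\min} L^+$, and the conclusion follows by taking conic combinations (and noting $J \otimes \mathrm{id}_L$ is linear). For ``$\subseteq$'', any generator $y \otimes v$ of $(E^+)^* \otimes_{\min} L^+$ can, by part 1, be written as $J(r) \otimes v = (J \otimes \mathrm{id}_L)(r \otimes v)$ with $r \otimes v \in \mathbb R^d_+ \otimes L^+$; linearity and the fact that the minimal tensor cone is generated by such simple tensors give the inclusion.

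The main subtlety is the hard direction of part 1, which genuinely uses the previously established extension theorem; everything else is a direct unwinding of definitions. Note that the hypothesis \eqref{eq:E-strictly-positive} is used (implicitly, through Proposition \ref{prop:positive-form-extension}) precisely to guarantee the dominating element needed for the M.~Riesz extension. The second part then requires no additional analytic input because the simplicial nature of $\mathbb R^d_+$ ensures there is no ambiguity between $\otimes_{\min}$ and $\otimes_{\max}$ on the source.
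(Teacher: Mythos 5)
Your proposal is correct and follows essentially the same route as the paper: the inclusion $J(\mathbb R^d_+)\subseteq (E^+)^*$ via self-adjointness of $J$ and self-duality of $\mathbb R^d_+$, the reverse inclusion via the extension result of Proposition \ref{prop:positive-form-extension}, and the tensor identity by the same formal manipulation of simple tensors (the paper writes the two inclusions in one line, you separate them, but the content is identical). No gaps.
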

 
\begin{proof}
Let $r\in \mathbb R^d_+$, then for $e\in E^+ \subseteq \mathbb R^d_+$
\[
\langle J(r),e\rangle=\langle r, e\rangle \ge 0,
\]
since $\mathbb R^d_+$ is self-dual, so that $J(\mathbb R^d_+)\subseteq (E^+)^*$. 
By Proposition
\ref{prop:positive-form-extension}, any $\varphi\in (E^+)^*$ extends to a positive form $\tilde \varphi: \mathbb
R^d\to \mathbb R$, so that $\varphi=J(\tilde \varphi)$, which implies the reverse inclusion.
If $(L,L^+)$ is a proper ordered vector space, then $(E^+)^*\otimesmin L^+$ is a cone of elements of the form 
\[
\sum_j \varphi_j\otimes v_j= \sum_j J(\tilde \varphi_j)\otimes v_j= (J\otimes \operatorname{id}_L)(\sum_j \tilde\varphi_j\otimes v_j)
\]
for $v_j\in L^+$ and $\tilde \varphi_j\in \mathbb R^d_+$, proving the last statement.
\end{proof}

We now provide a useful characterization of the maximal tensor product of $E^+$ with $L^+$, identifying at the same time the cone appearing in Equation \eqref{eq:extend-cone}.

\begin{prop} \label{prop:rightcone}
For $E$ as above and $(L,L^+)$ a proper ordered vector space, it holds that
\begin{equation*}
E^+ \otimes_{\mathrm{max}} L^+ = (E \otimes L) \cap (\mathbb R_+^d \otimes L^+).
\end{equation*}
\end{prop}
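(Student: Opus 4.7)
My plan is to compute the polar cone of $C:=(E\otimes L)\cap(\mathbb R^d_+\otimes L^+)$ and apply the bipolar theorem. The cone $C$ is closed and convex in $E\otimes L$ (intersection of a subspace with the closed cone $\mathbb R^d_+\otimes L^+$, which is unambiguous since $\mathbb R^d_+$ is simplicial), so $C=C^{**}$. Since by definition $E^+\otimesmax L^+=((E^+)^*\otimesmin (L^+)^*)^*$, the whole statement reduces to the identity
\begin{equation*}
    C^* = (E^+)^*\otimesmin(L^+)^*,
\end{equation*}
where the pairing between $E\otimes L$ and $E^*\otimes L^*\cong E\otimes L^*$ is inherited from the standard inner product on $\mathbb R^d$ and the canonical one between $L$ and $L^*$.

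For the easy inclusion $(E^+)^*\otimesmin(L^+)^*\subseteq C^*$, I would take $\varphi\in(E^+)^*\otimesmin(L^+)^*$ and use Proposition \ref{prop:E_dual} (applied to the proper ordered vector space $(L^*,(L^+)^*)$) to write $\varphi=(J\otimes\operatorname{id}_{L^*})(\tilde\varphi)$ for some $\tilde\varphi\in\mathbb R^d_+\otimes(L^+)^*$. Self-adjointness of the orthogonal projection $J$ then gives $\langle\varphi,z\rangle=\langle\tilde\varphi,z\rangle$ for every $z\in E\otimes L$, and the right-hand side is nonnegative on $\mathbb R^d_+\otimes L^+\supseteq C$, so $\varphi\in C^*$.

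The reverse inclusion $C^*\subseteq(E^+)^*\otimesmin(L^+)^*$ is where the extension theorem of the section enters, and is the main step. Given $\varphi\in C^*$, i.e.\ a positive linear form $(E\otimes L,C)\to\mathbb R$, Proposition \ref{prop:positive-form-extension} supplies a positive extension $\tilde\varphi:(\mathbb R^d\otimes L,\mathbb R^d_+\otimes L^+)\to\mathbb R$; equivalently $\tilde\varphi\in\mathbb R^d_+\otimes(L^+)^*$ (unambiguous by simpliciality). Because restriction to $E\otimes L$ coincides, via the inner product on $\mathbb R^d$, with the application of $J\otimes\operatorname{id}_{L^*}$, we conclude $\varphi=(J\otimes\operatorname{id}_{L^*})(\tilde\varphi)\in(E^+)^*\otimesmin(L^+)^*$ by Proposition \ref{prop:E_dual}. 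Beyond this extension step — the one genuinely analytic ingredient — the argument is purely bookkeeping with the identity $\langle r,e\rangle=\langle J(r),e\rangle$ for $r\in\mathbb R^d$, $e\in E$, and the bipolar theorem.
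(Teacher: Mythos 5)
Your argument is correct, and it is essentially a dualized version of the paper's proof with one genuinely different ingredient. Writing $C:=(E\otimes L)\cap(\mathbb R^d_+\otimes L^+)$, the inclusion $(E^+)^*\otimesmin(L^+)^*\subseteq C^*$ that you call ``easy'' is, after unwinding the bipolar theorem, exactly the paper's hard direction ``$\supseteq$'': the paper pairs $z\in C$ against generators $\beta\otimes\alpha$ of the minimal cone and replaces $\beta$ by a positive extension $\tilde\beta$ via Proposition \ref{prop:E_dual}, which is the same computation as your $\langle\varphi,z\rangle=\langle\tilde\varphi,z\rangle$ step. Where you genuinely diverge is the other inclusion, $C^*\subseteq(E^+)^*\otimesmin(L^+)^*$: you obtain it from Proposition \ref{prop:positive-form-extension} applied to a general $L$, whereas the paper gets the corresponding primal inclusion ``$\subseteq$'' directly from monotonicity of $\otimesmax$ in each factor, with no extension theorem and no bipolar argument. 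Your route costs slightly more: you must check that $C$ is closed (it is, since $\mathbb R^d_+\otimes L^+=\mathbb R^d_+\otimesmin L^+$ is closed for proper cones and you intersect with a subspace) and you lean on the full-strength extension result; what it buys is that it makes explicit the equivalence the paper only records afterwards in Remark \ref{rem:subspace-max}, namely that positive forms on $(E\otimes L, E^+\otimesmax L^+)$ are precisely restrictions of positive forms on $(\mathbb R^d\otimes L,\mathbb R^d_+\otimes L^+)$. There is no circularity in your use of Proposition \ref{prop:positive-form-extension}, since you invoke it for the cone $C$ as originally stated rather than for $E^+\otimesmax L^+$.
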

\begin{proof}
The inclusion ``$\subseteq$'' follows from the monotonicity of the $\max$ tensor product with respect to each factor. To show the reverse inclusion ``$\supseteq$'', we have to prove that for any $z \in (E \otimes L) \cap (\mathbb R_+^d \otimes L^+)$, and for any $\beta \in (E^+)^*$, $\alpha \in (L^+)^*$, we have that $\langle \beta \otimes \alpha, z \rangle \geq 0$.

By Proposition \ref{prop:E_dual}, $\beta \in  (E^+)^*$ 
implies that $\beta=J(\tilde \beta)$ for a positive form $\tilde \beta: \mathbb R^d \to \mathbb R$.  Since $z \in \mathbb R^d_+ \otimes L^+$, we have a decomposition $z = \sum_{i = 1}^d r_i \otimes v_i$, where $r_i \in \mathbb R^d_+$ and $v_i \in L^+$. 
This yields 
$$\langle \beta \otimes \alpha, z \rangle = \langle \tilde \beta \otimes \alpha, z \rangle =  \sum_{i = 1}^d \tilde \beta(r_i)\alpha( v_i) \geq 0,$$
finishing the proof.
\end{proof}

\begin{cor} \label{cor:E-factors}
For $E_1$, $E_2$ satisfying Equation~\eqref{eq:E-strictly-positive} 
(but not necessarily of the same dimension), it holds that
\begin{equation*}
E_1^+ \otimes_{\mathrm{max}} E_2^+ = (E_1 \otimes E_2) \cap (\mathbb R_+^{d_1} \otimes \mathbb R_+^{d_2}).
\end{equation*}
\end{cor}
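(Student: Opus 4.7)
The plan is to derive Corollary~\ref{cor:E-factors} as a direct consequence of Proposition~\ref{prop:rightcone}, used once (with a short direct verification), or twice if one prefers to avoid coordinates. The key observation that makes Proposition~\ref{prop:rightcone} applicable in this setting is that, as noted at the start of Section~\ref{sec:extension}, the cone $E_2^+ = E_2 \cap \mathbb R_+^{d_2}$ is proper in $E_2$ whenever $E_2$ satisfies \eqref{eq:E-strictly-positive}; so $(E_2, E_2^+)$ qualifies as the ``proper ordered vector space'' argument $(L, L^+)$ of the proposition.

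First, I would apply Proposition~\ref{prop:rightcone} with the subspace $E = E_1 \subseteq \mathbb R^{d_1}$ and the proper ordered vector space $(L, L^+) = (E_2, E_2^+)$ to get
$$E_1^+ \otimes_{\max} E_2^+ = (E_1 \otimes E_2) \cap (\mathbb R_+^{d_1} \otimes E_2^+).$$
It then suffices to show that $(E_1 \otimes E_2) \cap (\mathbb R_+^{d_1} \otimes E_2^+) = (E_1 \otimes E_2) \cap (\mathbb R_+^{d_1} \otimes \mathbb R_+^{d_2})$. The inclusion $\subseteq$ is immediate from $E_2^+ \subseteq \mathbb R_+^{d_2}$. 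For $\supseteq$, I would take $z$ in the right-hand side and expand it along the standard basis $\{e_i\}$ of $\mathbb R^{d_1}$ as $z = \sum_{i=1}^{d_1} e_i \otimes w_i$ with $w_i \in \mathbb R^{d_2}$. Since $z \in E_1 \otimes E_2 \subseteq \mathbb R^{d_1} \otimes E_2$, each coefficient $w_i$ must lie in $E_2$; since $z$ has nonnegative coordinates in the product basis $\{e_i \otimes f_j\}$, each $w_i$ must also lie in $\mathbb R_+^{d_2}$. Therefore $w_i \in E_2 \cap \mathbb R_+^{d_2} = E_2^+$ for all $i$, which gives $z \in \mathbb R_+^{d_1} \otimes E_2^+$ (this tensor product is unambiguous because $\mathbb R_+^{d_1}$ is simplicial).

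Equivalently, the final step can be packaged as a second invocation of Proposition~\ref{prop:rightcone}, this time with $E = E_2$ and $(L, L^+) = (\mathbb R^{d_1}, \mathbb R_+^{d_1})$, which reads, after the canonical symmetry of $\otimes_{\max}$,
$$\mathbb R_+^{d_1} \otimes E_2^+ = (\mathbb R^{d_1} \otimes E_2) \cap (\mathbb R_+^{d_1} \otimes \mathbb R_+^{d_2});$$
intersecting both sides with $E_1 \otimes E_2 \subseteq \mathbb R^{d_1} \otimes E_2$ then yields the claim. I do not expect any serious obstacle here; the only point requiring care is that the coordinate decomposition along the standard basis of $\mathbb R^{d_1}$ is compatible with \emph{both} constraints simultaneously, namely lying inside the subspace $E_1 \otimes E_2$ and having nonnegative entries in the product basis, which is what forces each $w_i$ into the smaller cone $E_2^+$.
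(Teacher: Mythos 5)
Your proof is correct, and it takes a slightly different route from the paper's. The paper proves the corollary by repeating the dual argument of Proposition~\ref{prop:rightcone} directly: given $z$ in the right-hand side and $\beta\otimes\alpha$ with $\beta\in(E_1^+)^*$, $\alpha\in(E_2^+)^*$, it extends \emph{both} functionals to positive forms on $\mathbb R^{d_1}$ and $\mathbb R^{d_2}$ (via Proposition~\ref{prop:E_dual}) and pairs them with a decomposition of $z$ into simple tensors of nonnegative vectors. You instead use Proposition~\ref{prop:rightcone} as a black box, first with $(L,L^+)=(E_2,E_2^+)$ and then either a second time with $(L,L^+)=(\mathbb R^{d_1},\mathbb R_+^{d_1})$ or via the direct coordinate slice $z=\sum_i e_i\otimes w_i$, observing that each $w_i$ lands in $E_2\cap\mathbb R_+^{d_2}=E_2^+$. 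All the hypotheses you need are in place: $(E_2,E_2^+)$ is proper by the remark at the start of Section~\ref{sec:extension}, the tensor product with a simplicial cone is unambiguous, and the coefficients $w_i=(e_i^*\otimes\mathrm{id})(z)$ are simultaneously constrained by membership in $\mathbb R^{d_1}\otimes E_2$ and by entrywise nonnegativity, exactly as you say. What your approach buys is modularity -- the corollary becomes a formal consequence of the proposition rather than a re-run of its proof -- at the cost of an extra application and a (small) coordinate verification; the paper's symmetric two-sided extension is arguably more direct but requires the reader to re-trace the earlier argument. Both ultimately rest on the same Riesz extension mechanism, so neither is more general than the other.
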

\begin{proof}
The proof is almost the same as for Proposition \ref{prop:rightcone}, only that the functionals on both subspaces need to be extended.
\end{proof}

\begin{remark}\label{rem:subspace-max}
Using Proposition \ref{prop:rightcone}, one can restate Proposition \ref{prop:positive-form-extension} as follows: Any  positive linear form $\phi: (E \otimes L,E^+ \otimes_{\max} L^+) \to \mathbb R$ can be extended to a positive linear form $\tilde \phi: \mathbb R^d \otimes L \to \mathbb R$ for proper $L^+$.
\end{remark}

 We now study extendability of general positive maps on $(E,E^+)$. 
 The proof technique is inspired by the finite dimensional version of Arveson's extension theorem
\cite[Theorem 6.2]{Paulsen2002}.

\begin{prop}\label{prop:positivity-extension}
Let $E \subseteq \mathbb R^d$ be a subspace such that $E \cap \operatorname{ri}(\mathbb R_+^d) \neq \emptyset$ and $E^+
= E \cap \mathbb R^d_+$. Let $J$ be the orthogonal projection onto $E$ and let $(L,L^+)$ be an proper ordered vector space.
 Finally, let $\Phi: E \to L^\ast$ be a linear map. The following are equivalent:
\begin{enumerate}
\item There exists a positive extension $\tilde \Phi: (\mathbb R^d, \mathbb R_+^d) \to (L^\ast, (L^+)^\ast)$ of $\Phi$.
\item The linear map $\Phi \otimes \mathrm{id}_L: (E \otimes L, E^+ \otimes_{\max} L^+) \to (L^\ast \otimes L, (L^+)^\ast \otimes_{\min} L^+)$ is positive.
\item  The form 
$s_\Phi: (E \otimes L, E^+ \otimes_{\max} L^+) \to \mathbb R$
is positive.
\item $\varphi^\Phi\in (J\otimes \operatorname{id}_{L^\ast})(\mathbb R^d_+\otimes (L^+)^\ast)$.
\item $\Phi$ is entanglement breaking.
\
\end{enumerate}
\end{prop}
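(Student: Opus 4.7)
The plan is to establish (2) $\Leftrightarrow$ (3) $\Leftrightarrow$ (5) directly from Proposition \ref{prop:EB}, then close the circle with (1) $\Rightarrow$ (2) and (3) $\Rightarrow$ (1), and finally handle (1) $\Leftrightarrow$ (4) as a dualization calculation. Since $L^+$ is closed, the equivalences (2) $\Leftrightarrow$ (3) $\Leftrightarrow$ (5) are essentially instances of Proposition \ref{prop:EB} applied with $M = E$ and $M^+ = E^+$; one checks independently that each of these conditions already forces $\Phi$ itself to be positive (for instance, from (3), plugging a simple tensor $e \otimes v$ with $e \in E^+$, $v \in L^+$ yields $\langle \Phi(e), v\rangle = s_\Phi(e \otimes v) \geq 0$, and $L^+$ is closed).

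For (1) $\Rightarrow$ (2), let $\tilde\Phi : (\mathbb R^d, \mathbb R^d_+) \to (L^\ast,(L^+)^\ast)$ be a positive extension. Because $\mathbb R^d_+$ is simplicial, every element of $\mathbb R^d_+ \otimes L^+$ may be written as $\sum_i e_i \otimes w_i$ with $w_i \in L^+$ and $\{e_i\}$ the standard basis, so $(\tilde\Phi \otimes \mathrm{id}_L)(\mathbb R^d_+ \otimes L^+) \subseteq (L^+)^\ast \otimesmin L^+$. Since the min and max tensor products with a simplicial cone agree, $\tilde\Phi \otimes \mathrm{id}_L$ is positive from max to min, and restricting to $E \otimes L$ via Proposition \ref{prop:rightcone}, namely $E^+ \otimesmax L^+ = (E \otimes L) \cap (\mathbb R^d_+ \otimes L^+)$, yields (2). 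Conversely, (3) $\Rightarrow$ (1) is the heart of the argument: Proposition \ref{prop:rightcone} recasts (3) as positivity of $s_\Phi$ on $(E \otimes L) \cap (\mathbb R^d_+ \otimes L^+)$, and Proposition \ref{prop:positive-form-extension} then produces a positive extension $\tilde s$ of $s_\Phi$ to all of $\mathbb R^d \otimes L$ with $\mathbb R^d_+ \otimes L^+$ as positive cone. The corresponding linear map $\tilde\Phi: \mathbb R^d \to L^\ast$ given by Equation \eqref{eq:Phi_sPhi} is then positive since $\langle \tilde\Phi(r), v \rangle = \tilde s(r \otimes v) \geq 0$ for $r \in \mathbb R^d_+$, $v \in L^+$, and the equality $\tilde s|_{E \otimes L} = s_\Phi$ forces $\tilde\Phi|_E = \Phi$.

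For (1) $\Leftrightarrow$ (4), I will write $\iota : E \hookrightarrow \mathbb R^d$ for the inclusion and observe that, under the standard inner-product identifications $(\mathbb R^d)^\ast \cong \mathbb R^d$ and $E^\ast \cong E$, the adjoint $\iota^\ast$ is exactly the projection $J$. Thus $\Phi = \tilde\Phi \circ \iota$ dualizes to $\Phi^\ast = J \circ \tilde\Phi^\ast$, and applying Equation \eqref{eq:phiPhi-chi} gives $\varphi^\Phi = (J \otimes \mathrm{id}_{L^\ast})(\varphi^{\tilde\Phi})$. Now $\tilde\Phi$ is positive iff $\varphi^{\tilde\Phi} = \sum_i e_i \otimes \tilde\Phi(e_i)$ lies in $\mathbb R^d_+ \otimes (L^+)^\ast$, which proves (1) $\Rightarrow$ (4); conversely, given $\psi \in \mathbb R^d_+ \otimes (L^+)^\ast$ with $(J \otimes \mathrm{id})(\psi) = \varphi^\Phi$, the associated positive map $\tilde\Phi$ (determined by $\varphi^{\tilde\Phi} = \psi$) satisfies $\tilde\Phi|_E = \Phi$, using $J \circ \iota = \mathrm{id}_E$. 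The chief obstacle is bookkeeping—ensuring that the form- or tensor-level extension one obtains really lifts to an extension of the underlying map $\Phi$, and that the three dual identifications among $\mathbb R^d$, $E$, and their duals compose coherently throughout.
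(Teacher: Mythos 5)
Your proof is correct and follows essentially the same route as the paper: the inner equivalences come from Proposition \ref{prop:EB}, the implication (1) $\Rightarrow$ (2) from restricting $\tilde\Phi\otimes\mathrm{id}_L$ via Proposition \ref{prop:rightcone} and the simpliciality of $\mathbb R^d_+$, and the key implication (3) $\Rightarrow$ (1) from the form-extension result, Proposition \ref{prop:positive-form-extension}, together with Equation \eqref{eq:Phi_sPhi}. The only (harmless) divergence is that you treat (4) by a direct dualization $\varphi^\Phi=(J\otimes\mathrm{id})(\varphi^{\tilde\Phi})$, whereas the paper folds (4) into the Proposition \ref{prop:EB} equivalences via Proposition \ref{prop:E_dual}, which identifies $(J\otimes\mathrm{id}_{L^\ast})(\mathbb R^d_+\otimes(L^+)^\ast)$ with $(E^+)^\ast\otimes_{\min}(L^+)^\ast$; your explicit check that (2)--(3) already force positivity of $\Phi$ is a welcome detail the paper leaves implicit.
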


\begin{proof} By Propositions \ref{prop:EB} and \ref{prop:E_dual}, the statements (2) - (5) are equivalent. It is
therefore enough to show that (1) $\implies$ (2) and (3) $\implies$ (1). 
We start by showing that the existence of the positive extension (1) implies the positivity of the map $\Phi \otimes \mathrm{id}_L$ in (2). Both $\tilde \Phi$ and $\mathrm{id}_L$ are positive maps. Therefore, $\tilde \Phi \otimes \mathrm{id}_L: (\mathbb R^d \otimes L, \mathbb R_+^d \otimes_{\min} L^+ =  \mathbb R_+^d \otimes_{\max} L^+) \to (L^\ast \otimes L, (L^+)^\ast
\otimes_{\min} L^+)$ is positive. The claim follows by Proposition \ref{prop:rightcone},
since $\Phi \otimes \mathrm{id}_L$ is a restriction of this map to $E \otimes L$. 

It remains to show that $(3) \implies (1)$. Using Proposition \ref{prop:positive-form-extension} and Remark
\ref{rem:subspace-max}, we extend the form $s_\Phi$ to $\tilde s_\Phi : \mathbb R^d \otimes L \to \mathbb R$. Let
$\tilde \Phi: (\mathbb R^d, \mathbb R_+^d) \to (L^\ast,(L^+)^\ast)$ be the positive map related to $\tilde s_\Phi$ as in Equation~\eqref{eq:Phi_sPhi}. It remains to check that $\tilde \Phi$ is indeed an extension of $\Phi$. Let $v_j$ and $\alpha_j$ be elements of a basis of $L$ and its dual basis for all $j \in [\dim L$]. For any $e \in E$, we compute
\begin{align*}
\tilde \Phi(e) = \sum_{j=1}^{\dim L} \tilde s_\Phi(e \otimes v_j) \alpha_j= \sum_{j=1}^{\dim L} s_\Phi(e \otimes v_j)
\alpha_j=\Phi(e),
\end{align*}
finishing the proof.
\end{proof}

\begin{remark}
Note that the existence of a positive extension of $\Phi: E \to L^\ast$ can be checked using conic programming, see Section \ref{sec:conic-programming-map} in the Appendix.
\end{remark}

\section{Generalized spectrahedra} \label{sec:gen-spectrahedra}

In this section, we will generalize some of the theory of (free) spectrahedra to the setting of ordered vector spaces by allowing for more general cones than the cone of positive semidefinite matrices. We will reformulate the compatibility of measurements as an inclusion problem for generalized spectrahedra in Section \ref{sec:gen-spectra-and-comp}.

Recall that a spectrahedron \cite{ramana1995some, vinzant2014spectrahedron} is a convex set that can be represented by a linear matrix inequality, that is by positive semidefinite constraints. We define a generalized spectrahedron as a convex subset of some vector space which can be represented by positivity conditions with respect to some abstract cone. 

\begin{defi}
	Let $L,M$ be two finite-dimensional vector spaces and consider a cone $C \subseteq M \otimes L$. For a $g$-tuple of elements $a = (a_1, \ldots, a_g) \in M^g$, we define the \emph{generalized spectrahedron}
	$$\mathcal D_{a}(L,C) := \{(v_1, \ldots, v_g) \in L^g \, : \, \sum_{i=1}^g a_i \otimes v_i  \in C\}.$$
\end{defi}
\begin{remark}\label{rem:spectrahedra}
It is easy to see that any generalized spectrahedron is a convex cone and that the generalized spectrahedron is closed if $C$ is. In fact, note that the $g$-tuple $a\in M^g$ defines a linear map $a: \mathbb{R}^g\to M$, by $x\mapsto \sum_i x_ia_i$. The generalized spectrahedron $\mathcal D_{a}(L,C)$ is the largest cone in $\mathbb R^g\otimes L\cong L^g$ that makes  
the map
$$a\otimes \mathrm{id}_L : (\mathbb R^g \otimes L, \mathcal D_{a}(L,C)) \to (M \otimes L, C)$$
positive. If $a$ is a basis of $M$, then the corresponding map $a\otimes \mathrm{id}_L$ is an isomorphism through which the cones $C$ and $\mathcal D_a(L,C)$ are affinely isomorphic.
\end{remark}

Note that usual spectrahedra correspond to the choice $L = \mathbb R$, $M = \mathcal M^{\mathrm{sa}}_n(\mathbb C)$, and $C$ being the positive semidefinite cone. Free spectrahedra are the union over $d \geq 1$ of generalized spectrahedra for $L = \mathcal M^{\mathrm{sa}}_d(\mathbb C$) and $C$ being the PSD cone of $dn \times dn$ matrices; note that there exists no natural notion of generalized \emph{free} spectrahedra, since there are no canonical sequences of cones $(C_d)_{d \geq 1}$, with $C_1=C$.

We will now consider generalized spectrahedra which are in some sense minimal and maximal. The definitions are inspired by the corresponding notions for matrix convex sets, see \cite{davidson2016dilations, passer2018minimal}.
\begin{defi} \label{def:D_min}
Let $\mathcal C \subset \mathbb  R^{g}$ be a closed convex cone. Let $(L,L^+)$ be a finite-dimensional preordered vector space. Then
\begin{equation*}
    \mathcal D_{\min}(\mathcal C; L, L^+):= \left \{\sum_i x^{(i)}\otimes h_i\in \mathbb{R}^g\otimes L\cong L^g:\  x^{(i)} \in \mathcal C,~h_i \in L^+~\forall i\right\} \cong \mathcal C \otimes_{\min} L^+
\end{equation*}
is the \emph{minimal} generalized spectrahedron corresponding to $\mathcal C$.
\end{defi}

\begin{defi} \label{def:D_max}
Let $\mathcal C \subset \mathbb R^{g}$ be a closed convex cone. Let $(L,L^+)$ be a finite-dimensional preordered vector space with $L^+$ closed. Then,
\begin{equation*}
     \mathcal D_{\max}(\mathcal C; L, L^+):= \left\{(v_1, \ldots, v_g) \in L^g: \sum_{i = 1}^g h_i v_i \in L^+~\forall h \in \mathbb R^{g} \mathrm{~s.t.~} \sum_{i = 1}^g h_i c_i\geq 0~\forall c \in \mathcal C\right\} \cong \mathcal C \otimes_{\max} L^+
\end{equation*}
is the \emph{maximal} generalized spectrahedron corresponding to $\mathcal C$. 
\end{defi}
In order to show that these sets are indeed generalized spectrahedra and to justify the identifications with minimal and maximal cones, we prove a lemma.
\begin{lem}
Let $(L, L^+)$ be a preordered vector space. Let $e = (e_1, \ldots, e_g)$ be the canonical basis of $\mathbb R^g$. Then, 
\begin{equation*}
       \mathcal D_{\min}(\mathcal C; L, L^+) \cong \mathcal D_{e}(L, \mathcal C \otimes_{\min} L^+).
\end{equation*}
If $L^+$ is closed, then
\begin{equation*}
       \mathcal D_{\max}(\mathcal C; L, L^+) \cong \mathcal D_{e}(L, \mathcal C \otimes_{\max} L^+).
\end{equation*}
\end{lem}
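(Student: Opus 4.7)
The plan is to unwind both identifications $\mathcal D_{\min/\max}(\mathcal C; L, L^+) \cong \mathcal C \otimes_{\min/\max} L^+$ built into Definitions \ref{def:D_min} and \ref{def:D_max}, and then invoke Remark \ref{rem:spectrahedra}. The key elementary observation is that, because $e = (e_1, \ldots, e_g)$ is a basis of $\mathbb R^g$, the associated map $e \otimes \mathrm{id}_L : \mathbb R^g \otimes L \to \mathbb R^g \otimes L$ from Remark \ref{rem:spectrahedra} is the identity. Consequently, through the canonical identification $\iota : \mathbb R^g \otimes L \to L^g$, $\sum_j e_j \otimes w_j \mapsto (w_1, \ldots, w_g)$, we have $\mathcal D_e(L, C) \cong C$ for \emph{any} cone $C \subseteq \mathbb R^g \otimes L$. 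So it suffices to verify the two identifications $\mathcal D_{\min/\max}(\mathcal C; L, L^+) \cong \mathcal C \otimes_{\min/\max} L^+$ asserted in the definitions.

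For the minimal case, the set on the right of Definition \ref{def:D_min} is, modulo $\iota$, the set of finite sums $\sum_i x^{(i)} \otimes h_i$ with $x^{(i)} \in \mathcal C$ and $h_i \in L^+$. Since $\mathcal C$ and $L^+$ are already convex cones, such conic combinations coincide with convex combinations of simple tensors $x \otimes h$, which is precisely the defining formula for $\mathcal C \otimes_{\min} L^+$. Combined with $\mathcal D_e(L, \mathcal C \otimes_{\min} L^+) \cong \mathcal C \otimes_{\min} L^+$, this gives the first identification.

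For the maximal case, I would first translate the condition appearing in Definition \ref{def:D_max}: using the self-duality of $\mathbb R^g$, an element $h \in \mathbb R^g$ satisfies $\sum_i h_i c_i \geq 0$ for all $c \in \mathcal C$ if and only if $h \in \mathcal C^\ast$. Hence $(v_1, \ldots, v_g) \in \mathcal D_{\max}(\mathcal C; L, L^+)$ iff $\sum_i h_i v_i \in L^+$ for every $h \in \mathcal C^\ast$. Setting $z := \sum_i e_i \otimes v_i$ and testing this against functionals $\alpha \in (L^+)^\ast$, the closedness of $L^+$ together with the bipolar theorem yields that this condition is equivalent to $\langle h \otimes \alpha, z\rangle = \alpha\bigl(\sum_i h_i v_i\bigr) \geq 0$ for every $h \in \mathcal C^\ast$ and every $\alpha \in (L^+)^\ast$. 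Since the simple tensors $h \otimes \alpha$ generate $\mathcal C^\ast \otimes_{\min} (L^+)^\ast$ as a convex cone, this is exactly the statement that $z \in (\mathcal C^\ast \otimes_{\min} (L^+)^\ast)^\ast = \mathcal C \otimes_{\max} L^+$, again producing the desired identification via $\iota$.

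The argument is essentially a bookkeeping exercise; the only nontrivial ingredient is the bipolar theorem used in the maximal case, which is why the closedness hypothesis on $L^+$ is needed there but not for the minimal case. Hence the only real obstacle is keeping the two identifications $\iota : \mathbb R^g \otimes L \cong L^g$ and $\mathbb R^g \cong (\mathbb R^g)^\ast$ straight while performing the translation.
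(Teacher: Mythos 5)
Your proposal is correct and follows essentially the same route as the paper: unwind Definitions \ref{def:D_min} and \ref{def:D_max}, compare coordinates for the minimal case, and use duality of $\mathcal C$ together with the bipolar theorem (hence the closedness of $L^+$) for the maximal case. The only cosmetic difference is that you factor the argument through the observation $\mathcal D_e(L,C)\cong C$ for the standard basis, whereas the paper verifies the two equivalences directly; the substance is identical.
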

\begin{proof}
By definition, $(v_1, \ldots, v_g) \in \mathcal D_{\min}(\mathcal C; L, L^+)$ if and only if $v_j = \sum_{i} x_j^{(i)} h_i$ for some $x^{(i)} \in \mathcal C$ and $h_i \in L^+$. Moreover, $\sum_{j = 1}^g e_j \otimes v_j \in \mathcal C \otimes_{\min} L^+$ if and only if
\begin{equation*}
    \sum_{j = 1}^g e_j \otimes v_j = \sum_{i} x^{(i)} \otimes h_i \qquad  x^{(i)} \in \mathcal C, h_i \in L^+
\end{equation*}
A comparison of coordinates proves the first assertion. Let now $\sum_{i = 1}^g e_i \otimes v_i \in \mathcal C \otimes_{\max} L^+$. By definition, this is true if and only if 
\begin{equation*}
    \sum_{i = 1}^g \phi(e_i) v_i \in L^+  \qquad \forall \phi \in \mathcal C^\ast
\end{equation*}
since $L^+$ is closed. Realizing that for $c = \sum_{i = 1}^g c_i e_i$, we can write $\phi(c) = \sum_{i = 1}^g c_i h_i \geq 0$ with $h_i = \phi(e_i)$ for all $i \in [g]$, proves one inclusion. The second assertion follows by identifying $(h_1, \ldots, h_g)$ as in the assertion with some $\phi \in \mathcal C$. 
\end{proof}

The following proposition justifies that we speak of minimal and maximal cones.
\begin{prop} \label{prop:min-max_sandwich}
    Let $\mathcal C \subset \mathbb R^{g}$ be a closed convex cone. Let $(L,L^+)$, $(M,M^+)$ be two finite-dimensional preordered vector spaces with a tensor cone $C \subset M \otimes L$. Let $M^+$, $L^+$ be closed convex cones.  If $a \in M^g$ such that $\mathcal D_a(\mathbb R, M^+) = \mathcal C$, then
    \begin{equation*}
        \mathcal D_{\min}(\mathcal C; L, L^+) \subseteq \mathcal D_a(L,C) \subseteq  \mathcal D_{\max}(\mathcal C; L, L^+).
    \end{equation*}
Moreover, for a closed convex $\mathcal C^\prime$ such that $\mathcal C^\prime \subseteq \mathcal C$, it holds that
\begin{equation*}
     \mathcal D_{\min}(\mathcal C^\prime; L, L^+) \subseteq  \mathcal D_{\min}(\mathcal C; L, L^+) \quad \mathrm{and} \quad \mathcal D_{\max}(\mathcal C^\prime; L, L^+) \subseteq  \mathcal D_{\max}(\mathcal C; L, L^+).
\end{equation*}
\end{prop}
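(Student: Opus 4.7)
The plan is to split the statement into the sandwich inclusion and the two monotonicity claims, treating each part separately and invoking the identifications $\mathcal D_{\min}(\mathcal C;L,L^+)\cong \mathcal C\otimesmin L^+$ and $\mathcal D_{\max}(\mathcal C;L,L^+)\cong \mathcal C\otimesmax L^+$ provided by the preceding lemma. For the first inclusion $\mathcal D_{\min}(\mathcal C;L,L^+)\subseteq \mathcal D_a(L,C)$, by Definition~\ref{def:D_min} and linearity it suffices to treat a single generator of the form $v_j=x_j h$ with $x\in\mathcal C$ and $h\in L^+$. Then
\[
\sum_{j=1}^g a_j\otimes v_j=\Bigl(\sum_{j=1}^g x_j a_j\Bigr)\otimes h,
\]
and the hypothesis $\mathcal C=\mathcal D_a(\mathbb R,M^+)$ gives $\sum_j x_j a_j\in M^+$, so the right-hand side lies in $M^+\otimesmin L^+\subseteq C$.

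For the second inclusion $\mathcal D_a(L,C)\subseteq \mathcal D_{\max}(\mathcal C;L,L^+)$, the key observation is that the tuples $h\in\mathbb R^g$ appearing in Definition~\ref{def:D_max} are precisely the elements of the dual cone $\mathcal C^*\subseteq\mathbb R^g$. Fix $(v_1,\dots,v_g)\in\mathcal D_a(L,C)$ and $h\in\mathcal C^*$; the goal is $\sum_j h_j v_j\in L^+$. Since $L^+$ is closed, by the bipolar theorem it is enough to check that $\beta\bigl(\sum_j h_j v_j\bigr)\ge 0$ for every $\beta\in (L^+)^*$. Given such a $\beta$, the map $\mathrm{id}_M\otimes \beta$ is a tensor product of positive maps and hence is positive as a map $(M\otimes L,M^+\otimesmax L^+)\to (M\otimes\mathbb R,M^+\otimesmax \mathbb R_+)\cong (M,M^+)$ (using that $\mathbb R_+$ is simplicial). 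Applying it to $\sum_j a_j\otimes v_j\in C\subseteq M^+\otimesmax L^+$ yields
\[
\sum_{j=1}^g \beta(v_j)\,a_j\in M^+.
\]
Setting $u_j:=\beta(v_j)$, this says $(u_1,\dots,u_g)\in\mathcal D_a(\mathbb R,M^+)=\mathcal C$, and since $h\in\mathcal C^*$ we conclude $\beta\bigl(\sum_j h_j v_j\bigr)=\sum_j h_j u_j\ge 0$, as required.

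For the monotonicity claims, both are essentially formal. In the minimal case, any decomposition with coefficients in $\mathcal C'$ is in particular a decomposition with coefficients in $\mathcal C$, giving $\mathcal D_{\min}(\mathcal C';L,L^+)\subseteq \mathcal D_{\min}(\mathcal C;L,L^+)$. In the maximal case, $\mathcal C'\subseteq\mathcal C$ dualizes to $\mathcal C^*\subseteq (\mathcal C')^*$, so the condition defining $\mathcal D_{\max}(\mathcal C';L,L^+)$ is indexed by a larger set of test tuples and is therefore more restrictive, yielding $\mathcal D_{\max}(\mathcal C';L,L^+)\subseteq \mathcal D_{\max}(\mathcal C;L,L^+)$.

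The only non-routine step is the second inclusion; the rest is a direct unfolding of the definitions. The main idea there, which I expect to be the mild obstacle to find, is the slicing trick: apply a positive functional $\beta\in(L^+)^*$ to one tensor factor, thereby converting the a priori abstract membership in $M^+\otimesmax L^+$ into a concrete $g$-tuple in $\mathcal C$ whose pairing with $h\in\mathcal C^*$ is nonnegative by duality.
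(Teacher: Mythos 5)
Your proposal is correct and follows essentially the same route as the paper: the first inclusion by regrouping the minimal-cone decomposition and using $\mathcal D_a(\mathbb R,M^+)=\mathcal C$, the second by testing against functionals $\beta\in(L^+)^*$ to land in $\mathcal C$ and then pairing with $h\in\mathcal C^*$, with closedness of $M^+$ and $L^+$ entering through the bipolar theorem exactly as in the paper. The only cosmetic difference is that you package the duality step as positivity of the slice map $\mathrm{id}_M\otimes\beta$ rather than evaluating $\varphi\otimes\psi$ on the tensor directly; the underlying computation is identical.
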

\begin{proof}
Let $v \in \mathcal D_{\min}(\mathcal C; L, L^+) $. Then, $v = (v_1, \ldots, v_g)$ where
\begin{equation}
    v_j = \sum_i x^{(i)}_j h_i \qquad \forall j \in [g]
\end{equation}
for some $x^{(i)} \in \mathcal C$ and $h_i \in L^+$. Thus, we can rewrite
\begin{equation*}
    \sum_{j = 1}^g a_j \otimes v_j = \sum_{i} \left(\sum_{j = 1}^g x_j^{(i)} a_j \right) \otimes h_i
\end{equation*}
Since $\mathcal D_a(\mathbb R, M^+) = \mathcal C$, it holds that
\begin{equation*}
    \sum_{j = 1}^g x_j^{(i)} a_j \in M^+ \qquad \forall i.
\end{equation*}
Thus, since $C$ is a tensor cone,
\begin{equation*}
   \sum_{j = 1}^g a_j \otimes v_j \in M^+ \otimes_{\min} L^+ \subseteq C.
\end{equation*}
Therefore, we can conclude that $v \in \mathcal D_a(L,C)$, which proves the first assertion. 

For the second assertion, let $v =(v_1, \ldots, v_g) \in D_a(L,C)$. Thus
\begin{equation*}
    \sum_{j = 1}^g a_j \otimes v_j \in C \subseteq M^+ \otimes_{\max} L^+,
\end{equation*}
since $C$ is a tensor cone. Thus,
\begin{equation*}
    \sum_{j = 1}^g \varphi(a_j) \psi(v_j) \geq 0 \qquad \forall \varphi \in (M^+)^\ast, \psi \in (L^+)^\ast.
\end{equation*}
Using that $M^+ \cong (M^+)^{\ast \ast}$ and that $\mathcal D_a(\mathbb R, M^+) = \mathcal C$, it follows that $(\psi(v_1), \ldots, \psi(v_g)) \in \mathcal C$ for all $\psi \in (L^+)^\ast$. Hence, for all $\psi \in (L^+)^\ast$, it holds that
\begin{equation*}
    \sum_{i = 1}^g h_i \psi(v_i)\geq 0~\forall h \in \mathbb R^{g} \mathrm{~s.t.~} \sum_{i = 1}^g h_i c_i\geq 0~\forall c \in \mathcal C.
\end{equation*}
Using that $L^+ \cong (L^+)^{\ast \ast}$, it follows that $v \in \mathcal D_{\max}(\mathcal C; L, L^+)$. The inclusions in the last assertion follow directly from the Definitions \ref{def:D_min} and \ref{def:D_max}.
\end{proof}

\begin{remark} In view of Remark \ref{rem:spectrahedra}, the first inclusion of the above proposition follows directly from the fact that the map $a: (\mathbb{R}^g,\mathcal C)\to (M,M^+)$ is positive and for any positive map, its tensor product with the identity map is positive with respect to the minimal tensor cone. 
Since we have $\mathcal C^*=a^*((M^+)^*)$, the second inclusion is a consequence of the same property of $a^*$.
\end{remark}

\begin{remark} Assume that $M^+$ and $L^+$ are proper cones.
By the results of \cite{Aubrun2019a}, Proposition \ref{prop:min-max_sandwich} implies that the generalized spectrahedron  $\mathcal D_a(L,C)$ does not  depend on the choice of the tensor cone $C$  if and only if at least one of the cones $\mathcal C=\mathcal D_a(\mathbb{R},M^+)$ or $L^+$ is simplicial. 
\end{remark}

As for free spectrahedra, we can connect the inclusion of generalized spectrahedra to the positivity of an associated map \cite{helton_matricial_2013,  davidson2016dilations, helton2019dilations}.

\begin{prop} \label{prop:containment-to-map-inclusion}
Let $L$, $M$ and $N$ be finite-dimensional vector spaces and let $C_M \subset 
M \otimes L$, $C_N \subset N \otimes L$ be two cones. Moreover, let $a \in M^g$, $b \in N^g$ be two tuples, where $a$ consists of linearly independent elements which span the subspace $M^\prime \subseteq M$. We define a map $\Phi: M^\prime \to N$, $\Phi(a_i) = b_i$ for all $i \in [g]$. Then,
\begin{equation*}
\mathcal D_a(L, C_M) \subseteq \mathcal D_b(L, C_N) 
\end{equation*}
if and only if $\Phi \otimes \mathrm{id}_L: (M^\prime \otimes L,  M^\prime \otimes L \cap C_M) \to (N \otimes L, C_N)$ is positive.
\end{prop}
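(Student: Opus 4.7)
The proof is essentially an unpacking of definitions, made straightforward by the linear independence hypothesis on $a$. The plan is as follows.

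First I would note that because $a_1, \ldots, a_g$ are linearly independent and span $M'$, the linear map
\[
\Psi : L^g \to M' \otimes L, \qquad (v_1, \ldots, v_g) \mapsto \sum_{i=1}^g a_i \otimes v_i,
\]
is a bijection (after identifying $L^g \cong \mathbb R^g \otimes L$; this is exactly Remark \ref{rem:spectrahedra} applied to the basis $a$). Under $\Psi$, the generalized spectrahedron $\mathcal D_a(L, C_M)$ corresponds exactly to the cone $M' \otimes L \cap C_M \subseteq M' \otimes L$. The key computational observation is that for any $(v_1, \ldots, v_g) \in L^g$,
\[
(\Phi \otimes \mathrm{id}_L)\Bigl(\sum_{i=1}^g a_i \otimes v_i\Bigr) = \sum_{i=1}^g \Phi(a_i) \otimes v_i = \sum_{i=1}^g b_i \otimes v_i,
\]
i.e.\ $\Phi \otimes \mathrm{id}_L$ transports the defining tensor of $\mathcal D_a$ to the defining tensor of $\mathcal D_b$.

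For the forward direction, I would assume $\mathcal D_a(L, C_M) \subseteq \mathcal D_b(L, C_N)$ and take an arbitrary $z \in M' \otimes L \cap C_M$. Using the basis $a$ of $M'$, write $z = \sum_i a_i \otimes v_i$ uniquely, so that $(v_1, \ldots, v_g) \in \mathcal D_a(L, C_M) \subseteq \mathcal D_b(L, C_N)$. By the observation above, $(\Phi \otimes \mathrm{id}_L)(z) = \sum_i b_i \otimes v_i \in C_N$, establishing the positivity of $\Phi \otimes \mathrm{id}_L$.

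For the converse, I would assume $\Phi \otimes \mathrm{id}_L$ is positive in the stated sense and take $(v_1, \ldots, v_g) \in \mathcal D_a(L, C_M)$. Then $z := \sum_i a_i \otimes v_i$ lies in $C_M$ and also in $M' \otimes L$, hence in the domain cone; positivity gives $(\Phi \otimes \mathrm{id}_L)(z) = \sum_i b_i \otimes v_i \in C_N$, which is exactly the condition $(v_1, \ldots, v_g) \in \mathcal D_b(L, C_N)$.

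There is no real obstacle here; the only subtlety is that one uses linear independence of $a$ twice, once to ensure $\Phi$ is well-defined on $M'$ and once to identify $\mathcal D_a(L, C_M)$ with the restriction of $C_M$ to $M' \otimes L$. Without linear independence, the map $\Phi$ on the span would not be determined by $\Phi(a_i)=b_i$, and the forward direction could fail.
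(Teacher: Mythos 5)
Your proof is correct and follows essentially the same route as the paper's: both directions unpack the definitions, using the linear independence of $a$ to write any element of $M'\otimes L$ uniquely as $\sum_i a_i\otimes v_i$ and then applying $\Phi\otimes\mathrm{id}_L$ termwise. The extra remarks on the bijection $L^g\cong M'\otimes L$ and on where linear independence enters are accurate but do not change the argument.
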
 
\begin{proof}
Let us assume the inclusion. Consider $z \in M^\prime \otimes L$. Thus, we can write
\begin{equation*}
z = \sum_{i = 1}^g a_i \otimes z_i,
\end{equation*}
where $z_i \in L$ for all $i \in [g]$. If $z \in C_M$, then $(z_1, \ldots, z_g) \in \mathcal D_a(L, C_M)$ and hence
\begin{equation*}
(\Phi \otimes \mathrm{id}_L)(z) = \sum_{i = 1}^g b_i \otimes z_i .
\end{equation*}
The right hand side is in $C_N$ since $\mathcal D_a(L, C_M) \subseteq \mathcal D_b(L, C_N)$. Conversely, let $\Phi \otimes \mathrm{id}_L$ be positive. Let $(v_1, \ldots, v_g) \in \mathcal D_a(L, C_M)$. Then,
\begin{equation*}
\sum_{i = 1}^g a_i \otimes v_i \in (M^\prime \otimes L) \cap C_M 
\end{equation*}
and the assertion follows from an application of $\Phi \otimes \mathrm{id}_L$ to this element.
\end{proof}

As for free spectrahedra, it is possible to look at what inclusion with respect to some preordered vector space $(L,L^+)$ implies for the inclusion with respect to $(\mathbb R, \mathbb R_+)$.
\begin{prop}
Let $(M,M^+)$, $(N,N^+)$ and $(L,L^+)$ be preordered vector spaces, where $N^+$ is closed and $L^+$ contains at least one element which is not in $\overline{-L^+}$. Moreover, let $C_M \subset M \otimes L$ and $C_N \subset N \otimes L$ be tensor cones and $a \in M^g$, $b \in N^g$ for some $g \in \mathbb N$. Then, 
\begin{equation*}
\mathcal D_a(L, C_M) \subseteq \mathcal D_b(L, C_N) \implies \mathcal D_a(\mathbb R, M^+) \subseteq \mathcal D_b(\mathbb R, N^+) 
\end{equation*}
\end{prop}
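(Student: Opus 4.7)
The plan is to lift scalar coefficients $x \in \mathcal D_a(\mathbb R,M^+)$ to tuples in $L^g$ by tensoring with a well-chosen element $v \in L^+$, apply the $L$-level inclusion hypothesis, then project back down via a linear functional on $L$ that is strictly positive on $v$. The non-triviality assumption on $L^+$ is exactly what guarantees such a functional exists.

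Concretely, take $x=(x_1,\dots,x_g) \in \mathcal D_a(\mathbb R,M^+)$, so that $m := \sum_i x_i a_i \in M^+$. By hypothesis, pick $v \in L^+$ with $v \notin \overline{-L^+}$. Form the $L$-tuple $(x_1 v, \dots, x_g v)$ and note
\[
\sum_{i=1}^g a_i \otimes (x_i v) = m \otimes v \in M^+ \otimes_{\min} L^+ \subseteq C_M,
\]
using that $C_M$ is a tensor cone. Thus $(x_1 v,\dots,x_g v) \in \mathcal D_a(L,C_M) \subseteq \mathcal D_b(L,C_N)$, which yields
\[
\Bigl(\sum_{i=1}^g x_i b_i\Bigr) \otimes v = \sum_{i=1}^g b_i \otimes (x_i v) \in C_N \subseteq N^+ \otimesmax L^+.
\]

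Next, apply pure tensor functionals. For any $\varphi \in (N^+)^\ast$ and $\psi \in (L^+)^\ast$, the maximal-tensor-product containment gives
\[
\psi(v)\, \varphi\Bigl(\sum_{i=1}^g x_i b_i\Bigr) \geq 0.
\]
The final step is to produce $\psi \in (L^+)^\ast$ with $\psi(v) > 0$. Since $\overline{-L^+}$ is a closed convex cone and $v \notin \overline{-L^+}$, the Hahn--Banach separation theorem furnishes a linear functional $\psi$ on $L$ with $\psi(v) > 0$ and $\psi(w) \leq 0$ for all $w \in \overline{-L^+}$; the latter condition is equivalent to $\psi \in (L^+)^\ast$. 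With such $\psi$ fixed, dividing out $\psi(v)>0$ shows $\varphi(\sum_i x_i b_i) \geq 0$ for every $\varphi \in (N^+)^\ast$. Since $N^+$ is closed, $N^+ = (N^+)^{\ast\ast}$ by the bipolar theorem, so $\sum_i x_i b_i \in N^+$, i.e.\ $x \in \mathcal D_b(\mathbb R, N^+)$, as required.

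The only subtle point is extracting the strictly positive functional $\psi$; this is precisely why the hypothesis on $L^+$ is formulated as ``contains an element not in $\overline{-L^+}$'' (equivalently, $L^+$ is not contained in a linear subspace, so $(L^+)^\ast$ separates points of $L^+$). Everything else is a routine combination of the tensor-cone sandwich $M^+\otimes_{\min} L^+ \subseteq C_M$ and $C_N \subseteq N^+\otimes_{\max} L^+$ with closedness of $N^+$.
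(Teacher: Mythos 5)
Your proof is correct and follows essentially the same route as the paper: lift $x$ to $(x_1v,\dots,x_gv)$ via an element $v\in L^+\setminus\overline{-L^+}$, use $M^+\otimes_{\min}L^+\subseteq C_M$ and $C_N\subseteq N^+\otimes_{\max}L^+$, and conclude with a functional $\psi\in(L^+)^\ast$ with $\psi(v)>0$ together with the bipolar theorem for the closed cone $N^+$. The only (immaterial) difference is that you obtain $\psi$ by Hahn--Banach separation of $v$ from $\overline{-L^+}$, while the paper argues that if every $\psi\in(L^+)^\ast$ vanished on $v$ then $-v\in\overline{L^+}$; your closing parenthetical gloss of the hypothesis is slightly imprecise (it says $\overline{L^+}$ is not a linear subspace, not that $L^+$ lies in no subspace), but this does not affect the argument.
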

\begin{proof}
Let $v \in L^+$, $v \not \in -\overline{L^+}$ and $x \in \mathcal D_a(\mathbb R, M^+)$. Then, $(x_1 v, \ldots, x_g v) \in \mathcal D_a(L, C_M)$, since $C_M$ contains in particular $M^+ \otimes_{\min}L^+$. Thus, 
\begin{equation*}
\left(\sum_{i = 1}^g x_i b_i\right) \otimes v \in C_N.
\end{equation*}
This implies that $\sum_{i = 1}^g x_i b_i \in (N^+)^{\ast \ast} \cong N^+$, since $C_N \subseteq N^+ \otimes_{\max} L^+$ and we can find $\phi \in  (L^+)^\ast$ such that $\phi(v) > 0$ since $\psi(v) = 0$ for all $\psi \in L^*$ implies $-v \in \overline{L^+}$. 
Hence $x \in \mathcal D_b(\mathbb R, N^+)$.
\end{proof}
In general, $\mathcal D_a(\mathbb R, M^+) \subseteq \mathcal D_b(\mathbb R, N^+)$ does not imply $\mathcal D_a(L, C_M) \subseteq \mathcal D_b(L, C_N)$. However, if $M$ and $N$ contain order units, the implication can be made true by shrinking the left hand side.

\begin{defi} \label{def:inclusion-constants}
Let $(M,M^+)$, $(N,N^+)$ and $(L,L^+)$ be preordered vector spaces where $M^+$ and $N^+$ contain order units $\mathds 1_M$ and $\mathds 1_N$, respectively. Moreover, let $C_M \subset M \otimes L$ and $C_N \subset N \otimes L$ be tensor cones and $a \in M^g$, $b \in N^g$ for some $g \in \mathbb N$. The \emph{set of inclusion constants for $\mathcal D_a(L, C_M)$ and $C_N$} is defined as 
\begin{align*}
\Delta_a(L, C_M, C_N) :=
&\{ s\in [0,1]^g: \forall b \in N^g,~ \mathcal D_{(\mathds 1_M, a)}(\mathbb R, M^+) \subseteq \mathcal D_{(\mathds 1_N, b)}(\mathbb R, N^+) \\ &\implies (1,s) \cdot \mathcal D_{(\mathds 1_M, a)}(L, C_M) \subseteq \mathcal D_{(\mathds 1_N, b)}(L, C_N) \}.
\end{align*}
Here, $(1,s) \cdot \mathcal D_{(\mathds 1_M, a)}(L, C_M) := \{(v_0, s_1 v_1, \ldots, s_g v_g): v \in \mathcal D_{(\mathds 1_M, a)}(L, C_M)\}$.
\end{defi}

\begin{question}
Are there natural conditions which would entail
$\{s: \sum_i s_i \leq 1 \} \subseteq \Delta_a(L, C_M, C_N)$? This can be done for free spectrahedra, see \cite[Theorem 1.4]{helton2019dilations} and \cite[Section 8]{davidson2016dilations}.
\end{question}

The following definition is motivated by the matrix range introduced in \cite{Arveson1972} and generalized in \cite{davidson2016dilations}.
\begin{defi}
Let $(L,L^+)$ be a preordered vector space with order unit $\mathds 1_L$ and consider a tuple $a \in L^g$. Then, the \emph{functional range} of $a$ is the set
\begin{equation*}
\mathcal W(a) = \{(\phi(a_1), \ldots, \phi(a_g)): \phi \in (L^+)^\ast, \phi(\mathds 1_L) = 1 \} \subseteq \mathbb R^g.
\end{equation*}
\end{defi}

\begin{prop}
Let $(L,L^+)$ be a preordered vector space with order unit $\mathds 1_L$ and let $a \in L^g$. If $V^+$ is proper, then $\mathcal W(a)$ is compact and convex. The set $\mathcal W(a)$ is non-empty if and only if $L^+ \neq L$.
\end{prop}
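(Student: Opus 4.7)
The plan is to regard $\mathcal{W}(a)$ as the image of the cross-section
$$S := \{\phi \in (L^+)^\ast : \phi(\mathds{1}_L) = 1\}$$
under the linear evaluation map $E_a : L^\ast \to \mathbb{R}^g$, $\phi \mapsto (\phi(a_1),\dots,\phi(a_g))$. All three assertions then reduce to statements about $S$.

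For convexity, observe that $S$ is the intersection of the convex cone $(L^+)^\ast$ with the affine hyperplane $\{\phi : \phi(\mathds{1}_L) = 1\}$, hence convex; thus $\mathcal{W}(a) = E_a(S)$ is convex as the linear image of a convex set.

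For compactness (using that $L^+$ is proper), I would show $S$ is compact. Closedness is immediate, as $S$ is the intersection of the closed cone $(L^+)^\ast$ with a closed hyperplane. For boundedness, I would invoke the characterization $\mathds{1}_L \in \operatorname{int}(L^+)$ of order units cited just after Definition~\ref{def:D_max}'s preliminaries (the Aliprantis reference). Pick $\epsilon > 0$ such that $\mathds{1}_L + \epsilon B \subseteq L^+$, where $B$ is the Euclidean unit ball in $L$. Then for every $\phi \in S$ and every $x \in B$, both $\phi(\mathds{1}_L + \epsilon x) \geq 0$ and $\phi(\mathds{1}_L - \epsilon x) \geq 0$ hold, giving $|\phi(x)| \leq 1/\epsilon$. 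Hence $\|\phi\| \leq 1/\epsilon$ uniformly on $S$, so $S$ is bounded. Being closed and bounded in the finite-dimensional space $L^\ast$, $S$ is compact, and $\mathcal{W}(a) = E_a(S)$ is compact as the continuous image of a compact set.

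For the non-emptiness equivalence, note that $\mathcal{W}(a)$ is non-empty iff $S$ is non-empty. If $L^+ = L$, then $(L^+)^\ast = \{0\}$, so $S = \emptyset$. Conversely, suppose $L^+ \neq L$. By Lemma~\ref{lem:trivial-cone}, $-\mathds{1}_L \notin L^+$, and since $L^+$ is closed (being proper), the Hahn--Banach separation theorem yields $\phi \in L^\ast$ with $\phi(-\mathds{1}_L) < 0 \leq \phi(y)$ for all $y \in L^+$. Rescaling $\phi$ by $1/\phi(\mathds{1}_L)$ produces an element of $S$.

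The only step requiring real care is the boundedness of $S$, where one must combine the order-unit property of $\mathds{1}_L$ with the finite-dimensional topology; everything else is a routine application of Hahn--Banach together with bookkeeping via the dual cone.
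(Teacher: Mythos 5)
Your argument for convexity and compactness is essentially the paper's: both reduce to the convexity and compactness of the slice $S=\{\phi\in(L^+)^\ast:\phi(\mathds 1_L)=1\}$, and your boundedness estimate via $\mathds 1_L+\epsilon B\subseteq L^+$ is just an explicit version of the paper's appeal to the order unit property ($|\phi(a_i)|\le t_i$ when $a_i\in t_i[-\mathds 1_L,\mathds 1_L]$); your route through closedness of $S$ is arguably cleaner than the paper's sequence/Bolzano--Weierstrass phrasing, and it makes transparent that only $\mathds 1_L\in\operatorname{int}L^+$ and closedness of the dual cone are used.

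For the non-emptiness claim you genuinely diverge: the paper extends the functional $\phi(\mathds 1_L)=1$ from $\mathbb R\mathds 1_L$ to all of $L$ using the M.~Riesz extension theorem (Theorem \ref{thm:riesz}), whereas you separate $-\mathds 1_L$ from $L^+$ by Hahn--Banach. The catch is your parenthetical ``since $L^+$ is closed (being proper)'': the properness hypothesis in the proposition is attached only to the compactness/convexity claim, and the non-emptiness equivalence is asserted for an arbitrary preordered vector space with order unit. The paper's extension argument needs no closedness (the Riesz theorem is stated for arbitrary convex cones). Your separation argument can be repaired without assuming closedness, but it needs one extra step: you must separate $-\mathds 1_L$ from $\overline{L^+}$, so you need $-\mathds 1_L\notin\overline{L^+}$, not merely $-\mathds 1_L\notin L^+$. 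This does hold: if $-\mathds 1_L\in\overline{L^+}$, then since $\mathds 1_L\in\operatorname{int}L^+$ the open segment from $\mathds 1_L$ to $-\mathds 1_L$ lies in $\operatorname{int}L^+$, so $0\in\operatorname{int}L^+$ and hence $L^+=L$ (a cone containing a neighbourhood of $0$ is everything), contradicting $L^+\neq L$. With that line added, strict separation from the closed cone $\overline{L^+}$ gives $\phi\ge 0$ on $L^+$ and $\phi(\mathds 1_L)>0$, and your rescaling finishes the proof. As written, though, the step is a genuine (if small) over-assumption relative to the stated generality.
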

\begin{proof}
For any $a_i$, $i \in [g]$, there is a $t_i \geq 0$ such that $a_i \in t_i [-\mathds 1_L, \mathds 1_L]$. Thus, $|\phi(a_i)| \leq t_i$ for all $ \phi \in (L^+)^\ast, \phi(\mathds 1_L) = 1$ and boundedness of $\mathcal W(a)$ follows. Let $x^{(n)}$ be a sequence in $\mathcal W(a)$ converging to $x$. With any $x^{(n)}$, we can associate a $\phi_n \in (L^+)^\ast, \phi_n(\mathds 1_L) = 1$. Since $(L^+)^\ast$ is closed by definition, there is a map $ \phi \in (L^+)^\ast, \phi(\mathds 1_L) = 1$ such that $x = (\phi(a_1), \ldots, \phi(a_n))$ and $x \in \mathcal W(a)$. This follows from the Bolzano-Weierstrass theorem and the fact that the unital positive functionals form a compact set (consider the order unit norm on $L$). Convexity follows from the fact that the set of $\phi$ as in the statement is convex.

The set $\mathcal W(a)$ is empty if and only if there are no unital functionals in $L^\ast$. Let $L = L^+$. Then, it is easy to see that $(L^+)^{\ast} = \{0\}$. Conversely, if $L^+ \neq L$, then $- \mathds 1_L  \not \in L^+$ by Lemma \ref{lem:trivial-cone}. 
Thus, the functional $\phi$ on $\mathbb R\mathds 1_L$ such that $\phi(\mathds 1_L) = 1$ can be extended to an element in $(L^+)^*$ by Theorem \ref{thm:riesz}.
\end{proof}

\begin{prop}\label{prop:matrix-range-duality}
Let $(L,L^+)$ be a proper ordered vector space with order unit $\mathds 1_L$. Furthermore, let $a \in L^{g}$. Let $\mathcal C_a := \{x \in \mathbb R^g: (1,-x) \in \mathcal D_{(\mathds 1_L, a)}(\mathbb R, L^+)\}$. Then, $\mathcal W(a)^\circ = \mathcal C_a$. If $0 \in \mathcal W(a)$, then $\mathcal C_a^\circ = \mathcal W(a)$.
\end{prop}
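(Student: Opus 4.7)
My plan is to unwind both sides to a common separation-type condition and then invoke the bipolar theorem for the second statement. Unpacking the definition, $x \in \mathcal C_a$ means exactly that $\mathds 1_L - \sum_i x_i a_i \in L^+$, whereas $x \in \mathcal W(a)^\circ$ means that for every $\phi \in (L^+)^\ast$ with $\phi(\mathds 1_L) = 1$ one has $\sum_i x_i \phi(a_i) \le 1$, i.e.\ $\phi(\mathds 1_L - \sum_i x_i a_i) \ge 0$.

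The forward inclusion $\mathcal C_a \subseteq \mathcal W(a)^\circ$ is then immediate: if $\mathds 1_L - \sum_i x_i a_i \in L^+$, evaluating any positive unital $\phi$ gives $\phi(\mathds 1_L - \sum_i x_i a_i) \ge 0$, which rearranges to $\langle x, w \rangle \le 1$ for the corresponding $w = (\phi(a_1),\dots,\phi(a_g)) \in \mathcal W(a)$.

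For the reverse inclusion $\mathcal W(a)^\circ \subseteq \mathcal C_a$, I would argue by contrapositive: if $\mathds 1_L - \sum_i x_i a_i \notin L^+$, then since $L^+$ is closed and convex, Hahn–Banach separation (equivalently, the bipolar theorem applied to $L^+ = (L^+)^{\ast\ast}$) yields some $\phi \in (L^+)^\ast$ with $\phi(\mathds 1_L - \sum_i x_i a_i) < 0$. The step I expect to be the main technicality is the normalization $\phi(\mathds 1_L) = 1$: I need that every nonzero $\phi \in (L^+)^\ast$ satisfies $\phi(\mathds 1_L) > 0$. This follows from $\mathds 1_L$ being an order unit: for any $v \in L$ there is $\lambda > 0$ with $-\lambda \mathds 1_L \le v \le \lambda \mathds 1_L$, so $|\phi(v)| \le \lambda \phi(\mathds 1_L)$, which forces $\phi = 0$ as soon as $\phi(\mathds 1_L) = 0$. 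Hence we may rescale $\phi$ to be unital, and the strict inequality $\phi(\mathds 1_L - \sum_i x_i a_i) < 0$ rewrites as $\sum_i x_i \phi(a_i) > 1$, i.e.\ $x \notin \mathcal W(a)^\circ$, completing the first claim.

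For the second statement, I would invoke the bipolar theorem: for a closed convex set $K \subseteq \mathbb R^g$ containing the origin, $K^{\circ\circ} = K$. The preceding proposition already establishes that $\mathcal W(a)$ is compact and convex, and the hypothesis $0 \in \mathcal W(a)$ supplies the remaining ingredient. Since we have just shown $\mathcal C_a = \mathcal W(a)^\circ$, taking polars on both sides gives $\mathcal C_a^\circ = \mathcal W(a)^{\circ\circ} = \mathcal W(a)$, as required.
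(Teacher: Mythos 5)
Your proposal is correct and follows essentially the same route as the paper: both reduce membership in $\mathcal C_a$ to the condition $\phi(\mathds 1_L - \sum_i x_i a_i) \geq 0$ for all positive unital $\phi$ via $L^+ = (L^+)^{\ast\ast}$ (your separation argument is just the contrapositive form of this), both use the order-unit property to show that $\phi(\mathds 1_L)=0$ forces $\phi=0$ so that one may restrict to unital functionals, and both obtain the second claim from the bipolar theorem for convex sets using closedness of $\mathcal W(a)$ and $0 \in \mathcal W(a)$.
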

\begin{proof} 
Since $L^+$ is closed, $L^+ \cong (L^+)^{\ast \ast}$ and 
\begin{equation*}
\mathds 1 + \sum_i x_i a_i \in L^+ \iff 1 + \sum_i x_i \varphi(a_i) \geq 0 \qquad \forall \phi \in (L^+)^\ast, \phi(\mathds 1_L) = 1.
\end{equation*}
Note that the only map in $(L^+)^\ast$ with $\phi(\mathds 1_L) = 0$ is the constant map, since $\mathds 1_L$ is an order unit. Thus, it is enough to check unital maps for the $\Longleftarrow$ implication. This proves the first assertion. The second assertion follows from the bipolar theorem for convex sets (\cite[Equation  (1.10)]{aubrun_alice_2017}) since $\mathcal W(a)$ is closed.
\end{proof}

\begin{prop} \label{prop:zero-interior-range}
Let $(L,L^+)$ be a proper ordered vector space with order unit $\mathds 1_L$. Furthermore, let $a \in L^{g}$. Then, $\mathcal C_a$ is bounded if and only if $0 \in \mathrm{int}~\mathcal W(a)$.
\end{prop}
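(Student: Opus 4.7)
The plan is to recognize the statement as a classical polar-duality fact once we invoke the identification $\mathcal C_a = \mathcal W(a)^{\circ}$ established in Proposition~\ref{prop:matrix-range-duality}. Before doing that, I would check that the polar on the right-hand side is applied to a reasonable object: since $L^+$ is proper, it is pointed, so $-\mathds{1}_L \notin L^+$ (Lemma~\ref{lem:trivial-cone}) and hence $L^+ \neq L$; by the preceding proposition $\mathcal W(a)$ is then nonempty, compact, and convex in $\mathbb{R}^g$.

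For the forward implication, assume $0 \in \operatorname{int} \mathcal W(a)$. Pick $r>0$ with $rB \subseteq \mathcal W(a)$, where $B$ denotes the Euclidean unit ball. For any $x \in \mathcal C_a = \mathcal W(a)^{\circ}$ and any unit vector $v$, the defining inequality of the polar gives $\langle x, rv\rangle \leq 1$, hence $\|x\| \leq 1/r$. Thus $\mathcal C_a$ is bounded.

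For the reverse implication, suppose $0 \notin \operatorname{int} \mathcal W(a)$. I would separate in two cases. If $0 \notin \mathcal W(a)$, compactness of $\mathcal W(a)$ yields, by strict hyperplane separation, a nonzero $y \in \mathbb{R}^g$ with $\langle y, w\rangle > 0$ for all $w \in \mathcal W(a)$. If instead $0$ lies on the boundary of $\mathcal W(a)$, a supporting hyperplane through the origin produces a nonzero $y \in \mathbb{R}^g$ with $\langle y, w\rangle \geq 0$ for all $w \in \mathcal W(a)$. In either case, $-y$ satisfies $\langle -y, w\rangle \leq 0$ on $\mathcal W(a)$, so for every $t \geq 0$ we have $\langle -ty, w \rangle \leq 0 \leq 1$, i.e.\ the entire ray $\{-ty : t \geq 0\}$ lies in $\mathcal W(a)^{\circ} = \mathcal C_a$, proving unboundedness.

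There is no serious obstacle here: after Proposition~\ref{prop:matrix-range-duality}, the proof is just the standard ``interior of $K$ versus boundedness of $K^{\circ}$'' dichotomy, applied to the compact convex set $\mathcal W(a)$. The only minor care needed is handling the boundary case $0 \in \partial \mathcal W(a)$ via a supporting hyperplane rather than strict separation.
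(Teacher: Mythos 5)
Your proof is correct and follows essentially the same route as the paper: both reduce the statement via Proposition~\ref{prop:matrix-range-duality} to the identity $\mathcal C_a = \mathcal W(a)^\circ$ and then invoke the standard fact that the polar of a convex set is bounded if and only if $0$ lies in the set's interior. The only difference is that the paper cites this fact (as \cite[Exercise 1.14]{aubrun_alice_2017}) while you prove it directly, correctly handling the boundary case with a supporting hyperplane.
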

\begin{proof}
This follows from the fact that for convex sets $K \subseteq \mathbb R^n$, $K^\circ$ is bounded if and only if $0 \in \mathrm{int}~K$ \cite[Exercise 1.14]{aubrun_alice_2017}, combined with Proposition \ref{prop:matrix-range-duality}.
\end{proof}

\begin{prop} \label{prop:direct-sum-of-gen-spectrahedra}
Let $(M,M^+)$ and $(N,N^+)$ be two proper ordered vector spaces containing order units $\mathds 1_M$ and $\mathds 1_N$. Let $a \in M^{k}$ and $b \in N^{l}$ be such that $\mathcal C_a$, $\mathcal C_b$ are polytopes for $k$, $l \in \mathbb N$. Then, for any closed tensor cone $C_{MN}$,
\begin{equation*}
\mathcal D_{(\mathds 1_M \otimes \mathds 1_N, a \otimes \mathds 1_N, \mathds 1_M \otimes b)}(\mathbb R, C_{MN}) = \mathbb R_+\{(1,-z): z\in \mathcal C_a \oplus \mathcal C_b\}.
\end{equation*}
\end{prop}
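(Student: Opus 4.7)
The plan is to prove the two inclusions separately: the forward inclusion $\supseteq$ by a direct tensor identity, and the reverse inclusion $\subseteq$ by duality between $\mathcal C_a$ and the functional range $\mathcal W(a)$ coming from Propositions \ref{prop:matrix-range-duality} and \ref{prop:zero-interior-range}. Throughout I will test the spectrahedron membership with product functionals $\phi \otimes \psi$, which are positive on the enveloping max cone $M^+ \otimes_{\max} N^+ \supseteq C_{MN}$.

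For $\supseteq$, since the spectrahedron is a cone, it suffices to check that each generator $(1,-z)$ with $z \in \mathcal C_a \oplus \mathcal C_b$ belongs to it. Writing $z = \lambda(x,0) + (1-\lambda)(0,y)$ with $x \in \mathcal C_a$, $y \in \mathcal C_b$, $\lambda \in [0,1]$, and splitting $\mathds 1_M \otimes \mathds 1_N = \lambda\, \mathds 1_M \otimes \mathds 1_N + (1-\lambda)\, \mathds 1_M \otimes \mathds 1_N$, the associated element of $M \otimes N$ becomes
\[
\lambda\Bigl(\mathds 1_M - \sum_i x_i a_i\Bigr) \otimes \mathds 1_N \;+\; (1-\lambda)\, \mathds 1_M \otimes \Bigl(\mathds 1_N - \sum_j y_j b_j\Bigr),
\]
which lies in $M^+ \otimes_{\min} N^+ \subseteq C_{MN}$ by the definitions of $\mathcal C_a$ and $\mathcal C_b$.

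For $\subseteq$, I first extract the consequences of the polytope hypothesis. Boundedness of $\mathcal C_a$ and $\mathcal C_b$ yields, via Proposition \ref{prop:zero-interior-range}, that $0 \in \operatorname{int}\mathcal W(a)$ and $0 \in \operatorname{int}\mathcal W(b)$; Proposition \ref{prop:matrix-range-duality} then gives $\mathcal C_a^\circ = \mathcal W(a)$ and $\mathcal C_b^\circ = \mathcal W(b)$, so $\mathcal C_a \oplus \mathcal C_b = (\mathcal W(a) \times \mathcal W(b))^\circ$ by the polar identity recalled in Section \ref{sec:prelim}. Now take $(x_0, x, y)$ in the spectrahedron. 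For any unital positive $\phi \in (M^+)^\ast$, $\psi \in (N^+)^\ast$, applying $\phi \otimes \psi$ to the membership condition produces
\[
x_0 + \sum_i x_i \phi(a_i) + \sum_j y_j \psi(b_j) \geq 0.
\]
Selecting $\phi$, $\psi$ that annihilate all $a_i$ and $b_j$ (available because $0 \in \mathcal W(a) \cap \mathcal W(b)$) forces $x_0 \geq 0$. When $x_0 > 0$, setting $u = -x/x_0$ and $v = -y/x_0$ turns the inequality into $\langle u,p\rangle + \langle v,q\rangle \leq 1$ for all $p \in \mathcal W(a)$, $q \in \mathcal W(b)$, hence $(u,v) \in (\mathcal W(a) \times \mathcal W(b))^\circ = \mathcal C_a \oplus \mathcal C_b$, and $(x_0,x,y) = x_0(1,-u,-v)$ lies in the right-hand side.

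The main obstacle is the degenerate case $x_0 = 0$, where the rescaling breaks down. I handle it by a limiting argument: adding $t\, \mathds 1_M \otimes \mathds 1_N \in M^+ \otimes_{\min} N^+ \subseteq C_{MN}$ for any $t > 0$ keeps $(t, x, y)$ in the spectrahedron, so the preceding step yields $(-x/t, -y/t) \in \mathcal C_a \oplus \mathcal C_b$. Boundedness of $\mathcal C_a \oplus \mathcal C_b$ (as the convex hull of two bounded polytopes containing the origin) then forces $x = 0 = y$ upon sending $t \to 0^+$, so the point is the origin and trivially lies in the right-hand side. This final limiting step is precisely where the polytope hypothesis is indispensable; without the resulting boundedness, the slice $\{x_0 = 0\}$ of the spectrahedron could contain nontrivial recession directions that the cone $\mathbb R_+\{(1,-z) : z \in \mathcal C_a \oplus \mathcal C_b\}$ cannot reach.
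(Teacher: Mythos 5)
Your proof is correct and follows essentially the same route as the paper's: the forward inclusion via a min-tensor decomposition, nonnegativity of the first coordinate via a product functional annihilating $a$ and $b$, and the reverse inclusion via the polarity $\mathcal C_a \oplus \mathcal C_b = (\mathcal W(a) \times \mathcal W(b))^\circ$ supplied by Propositions \ref{prop:matrix-range-duality} and \ref{prop:zero-interior-range}. The only divergence is the degenerate case $x_0 = 0$, which the paper settles by applying the mixed functionals $\phi \otimes \beta$ and $\alpha \otimes \psi$ to deduce $\sum_i x_i a_i \in M^+$ and $\sum_j y_j b_j \in N^+$ and then contradicting boundedness of $\mathcal C_a$, $\mathcal C_b$, whereas your perturbation by $t\,\mathds 1_M \otimes \mathds 1_N$ with $t \to 0^+$ reaches the same conclusion equally validly.
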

\begin{proof}
Let $-x \in \mathcal \mathcal C_{a}$, $-y \in \mathcal C_b$. Then, $(1,x,0) \in \mathcal D_{(\mathds 1_M \otimes \mathds 1_N, a \otimes \mathds 1_N, \mathds 1_M \otimes b)}(\mathbb R, C_{MN})$, because $(\mathds 1_M + \sum_i x_i a_i) \otimes \mathds 1_N \in M^+ \otimes_{\min} N^+$, and likewise $(1,0,y) \in \mathcal D_{(\mathds 1_M \otimes \mathds 1_N, a \otimes \mathds 1_N, \mathds 1_M \otimes b)}(\mathbb R,C_{MN})$, such that ``$\supset$'' holds in the assertion. Conversely, let $(c,x,y) \in \mathcal D_{(\mathds 1_M \otimes \mathds 1_N, a \otimes \mathds 1_N, \mathds 1_M \otimes b)}(\mathbb R, C_{MN})$, where $x \in \mathbb R^k$ and $y \in \mathbb R^l$. Then, 
\begin{equation}\label{eq:gen-spectrehedron-direct-sum}
c \mathds 1_M \otimes \mathds 1_N + (\sum_{i=1}^k x_i a_i)\otimes \mathds 1_N + \mathds 1_M \otimes (\sum_{j = 1}^l y_j b_j) \in  C_{MN}.
\end{equation}
Boundedness of $\mathcal C_a$ and $\mathcal C_b$ implies by Proposition \ref{prop:zero-interior-range} that $0 \in \mathrm{int}~\mathcal W(a)$, $0 \in \mathrm{int}~\mathcal W(b)$. Let $\phi$, $\psi$ be positive unital functionals which send $a$ and $b$ to zero, respectively. Then, $\phi \otimes \psi \in (C_{MN})^\ast$. An application of this map to Equation \eqref{eq:gen-spectrehedron-direct-sum} implies $c \geq 0$. 

Let $c = 0$. Then, by applying $\phi \otimes \beta$, $\alpha \otimes \psi$ for $\alpha \in (M^+)^\ast$ and $\beta \in (N^+)^\ast$, it holds that $\sum_i x_i a_i \in M^+$, $\sum_j y_j b_j \in N^+$. Since $\mathcal C_a$ is bounded, $\{a_i\}_i$ is a set of linearly independent elements. The same is true for $\{b_j\}_j$. Without loss of generality, let $x \neq 0$. Then $\sum_i x_i a_i \neq 0$ and $(1,- \lambda x) \in \mathcal C_a$ for all $\lambda \geq 0$, which is a contradiction to $\mathcal C_a$ being bounded. Thus $c = 0$ implies $x = 0 = y$.

Thus, we can set $c = 1$ without loss of generality. Let now for $\phi^\prime \in (M^+)^\ast$, $\psi^\prime \in (N^+)^\ast$, $\phi^\prime(\mathds 1_M) = 1 = \psi^\prime(\mathds 1_N)$. An application of $\phi^\prime \otimes \psi^\prime$ to Equation \eqref{eq:gen-spectrehedron-direct-sum} implies $\mathcal C^\circ \supset W(a) \times \mathcal W(b)$, where $\mathcal C:= \{z:(1,-z) \in \mathcal D_{(\mathds 1_M \otimes \mathds 1_N, a \otimes \mathds 1_N, \mathds 1_M \otimes b)}(\mathbb R, C_{MN})\}$. Since $\mathcal C$ is closed and contains $0$, we obtain $\mathcal C \subset (\mathcal W(a) \times \mathcal W(b))^\circ = (\mathcal C_a^\circ \times \mathcal C_b^\circ)^\circ = \mathcal C_a \oplus \mathcal C_b$ with Proposition \ref{prop:matrix-range-duality}.  
\end{proof}

\begin{cor} \label{cor:direct-sum-splits}
Let $(M,M^+)$ and $(N,N^+)$ be two proper ordered vector spaces with closed cones containing order units $\mathds 1_M$ and $\mathds 1_N$. Let $a \in M^{k}$ and $b \in N^{l}$ be such that $\mathcal C_a$, $\mathcal C_b$ are polytopes for $k$, $l \in \mathbb N$. Let moreover $(L, L^+)$ be another preordered vector space with order unit $\mathds 1_L$ and $h_1 \in L^{k}$, $h_2 \in L^{l}$. Then, for any closed tensor cone $C_{MN}$,
\begin{equation*}
\mathcal D_{(\mathds 1_M \otimes \mathds 1_N, a \otimes \mathds 1_N, \mathds 1_M \otimes b)}(\mathbb R, C_{MN}) \subseteq \mathcal D_{(\mathds 1_L, h_1, h_2)}(\mathbb R, L^+)
\end{equation*}
if and only if
\begin{equation*}
\mathcal D_{(\mathds 1_M, a)}(\mathbb R, M^+) \subseteq \mathcal D_{(\mathds 1_L, h_1)}(\mathbb R, L^+) \quad \land \quad \mathcal D_{(\mathds 1_N, b)}(\mathbb R, N^+) \subseteq \mathcal D_{(\mathds 1_L, h_2)}(\mathbb R, L^+).
\end{equation*}
\end{cor}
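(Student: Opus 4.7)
The plan is to use Proposition \ref{prop:direct-sum-of-gen-spectrahedra} to replace the left-hand cone by its explicit description as $\mathbb R_+\{(1,-z_1,-z_2) : (z_1,z_2)\in \mathcal C_a\oplus \mathcal C_b\}$, and to match each direction of the equivalence with a natural decomposition of this cone. Observe at the outset that $0\in\mathcal C_a$ and $0\in\mathcal C_b$, since $\mathds 1_M\in M^+$ and $\mathds 1_N\in N^+$, so that both $\mathcal C_a\times\{0\}$ and $\{0\}\times\mathcal C_b$ lie in $\mathcal C_a\oplus \mathcal C_b$; recall also that by Proposition \ref{prop:matrix-range-duality} and the definition of $\mathcal C_a$, the statement $z\in \mathcal C_a$ is equivalent to $(1,-z)\in \mathcal D_{(\mathds 1_M,a)}(\mathbb R, M^+)$, and analogously for $\mathcal C_b$.

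For the implication $(\Leftarrow)$, pick a non-zero element of the left-hand cone, rescale it to the form $(1,-z_1,-z_2)$ with $(z_1,z_2)\in \mathcal C_a\oplus \mathcal C_b$, and write $(z_1,z_2)=\mu(z_1',0)+(1-\mu)(0,z_2')$ for some $\mu\in[0,1]$, $z_1'\in\mathcal C_a$, $z_2'\in\mathcal C_b$, using the definition of the direct sum of convex sets. Then
\[
\mathds 1_L-\sum_i z_{1,i}h_{1,i}-\sum_j z_{2,j}h_{2,j}=\mu\Bigl(\mathds 1_L-\sum_i z'_{1,i}h_{1,i}\Bigr)+(1-\mu)\Bigl(\mathds 1_L-\sum_j z'_{2,j}h_{2,j}\Bigr),
\]
and each of the two summands belongs to $L^+$ by applying the two assumed sub-inclusions to $(1,-z_1')$ and $(1,-z_2')$ respectively; convexity of $L^+$ yields $(1,-z_1,-z_2)\in \mathcal D_{(\mathds 1_L,h_1,h_2)}(\mathbb R, L^+)$, and homogeneity finishes this direction.

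For the implication $(\Rightarrow)$, let $(c,x)\in \mathcal D_{(\mathds 1_M,a)}(\mathbb R, M^+)$ and seek $(c,x)\in \mathcal D_{(\mathds 1_L,h_1)}(\mathbb R, L^+)$; the sub-inclusion for $(b,h_2)$ is obtained symmetrically. If $c>0$, rescale to $c=1$ and set $z_1=-x\in\mathcal C_a$; then $(z_1,0)\in\mathcal C_a\oplus\mathcal C_b$, so Proposition \ref{prop:direct-sum-of-gen-spectrahedra} places $(1,-z_1,0)$ in the left-hand cone, and the hypothesized inclusion delivers $\mathds 1_L-\sum_i z_{1,i}h_{1,i}\in L^+$, which is exactly the desired conclusion. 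If $c=0$, then $\sum_i x_i a_i\in M^+$, hence $\mathds 1_M+\lambda\sum_i x_i a_i\in M^+$ for every $\lambda\ge 0$, meaning $-\lambda x\in\mathcal C_a$ for all $\lambda\ge 0$; since $\mathcal C_a$ is a polytope (hence bounded) this forces $x=0$ and the conclusion is trivial.

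The main obstacle I expect is the boundary case $c=0$, where one must genuinely exploit the polytope (equivalently, boundedness) hypothesis on $\mathcal C_a$ to rule out non-zero homogeneous elements that are not visible through Proposition \ref{prop:direct-sum-of-gen-spectrahedra}; everything else is a direct consequence of that proposition combined with the convex structure of the direct sum.
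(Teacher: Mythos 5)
Your proof is correct and follows exactly the route the paper intends: the corollary is stated without proof as an immediate consequence of Proposition \ref{prop:direct-sum-of-gen-spectrahedra}, and you supply precisely the routine details (splitting elements of $\mathcal C_a \oplus \mathcal C_b$ as convex combinations for one direction, testing on $(1,-z_1,0)$ and $(1,0,-z_2)$ for the other, and using boundedness of $\mathcal C_a$, $\mathcal C_b$ to dispose of the homogeneous case $c=0$). No gaps.
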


\section{General probabilistic theories}\label{sec:gpt}
In this section, we finally introduce the class of physical theories we are interested in, the so-called \emph{general probabilistic theories} (GPTs). They form the framework that is used to describe states and measurement outcomes of arbitrary (physical) theories. It is within this generalized setting that we would like to study questions about the compatibility of measurements. In this brief introduction, we mostly follow the exposition in \cite{lami2018non}.

\subsection{Definitions}
Any GPT corresponds to a triple $(V,V^+,\mathds 1)$, where $V$ is a vector space with  a proper cone $V^+$
and $\mathds 1$
is an order unit in the dual cone $A^+=(V^+)^*\subset V^*=A$. We assume here that $V$ is finite dimensional. The set of
states of the system is identified as the  subset 
\[
K:=\{v\in V^+,\ \langle\mathds 1,v\rangle=1\}.
\]
Note that $K$ is compact and  convex and is a base of the cone $V^+$. 

\begin{ex}\label{ex:CM} Any \emph{classical system} is described by the triple $\mathrm{CM}_d := (\mathbb R^d,\mathbb R^d_+,1_d)$, $d \in \mathbb N$, where $\mathbb R^d_+$
denotes the set of elements with non-negative coordinates and $1_d=(1,1,\dots,1)=\sum_ie_i\in \mathbb R^d$; here
$e_1,\dots, e_d$ denotes the standard basis on $\mathbb R^d$.
Then $(\mathbb R^d)^*=\mathbb R^d$ with duality given by the standard inner product and the simplicial cone
$\mathbb R^d_+$ is self-dual.  The classical state space is the probability simplex
\begin{align*}
\Delta_d=\left\{x=(x_1,\dots,x_d)\in \mathbb R^d,\ x_i\ge 0,\ \sum_{i=1}^d x_i=1\right\}= \{x\in \mathbb R^d_+,\ \langle x, 1_d\rangle =1\}.
\end{align*}
\end{ex}

\begin{ex}\label{ex:QM} \emph{Quantum mechanics} corresponds the triple $\mathrm{QM}_d:=(\mathcal M_d^{\mathrm{sa}}(\mathbb C), \mathrm{PSD}_d ,\operatorname{Tr})$, $d \in \mathbb N$, where  $\mathrm{PSD}_d$ is the cone of $d \times d$ positive semidefinite complex, self-adjoint matrices, and $\operatorname{Tr}$ is the usual, un-normalized, trace. As in the case of classical systems described above, the $\mathrm{PSD}_d$ cone is self-dual.  The quantum state space is the set of density matrices 
\begin{align*}
\mathcal S_d := \{\rho \in \mathcal M_d^{\mathrm{sa}}(\mathbb C)\, : \,  \rho \geq 0, \ \operatorname{Tr} \rho = 1\}.
\end{align*}
\end{ex}

\begin{ex} The \emph{hypercube} GPT has the unit ball of $\ell_\infty^n$ as a state space, $n \in \mathbb N$. It is described by the triple $\mathrm{HC}_n := (\mathbb R^{n+1}, C_n, \mathds 1)$, where 
$$C_n = \{(x_0, x_1, \ldots, x_n) \in \mathbb R^{n+1} \, : \, x_0 \geq \max_{i \in [n]} |x_i|\} \quad \text{ and } \quad \mathds 1(x_0, x_1, \ldots, x_n) = x_0.$$
\end{ex}

Note also that any compact convex subset in a (finite-dimensional) vector space can be represented as a base of some proper
cone. Indeed, let $K$ be such a set and let $A=A(K)$ be the set of affine functions $K\to \mathbb R$. Then $A$ is a
finite dimensional vector space and the subset $A^+=A(K)^+$ is a proper cone in $A$. Let $\mathds 1=1_K$ be the constant
function, then $\mathds 1$ is an order unit. Put $V=A^*$, $V^+=(A^+)^*$, then $V^+$ is a proper cone in $V$  and 
$K$ is affinely isomorphic to the base of $V^+$, determined by $\mathds 1$.

\subsection{Base norms and order unit norms}

In GPTs, we have natural norms induced by the cones in the state space and the space of effects. They make the space $V$ into a base norm space and the space $A$ into an order unit space. For more details about the following, see the excellent \cite[Chapter 1.6]{lami2018non}.

\begin{defi}
	Given a GPT $(V, V^+, \mathds 1)$, define the following norm on $V$, called a \emph{base norm}
	$$\|x\|_V = \inf\{\mathds 1(y) + \mathds 1(z) \, : \, y,z \in V^+ \text{ s.t.~ } x = y - z\}$$
	as well as a norm on $A := V^*$, called an \emph{order unit norm}
	$$\|\alpha\|_A = \inf \{t \geq 0\, : \, \alpha \in t [-\mathds 1, \mathds 1]\}.$$
\end{defi}
Using the above definition, one can characterize positivity in $V$ using metric properties: 
$$x \in V^+ \iff \|x\|_V = \mathds 1(x).$$
The base norm and the order unit norm are dual to each other, i.e.\
\begin{equation*}
    \norm{x}_V = \sup_{\norm{\alpha}_A \leq 1} |\langle \alpha, x \rangle|.
\end{equation*}

\begin{ex}
For the classical GPT $\mathrm{CM}_d$, the base norm and the order unit norm correspond to the $\ell_1$ and $\ell_\infty$ norms, respectively: 
\begin{align*}
\|x\|_V &= \|x\|_1 = \sum_{i=1}^d |x_i| \quad \text{ and } \quad  \|\alpha\|_A = \|\alpha\|_\infty = \max_{i\in [d]} |\alpha_i|.
\end{align*}
For quantum mechanics, we obtain the Schatten $1$ and $\infty$ norms, respectively: 
\begin{align*}
\|x\|_V &= \|x\|_1 = \operatorname{Tr} \sqrt{x^2}\\
\|\alpha\|_A &= \|\alpha\|_\infty,
\end{align*}
where the $\|\cdot\|_\infty$ norm is the usual operator norm. 
\end{ex}

\subsection{Centrally symmetric GPTs} \label{sec:centrally-symmetric-GPTs}
For some GPTs such as the hypercube $\mathrm{HC}_n$, we have more structure we can use. We review in this section what it means for a GPT $(V,V^+,\mathds 1)$ to be \emph{centrally symmetric} in the sense of \cite[Definition 25]{lami2018ultimate}. In this case, the vector space $V$ admits a decomposition $V = \mathbb R v_0 \oplus \bar V$, and we shall write $x = (x_0, \bar x)$ for a vector $x = x_0v_0 \oplus \bar x$. We then have
$$x = (x_0, \bar x) \in V^+ \iff \|\bar x\|_{\bar V} \leq x_0,$$
where $\|\cdot\|_{\bar V}$ denotes the norm on $\bar V$. The decomposition $V = \mathbb R v_0 \oplus \bar V$ is such that the form $\mathds 1$ is given by $\mathds 1(x) = x_0$. The base norm is given as $\norm{x}_V = \max(|x_0|,\|\bar x\|_{\bar V})$. The dual ordered vector space $(A,A^+) = (V^*,(V^+)^\ast)$ admits a similar description in terms of the dual norm $\|\cdot\|_{\bar A}$ in the dual space $\bar A=\bar V^\ast$: $A = \mathbb R \mathds 1 \oplus \bar A$ and $a = (a_0, \bar a) \in A^+ \iff \|\bar a\|_{\bar A} \leq a_0$. Note that for centrally symmetric spaces, both $A$ and $V$ possess distinguished order units.

\begin{lem}\label{lem:order-unit-norm-cs-GPT}
Let $(V,V^+,\mathds 1)$ be a centrally symmetric GPT. The order unit norm on $A$ satisfies
\begin{equation*}
\norm{\phi}_{A} = |\phi_0| + \|(\phi_1, \ldots, \phi_g)\|_{\bar A}
\end{equation*}
for all $\phi \in A$ with coordinates $\phi = \phi_0 \mathds 1 + \sum_{i = 1}^g \phi_i \alpha_i$, where $\phi_0$, $\phi_i \in \mathbb R$ for all $i \in [g]$ and $\{\mathds 1, \alpha_i\}_{i \in [g]}$ is a basis of $A$.
\end{lem}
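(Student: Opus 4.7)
The plan is to unfold the definition of the order unit norm and compare it against the explicit description of the positive cone $A^+$ in a centrally symmetric GPT. Writing $\phi = \phi_0 \mathds 1 \oplus \bar\phi$ with $\bar\phi = \sum_{i=1}^g \phi_i \alpha_i \in \bar A$, the order unit norm is by definition
\[
\|\phi\|_A = \inf\{t \ge 0 \, : \, t\mathds 1 - \phi \in A^+ \text{ and } t\mathds 1 + \phi \in A^+\}.
\]

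Next I would feed in the characterization of the dual cone: $(a_0, \bar a) \in A^+$ iff $\|\bar a\|_{\bar A} \le a_0$, which is stated just above the lemma for centrally symmetric GPTs. Under the decomposition $A = \mathbb R \mathds 1 \oplus \bar A$, the element $t\mathds 1 \pm \phi$ has $\mathds 1$-component $t \pm \phi_0$ and $\bar A$-component $\pm \bar\phi$. Since $\|\pm \bar\phi\|_{\bar A} = \|\bar\phi\|_{\bar A}$, the two membership conditions are
\[
\|\bar\phi\|_{\bar A} \le t - \phi_0 \quad \text{and} \quad \|\bar\phi\|_{\bar A} \le t + \phi_0,
\]
which are jointly equivalent to $t \ge |\phi_0| + \|\bar\phi\|_{\bar A}$. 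Taking the infimum yields the claimed formula.

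There is no real obstacle here; the statement is essentially a restatement of the characterization of $A^+$ in centrally symmetric GPTs, already invoked in the paragraph preceding the lemma. The only small point to be careful about is that $t$ must be nonnegative, but since $|\phi_0| + \|\bar\phi\|_{\bar A} \ge 0$, the constraint $t \ge 0$ is automatic. Finally, observing that $\|\bar\phi\|_{\bar A} = \|(\phi_1, \ldots, \phi_g)\|_{\bar A}$ under the identification of $\bar\phi$ with its coordinate tuple in the basis $\{\alpha_i\}_{i \in [g]}$ gives the stated form.
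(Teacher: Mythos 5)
Your proposal is correct and follows the same route as the paper: unfold the definition of the order unit norm, apply the characterization $(a_0,\bar a)\in A^+ \iff \|\bar a\|_{\bar A}\le a_0$ to both $t\mathds 1 - \phi$ and $t\mathds 1 + \phi$, and combine the two inequalities into $t \ge |\phi_0| + \|\bar\phi\|_{\bar A}$. No gaps.
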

\begin{proof}
By definition of the order unit norm of a vector $\phi \in A$, $\|\phi\|_A$ is the minimal $t \geq 0$ such that $\phi \in t[-\mathds 1, \mathds 1]$. 
\begin{align*}
\phi \in t[-\mathds 1, \mathds 1] & \iff t \mathds 1 - \phi \in A^+ \land t \mathds 1 + \phi \in A^+ \\
& \iff (t -\phi_0) \mathds 1 - \sum_{i = 1}^g \phi_i \alpha_i \in A^+ \land (t +\phi_0) \mathds 1 + \sum_{i = 1}^g \phi_i \alpha_i\in A^+ \\
 & \iff (t -\phi_0) \geq \|(\phi_1, \ldots, \phi_g)\|_{\bar A} \land (t +\phi_0) \geq \|(\phi_1, \ldots, \phi_g)\|_{\bar A} \\
&\iff t \geq |\phi_0| + \|(\phi_1, \ldots, \phi_g)\|_{\bar A}.
\end{align*}
This proves the assertion.
\end{proof}

\begin{ex} \label{ex:hypercube}
The hypercube GPT $\mathrm{HC}_n$  is centrally symmetric, with the base norm and the order unit norm given by
\begin{align*}
\|x\|_V &= \max(|x_0|, \|(x_1, \ldots, x_n)\|_\infty) = \|x\|_\infty = \max_{0 \leq i \leq n} |x_i| \\
\|\alpha\|_A &= |\alpha_0| + \|(\alpha_1, \ldots, \alpha_n)\|_1 = \|\alpha\|_1 =  \sum_{i=0}^n |\alpha_i|.
\end{align*}
Such GPTs can be used to model theories containing PR boxes \cite{Popescu1994}.
\end{ex}

\begin{ex}\label{ex:qubits}
Qubits (i.e.~quantum mechanics for two-level systems) form a centrally symmetric GPT, with the unit ball of $\mathbb R^3$ as a state space: the \emph{Bloch ball}. One can decompose any qubit state $\rho$ as 
$$\rho = \frac 1 2 \left( I_2 + r \cdot \sigma \right),$$
where $I_2$ is the identity matrix and $\sigma = (\sigma_X, \sigma_Y, \sigma_Z)$ is the vector of Pauli matrices and $r \in \mathbb R^3$ is a vector of norm at most one.
\end{ex}

\subsection{Measurements and compatibility}

We have already introduced states and norms in our GPT. In this section, we show how the elements of $A$ can represent  measurements in this theory and we study their compatibility. We work under the no-restriction hypothesis \cite{Janotta2013}, which states that all affine functions $K \to [0,1]$ correspond to physical effects.

Let $(V,V^+,\mathds 1)$ be a GPT.
Any  measurement of the system is determined by assigning the corresponding outcome probabilities to any state. We will
only consider measurements with a finite number of outcomes, usually  labeled by the set $[k]$, $k \in \mathbb N$.

There are several equivalent ways to describe a measurement. First of all, to each outcome $i\in [k]$ and $\rho\in K$,
let $f_i(\rho)$ denote the probability of 
obtaining the outcome $i$ if the system is in the state $\rho$. Then, $f_i$ is an affine map 
 $K\to [0,1]$, $i\in [k]$, and corresponds to  some  $f_i\in A^+$, $f_i\le \mathds 1$. Such 
elements are called \emph{effects}. Thus, the measurement is described by a tuple of effects $f=(f_1,\dots,f_k)$ which must satisfy
  $\sum_{i = 1}^k f_i=\mathds 1$. The measurement can be also described as an
affine map into the probability simplex $f: K\to \Delta_k$,  given by 
\[
\rho \mapsto \sum_{i = 1}^k f_i(\rho)e_i.
\] 
Here, $e_i$ are the vertices (or extremal points) of the simplex. We will use the same notation also for the unique extension of this map to $(V,V^+)\to (\mathbb R^k,\mathbb R^k_+)$ (where the vertices correspond to elements of the standard basis). There is also an associated positive unital map $\Phi_f: (\mathbb R^{k},\mathbb R^{k}_+)\to (A,A^+)$, determined by
\[
\Phi_f(e_i)=f_i,
\]
which is clearly the adjoint map of  $f$. A measurement $f=(f_1, \ldots, f_k)$ in which the effects are multiples of the order unit $f_i = p_i \mathds 1$ is called \emph{trivial}; the vector $(p_1, \ldots, p_k) \in \mathbb R^k$ is then a probability vector.

We turn now to the analysis of the notion of compatibility of several measurements. We shall consider sets of $g$ measurements, having possibly different numbers of outcomes $k_1, \ldots, k_g$. We shall write $\mathbf k := (k_1, \ldots, k_g) \in \mathbb N^g$.

\begin{defi}\label{def:compatibility}
Let $f^{(1)},\dots, f^{(g)}$ be measurements, $f^{(i)}=(f^{(i)}_1,\dots, f^{(i)}_{k_i})$, $i=1,\dots,g$. We  will say
that the collection $(f^{(1)},\dots, f^{(g)})$ is \emph{ compatible} if all $f^{(i)}$ can be obtained as marginals of a
single \emph{joint measurement}. More precisely, there is a measurement $h$ with outcomes labeled by
$[k_1]\times\dots\times[k_g]$ such that 
\[
f^{(i)}_j=\sum_{m_1,\dots,m_{i-1},m_{i+1},\dots, m_g} h_{m_1,\dots,m_{i-1},j,m_{i+1},\dots,\dots,m_g},\qquad j\in[k_i],\
i\in [g].
\]
\end{defi}

The \emph{noise robustness of incompatibility} with respect to \emph{white noise} can be described as the amount of white noise that has to be mixed with the measurements to make the collection $f=(f^{(1)},\dots, f^{(g)})$ compatible, see \cite{Heinosaari2015}. 
This leads to the following definition of the compatibility region for $f$:
\[
\Gamma(f):=\{s \in [0,1]^g \, : s_i f^{(i)} + (1-s_i)\mathds{1}/k_i \text{ are compatible measurements}\}.
\]
By considering these regions for all collections of measurements with $\mathbf{k}$ outcomes, we obtain a characterization of the amount of incompatibility available in the given GPT.  

\begin{defi}\label{def:compreg}
Given a GPT $(V,V^+,\mathds{1})$, $g \in \mathbb N$ and $\mathbf{k} \in \mathbb N^g$, we define the \emph{compatibility region for $\mathbf{k}$ outcomes} as 
\begin{align*}\Gamma(\mathbf k; V,V^+):=&\{s \in [0,1]^g \, : s_i f^{(i)} + (1-s_i)\mathds{1}/k_i \text{ are compatible measurements for all} \\ & \text{ collections } f^{(i)} \in A^{k_i}, i \in [g], \text{ of $g$ measurements with $\mathbf k$ outcomes} \}\\
=& \bigcap_{f\text{ with }\mathbf{k}\text{ outcomes }} \Gamma(f).
\end{align*}
If $\mathbf k = 2^{\times g}$, we will just write $\Gamma(g; V,V^+)$. 
\end{defi}

 It is easy to see that $\Gamma(\mathbf k; V, V^+)$ is convex and that $\Gamma(\mathbf k; V, V^+) = [0,1]^g$ if and only if all $\mathbf k$ outcome measurements are compatible. If $k_i \geq 2$ for all $i \in [g]$, then \cite{Plavala2016} shows that this is the case if and only if the state space is isomorphic to a simplex. 
 
 The compatibility region is always non-empty as the next proposition shows. The intuitive argument is that we can add as many trivial measurements as we want without affecting the compatibility of a set of measurements and that every measurement is compatible with itself. 
 \begin{prop}
 Let $(V,V^+,\mathds{1})$ be a GPT. Then, for any $g \in \mathbb N$ and $\mathbf{k} \in \mathbb N^g$, it holds that
 \begin{equation*}
     \left\{s\in [0,1]^g: \sum_{i = 1}^g s_i \leq 1\right\} \subseteq \Gamma(\mathbf k; V, V^+).
 \end{equation*}
 \end{prop}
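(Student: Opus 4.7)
The plan is to prove the inclusion by an explicit construction of a joint measurement, following the intuition given right before the proposition: we build a joint device that, with probability $s_i$, runs the $i$-th measurement and outputs uniformly random outcomes for the other indices, and with the remaining probability $s_0 := 1 - \sum_i s_i \geq 0$ runs the fully trivial uniform measurement. Since this is a convex combination of valid joint measurements, the result will automatically be a valid joint measurement, and its marginals will turn out to be precisely the noisy effects appearing in the definition of $\Gamma(\mathbf k; V, V^+)$.

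Concretely, fix $s \in [0,1]^g$ with $\sum_i s_i \leq 1$ and an arbitrary tuple $f = (f^{(1)}, \ldots, f^{(g)})$ of measurements with outcome numbers $\mathbf k$. For every $m = (m_1, \ldots, m_g) \in [k_1] \times \cdots \times [k_g]$, I would define
\begin{equation*}
h_m \;:=\; s_0 \Bigl(\prod_{i=1}^g \tfrac{1}{k_i}\Bigr)\mathds 1 \;+\; \sum_{r=1}^g s_r \,f^{(r)}_{m_r}\Bigl(\prod_{i\neq r}\tfrac{1}{k_i}\Bigr),
\end{equation*}
with $s_0 := 1 - \sum_r s_r$. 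Positivity $h_m \in A^+$ is immediate since each summand is a nonnegative scalar times an effect, and a short computation summing the coefficients shows $\sum_m h_m = \mathds 1$, so $h$ is a valid measurement on $[k_1]\times\cdots\times[k_g]$.

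The remaining (and entirely mechanical) verification is that the $j$-th marginal of $h$ equals $\tilde f^{(j)}_l = s_j f^{(j)}_l + (1-s_j)\mathds 1/k_j$: summing $h_m$ over all $m$ with $m_j = l$, the trivial-noise term contributes $s_0 \mathds 1/k_j$, the $r=j$ term in the sum yields $s_j f^{(j)}_l$ since the effects of $f^{(j)}$ sum to $\mathds 1$ in the remaining indices, and for each $r \neq j$ one uses $\sum_{m_r} f^{(r)}_{m_r} = \mathds 1$ together with the appropriate cardinality $\prod_{i \neq j,r} k_i$ to obtain a contribution $s_r \mathds 1/k_j$. Collecting these terms and using $s_0 + \sum_{r\neq j} s_r = 1 - s_j$ yields exactly $\tilde f^{(j)}_l$, which shows $s \in \Gamma(f)$. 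Since $f$ was arbitrary, $s \in \Gamma(\mathbf k; V, V^+)$.

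There is no real obstacle here: the whole argument is a convexity/marginalization calculation. The only thing to be careful about is bookkeeping in the cardinalities $\prod_{i\neq r} k_i$ when verifying both the normalization $\sum_m h_m = \mathds 1$ and the correct marginal, so I would present the computation cleanly in one display for each of these two identities rather than inline.
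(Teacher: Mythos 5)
Your construction is correct and is essentially the paper's argument: the paper proves that each vertex $e_i$ of the simplex lies in $\Gamma(\mathbf k; V, V^+)$ via the joint measurement $h_{i_1,\ldots,i_g} = \frac{1}{k_2\cdots k_g} f_{i_1}$ and then invokes convexity of $\Gamma$, while you simply write out the resulting convex combination (including the weight $s_0$ on the all-trivial measurement) as one explicit joint measurement and verify its marginals directly. The two proofs differ only in whether the convexity step is left abstract or made concrete.
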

 \begin{proof}
 We need to prove that $e_i \in \Gamma(\mathbf k; V, V^+)$ for all $i \in [g]$, where $\{e_i\}_{i =1}^g$ is the standard basis in $\mathbb R^g$. The statement then follows by convexity. Without loss of generality, let us consider $e_1$. Let $f \in A^{k_1}$ be a measurement. Then, $h$ with effects
 \begin{equation*}
     h_{i_1, \ldots, i_g} = \frac{1}{k_2 \cdots k_g} f_{i_1} \qquad \forall i_j \in [k_i],~ j \in [g]
 \end{equation*}
 is a joint measurement for $f$ and the trivial measurements with effects $\mathds 1/k_i$ for all $i \in [g]\setminus \{1\}$. This proves the assertion. 
 \end{proof}

The elements in the compatibility region with all coefficients equal corresponds to the notion of a compatibility degree for $f$:
\[
\gamma(f):=\max\{s\in [0,1], (s,\dots,s)\in \Gamma(f)\}.
\]

\begin{defi}\label{def:compdeg} Let $(V,V^+,\mathds 1)$ be a GPT, $g\in \mathbb N$, $\kk\in \mathbb N^g$. We define
the \emph{compatibility degree} for $\kk$ outcomes as
\begin{align*}
\cdeg(\kk; V,V^+)&:=\max\{s\in [0,1]:\ s f^{(i)} + (1-s)\mathds{1}/k_i \text{ are compatible measurements for all} \\ & \qquad\qquad\qquad \text{ collections } f^{(i)} \in A^{k_i}, i \in [g], \text{ of $g$ measurements with $\mathbf k$ outcomes} \}\\
&=\min_{f\text{ with }\mathbf{k}\text{ outcomes }} \gamma(f).
\end{align*}
If $\mathbf k = 2^{\times g}$, we will just write $\cdeg(g; V,V^+)$. 
\end{defi}

For dichotomic measurements, the compatibility degree decreases with the number of measurements. The intuitive argument is that subsets of compatible measurements remain compatible. 
\begin{prop} \label{prop:gamma-decreases}
Let $(V,V^+,\mathds 1)$ be a GPT and let $g$, $g^\prime \in \mathbb N$ be such that $g \leq g^\prime$. Let $\mathbf k^\prime \in \mathbb N^{g^\prime}$ and $\mathbf k = (k_1^\prime, \ldots, k_g^\prime)$. Then,
\begin{equation*}
    \gamma(\mathbf k; V, V^+) \geq  \gamma(\mathbf k^\prime; V, V^+).
\end{equation*}
\end{prop}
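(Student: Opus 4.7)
The plan is to reduce the problem to showing that every noisy configuration realizing the larger $\mathbf{k}'$-compatibility degree also makes the corresponding shorter $\mathbf{k}$-subcollection compatible. Concretely, I fix an arbitrary collection $f = (f^{(1)}, \ldots, f^{(g)})$ of measurements with outcome tuple $\mathbf{k}$ and show that $\gamma(f) \geq \gamma(\mathbf{k}'; V, V^+)$; taking the minimum over all such $f$ (Definition \ref{def:compdeg}) then yields the result.

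To carry this out, I would extend $f$ to a $g'$-tuple $f' = (f^{(1)}, \ldots, f^{(g)}, f^{(g+1)}, \ldots, f^{(g')})$ by appending trivial measurements
\[
f^{(j)}_i := \frac{\mathds 1}{k'_j}, \qquad i \in [k'_j], \; j \in \{g+1, \ldots, g'\}.
\]
The collection $f'$ has outcome tuple $\mathbf k'$. A key observation is that for a trivial measurement, the white-noise operation from Equation \eqref{eq:white-noise1} acts as the identity: $s_j f^{(j)}_i + (1-s_j)\mathds 1/k'_j = \mathds 1/k'_j = f^{(j)}_i$ for $j > g$, so the noisy version $\tilde f^{(j)}$ coincides with $f^{(j)}$ for every $s_j$.

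Now set $s := \gamma(\mathbf{k}'; V, V^+)$ and apply the definition of the compatibility degree to $f'$: there exists a joint measurement $h'$ with outcome set $[k'_1] \times \cdots \times [k'_{g'}]$ whose $i$-th marginal is $\tilde f^{(i)}$ for all $i \in [g']$. Define $h$ on $[k_1] \times \cdots \times [k_g]$ by summing out the last $g' - g$ coordinates:
\[
h_{m_1, \ldots, m_g} := \sum_{m_{g+1}, \ldots, m_{g'}} h'_{m_1, \ldots, m_{g'}}.
\]
Then $h$ is a measurement (the effects are positive and sum to $\mathds 1$), and for each $i \in [g]$ its $i$-th marginal equals the $i$-th marginal of $h'$, namely $\tilde f^{(i)}$. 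Hence $h$ is a joint measurement for the $s$-noisy versions of $f$, so $s \in \Gamma(f)$ and consequently $\gamma(f) \geq s = \gamma(\mathbf{k}'; V, V^+)$. Since $f$ was arbitrary, the inequality $\gamma(\mathbf{k}; V, V^+) \geq \gamma(\mathbf{k}'; V, V^+)$ follows.

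I do not foresee a real obstacle: the argument is a clean marginalization combined with the observation that trivial measurements are fixed points of the white-noise map, so padding the collection does not artificially decrease the admissible noise level. The only small care required is the choice of extension — using trivial measurements ensures the noisy and noiseless versions of the added measurements agree, so the value $s$ exhibited by $\mathbf{k}'$-compatibility transfers directly to $\mathbf{k}$-compatibility.
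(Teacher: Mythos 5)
Your proof is correct and follows essentially the same route as the paper: apply the definition of $\gamma(\mathbf k';V,V^+)$ to obtain a joint measurement for the noisy $g'$-tuple and then marginalize over the last $g'-g$ outcome indices. You are in fact slightly more explicit than the paper, which starts directly from an arbitrary $g'$-collection and leaves the padding of a given $g$-collection implicit; your observation that trivial measurements are fixed points of the white-noise map is a clean (though not strictly necessary) way to justify that step.
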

\begin{proof}
Let $f^{(i)} \in A^{k_i^\prime}$, $i \in [g^\prime]$, be a collection of measurements. Let $s = \gamma(\mathbf k^\prime; V, V^+)$ and let $\tilde f^{(i)}$ be the corresponding noisy measurements defined by
\begin{equation*}
    \tilde f^{(i)}_j = s f^{(i)}_j + (1-s) \frac{\mathds{1}}{k_i^\prime} \qquad \forall j \in [k_i^\prime],~i \in [g^\prime].
\end{equation*}
Then, the $\tilde f^{(i)}$, $i \in [g^\prime]$,  are compatible with joint measurement $h^\prime$. Setting \begin{equation*}
    h_{i_1, \ldots, i_g} = \sum_{i_j \in [k_j^\prime], j \in [g^\prime] \setminus [g]} h^\prime_{i_1, \ldots, i_{g^\prime}}
\end{equation*}
yields a joint measurement for $\bar f^{(i)}$, $i \in [g]$. Since the measurements were arbitrary, it follows that $s \leq \gamma(\mathbf k; V, V^+)$.
\end{proof}

Intuitively, the more outcomes we have, the more difficult it becomes  for all measurements to be compatible. This is the content of the next proposition, which is the GPT version of \cite[Proposition 3.35]{bluhm2020compatibility}. The proof is very similar. 

\begin{prop}
    Let $g \in \mathbb N$, $\kk^\prime$, $\kk \in \mathbb N^g$ and let $(V, V^+, \mathds 1)$ be a GPT. Let $\kk^\prime \geq \kk$, where the inequality is meant to hold entrywise. Then
    \begin{equation*}
        \Gamma(\mathbf k^\prime; V, V^+) \subseteq  \Gamma(\mathbf k; V, V^+).
    \end{equation*}
    In particular, $\gamma(\mathbf k^\prime; V, V^+) \leq  \gamma(\mathbf k; V, V^+)$.
\end{prop}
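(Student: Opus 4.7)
The plan is to prove $\Gamma(\mathbf k'; V, V^+) \subseteq \Gamma(\mathbf k; V, V^+)$ by an embed-then-post-process construction, exploiting the intuition that any measurement with $k_i$ outcomes is also a measurement with $k_i' \geq k_i$ outcomes via zero-padding, and that classical post-processing of a joint measurement yields a joint measurement of the post-processed marginals. The bound $\gamma(\mathbf k'; V, V^+) \leq \gamma(\mathbf k; V, V^+)$ is then immediate from Definition \ref{def:compdeg}, by taking $s = \gamma(\mathbf k'; V, V^+)$ along the diagonal.

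Fix $s \in \Gamma(\mathbf k'; V, V^+)$ and an arbitrary collection $f^{(i)} \in A^{k_i}$, $i \in [g]$. First, I would lift each $f^{(i)}$ to a measurement $\hat f^{(i)} \in A^{k_i'}$ by setting $\hat f^{(i)}_j := f^{(i)}_j$ for $j \in [k_i]$ and $\hat f^{(i)}_j := 0$ for $j \in [k_i'] \setminus [k_i]$; this is a valid measurement since $\sum_j \hat f^{(i)}_j = \sum_j f^{(i)}_j = \mathds{1}$ and each effect remains in $[0, \mathds{1}]$. By hypothesis, the noisy versions $\tilde{\hat f}^{(i)}_j := s_i \hat f^{(i)}_j + (1-s_i)\mathds{1}/k_i'$ admit a joint measurement $\hat h$ on $[k_1'] \times \cdots \times [k_g']$.

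Second, I would merge the extra outcomes by a stochastic post-processing. Define $\pi_i \colon [k_i'] \to \Delta_{k_i}$ by $\pi_i(j' \to j) = \delta_{j,j'}$ for $j' \in [k_i]$ and $\pi_i(j' \to j) = 1/k_i$ for $j' \in [k_i'] \setminus [k_i]$, and set
\[
h_{m_1, \ldots, m_g} := \sum_{m_1', \ldots, m_g'} \hat h_{m_1', \ldots, m_g'} \prod_{i=1}^g \pi_i(m_i' \to m_i).
\]
Because the weights are non-negative, $A^+$ is closed under non-negative combinations, and $\sum_{m_1, \ldots, m_g} \prod_i \pi_i(m_i' \to m_i) = 1$ for each fixed $(m_1', \ldots, m_g')$, the element $h$ is a valid measurement on $[k_1] \times \cdots \times [k_g]$.

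Third, I would check that the $i$-th marginal of $h$ equals $\tilde f^{(i)}$. Using that the $i$-th marginal of $\hat h$ is $\tilde{\hat f}^{(i)}$, the marginal of $h$ at index $i$ collapses to
\[
\sum_{m_i'} \pi_i(m_i' \to m_i)\, \tilde{\hat f}^{(i)}_{m_i'} = \tilde{\hat f}^{(i)}_{m_i} + \frac{k_i' - k_i}{k_i} \cdot \frac{1-s_i}{k_i'} \mathds{1} = s_i f^{(i)}_{m_i} + \frac{1-s_i}{k_i} \mathds{1} = \tilde f^{(i)}_{m_i}
\]
for $m_i \in [k_i]$, where the middle identity uses $k_i/k_i' + (k_i' - k_i)/k_i' = 1$. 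There is no conceptual obstacle; the only point requiring care is this normalization identity, which is precisely why the uniform spreading of the extra outcomes over $[k_i]$ converts $\mathds{1}/k_i'$ noise back to $\mathds{1}/k_i$ noise after aggregation.
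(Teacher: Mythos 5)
Your proposal is correct and is essentially the paper's argument: the paper also lifts each $f^{(i)}$ to a $k_i'$-outcome measurement whose noisy version is exactly your $\tilde{\hat f}^{(i)}$, invokes compatibility at level $\mathbf k'$, and then redistributes the surplus outcomes uniformly over $[k_i]$ to convert $\mathds 1/k_i'$-noise into $\mathds 1/k_i$-noise. The only difference is organizational — the paper performs the post-processing one coordinate at a time and iterates, whereas you apply the product stochastic map $\prod_i \pi_i$ to all coordinates simultaneously — and your marginal computation checks out.
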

\begin{proof}
Let $(f_1^{(i)}, \ldots, f_{k_i}^{(i)}) \in A^{k_i}$ be measurements for all $i \in [g]$. Let $s \in \Gamma(\mathbf k^\prime; V, V^+)$. Then, the measurements
\begin{equation*}
    F^{(i)} := \Bigg(s_if_1^{(i)}+(1-s_i)\frac{\mathds 1}{k_i^\prime}, \ldots, s_i f_{k_i}^{(i)}+(1-s_i)\frac{\mathds 1}{k_i^\prime}, \underbrace{(1-s_i)\frac{\mathds 1}{k_i^\prime}, \ldots, (1-s_i)\frac{\mathds 1}{k_i^\prime}}_{k_i^\prime - k_i}\Bigg)\in A^{k'_i}
\end{equation*}
are compatible. We will show that this is still true if we replace, for some fixed $l \in [g]$, the $l$-th measurement $F^{(l)}$ by 
\begin{equation*}
     G^{(l)} := \left(s_l f_1^{(l)}+(1-s_l)\frac{\mathds 1}{k_l}, \ldots, s_l f_{k_l}^{(l)}+(1-s_l)\frac{\mathds 1}{k_l}\right)\in A^{k_l}.
\end{equation*}
An iterative application of this procedure then shows that $s \in \Gamma(\mathbf k; V, V^+)$ and the assertion follows.

Let $H_{i_1, \ldots, i_g}$ be the effects of the joint measurement for the $F^{(i)}$, where $i_j \in [k_j^\prime]$, $j \in [g]$. Let us define
\begin{equation*}
    R_{i_1, \ldots, i_{l-1}, k_l + 1, i_{l+1}, \ldots, i_g} := \sum_{i_l = k_l + 1}^{k_l^\prime}H_{i_1, \ldots, i_g}.
\end{equation*}
Then, let us define
\begin{equation*}
    h_{i_1, \ldots, i_g} = H_{i_1, \ldots, i_g} + \frac{1}{k_l} R_{i_1, \ldots, i_{l-1}, k_l + 1, i_{l+1}, \ldots, i_g} \qquad i_l \in [k_l],~i_j \in [k_j^\prime]~\forall j \in [g] \setminus \{l\}.
\end{equation*}
It is easy to verify that $h$ is a measurement. Moreover, for $m \in [g]$, $m \neq l$, $q \in [k_m^\prime]$,
\begin{equation*}
    \sum_{\substack{i_j \in [k_j^\prime], j \in [g]\setminus \{m,l\},\\ i_l \in [k_l], i_m = q}} h_{i_1, \ldots, i_g}= F^{(m)}_q
\end{equation*}
and for $p \in [k_l]$,
\begin{align*}
    \sum_{i_j \in [k_j^\prime], j \in [g]\setminus \{l\}, i_l = p} h_{i_1, \ldots, i_g}&= \sum_{i_j \in [k_j^\prime], j \in [g]\setminus \{l\}, i_l = p} H_{i_1, \ldots, i_g} + \frac{1}{k_l} \sum_{i_j \in [k_j^\prime], j \in [g]\setminus \{l\}}  R_{i_1, \ldots, i_{l-1}, k_l + 1, i_{l+1}, \ldots, i_g} \\
    &= F_p^{(l)} + \frac{1}{k_l}\sum_{j = k_l +1}^{k_l^\prime}F_j^{(l)} \\
    & = s_l f_p^{(l)} + \frac{(1-s_l)}{k_l^\prime} \mathds 1 + \frac{(1-s_l)(k_l^\prime - k_l)}{k_l k_l^\prime} \mathds 1 \\
    &= s_l f_p^{(l)} +\frac{(1-s_l)}{k_l} \mathds 1 =G^{(l)}_p.
\end{align*}
Thus, $h$ is the desired joint measurement for the $F^{(i)}$, $i \in [g] \setminus \{l\}$ and $G^{(l)}$.
\end{proof}

We have seen in the last proposition how the compatibility regions are related for measurements with a different number of outcomes within the same theory. We will now show that sometimes the compatibility regions of different GPTs can be related.

\begin{prop}
Let $g \in \mathbb N$, $\mathbf k \in \mathbb N^g$. Let $(V_1, V_1^+, \mathds 1_{A_1})$ and $(V_2, V_2^+, \mathds 1_{A_2})$ be two GPTs. Let $\Psi:(V_1, V_1^+) \to (V_2,V_2^+) $ be a positive map such that $\Psi^\ast(\mathds 1_{A_2}) = \mathds 1_{A_1}$. If $\Psi$ is a \emph{retraction}, i.e.~if there is a positive map $\Theta: (V_2, V_2^+) \to (V_1,V_1^+)$ such that $\Psi \circ \Theta = \mathrm{id}_{V_{2}}$, then the GPT $(V_1, V_1^+, \mathds 1_{A_1})$ is ``less compatible'' than $(V_2, V_2^+, \mathds 1_{A_2})$ in the measurement setting $\mathbf k$: $\Gamma(\mathbf k; V_1, V_1^+) \subseteq \Gamma(\mathbf k; V_2, V_2^+)$.
\end{prop}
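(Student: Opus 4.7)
The plan is to transport measurements between the two GPTs using the dual maps $\Psi^\ast : A_2 \to A_1$ and $\Theta^\ast : A_1 \to A_2$, both of which are positive (as duals of positive maps). From $\Psi \circ \Theta = \mathrm{id}_{V_2}$ we get $\Theta^\ast \circ \Psi^\ast = \mathrm{id}_{A_2}$, so $\Psi^\ast$ is a section and $\Theta^\ast$ is a retraction on the level of effect spaces. Moreover, by hypothesis $\Psi^\ast(\mathds 1_{A_2}) = \mathds 1_{A_1}$, which in turn forces $\Theta^\ast(\mathds 1_{A_1}) = \Theta^\ast(\Psi^\ast(\mathds 1_{A_2})) = \mathds 1_{A_2}$, so both dual maps are unital.

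First, I would fix $s \in \Gamma(\mathbf k; V_1, V_1^+)$ and pick an arbitrary collection of measurements $f^{(i)} \in A_2^{k_i}$, $i \in [g]$, in the second GPT. I then pull them back to the first GPT by setting $\hat f^{(i)}_j := \Psi^\ast(f^{(i)}_j)$. Positivity of $\Psi^\ast$ and its unitality yield $\hat f^{(i)}_j \in A_1^+$ and $\sum_{j} \hat f^{(i)}_j = \Psi^\ast(\mathds 1_{A_2}) = \mathds 1_{A_1}$, so each $\hat f^{(i)}$ is a genuine measurement in $(V_1, V_1^+, \mathds 1_{A_1})$ with $k_i$ outcomes.

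Next, since $s \in \Gamma(\mathbf k; V_1, V_1^+)$, Definition \ref{def:compreg} provides a joint measurement $h$ on $[k_1]\times\cdots\times[k_g]$ with effects $h_{m_1,\dots,m_g} \in A_1^+$, $\sum_{m} h_m = \mathds 1_{A_1}$, whose $i$-th marginal equals $s_i \hat f^{(i)}_j + (1-s_i)\mathds 1_{A_1}/k_i$. I then push this joint measurement forward to the second GPT by defining $H_{m_1,\dots,m_g} := \Theta^\ast(h_{m_1,\dots,m_g})$. Positivity of $\Theta^\ast$ gives $H_m \in A_2^+$, and $\sum_m H_m = \Theta^\ast(\mathds 1_{A_1}) = \mathds 1_{A_2}$, so $H$ is a measurement in the second GPT. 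Computing its $i$-th marginal and using linearity together with $\Theta^\ast \Psi^\ast = \mathrm{id}_{A_2}$ and $\Theta^\ast(\mathds 1_{A_1}) = \mathds 1_{A_2}$ yields
\begin{equation*}
\sum_{m_1,\dots,m_{i-1},m_{i+1},\dots,m_g} H_{m_1,\dots,m_g} = s_i\,\Theta^\ast(\Psi^\ast(f^{(i)}_{m_i})) + (1-s_i)\,\Theta^\ast(\mathds 1_{A_1})/k_i = s_i f^{(i)}_{m_i} + (1-s_i)\mathds 1_{A_2}/k_i,
\end{equation*}
which is precisely the noisy version of $f^{(i)}$ in the second GPT. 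Hence $s \in \Gamma(\mathbf k; V_2, V_2^+)$, and since $s$ was arbitrary this gives the desired inclusion.

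There is essentially no real obstacle here: the only point that needs a short verification is that $\Theta^\ast$ is unital, which is forced by the retraction identity together with the assumption on $\Psi^\ast$. The argument is a transport-of-structure argument where the retraction condition does all the work, explaining morally why retractions make the source GPT have \emph{at most} as much compatibility as the target.
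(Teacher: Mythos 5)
Your proposal is correct and follows essentially the same route as the paper: pull the measurements back along $\Psi^\ast$, use the compatibility of the noisy pulled-back measurements in the first GPT, and push the resulting joint measurement forward along $\Theta^\ast$, using $\Theta^\ast\circ\Psi^\ast=\mathrm{id}_{A_2}$ and the unitality of both dual maps. The marginal computation and the observation that $\Theta^\ast(\mathds 1_{A_1})=\mathds 1_{A_2}$ is forced by the retraction identity match the paper's argument exactly.
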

\begin{proof}
Let $s \in \Gamma(\mathbf k; V_1, V_1^+)$. Let $f^{(i)} \in A_2^{k_i}$, $i \in [g]$, be a collection of measurements. Then, the $\Psi^\ast(f_j^{(i)}) \in A_1^+$, $j \in [k_i]$, form a collection of measurements as well. Moreover, the measurements given by
\begin{equation*}
         h_j^{(i)} = s_i \Psi^\ast(f_j^{(i)}) + (1-s_i) \frac{\mathds 1_{A_1}}{k_i} \qquad \forall i \in [g], ~j \in [k_i]
\end{equation*}
are compatible by the assumption on $s$. Since, $\Psi$ is a retraction, it holds that $\Theta^\ast \circ \Psi^\ast = \mathrm{id}_{A_2}$. Since $\Psi^\ast(\mathds 1_{A_2}) = \mathds 1_{A_1}$, also $\Theta^\ast(\mathds 1_{A_1}) = \mathds 1_{A_2}$. Thus, the image of the joint measurement for the $h^{(i)}$ under $\Theta^\ast$ is again measurement in $(V_2, V_2^+, \mathds 1_{A_2})$. Furthermore, it is a joint measurement for the noisy measurements 
\begin{equation*}
     s_i f_j^{(i)} + (1-s_i) \frac{\mathds 1_{A_2}}{k_i} \qquad \forall i \in [g], ~j \in [k_i],
\end{equation*}
since $ \Theta^\ast(\Psi^\ast(f_j^{(i)})) = f_j^{(i)} $ for all $ i \in [g]$, $j \in [k_i]$.
\end{proof}

\begin{remark}
An example for this situation is $(V_1, V_1^+, \mathds 1_{A_1}) = \mathrm{QM_d}$ for some $d \in \mathbb N$ and $(V_2, V_2^+, \mathds 1_{A_2}) =\mathrm{CM}_d $, where $\Psi^\ast$ embeds the probability distributions in $\mathrm{CM}_d$ as diagonal matrices. Then, $\Theta^\ast$ projects onto the diagonal entries of the matrix. 
\end{remark}

\section{Compatible  measurements and positive maps}\label{sec:maps}

In this section, we prove that the compatibility of a collection of measurements can be characterized by the properties of a certain associated positive map. 

Let $f=(f^{(1)},\dots, f^{(g)})$ be a collection of  measurements for a GPT  $(V,V^+,\mathds 1)$. Assume that for $i\in [g]$, $g \in \mathbb N$,  the measurement $f^{(i)}$ has $k_i$ outcomes and let us denote $\kk:=(k_1,\dots,k_g)\in
\mathbb N^g$. The collection of $g$ affine maps $f^{(i)}: K\to \Delta_{k_i}$ can be naturally identified with an affine  map of the state space into the \emph{polysimplex} $P_{\kk}:= \Delta_{k_1}\times\dots \times \Delta_{k_g}$,  defined as 
\[
\rho\mapsto (f^{(1)}(\rho), \dots,f^{(g)}(\rho))\in P_\kk.
\]
Conversely, we can obtain a collection of $g$ measurements from any such map by composition with the projections
$m_i: P_\kk \to \Delta_{k_i}$ to the $i$-th component of $P_\kk$, for $i \in [g]$.

We next find a suitable representation of $P_\kk$ as a base of the positive cone in an ordered vector space. As we have seen 
in Section \ref{sec:gpt}, we can find such a representation by considering affine functions $P_\kk\to \mathbb R$.
Since $P_\kk$ is a convex polytope, any affine function on $P_\kk$ is determined by its value on each of the $k_1 \cdots k_g$
vertices of $P_\kk$.
Therefore,  $A(P_\kk)$ is isomorphic to a subspace 
\[
E_{\mathbf k}\subseteq \mathbb R^\kk:=\mathbb R^{k_1\cdots k_g}\cong \bigotimes_{i\in [g]} \mathbb R^{k_i}
\]
and it is clear that the positive cone $A(P_\kk)^+\cong E_\kk^+:=E_\kk\cap \mathbb R^{\kk}_+$.
The unit $1_{P_\kk}$ corresponds to the vector $ 1_\kk=1_{k_1}\otimes\dots\otimes 1_{k_g}$ of all ones, which is
clearly contained in $E_\kk\cap \operatorname{int}(\mathbb R^{\kk}_+)$.

 Let $J_\kk$ denote the orthogonal projection $\mathbb R^{\kk}\to
E_\kk$. By Proposition \ref{prop:E_dual}, we  may put $E_\kk^*\cong E_\kk$ and $(E_\kk^+)^*\cong J_\kk(\mathbb
R^{\kk}_+)$. Then $P_\kk$ is represented as the base of  $(E_\kk^+)^*$ determined by the unit functional $1_\kk$,
that is
\[
P_\kk \cong \{ x\in J_\kk(\mathbb R^{\kk}_+),\  \langle 1_\kk, x\rangle=1\}=\{J_\kk y,\  y\in \mathbb R^{\kk}_+,\
\langle1_\kk,y\rangle=1\}=J_\kk(\Delta_{\kk}), 
\]
where 
\[
\Delta_\kk:=\Delta_{k_1\dots k_g}\cong \otimes_i \Delta_{k_i}.
\]
The vertex $(e^{(k_1)}_{i_1},\dots, e^{(k_g)}_{i_g})$ of $P_\kk$ corresponds precisely to 
$J_\kk(e^{(k_1)}_{i_1}\otimes\dots\otimes e^{(k_g)}_{i_g})$, where $e^{(k)}_j$ denote  elements of the standard basis in
$\mathbb R^k$. 

We have therefore constructed a linear and order isomorphism between the order unit spaces 
$(A(P_\kk),A(P_\kk)^+,1_{P_\kk})$ and $(E_\kk, E_\kk^+, 1_\kk)$, respectively the base norm spaces $(V(P_\kk),V(P_\kk)^+)$ and $(E_\kk^*, (E_\kk^+)^* = J_\kk(\mathbb R_+^{\kk}))$.
We will use this identification throughout this section. 
The following result is easily checked.

\begin{lem}\label{lem:proj_marginals}
The restriction 
\[
J_\kk|_{\Delta_\kk}:\Delta_\kk\to J_\kk(\Delta_{\kk})\cong P_\kk
\]
defines a collection $(j^{(1)},\dots,j^{(g)})$ of  measurements on $\Delta_{\kk}$, where 
$j^{(i)}(p)$ is the $i$-th marginal of $p$, for any $p\in \Delta_\kk\cong \otimes_i \Delta_{k_i}$. In other words, for a probability distribution $p \in \Delta_\kk$, we have $J_\kk(p) = (p^{(1)}, \ldots, p^{(k)}) \in P_\kk$, where $p^{(i)} = j^{(i)}(p)$ is the $i$-th marginal of $p$.
 
\end{lem}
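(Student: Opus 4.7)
The plan is to reduce the claim to the extreme points of $\Delta_\kk$ and then extend by linearity of $J_\kk$. First I would write an arbitrary $p\in\Delta_\kk$ uniquely as a convex combination
\[
p=\sum_{i_1,\dots,i_g} p_{i_1,\dots,i_g}\,\delta_{i_1,\dots,i_g},\qquad \delta_{i_1,\dots,i_g}:=e^{(k_1)}_{i_1}\otimes\cdots\otimes e^{(k_g)}_{i_g},
\]
with $p_{i_1,\dots,i_g}\ge 0$ and $\sum p_{i_1,\dots,i_g}=1$. Linearity of $J_\kk$ then gives
\[
J_\kk(p)=\sum_{i_1,\dots,i_g} p_{i_1,\dots,i_g}\,J_\kk(\delta_{i_1,\dots,i_g}).
\]

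Next I would invoke the identification set up in the paragraph directly preceding the statement: under the isomorphism $J_\kk(\Delta_\kk)\cong P_\kk$, the image $J_\kk(\delta_{i_1,\dots,i_g})$ corresponds precisely to the vertex $(e^{(k_1)}_{i_1},\dots,e^{(k_g)}_{i_g})$ of $P_\kk$. Applying this termwise, $J_\kk(p)$ corresponds to the convex combination
\[
\sum_{i_1,\dots,i_g} p_{i_1,\dots,i_g}\bigl(e^{(k_1)}_{i_1},\dots,e^{(k_g)}_{i_g}\bigr)\in P_\kk,
\]
whose $j$-th component in $\Delta_{k_j}$ is $\sum_{i_j}\bigl(\sum_{i_l:\,l\ne j} p_{i_1,\dots,i_g}\bigr)e^{(k_j)}_{i_j}$. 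This is, by definition, the $j$-th marginal $p^{(j)}$ of $p$, which proves the formula $J_\kk(p)=(p^{(1)},\dots,p^{(g)})$.

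Finally, to see that the $j^{(i)}:=m_i\circ J_\kk|_{\Delta_\kk}$ are bona fide measurements on the classical GPT $\mathrm{CM}_\kk$, I would note that each $j^{(i)}$ is affine (being the restriction of a linear map followed by a linear projection) and lands in $\Delta_{k_i}$, since marginals of probability distributions are probability distributions. Equivalently, $j^{(i)}$ corresponds to a tuple of effects $(j^{(i)}_1,\dots,j^{(i)}_{k_i})$ in $A(\Delta_\kk)^+\cong\mathbb R^\kk_+$ summing to the unit $1_\kk$, where $j^{(i)}_a$ sums the entries of $p$ over the hyperplane $\{i_i=a\}$. The main ``obstacle'' here is nothing more than keeping the three identifications $A(P_\kk)\cong E_\kk$, $V(P_\kk)\cong J_\kk(\mathbb R^\kk)$, and $J_\kk(\Delta_\kk)\cong P_\kk$ consistent; the mathematical content is already encoded in the construction immediately above the lemma, and the argument is pure bookkeeping around the linearity of $J_\kk$.
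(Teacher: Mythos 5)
Your proof is correct, and it is exactly the ``easy check'' the paper leaves implicit (the authors give no proof beyond the remark that the result is easily verified): decompose $p\in\Delta_\kk$ over the product basis vectors, use linearity of $J_\kk$ together with the vertex correspondence $J_\kk(e^{(k_1)}_{i_1}\otimes\cdots\otimes e^{(k_g)}_{i_g})\leftrightarrow(e^{(k_1)}_{i_1},\dots,e^{(k_g)}_{i_g})$ stated immediately before the lemma, and read off that the $j$-th component of the resulting convex combination is the $j$-th marginal. The concluding observation that each $j^{(i)}$ is affine with values in $\Delta_{k_i}$, hence a measurement on $\mathrm{CM}_{k_1\cdots k_g}$, completes the statement; no gaps.
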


To give  a more explicit description of $E_\kk$, observe first that each projection 
$m^i: P_\kk\to \Delta_{k_i}$ onto the $i$-th component is a measurement on $P_\kk$ and the corresponding effects are given by 
the elements of $E_\kk^+$ of the form 
\[
\eta^{(i)}_j:=1_{k_1}\otimes\dots \otimes 1_{k_{i-1}}\otimes e^{(k_i)}_j\otimes 1_{k_{i+1}}\otimes\dots\otimes 1_{k_g},\qquad j\in
[k_i],\ i\in [g].
\] 
By the results of \cite{jencova2018incompatible}, these elements  generate the extremal rays of  $E_\kk^+$ and consequently $E_\kk$ 
is spanned by the set  $\{\eta^{(i)}_j\}$.  We choose a basis of $E_\kk$ given as
\[
w:=\{1_\kk, w^{(i)}_j, \ j\in [k_i-1],\ i\in [g]\},
\]
where 
\begin{equation*}
w^{(i)}_j := \underbrace{1_{k_1} \otimes \cdots \otimes 1_{k_{i-1}}}_{i-1 \text{ factors}} \otimes
v^{(k_i)}_j \otimes \underbrace{1_{k_{i+1}} \otimes \cdots \otimes 1_{k_g}}_{g-i \text{ times}} \in
\mathbb R^{\kk}
\end{equation*}
and  $v^{(k)}_1, \ldots, v^{(k)}_{k-1}$ are vectors  in $\mathbb R^{k}$ defined as
\begin{equation*}
v^{(k)}_j(l) := -\frac{2}{k} + 2\delta_{l, j}, \qquad \forall j \in [k-1],\, \forall l \in [k].
\end{equation*}
The set $\{1_k, v^{(k)}_j, j\in [k-1]\}$ is a basis of $\mathbb R^k$ and  the dual basis 
of $\mathbb R^k\cong (\mathbb R^k)^*$ is
$\{\frac1k 1_k, v^{(*k)}_j,  j\in [k-1]\}$, where
\[
v^{(*k)}_j:= \frac12(e^{(k)}_j-e^{(k)}_k),\qquad  j\in [k-1].
\]
It is then clear that 
\[
w^*:=\{\frac 1{k_1\dots k_g} 1_\kk, w^{(*i)}_j,\  \ j\in [k_i-1],\ i\in [g]\},
\]
where
\[
w^{(*i)}_j:=\frac{k_i}{k_1\dots k_g} 1_{k_1}\otimes\dots\otimes 1_{k_{i-1}}\otimes v^{(*k_i)}_j\otimes
1_{k_{i+1}}\otimes\dots\otimes 1_{k_g},
\]
is the basis of $E_\kk^*\cong E_\kk$ which is  dual to $w$. This choice of basis may seem strange at first, but it will be convenient in Section \ref{sec:gen-spectra-and-comp}.

Let $\Phi: E_\kk\to A$ be a linear map, then $\Phi$ is determined by elements $p_0$, $p^{(i)}_j\in A$, such that
\[
\Phi(1_\kk)=p_0,\quad \Phi(w^{(i)}_j)= p^{(i)}_j,\qquad j\in [k_i-1],\ i\in[g].
\]
Then the corresponding element $\phi^\Phi\in E^*_\kk\otimes A$ (see Section \ref{sec:positive_maps}) is
\[
\phi^\Phi = \frac 1{k_1\dots k_g} 1_\kk\otimes p_0+\sum_{i=1}^g\sum_{j=1}^{k_i-1} w^{(i*)}_j\otimes p^{(i)}_j.
\]

\begin{prop}\label{prop:pu_map}
Let $\Phi$ and $p_0$, $p^{(i)}_j\in A$ be as above.  Then $\Phi: (E_\kk,E_\kk^+)\to (A,A^+)$ is positive and unital if and
only if  $p_0=\mathds 1$ and there is a collection $(f^{(1)},\dots, f^{(g)})$ of measurements, $f^{(i)}: K\to
\Delta_{k_i}$,  such that 
\[
p^{(i)}_j = 2f^{(i)}_j-\frac2{k_i} \mathds 1,\qquad j\in [k_i-1],\ i\in [g].
\]
The adjoint map $\Phi^* : (V,V^+) \to (E_\kk^*, (E_\kk^+)^*)$ satisfies 
\begin{equation}\label{eq:adjoint}
\Phi^*(\rho)=(f^{(1)}(\rho),\dots, f^{(g)}(\rho)),\qquad \rho\in K.
\end{equation}

\end{prop}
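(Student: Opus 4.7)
The proof is essentially a direct computation once one changes basis suitably. I would proceed as follows.

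The plan is first to translate the problem to the extremal-ray generators $\eta^{(i)}_j$, which already have a transparent interpretation as effects. Since the $\{\eta^{(i)}_j\}$ generate the extremal rays of $E_\kk^+$ (as recalled from \cite{jencova2018incompatible}), the cone $E_\kk^+$ equals their conic hull; hence a linear map $\Phi : E_\kk \to A$ is positive if and only if $\Phi(\eta^{(i)}_j) \in A^+$ for all $i \in [g]$, $j \in [k_i]$. I would therefore re-express the basis elements $1_\kk$ and $w^{(i)}_j$ in terms of the $\eta^{(i)}_j$.

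A direct check using the definition $v^{(k)}_j(l) = -2/k + 2\delta_{l,j}$ shows that $e^{(k)}_j = \tfrac{1}{k} 1_k + \tfrac12 v^{(k)}_j$ for $j \in [k-1]$, and that $e^{(k)}_k = \tfrac{1}{k} 1_k - \tfrac12 \sum_{j=1}^{k-1} v^{(k)}_j$. Tensoring with $1_{k_{i'}}$ for $i' \neq i$ gives
\[
\eta^{(i)}_j = \tfrac{1}{k_i} 1_\kk + \tfrac12 w^{(i)}_j \quad (j \in [k_i-1]), \qquad
\eta^{(i)}_{k_i} = \tfrac{1}{k_i} 1_\kk - \tfrac12 \sum_{j=1}^{k_i-1} w^{(i)}_j,
\]
and also $\sum_{j=1}^{k_i} \eta^{(i)}_j = 1_\kk$ for every $i \in [g]$.

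Next I would prove both directions. Unitality $\Phi(1_\kk) = \mathds 1$ is equivalent to $p_0 = \mathds 1$. Assuming $p_0 = \mathds 1$, define $f^{(i)}_j := \Phi(\eta^{(i)}_j)$. Applying $\Phi$ to the identities above yields
\[
f^{(i)}_j = \tfrac{1}{k_i}\mathds 1 + \tfrac12 p^{(i)}_j \quad (j \in [k_i-1]), \qquad
f^{(i)}_{k_i} = \mathds 1 - \sum_{j=1}^{k_i-1} f^{(i)}_j,
\]
so that automatically $\sum_{j=1}^{k_i} f^{(i)}_j = \mathds 1$. Positivity of $\Phi$ on the extremal rays is then equivalent to $f^{(i)}_j \in A^+$ for every $i,j$, i.e.\ to each $f^{(i)} = (f^{(i)}_1,\dots,f^{(i)}_{k_i})$ being a measurement with values in $\Delta_{k_i}$. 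Inverting the relation $f^{(i)}_j = \tfrac{1}{k_i}\mathds 1 + \tfrac12 p^{(i)}_j$ gives $p^{(i)}_j = 2 f^{(i)}_j - \tfrac{2}{k_i}\mathds 1$, which is the desired formula. The converse direction reduces to verifying the same formulas and checking that $f^{(i)}_{k_i} = \mathds 1 - \sum_{j<k_i} f^{(i)}_j \in A^+$, which holds because $(f^{(i)}_1,\dots,f^{(i)}_{k_i})$ is a measurement.

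Finally, for the adjoint formula, under the identification $E_\kk^* \cong V(P_\kk)$ described before the proposition, any point of $P_\kk$ is of the form $J_\kk(p^{(1)}\otimes\cdots\otimes p^{(g)})$ and pairs with $\eta^{(i)}_j \in E_\kk$ to give $p^{(i)}_j$. For $\rho \in K$ I would compute, for every $i \in [g]$ and $j \in [k_i]$,
\[
\langle \Phi^*(\rho), \eta^{(i)}_j \rangle = \langle \rho, \Phi(\eta^{(i)}_j)\rangle = \langle \rho, f^{(i)}_j\rangle = f^{(i)}_j(\rho),
\]
together with $\langle \Phi^*(\rho), 1_\kk\rangle = 1$; this uniquely identifies $\Phi^*(\rho) \in (E_\kk^+)^*$ with the tuple $(f^{(1)}(\rho),\dots,f^{(g)}(\rho)) \in P_\kk$, proving \eqref{eq:adjoint}. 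The only real care needed is in the change-of-basis bookkeeping and in making the identification $E_\kk^* \cong V(P_\kk)$ explicit when verifying the adjoint formula; no genuine obstacle arises.
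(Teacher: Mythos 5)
Your proposal is correct and follows essentially the same route as the paper: both reduce positivity to the extremal-ray generators $\eta^{(i)}_j$ of $E_\kk^+$, recover the formulas $f^{(i)}_j = \tfrac{1}{k_i}\mathds 1 + \tfrac12 p^{(i)}_j$ by a change of basis, and verify the adjoint formula via the pairing $\langle \Phi^*(\rho),\eta^{(i)}_j\rangle = f^{(i)}_j(\rho)$. The only cosmetic difference is that you express the $\eta^{(i)}_j$ directly in the basis $\{1_\kk, w^{(i)}_j\}$, whereas the paper performs the equivalent computation through the dual pairing $\langle w^{(*i')}_{j'},\eta^{(i)}_j\rangle$.
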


\begin{proof}
It is clear that $\Phi$ is unital if and only if $p_0=\mathds 1$. Moreover, $\Phi$ is positive if and only if
$\varphi^\Phi\in (E_\kk^+)^*\otimesmax A^+$.
Since the elements $\eta^{(i)}_j$ generate extremal rays of $E^+_\kk$, it is enough to verify that
\[
\langle \phi^\Phi, \eta^{(i)}_j\otimes v\rangle \ge 0,\qquad \forall v\in V^+,\  j\in [k_i],\ i\in [g].
\]
From this condition and 
\[
\langle w^{(*i')}_{j'},\eta^{(i)}_j\rangle = \frac 12\delta_{i,i'}(\delta_{j,j'}-\delta_{j,k_i}),
\]
we get that for $i\in [g]$,
\[
\frac1{k_i}\mathds 1+ \frac12 p^{(i)}_j\in A^+,\quad j\in [k_i-1],\qquad  \frac 1{k_i} \mathds 1-\frac12
\sum_{j=1}^{k_i-1} p^{(i)}_j\in A^+.
\]
We now put
\[
f^{(i)}_j:=\frac1{k_i}\mathds 1+\frac12 p^{(i)}_j,\ j\in [k_i-1],\qquad f^{(i)}_{k_i}:=\mathds 1-\sum_{j=1}^{k_i-1} f^{(i)}_j,
\]
then $(f^{(1)},\dots, f^{(g)})$ is a collection of measurements of the required form. 
Since the $\eta^{(i)}_j$ correspond to the effects of the projection map onto the $i$-th component of $P_\kk$, the last statement 
follows from 
\[
\langle \Phi^*(v),\eta^{(i)}_j \rangle = \langle \phi^\Phi, \eta^{(i)}_j\otimes v\rangle= \langle  f^{(i)}_j,v\rangle,\qquad j\in [k_i],\ i\in [g].
\]
\end{proof}

Let $f=(f^{(1)},\dots,f^{(g)})$ be a collection of measurements with outcome spaces specified by the vector
$\kk=(k_1,\dots,k_g)$. 
By the above result, we clearly have that there is an associated positive unital map $\Phi^{(f)}:E_\kk\to A$, defined as
\begin{align}
\Phi^{(f)}:E_{\mathbf k} &\to A \nonumber\\
1_{\kk} &\mapsto \mathds{1} \label{eq:def-Phi-f}\\
w_j^{(i)} &\mapsto 2f_j^{(i)} - \frac{2}{k_i}\mathds{1}\qquad \forall j \in [k_i-1],~ i \in [g].\nonumber
\end{align}

The corresponding element in
$(E_\kk^+)^*\otimesmax A^+$ is given by
\begin{equation}\label{eqq:def_phi_f}
\varphi^{(f)}=\frac{1}{k_1\dots k_g}1_\kk \otimes \mathds 1  + \sum_{i\in [g]}\sum_{j\in [k_i-1]} w^{(*i)}_j\otimes 
(2f_j^{(i)} - \frac{2}{k_i}\mathds{1}).
\end{equation}

Note that the action of the map $\Phi^{(f)}$ can alternatively be defined as
\begin{align*}
    \Phi^{(f)}: E_{\mathbf k} &\to A \\
    \eta_j^{(i)} &\mapsto f_j^{(i)} \qquad \forall j \in [k_i],~i \in [g]
\end{align*}
at the cost that this is only well-defined for measurements since the $\eta_j^{(i)}$ are not all linearly independent.

We now prove the main result of this section; see \cite[Theorem 1 and Proposition A1]{jencova2018incompatible} for a result similar to the equivalence of (2) - (5).

\begin{thm}\label{thm:etb_ext}
Let $f=(f^{(1)},\dots, f^{(g)})$, $\Phi^{(f)}$ and 
$\varphi^{(f)}$ be as above. The following are equivalent.
\begin{enumerate}
\item There is a positive map $\tilde \Phi : (\mathbb R^{\kk},\mathbb R^{\kk}_+)\to (A,A^+)$
extending $\Phi^{(f)}$.
\item $\Phi^{(f)}$ is entanglement breaking.
\item $\varphi^{(f)}\in (J_\kk\otimes \mathrm{id})(\mathbb R^{\kk}_+\otimes A^+)$.

\item There is some measurement $h: K\to \Delta_\kk$ such that $(\Phi^{(f)})^*|_K= J_\kk\circ h$.
\item The collection $f$ is compatible.

\end{enumerate}

\end{thm}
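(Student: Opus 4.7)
The equivalences split into three independent blocks: $(1) \Leftrightarrow (2) \Leftrightarrow (3)$, then $(1) \Leftrightarrow (5)$, and finally $(4) \Leftrightarrow (5)$.

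\textbf{Block A: $(1) \Leftrightarrow (2) \Leftrightarrow (3)$.} These three conditions are a direct instance of Proposition \ref{prop:positivity-extension} applied to the subspace $E_{\kk} \subseteq \mathbb R^{\kk}$, the proper ordered vector space $(V,V^+)$, and the map $\Phi^{(f)}: E_{\kk} \to A = V^*$. The hypothesis $E_{\kk} \cap \operatorname{ri}(\mathbb R^{\kk}_+) \neq \emptyset$ is satisfied by $1_{\kk}$, and $A^+ = (V^+)^*$ identifies with the dual cone needed by the proposition. The only check is that the element $\varphi^{\Phi^{(f)}}$ built from $\Phi^{(f)}$ via Equation \eqref{eq:def_phiPhi} coincides with $\varphi^{(f)}$ as defined in Equation \eqref{eqq:def_phi_f}, which is routine from the dual basis $w^*$.

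\textbf{Block B: $(1) \Leftrightarrow (5)$.} Given a positive extension $\tilde\Phi: (\mathbb R^{\kk}, \mathbb R^{\kk}_+) \to (A,A^+)$, define $h_{m_1,\dots,m_g} := \tilde\Phi(e^{(k_1)}_{m_1} \otimes \cdots \otimes e^{(k_g)}_{m_g}) \in A^+$. Since $1_{\kk} = \sum_{m_1,\dots,m_g} e^{(k_1)}_{m_1} \otimes \cdots \otimes e^{(k_g)}_{m_g}$ and $\tilde\Phi(1_{\kk}) = \Phi^{(f)}(1_{\kk}) = \mathds 1$, the family $h$ is a measurement. To check that its marginals are the $f^{(i)}$, I use that $\eta^{(i)}_j = \sum_{\hat m_i} e^{(k_1)}_{m_1} \otimes \cdots \otimes e^{(k_{i-1})}_{m_{i-1}} \otimes e^{(k_i)}_j \otimes e^{(k_{i+1})}_{m_{i+1}} \otimes \cdots \otimes e^{(k_g)}_{m_g}$, where the sum is over all $m_\ell$ with $\ell \neq i$. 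Applying $\tilde\Phi$ and using $\tilde\Phi|_{E_{\kk}} = \Phi^{(f)}$ together with Proposition \ref{prop:pu_map} (which gives $\Phi^{(f)}(\eta^{(i)}_j) = f^{(i)}_j$) yields precisely the marginal identity from Definition \ref{def:compatibility}. Conversely, given a joint measurement $h$, setting $\tilde\Phi(e^{(k_1)}_{m_1} \otimes \cdots \otimes e^{(k_g)}_{m_g}) := h_{m_1,\dots,m_g}$ and extending linearly defines a positive map on $(\mathbb R^{\kk}, \mathbb R^{\kk}_+)$, and the marginal identities show $\tilde\Phi|_{E_{\kk}} = \Phi^{(f)}$ on the generators $\eta^{(i)}_j$, hence on all of $E_{\kk}$.

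\textbf{Block C: $(4) \Leftrightarrow (5)$.} By Equation \eqref{eq:adjoint}, $(\Phi^{(f)})^*(\rho) = (f^{(1)}(\rho),\dots,f^{(g)}(\rho)) \in P_{\kk}$, viewed as a point in $J_{\kk}(\Delta_{\kk})$. By Lemma \ref{lem:proj_marginals}, for any $p \in \Delta_{\kk}$ the vector $J_{\kk}(p) \in P_{\kk}$ is exactly the collection of $i$-th marginals of $p$. Thus if $h: K \to \Delta_{\kk}$ is any measurement, the identity $(\Phi^{(f)})^*|_K = J_{\kk}\circ h$ reads pointwise as: the marginals of $h(\rho)$ are $(f^{(1)}(\rho),\dots,f^{(g)}(\rho))$ for every state $\rho$, which is exactly the compatibility condition phrased in state-form. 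Conversely, a joint measurement $h$ produces such an affine map $K \to \Delta_{\kk}$, and the marginal condition gives the identity.

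The only genuinely substantive step is Block A, where all the work is already packaged inside Proposition \ref{prop:positivity-extension}; Blocks B and C are essentially translations between the tensor/linear-map picture and the measurement/marginal picture, with Proposition \ref{prop:pu_map} and Lemma \ref{lem:proj_marginals} doing the bookkeeping.
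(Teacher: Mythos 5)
Your proof is correct and follows essentially the same route as the paper: Proposition \ref{prop:positivity-extension} for the block (1)--(3), Lemma \ref{lem:proj_marginals} together with Equation \eqref{eq:adjoint} for (4)$\iff$(5), and the identification of positive unital maps on $(\mathbb R^{\kk},\mathbb R^{\kk}_+)$ with joint measurements as the bridge between the two blocks. The only (immaterial) difference is that you close the cycle via (1)$\iff$(5) on the effect side, evaluating $\tilde\Phi$ on the standard basis vectors and the generators $\eta^{(i)}_j$, whereas the paper proves (1)$\iff$(4) by passing to adjoints and the orthogonal projection $J_\kk$.
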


\begin{proof} The statements (1) - (3) are equivalent by Proposition \ref{prop:positivity-extension}.
 Moreover, we have (4) $\iff$ (5) by Lemma \ref{lem:proj_marginals} and Equation~\eqref{eq:adjoint}. 

Assume (1), then since $\Phi^{(f)}$ is unital and $1_\kk\in E_\kk$, $\tilde \Phi$ is unital as well. Hence there is
some measurement $h:K\to \Delta_{\kk}$ such that $\tilde \Phi^*|_K=h$.  For $\rho\in K$ and $e\in E_\kk$, we have
\[
\langle (\Phi^{(f)})^*(\rho),e\rangle = \langle \rho, \Phi^{(f)}(e)\rangle=\langle \rho, \tilde \Phi(e)\rangle=
\langle h(\rho), e\rangle= \langle J_\kk(h(\rho)), e\rangle,
\]
which proves (4).
Finally, assume (4), then we have
\[
\langle \Phi^{(f)}(e),v\rangle= \langle e, (\Phi^{(f)})^*(v)\rangle=\langle e, J_\kk \circ h(v)\>=
\langle e, h(v)\rangle= \langle \Phi_{(h)}(e),v\rangle,\qquad e\in E_\kk,\ v\in V,
\]
so that $\Phi_{(h)}$ is a positive extension of $\Phi^{(f)}$. Here, we have identified the affine map $h$ on $K$ with its unique extension to a linear map on $V$. This finishes the proof.
\end{proof}

\begin{remark}
We already noticed that map extension can be verified using conic programming (see Section \ref{sec:conic-programming-map} in the Appendix). The same is true for the existence of a joint measurement (see \cite{wolf2009measurements, Plavala2016}).
\end{remark}

\section{Inclusion of generalized spectrahedra and compatibility of measurements in GPTs}
\label{sec:gen-spectra-and-comp}

In this section, we will show that the compatibility of measurements in a GPT can be phrased as an inclusion problem of 
generalized spectrahedra.

\subsection{The GPT jewel} \label{sec:jewel-ordered-vector-spaces}
In this section, we define the universal generalized spectrahedron we will consider to relate the compatibility of measurements in GPTs to inclusion problems of generalized spectrahedra. It plays the same role as the matrix diamond in \cite{bluhm2018joint} and the matrix jewel in \cite{bluhm2020compatibility} and is in fact a generalization of both.

Recall from the previous section that, for a positive integer $g$ and a $g$-tuple of positive integers $\kk \in \mathbb N^g$, we define the vector subspace $E_\kk \subseteq \mathbb R^{k_1 \cdots k_g}$ as the linear span of the basis $w=\{1_\kk, w^{(i)}_j, \, i \in [g], \, j \in [k_i-1]\}$, where $1_\kk$ is the all-ones vector and 
$$w^{(i)}_j = 1_{k_1} \otimes \cdots \otimes 1_{k_{i-1}} \otimes \underbrace{\left( 2 e_j^{(k_i)} - \frac{2}{k_i}1_{k_i} \right)}_{v^{(k_i)}_j} \otimes 1_{k_{i+1}} \otimes \cdots \otimes 1_{k_g}$$
for $i \in [g]$ and $j \in [k_i-1]$. In particular, $\dim E_\kk = 1-g+\sum_{i-1}^g k_i$.

\begin{defi}
Let $g \in \mathbb N$, $\mathbf k \in \mathbb N^g$ and let $(L, L^+)$ be a proper ordered vector space. 
The \emph{$(\mathbf k; L,L^+)$-jewel} is defined as
\begin{equation*}
\mathcal D_{\mathrm{GPT}\jewel}(\mathbf k; L,L^+) := \mathcal D_w(L,E_{\mathbf k}^+ \otimes_{\max} L^+).
\end{equation*}
For $k_1=\ldots=k_g=2$, the above has an especially easy form and we write 
\begin{equation*}
\mathcal D_{\mathrm{GPT}\diamond}(g; L,L^+):= \mathcal D_{\mathrm{GPT}\jewel}(2^{\times g};L,L^+)
\end{equation*}
for short. In this case, $w^{(i)}_1 =c_i$ for all $i \in [g]$, with 
\[
c_i = (1,1) \otimes \cdots \otimes (1,1) \otimes
(+1,-1) \otimes (1,1) \otimes \cdots \otimes (1,1) \in \mathbb R^{2^g}.
\]
 We call this object the \emph{$( g;
L,L^+)$-diamond}.
\end{defi}

Since $w$ is a basis of $E_\kk$, we see that the $(\kk, L,L^+)$-jewel fully describes the tensor product
$E_\kk^+\otimesmax L^+$. The next lemma gives a more explicit description. First, fix $\{e_\kappa\}_{\kappa \in
[k_1]\times \ldots \times [k_g]}$, the standard basis of $\mathbb R^{\kk}$. One has the following decompositions 
$$\forall i \in [g], j \in [k_i-1], \qquad w_j^{(i)} = \sum_{\kappa \in [k_1]\times \ldots \times [k_g]}
w_j^{(i)}(\kappa) e_\kappa,$$
 where by the definition of $w^{(i)}_j$,
\begin{equation}\label{eq:coefficients}
w^{(i)}_j(\kappa)=v^{(k_i)}_j(\kappa_i)=-\frac 2{k_i}+ 2\delta_{\kappa_i,j}.
\end{equation}

\begin{lem} \label{lem:elements-in-the-max}
Let $(L,L^+)$ be a proper ordered vector space.
Moreover, let $\mathbf k \in \mathbb N^{g}$, $z_0$, $z_j^{(i)} \in L$, where $i \in [g]$, $j \in [k_i-1]$. 
Then,
\begin{equation} \label{eq:entrywise-condition}
z_0 + \sum_{i = 1}^g \sum_{j = 1}^{k_i-1} w_j^{(i)}(\kappa) z_j^{(i)} \in L^+ \qquad \forall \kappa \in [k_1] \times \ldots \times [k_g]
\end{equation}
if and only if
\begin{equation*}
1_\kk\otimes z_0 + \sum_{i = 1}^g \sum_{j = 1}^{k_i-1} w_j^{(i)} \otimes z^{(i)}_j \in E_{\mathbf k}^+ \otimes_{\mathrm{max}} L^+.
\end{equation*}
In particular, for $\mathbf k = 2^{\times g}$, Equation~\eqref{eq:entrywise-condition} has the form
\begin{equation*}
z_0 + \sum_{i = 1}^g \epsilon_i z_i \in L^+ \qquad \forall \epsilon \in \{\pm 1\}^g.
\end{equation*}

\end{lem}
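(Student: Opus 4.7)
The plan is to reduce the question to Proposition \ref{prop:rightcone}, which identifies $E_{\mathbf k}^+ \otimes_{\max} L^+$ with $(E_{\mathbf k} \otimes L) \cap (\mathbb R_+^{\kk} \otimes L^+)$. Since $1_\kk$ and each $w_j^{(i)}$ lie in $E_\kk$ by construction, the tensor $z := 1_\kk \otimes z_0 + \sum_{i,j} w_j^{(i)} \otimes z_j^{(i)}$ automatically belongs to $E_\kk \otimes L$, so the only nontrivial condition to check is membership in $\mathbb R_+^{\kk} \otimes L^+$.

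Next, I would expand $z$ in the standard basis $\{e_\kappa\}_{\kappa \in [k_1]\times\cdots\times[k_g]}$ of $\mathbb R^{\kk}$. Using $1_\kk = \sum_\kappa e_\kappa$ and $w_j^{(i)} = \sum_\kappa w_j^{(i)}(\kappa) e_\kappa$ with the coefficients given by Equation~\eqref{eq:coefficients}, one obtains
\begin{equation*}
z = \sum_{\kappa} e_\kappa \otimes \left( z_0 + \sum_{i=1}^g \sum_{j=1}^{k_i-1} w_j^{(i)}(\kappa)\, z_j^{(i)} \right).
\end{equation*}
Denote the $L$-valued coefficient at position $\kappa$ by $y_\kappa$.

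The key observation is that since $\mathbb R_+^{\kk}$ is a simplicial cone, we have $\mathbb R_+^{\kk} \otimes_{\min} L^+ = \mathbb R_+^{\kk} \otimes_{\max} L^+$, and this cone consists exactly of tensors of the form $\sum_\kappa e_\kappa \otimes v_\kappa$ with $v_\kappa \in L^+$ (using that $L^+$ is closed under addition and scaling by non-negative reals). Since the decomposition of $z$ in the basis $\{e_\kappa\}$ is unique, $z \in \mathbb R_+^{\kk} \otimes L^+$ if and only if each $y_\kappa \in L^+$, which is precisely Equation~\eqref{eq:entrywise-condition}. Combining this with the reduction from the first paragraph proves the equivalence.

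Finally, for $\mathbf k = 2^{\times g}$, I would substitute $k_i = 2$ into the coefficient formula~\eqref{eq:coefficients}: $w_1^{(i)}(\kappa) = -1 + 2\delta_{\kappa_i,1} \in \{+1,-1\}$, and as $\kappa$ ranges over $[2]^g$ the signs $(\epsilon_1,\dots,\epsilon_g)$ range over all of $\{\pm 1\}^g$, yielding the stated special case. The only step demanding any care is verifying the concrete form of the coefficients $w_j^{(i)}(\kappa)$ and checking that nothing was lost when restricting to $E_\kk$; beyond that, the proof is essentially a bookkeeping consequence of Proposition \ref{prop:rightcone} and the uniqueness of coordinates in $\mathbb R^{\kk}$.
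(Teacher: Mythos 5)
Your proof is correct and follows essentially the same route as the paper: both reduce to Proposition \ref{prop:rightcone}, note that membership in $E_\kk\otimes L$ is automatic, and expand the tensor in the standard basis $\{e_\kappa\}$. The only (cosmetic) difference is the final step: you use the explicit primal description of $\mathbb R_+^{\kk}\otimes L^+$ as $\{\sum_\kappa e_\kappa\otimes v_\kappa : v_\kappa\in L^+\}$ together with uniqueness of coordinates, whereas the paper pairs against the extremal functionals $e_\kappa\otimes\beta$ and invokes the bipolar theorem $(L^+)^{\ast\ast}=L^+$; both finishes are valid.
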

\begin{proof}
Let $z_0,  z_j^{(i)}\in L$, $i \in [g]$, $j \in [k_i-1]$ and put
\begin{equation*}
y:= 1_\kk \otimes z_0 + \sum_{i = 1}^g \sum_{j = 1}^{k_i-1} w_j^{(i)}\otimes z_j^{(i)} = \sum_{\kappa \in [k_1]\times \ldots \times [k_g]} e_\kappa \otimes (z_0 + \sum_{i = 1}^g \sum_{j = 1}^{k_i-1} w_j^{(i)}(\kappa)z_j^{(i)}).
\end{equation*}
The standard basis $\{e_\kappa\}_{\epsilon \in [k_1]\times \ldots \times [k_g]}$ of $\mathbb R^\kk\cong (\mathbb
R^\kk)^*$ is self-dual and the elements $e_\kappa$ define the extremal rays of the cone $\mathbb R^\kk_+\cong (\mathbb
R^\kk_+)^*$. Therefore, $y \in E_{\mathbf k}^+
\otimesmax L^+$ if and only if for all $\beta \in (L^+)^\ast$ and $\kappa \in [k_1]\times \ldots \times [k_g]$, 
\begin{equation*}
\langle e_\kappa \otimes \beta, y \rangle = \langle \beta, z_0 + \sum_{i = 1}^g \sum_{j = 1}^{k_i-1} w_j^{(i)}(\kappa)z_j^{(i)} \rangle \geq 0
\end{equation*}
by Proposition \ref{prop:rightcone}. Equivalently,  $z_0 + \sum_{i = 1}^g \sum_{j = 1}^{k_i-1} w_j^{(i)}(\kappa)z_j^{(i)} \in (L^+)^{\ast\ast}$ for all $\kappa \in [k_1]\times \ldots \times [k_g]$. Since $L$ is finite dimensional and $L^+$ is closed, we have that $(L^+)^{\ast\ast} = L^+$ by the bipolar theorem.
\end{proof}

\begin{remark}\label{rem:GPT-diamond}
The above lemma yields a more appealing form of the $(g;  L,L^+)$-diamond, namely
\begin{equation*}
\mathcal D_{\mathrm{GPT}\diamond}(g;L,L^+) = \left \{ (z_0, \ldots, z_g) \in L^{g+1}:  \forall \epsilon \in \{\pm 1\}^g, \, z_0 + \sum_{i=1}^g \epsilon_i z_i \in L^+ \right\}.
\end{equation*}
\end{remark}

\begin{remark}\label{rem:jewel-is-max}
Lemma \ref{lem:elements-in-the-max} shows that $\mathcal D_{\mathrm{GPT}\jewel}(\mathbf k; L, L^+)$ is a maximal generalized spectrahedron, in the sense of Definition \ref{def:D_max}. If $\mathbf k = 2^{\times g}$, the corresponding closed convex cone is generated by the unit ball of $\ell_1^g$, as can be seen from Remark \ref{rem:GPT-diamond}.
\end{remark}

\begin{remark}\label{rem:normalized_jewel} It can be seen by Lemma \ref{lem:elements-in-the-max} that for any 
 $z=(z_0,z^{(i)}_j)_{i,j}\in \mathcal D_{\mathrm{GPT}\jewel}(\mathbf k; L,L^+)$ we must have $z_0\in L^+$. Indeed, 
let
\begin{equation} \label{eq:zkappa}
    z_\kappa:= z_0 + \sum_{i = 1}^g \sum_{j = 1}^{k_i-1} w_j^{(i)}(\kappa) z_j^{(i)}  \qquad  \kappa \in [k_1]
\times \ldots \times [k_g].
\end{equation}
Then $z_\kappa\in L^+$ and one can see using Equation \eqref{eq:coefficients} that $z_0$ is the barycenter of $\{z_\kappa\}_\kappa$:
\begin{equation}\label{eq:barycenter}
z_0=\frac{1}{k_1\cdots k_g} \sum_{\kappa\in [k_1]\times\dots\times[k_g]} z_\kappa.
\end{equation}
Note also that we obtain $z^{(i)}_j$ from $z_\kappa$ by
\begin{equation}\label{eq:z_from_kappa}
z^{(i)}_j=\frac12( z_{k_1,\dots,k_{i-1},j,k_{i+1},\dots,k_g}-z_{k_1,\dots,k_g}),\quad i\in [g],\ j\in [k_i-1].
\end{equation}
In the case of the GPT $(V,V^+,\mathds 1)$ (or if we have  fixed an order unit in $(L^+)^*$), the $(\kk; V,V^+)$
jewel can be normalized by assuming that $z_0\in K$ (that is, $\mathds 1(z_0)=1$). Indeed, this follows from Proposition \ref{prop:zero-implies-zero} below. The resulting set is then convex, closed and bounded.
Indeed, boundedness can be seen easily from
\[
\frac 1{k_1\cdots k_g} \sum_\kappa \|z_\kappa\|_V =\frac 1{k_1\cdots k_g} \sum_\kappa \mathds 1(z_\kappa)\le
\mathds 1(z_0)=1,
\] 
so that the norm $\|z_\kappa\|_V$ is bounded. The normalized jewel can be used in all the results below.
\end{remark}

\begin{prop} \label{prop:zero-implies-zero}
Let $(V, V^+, \mathds 1)$ be a GPT and $g \in \mathbb N$, $\mathbf{k} \in \mathbb N^g$. Moreover, let $z=(0,z^{(i)}_j)_{i,j}\in \mathcal D_{\mathrm{GPT}\jewel}(\mathbf k; V,V^+)$. Then, $z_{j}^{(i)} = 0$ for all $j \in [k_i-1]$, $i \in [g]$.
\end{prop}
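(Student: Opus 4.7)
The plan is to read the conclusion directly off the barycenter relation from Remark~\ref{rem:normalized_jewel}. Let
\[
z_\kappa := z_0 + \sum_{i=1}^g \sum_{j=1}^{k_i-1} w_j^{(i)}(\kappa)\, z_j^{(i)}, \qquad \kappa \in [k_1]\times\cdots\times[k_g].
\]
By Lemma~\ref{lem:elements-in-the-max}, membership in $\mathcal D_{\mathrm{GPT}\jewel}(\mathbf k;V,V^+)$ is exactly the condition $z_\kappa \in V^+$ for all $\kappa$, and Equation~\eqref{eq:barycenter} asserts that $z_0$ is the uniform average of the $z_\kappa$.

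First I would substitute $z_0 = 0$ into Equation~\eqref{eq:barycenter} to get $\sum_\kappa z_\kappa = 0$, i.e.\ a sum of elements of $V^+$ that equals zero. Since $(V,V^+,\mathds 1)$ is a GPT, $V^+$ is a proper cone, hence pointed: for each fixed $\kappa$ one has $z_\kappa = -\sum_{\kappa'\neq\kappa} z_{\kappa'} \in -V^+$, so $z_\kappa \in V^+\cap(-V^+)=\{0\}$. (Alternatively, applying the order unit $\mathds 1$, which is strictly positive on $V^+\setminus\{0\}$ because $\mathds 1\in\operatorname{int}(A^+)$, yields the same conclusion.) Hence every $z_\kappa$ vanishes.

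Finally, I would invoke Equation~\eqref{eq:z_from_kappa}, which expresses
\[
z_j^{(i)} = \tfrac{1}{2}\bigl(z_{k_1,\dots,k_{i-1},j,k_{i+1},\dots,k_g} - z_{k_1,\dots,k_g}\bigr),
\]
so that $z_j^{(i)} = 0$ for every $i \in [g]$ and $j \in [k_i-1]$. There is no real obstacle here; the only point that has to be invoked explicitly is pointedness of the GPT cone $V^+$, which is precisely what turns the vanishing of a positive sum into the vanishing of each summand.
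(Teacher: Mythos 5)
Your proof is correct and follows essentially the same route as the paper's: express the jewel membership via the elements $z_\kappa \in V^+$, use the barycenter identity \eqref{eq:barycenter} with $z_0=0$ to force all $z_\kappa=0$, and recover $z_j^{(i)}=0$ from \eqref{eq:z_from_kappa}. The only difference is that you spell out the pointedness argument that the paper leaves implicit.
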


\begin{proof} Let $z$ be as in the statement and let $z_\kappa$ be as in Equation \eqref{eq:zkappa}. Then since $z_0=0$ and all $z_\kappa\in V^+$, Equation \eqref{eq:barycenter} implies that $z_\kappa=0$ for all $\kappa\in [k_1]\times\dots \times[k_g]$. The statement now follows from Equation \eqref{eq:z_from_kappa}.
\end{proof}

\begin{ex}\label{ex:jewel}
Let us discuss now the particular forms of the GPT diamond in the case of classical, quantum, and hypercubic GPTs. 

In the case of the classical GPT $\mathrm{CM}_d$, a $(g+1)$-tuple $(z_0,z_1, \ldots, z_g)$ of vectors from $\mathbb R^d$ is an element of $\mathcal D_{\mathrm{GPT}\diamond}(g;\mathbb R^d,\mathbb R^d_+)$ if, for all $j \in [d]$, and for all sign choices $\epsilon \in \{\pm 1\}^g$,
$$z_0(j) + \sum_{i=1}^g \epsilon_i z_i(j) \geq 0 \iff z_0(j) \geq \sum_{i=1}^g |z_i(j)| \iff z_0(j) \geq \|z(j)\|_1.$$

For quantum mechanics $\mathrm{QM}_d$, the condition reads, for self-adjoint matrices $z_0,\ldots, z_g \in \mathcal M_d^{\mathrm{sa}}(\mathbb C)$, 
$$z_0 + \sum_{i=1}^g \epsilon_i z_i \in \mathrm{PSD}_d,$$
for all sign vectors $\epsilon \in \{\pm 1\}^g$. From Remark \ref{rem:normalized_jewel}, we know that $z_0$ is positive semidefinite. We claim that 
\begin{equation}\label{eq:QM-diamond}
    \mathcal D_{\mathrm{GPT}\diamond}(g; \mathcal M_d^{\mathrm{sa}}(\mathbb C), \mathrm{PSD}_d) = \{ (z_0, z_0^{1/2} \tilde z z_0^{1/2}) \, : \, z_0 \in \mathrm{PSD}_d \text{ and } \tilde z \in \mathcal D_{\diamond,g}(d)\},
\end{equation}where $\mathcal D_{\diamond,g}(d)$ is the ``standard'', quantum mechanical matrix diamond, used in \cite{bluhm2018joint} (see also \cite{davidson2016dilations}):
$$\mathcal D_{\diamond,g}(d) = \left\{X \in (\mathcal M_d^{\mathrm{sa}}(\mathbb C))^g \, : \, \forall \epsilon \in \{\pm 1\}^g, \, \sum_{i=1}^g \epsilon_i X_i \leq I_d\right\}.$$
To show that Equation~\eqref{eq:QM-diamond} holds, we prove the two inclusions. Let us start by showing ``$\supseteq$''. For $\tilde z \in \mathcal D_{\diamond,g}(d)$, we have 
$$I_d + \sum_{i=1}^g \epsilon_i \tilde z_i \geq 0 \implies z_0^{1/2} (I_d + \sum_{i=1}^g \epsilon_i \tilde z_i) z_0^{1/2} \geq 0 \iff z_0 + \sum_{i=1}^g z_0^{1/2} \tilde z_i z_0^{1/2} \geq 0.$$

For the reverse inclusion, note that, given $(z_0, \ldots, z_g) \in \mathcal D_{\mathrm{GPT}\diamond}(g; \mathcal M_d^{\mathrm{sa}}(\mathbb C), \mathrm{PSD}_d)$, we have, for all $\epsilon$,
$$z_0 \geq \pm \sum_{i=1}^g \epsilon_i z_i,$$
hence the support of the self-adjoint matrix $\sum_{i=1}^g \epsilon_i z_i$ is included in the support of $z_0$. By restricting all the matrices to the support of $z_0$, we have, for all $\epsilon$,
$$ 0 \leq z_0 + \sum_{i=1}^g \epsilon_i z_i = z_0^{1/2} \left( I + \sum_{i=1}^g \epsilon_i \tilde z_i \right) z_0^{1/2} \implies  I + \sum_{i=1}^g \epsilon_i \tilde z_i \geq 0,$$
where $\tilde z_i := z_0^{-1/2} z_i z_0^{-1/2}$. Hence, the usual matrix diamond and the GPT diamond for $\mathrm{QM}_d$ differ only in the choice of the free term $z_0$: in the former case, one fixes $z_0 = I_d$, while in the latter case $z_0$ is free. 

In the case of the hypercubic GPT $\mathrm{HC}_n$, the situation is similar to the classical GPT: 
$$(z_0,\ldots, z_g) \in \mathcal D_{\mathrm{GPT}\diamond}(g; \mathbb R^{n+1}, C_n) \iff \forall j \in [n], \quad z_0(0) \pm z_0(j) \geq \sum_{i=1}^g|z_i(0) \pm z_i(j)|.$$
\end{ex}

\subsection{Spectrahedral inclusion and compatibility}

Having introduced the universal GPT jewel and diamond, we use them to characterize compatibility of measurements in GPTs via generalized spectrahedral inclusion.

\begin{defi}\label{def:sh_f}
Let $(V, V^+, \mathds 1)$ be a GPT. We define the following generalized spectrahedron: for the preordered vector space $(L,L^+)$ and $\mathbf k \in \mathbb N^g$, $f_j^{(i)} \in A$, $i \in [g]$, $j \in [k_i-1]$,
\begin{equation}\label{eq:def-D-effects-GPT}
\mathcal D_{f}(\mathbf k; L, L^+) := \left \{\left(z_0, z_j^{(i)}\right)_{ij} \in L^{1- g+\sum_i k_i} \, : \,  \mathds{1} \otimes z_0 + \sum_{i = 1}^g \sum_{j = 1}^{k_i-1} \left(2f_j^{(i)} - \frac{2}{k_i}\mathds{1}\right) \otimes z_j^{(i)} \in A^+ \otimesmin L^+ \right \}.
\end{equation}
In the case where $\mathbf k = 2^{\times g}$, we will write $\mathcal D_{f}(g; L, L^+)$ for simplicity.
\end{defi}

The following key result connects the inclusion of the GPT jewel $\mathcal
D_{\mathrm{GPT}\jewel}$ inside a given $\mathcal D_f$ defined above to the positivity of a linear map between two tensor cones. This establishes a bridge between generalized spectrahedral inclusion and measurement compatibility in GPTs.

\begin{prop}\label{prop:inclusion-positivity}
Let $\mathbf k \in \mathbb N^g$ and let $(L, L^+)$ be a proper ordered vector space. Then, the inclusion $\mathcal
D_{\mathrm{GPT}\jewel}(\mathbf k; L, L^+) \subseteq \mathcal D_{f}(\mathbf k; L, L^+)$ holds if and only if $\Phi^{(f)}
\otimes \mathrm{id}_L: (E_{\mathbf k} \otimes L, E_{\mathbf k}^+ \otimes_{\mathrm{max}} L^+) \to (A \otimes L, A^+\otimes_{\mathrm{min}} L^+)$ is positive, where $\Phi^{(f)}$ is defined as in Equation~\eqref{eq:def-Phi-f}.
\end{prop}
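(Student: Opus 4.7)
The plan is to recognize that the claimed equivalence is essentially an instance of Proposition \ref{prop:containment-to-map-inclusion}, applied to a specific choice of tuples and cones. The main work is just to line up the definitions.

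First, I would unpack the left-hand side. By the definition of the GPT jewel, $\mathcal D_{\mathrm{GPT}\jewel}(\mathbf k; L, L^+) = \mathcal D_w(L, E_{\mathbf k}^+ \otimes_{\max} L^+)$, where $w = (1_{\mathbf k}, w^{(i)}_j)_{i,j}$ is the fixed basis of $E_{\mathbf k}$. Next, setting $b_0 := \Phi^{(f)}(1_{\mathbf k}) = \mathds 1$ and $b^{(i)}_j := \Phi^{(f)}(w^{(i)}_j) = 2f^{(i)}_j - (2/k_i)\mathds 1$, I would observe that the defining condition of $\mathcal D_f(\mathbf k; L, L^+)$ in Equation~\eqref{eq:def-D-effects-GPT} is exactly the condition
\begin{equation*}
b_0 \otimes z_0 + \sum_{i,j} b^{(i)}_j \otimes z^{(i)}_j \in A^+ \otimes_{\min} L^+,
\end{equation*}
so that $\mathcal D_f(\mathbf k; L, L^+) = \mathcal D_b(L, A^+ \otimes_{\min} L^+)$.

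Next I would apply Proposition \ref{prop:containment-to-map-inclusion} with $M = E_{\mathbf k}$, $N = A$, tuples $a = w$ and the $b$ above, map $\Phi = \Phi^{(f)}$, and cones $C_M = E_{\mathbf k}^+ \otimes_{\max} L^+$, $C_N = A^+ \otimes_{\min} L^+$. Since $w$ is a basis of $E_{\mathbf k}$, the subspace $M' = E_{\mathbf k}$ itself, so the intersection $M' \otimes L \cap C_M$ appearing in Proposition \ref{prop:containment-to-map-inclusion} reduces to $C_M = E_{\mathbf k}^+ \otimes_{\max} L^+$. The conclusion of that proposition therefore becomes exactly the equivalence claimed here.

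There is no real obstacle: the only thing to verify carefully is that $w$ is indeed a linearly independent spanning set of $E_{\mathbf k}$ (so that $M' = E_{\mathbf k}$), which was established in Section \ref{sec:maps}, and that $\Phi^{(f)}$ is well-defined on all of $E_{\mathbf k}$ by sending the basis elements to the prescribed images, which is Equation~\eqref{eq:def-Phi-f}. All that remains is bookkeeping to match the indices $(0, (i,j))$ between the tuples $w$ and $b$.
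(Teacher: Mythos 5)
Your proof is correct and follows essentially the same route as the paper, which simply cites Propositions \ref{prop:rightcone} and \ref{prop:containment-to-map-inclusion}. Your choice to apply Proposition \ref{prop:containment-to-map-inclusion} with $M = E_{\mathbf k}$ and $a = w$ a basis (so that $M' = E_{\mathbf k}$ and the intersection $M' \otimes L \cap C_M$ collapses to $E_{\mathbf k}^+ \otimes_{\max} L^+$ automatically) even makes the appeal to Proposition \ref{prop:rightcone} unnecessary, which is a slight streamlining.
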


\begin{proof}
This follows directly from Propositions \ref{prop:rightcone} and \ref{prop:containment-to-map-inclusion}.
\end{proof}

We have two important special cases.

\begin{prop} \label{prop:level_1}
Let $\mathbf k \in \mathbb N^g$ and let $(V,V^+,\mathds 1)$ be a GPT. Then, $\mathcal D_{\mathrm{GPT}\jewel}(\mathbf k; \mathbb R, \mathbb R_+) \subseteq \mathcal D_f(\mathbf k; \mathbb R, \mathbb R_+)$ if and only if $\{f_1^{(i)}, \ldots, f_{k_i}^{(i)}\}$ form measurements for all $i \in [g]$, where $f_{k_i}^{(i)} := \mathds 1 - \sum_{j = 1}^{k_i-1} f_j^{(i)}$.
\end{prop}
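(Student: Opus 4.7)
The plan is to reduce the spectrahedral inclusion at the scalar level to the positivity of the linear map $\Phi^{(f)}$ itself, and then invoke Proposition \ref{prop:pu_map}. First I would apply Proposition \ref{prop:inclusion-positivity} with $(L, L^+) = (\mathbb R, \mathbb R_+)$. Under the canonical identifications $E_{\mathbf k} \otimes \mathbb R \cong E_{\mathbf k}$ and $A \otimes \mathbb R \cong A$, together with the observation that $\mathbb R_+$ is simplicial so that $E_{\mathbf k}^+ \otimesmax \mathbb R_+ \cong E_{\mathbf k}^+$ and $A^+ \otimesmin \mathbb R_+ \cong A^+$, the positivity of $\Phi^{(f)} \otimes \mathrm{id}_\mathbb{R}$ between the relevant tensor cones reduces to the positivity of the linear map $\Phi^{(f)} : (E_{\mathbf k}, E_{\mathbf k}^+) \to (A, A^+)$.

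Next, I would note that $\Phi^{(f)}$ is unital by definition in Equation~\eqref{eq:def-Phi-f}, since $\Phi^{(f)}(1_{\mathbf k}) = \mathds 1$. Consequently, the positivity of $\Phi^{(f)}$ is the same as its simultaneous positivity and unitality, which by Proposition \ref{prop:pu_map} is equivalent to the existence of a collection of measurements $(\tilde f^{(1)}, \ldots, \tilde f^{(g)})$ satisfying $2 \tilde f^{(i)}_j - (2/k_i)\mathds 1 = 2 f^{(i)}_j - (2/k_i)\mathds 1$ for every $j \in [k_i-1]$ and $i \in [g]$. Solving this identity forces $\tilde f^{(i)}_j = f^{(i)}_j$ for $j \in [k_i-1]$, while the measurement normalization $\sum_{j=1}^{k_i} \tilde f^{(i)}_j = \mathds 1$ determines the remaining effect as $\tilde f^{(i)}_{k_i} = \mathds 1 - \sum_{j=1}^{k_i-1} f^{(i)}_j$, which is precisely the completion $f^{(i)}_{k_i}$ appearing in the statement.

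Putting these two steps together, the inclusion $\mathcal D_{\mathrm{GPT}\jewel}(\mathbf k; \mathbb R, \mathbb R_+) \subseteq \mathcal D_f(\mathbf k; \mathbb R, \mathbb R_+)$ holds if and only if the effects $\{f_1^{(i)}, \ldots, f_{k_i}^{(i)}\}$ form a valid measurement for every $i \in [g]$. No significant obstacle is anticipated: the argument is a straightforward specialization of the two cited propositions. The only delicate point is the bookkeeping between the basis-coefficient parametrization $p^{(i)}_j$ used in Proposition \ref{prop:pu_map} and the effect parametrization in the statement, but this amounts to a routine linear change of variables.
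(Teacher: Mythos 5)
Your proposal is correct and follows exactly the paper's primary proof, which deduces the statement from Proposition \ref{prop:inclusion-positivity} specialized to $(L,L^+)=(\mathbb R,\mathbb R_+)$ together with Proposition \ref{prop:pu_map}; you have merely filled in the routine identifications that the paper leaves implicit. (The paper also records an alternative argument via Corollaries \ref{cor:E-factors} and \ref{cor:direct-sum-splits} and the extreme rays of the scalar jewel, but that is not needed.)
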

\begin{proof} This follows immediately from Propositions \ref{prop:inclusion-positivity} and \ref{prop:pu_map}. 

An alternative proof is as follows. Let $E_i:=\mathrm{span}\{ 1_{k_i},\ v^{(k_i)}_j,\ j\in [k_i-1]\}$.
Corollaries \ref{cor:E-factors} and \ref{cor:direct-sum-splits} yield that the inclusion is true if and only if
\begin{equation*}
\mathcal D_{(1_{k_i}, v^{(k_i)})}(\mathbb R, E_i^+) \subseteq \mathcal D_{(\mathds 1, f^{(i)})}(\mathbb R, A^+) \qquad \forall i \in [g].
\end{equation*}
From \cite[Lemma 4.3]{bluhm2020compatibility}, it follows that the extreme rays of $\mathcal D_{(1_{k_i}, v^{(k_i)})}(\mathbb R, E_i^+)$ are 
\begin{equation*}
\mathbb R_+(1, \frac{k_i}{2}e_j) \quad \forall j \in [k_i-1], \qquad \mathbb R_+(1,-\frac{k_i}{2}(1, \ldots 1)).
\end{equation*}
Here, $\{e_j\}_{j \in [k_i-1]}$ is the standard basis in $\mathbb R^{k_i-1}$. Thus, the inclusion is equivalent to 
\begin{equation*}
k_i f_j^{(i)} \in A^+ \quad \forall j \in [k_i-1] \quad \land \quad  k_i \mathds 1 - \sum_{j = 1}^{k_i-1} k_i f_j^{(i)} \in A^+
\end{equation*}
for all $i \in [g]$. Dividing by $k_i$ proves the assertion. 
\end{proof}

\begin{thm}\label{thm:inclusion_compatible}
Let $\mathbf k \in \mathbb N^g$ and let $(V,V^+,\mathds 1)$ be a GPT. Then, $\mathcal D_{\mathrm{GPT}\jewel}(\mathbf k; V, V^+) \subseteq \mathcal D_f(\mathbf k; V, V^+)$ if and only if $\{f_1^{(i)}, \ldots, f_{k_i}^{(i)}\}_{i \in [g]}$ are \emph{compatible} measurements, where $f_{k_i}^{(i)} := \mathds 1 - \sum_{j = 1}^{k_i-1} f_j^{(i)}$ for all $i \in [g]$.
\end{thm}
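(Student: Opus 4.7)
The plan is to chain together the earlier characterizations of measurement compatibility and of the inclusion of generalized spectrahedra, via the entanglement breaking property of the map $\Phi^{(f)}$.

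First, I would apply Proposition \ref{prop:inclusion-positivity} with $L = V$ to translate the inclusion
$$\mathcal D_{\mathrm{GPT}\jewel}(\mathbf k; V, V^+) \subseteq \mathcal D_f(\mathbf k; V, V^+)$$
into the positivity of
$$\Phi^{(f)} \otimes \mathrm{id}_V : (E_{\mathbf k} \otimes V, E_{\mathbf k}^+ \otimes_{\max} V^+) \to (A \otimes V, A^+ \otimes_{\min} V^+).$$
Viewing $\Phi^{(f)}$ as a map $(E_{\mathbf k}, E_{\mathbf k}^+) \to (V^\ast, (V^+)^\ast)$ and noting that $V^+$ is closed, this is exactly condition (1) in Proposition \ref{prop:EB}; hence it is equivalent to $\Phi^{(f)}$ being entanglement breaking.

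Next, I would invoke Theorem \ref{thm:etb_ext}, which asserts (via the equivalence of its statements (2) and (5)) that $\Phi^{(f)}$ is entanglement breaking if and only if the collection $f=(f^{(1)},\dots,f^{(g)})$ is compatible. The only residual subtlety is that the statement of Theorem \ref{thm:inclusion_compatible} also asserts, in the forward direction, that the tuples $\{f_j^{(i)}\}_j$ genuinely form measurements, a property which is only assumed as part of the definition of $\Phi^{(f)}$ giving rise to $f$ in Theorem \ref{thm:etb_ext}. To bridge this gap, I would observe that entanglement breaking implies positivity of $\Phi^{(f)}$ (because $A^+ \otimes_{\min} V^+ \subseteq A^+ \otimes_{\max} V^+$, equivalently, condition (3) of Proposition \ref{prop:EB} is stronger than the dual characterization of positivity via $(M^+)^\ast \otimes_{\max} (L^+)^\ast$); combined with the fact that $\Phi^{(f)}(1_{\mathbf k}) = \mathds 1$ holds tautologically by the definition in Equation \eqref{eq:def-Phi-f}, Proposition \ref{prop:pu_map} then guarantees that the tuples $\{f_j^{(i)}\}_j$ are indeed measurements.

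Putting the chain together:
\begin{equation*}
\mathcal D_{\mathrm{GPT}\jewel}(\mathbf k; V, V^+) \subseteq \mathcal D_f(\mathbf k; V, V^+)
\;\overset{\text{Prop.\ \ref{prop:inclusion-positivity}}}{\iff}\; \Phi^{(f)}\otimes\mathrm{id}_V \text{ positive}
\;\overset{\text{Prop.\ \ref{prop:EB}}}{\iff}\; \Phi^{(f)} \text{ is EB}
\;\overset{\text{Thm.\ \ref{thm:etb_ext}}}{\iff}\; f \text{ compatible},
\end{equation*}
with the measurement property on the left-hand side delivered by the additional argument above (or, in the converse direction, assumed as part of the hypothesis that $f$ is a collection of compatible measurements). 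There is no substantive obstacle here; the proof is essentially an assembly of the already-established equivalences, and the only point requiring mild care is extracting positivity and unitality of $\Phi^{(f)}$ so as to match the set-up of Theorem \ref{thm:etb_ext}.
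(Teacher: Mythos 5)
Your proposal is correct and follows essentially the same route as the paper, which proves the theorem by chaining Proposition \ref{prop:inclusion-positivity}, Proposition \ref{prop:positivity-extension} (whose conditions (2)--(5) are exactly the Proposition \ref{prop:EB} equivalences you invoke, specialized to $E_{\mathbf k}$), and Theorem \ref{thm:etb_ext}. Your extra remark extracting positivity and unitality of $\Phi^{(f)}$ via Proposition \ref{prop:pu_map}, so that the tuples are genuinely measurements, is a point the paper leaves implicit but is handled correctly.
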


\begin{proof} Follows 
by Propositions \ref{prop:inclusion-positivity}, \ref{prop:positivity-extension} and Theorem \ref{thm:etb_ext}.
\end{proof}

The characterization of extendable maps in Proposition \ref{prop:positivity-extension} gives us another  condition
for  compatibility.

\begin{cor}\label{cor:compatible-witness}
Let $(V, V^+, \mathds 1)$ be a GPT. Elements $\{f_1^{(i)}, \ldots, f_{k_i}^{(i)}\} \in A^{k_i}$ form compatible measurements for all $i \in [g]$ if and only if 
\begin{equation}\label{eq:compatibility-with-witnesses}
\forall z \in \mathcal D_{\mathrm{GPT} \jewel}(\mathbf k; V, V^+), \qquad \mathds 1(z_0) \geq \sum_{i=1}^g 
\sum_{j=1}^{k_i-1} \langle   \frac{2}{k_i}\mathds 1- 2f_j^{(i)}, z_j^{(i)} \rangle.
\end{equation}
\end{cor}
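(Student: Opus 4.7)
The plan is to reformulate compatibility as the positivity of a linear form, and then compute that form explicitly in the basis $w$ to obtain the stated inequality.

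First, I would invoke Theorem \ref{thm:etb_ext} (or equivalently Theorem \ref{thm:inclusion_compatible} combined with Proposition \ref{prop:inclusion-positivity}), which identifies compatibility of $\{f_j^{(i)}\}$ with $\Phi^{(f)}:(E_{\mathbf k},E_{\mathbf k}^+)\to (A,A^+)$ being entanglement breaking. By item (2) of Proposition \ref{prop:EB}, this is equivalent to the positivity of the linear form
$$s_{\Phi^{(f)}} : (E_{\mathbf k}\otimes V,\, E_{\mathbf k}^+\otimesmax V^+)\to \mathbb R.$$
So compatibility amounts to the assertion that $s_{\Phi^{(f)}}(y)\ge 0$ for every $y\in E_{\mathbf k}^+\otimesmax V^+$.

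Next, I would parametrise $E_{\mathbf k}\otimes V$ using the basis $w=\{1_{\mathbf k},w_j^{(i)}\}$: every element has the unique form $y = 1_{\mathbf k}\otimes z_0 + \sum_{i,j} w_j^{(i)}\otimes z_j^{(i)}$ with $z_0, z_j^{(i)}\in V$. By Lemma \ref{lem:elements-in-the-max}, $y\in E_{\mathbf k}^+\otimesmax V^+$ precisely when $(z_0,z_j^{(i)})\in\mathcal D_{\mathrm{GPT}\jewel}(\mathbf k;V,V^+)$. This gives a bijection between the admissible $y$'s and the points of the GPT jewel.

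Finally, I would compute $s_{\Phi^{(f)}}(y)$ in closed form. Applying $\Phi^{(f)}$ as in Equation \eqref{eq:def-Phi-f} yields
$$(\Phi^{(f)}\otimes\mathrm{id}_V)(y) = \mathds{1}\otimes z_0 + \sum_{i=1}^g\sum_{j=1}^{k_i-1}\Bigl(2f_j^{(i)}-\tfrac{2}{k_i}\mathds{1}\Bigr)\otimes z_j^{(i)}\in A\otimes V,$$
and pairing with $\chi_V\in V\otimes A$ via $\langle\chi_V,\alpha\otimes v\rangle = \alpha(v)$ from Equation \eqref{eq:property-of-the-max-ent-state} gives
$$s_{\Phi^{(f)}}(y) = \mathds{1}(z_0) + \sum_{i=1}^g\sum_{j=1}^{k_i-1}\bigl\langle 2f_j^{(i)}-\tfrac{2}{k_i}\mathds{1},\, z_j^{(i)}\bigr\rangle.$$
The nonnegativity $s_{\Phi^{(f)}}(y)\ge 0$ rearranges exactly into the inequality \eqref{eq:compatibility-with-witnesses}; combined with the bijection from the previous step, the corollary follows. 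There is no serious obstacle: the argument is the direct unwinding of Proposition \ref{prop:EB}(2) on the basis $w$, and the main content lies in recognising that pairing $(\Phi^{(f)}\otimes\mathrm{id})(y)$ with $\chi_V$ produces precisely the linear expression appearing on the right-hand side of \eqref{eq:compatibility-with-witnesses}.
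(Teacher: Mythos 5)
Your proof is correct and follows exactly the paper's route: the paper's one-line proof simply cites the positivity of the form $s_{\Phi^{(f)}}$ from Proposition \ref{prop:positivity-extension}, and your argument is the explicit unwinding of that equivalence via Lemma \ref{lem:elements-in-the-max} and the pairing with $\chi_V$. The only detail worth noting is that positivity of $s_{\Phi^{(f)}}$ on $E_{\mathbf k}^+\otimes_{\max}V^+$ already forces positivity on the minimal tensor cone, hence (by Proposition \ref{prop:pu_map}) that the $f_j^{(i)}$ form measurements at all, which is why the ``if and only if'' holds for arbitrary elements of $A$ and not only for a priori given measurements.
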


\begin{proof}
The equation above is equivalent to the positivity of the linear form $s_\Phi$ from Proposition \ref{prop:positivity-extension}.
\end{proof}

\begin{remark}\label{rem:cones_in_sh}
Note that the last condition  appears  much weaker than the requirement from Theorem \ref{thm:inclusion_compatible}.
Indeed, the condition in the above theorem reads 
$$
y_{f,z}:=\mathds 1 \otimes z_0 + \sum_{i=1}^g \sum_{j=1}^{k_i-1}  \left[2f_1^{(i)} - \frac{2}{k_i}\mathds 1\right]
\otimes z_j^{(i)} \in A^+ \otimesmin V^+,\quad \forall z=(z_0,z^{(i)}_j)_{ij}\in \mathcal D_{\mathrm{GPT} \jewel}(\mathbf
k; V, V^+),$$
whereas  the condition in Equation \eqref{eq:compatibility-with-witnesses} requires only evaluation of $y_{f,z}$
 against the tensor $\chi_V$ from Equation~\eqref{eq:def-chi} (instead of all elements of $V^+\otimesmax A^+$). 
Note that this means that if $C\subset A\otimes V$ is a tensor cone such that 
$\chi_V \in  C^*$, then we may replace the cone $A^+\otimesmin V^+$ in Definition \ref{def:sh_f}  (with $L=V$) by
$ C$. If we have a reasonable family of tensor cones  $ C_L\subset A\otimes L$ 
for a family of proper ordered vector spaces $(L,L^+)$
 containing $(V,V^+)$ and such that $ C_V= C$, then we may replace the cones in Definition \ref{def:sh_f} for all such $L$ (cf.\ Proposition \ref{prop:EB}). This would lead
 to somewhat different definitions, but equivalent in the two extreme cases ($L=\mathbb R$ and $L=V$). Compare this to 
 \cite[Theorem 5.3]{bluhm2018joint}.
\end{remark}

 The ''intermediate'' cases of Proposition \ref{prop:inclusion-positivity} can be also related to compatibility, as follows. Let $(L,L^+)$ be an ordered vector space with $L^+$ closed and let $\Psi: (A,A^+)\to (L^*,(L^+)^*)$ be a positive map such that  $\Psi(\mathds{1})\in \operatorname{int}((L^+)^*)$, then  $\Psi(\mathds 1)$ is an order unit in $(L^*,(L^+)^*)$
and we can think of the triple $(L,L^+,\Psi(\mathds 1))$ as describing a GPT.  Clearly, if $f=\{f_1,\dots,f_k\}$ is a
measurement for $(V,V^+,\mathds 1)$, then $\Psi(f)=\{\Psi(f_1),\dots,\Psi(f_k)\}$ is a measurement for
$(L,L^+,\Psi(\mathds 1))$.

\begin{prop}\label{prop:inclusion_intermediate} Let $(L,L^+)$ be a proper ordered vector space. Then
$\mathcal
D_{\mathrm{GPT}\jewel}(\mathbf k; L, L^+) \subseteq \mathcal D_{f}(\mathbf k; L, L^+)$ if and only if for any 
positive map $\Psi: (A,A^+)\to (L^*, (L^+)^*)$ such that $\Psi(\mathds 1) \in \operatorname{int} (L^+)^\ast$, the elements $\Psi(f^{(i)}_j)$,
$j \in [k_i]$, $i\in [g]$ form a collection of compatible measurements on $(L,L^+,\Psi(\mathds 1))$.

\end{prop}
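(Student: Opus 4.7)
The plan is to reinterpret both sides of the equivalence as conditions on the single tensor
$$y_{f,z} := \mathds 1 \otimes z_0 + \sum_{i=1}^g \sum_{j=1}^{k_i-1} \left(2f^{(i)}_j - \frac{2}{k_i}\mathds 1\right) \otimes z^{(i)}_j \in A \otimes L,$$
parametrised by $z = (z_0, z^{(i)}_j) \in \mathcal D_{\mathrm{GPT}\jewel}(\mathbf k; L, L^+)$. By Proposition \ref{prop:inclusion-positivity}, the inclusion $\mathcal D_{\mathrm{GPT}\jewel}(\mathbf k; L, L^+) \subseteq \mathcal D_f(\mathbf k; L, L^+)$ is equivalent to $y_{f,z} \in A^+ \otimesmin L^+$ for every such $z$. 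On the other hand, for any positive $\Psi\colon (A,A^+)\to (L^\ast,(L^+)^\ast)$ with $\Psi(\mathds 1)\in\operatorname{int}(L^+)^\ast$, the triple $(L,L^+,\Psi(\mathds 1))$ is a bona fide GPT in which $\Psi(f^{(i)}_j)$ are valid effects; applying Theorem \ref{thm:inclusion_compatible} to this GPT, compatibility of the $\Psi(f^{(i)})$ is equivalent to $(\Psi\otimes\mathrm{id}_L)(y_{f,z}) \in (L^+)^\ast \otimesmin L^+$ for all $z \in \mathcal D_{\mathrm{GPT}\jewel}(\mathbf k; L, L^+)$.

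The forward direction is then immediate: if $y_{f,z}\in A^+\otimesmin L^+$, then because $\Psi\otimes \mathrm{id}_L$ is positive with respect to the minimal tensor products, its image lies in $(L^+)^\ast\otimesmin L^+$. For the reverse direction, I would fix $z$ and test $y_{f,z}$ against the dual cone. By the bipolar theorem, $y_{f,z}\in A^+\otimesmin L^+$ if and only if $\langle \gamma, y_{f,z}\rangle\geq 0$ for every $\gamma \in (A^+\otimesmin L^+)^\ast = V^+\otimesmax (L^+)^\ast$, and by Section \ref{sec:positive_maps} every such $\gamma$ has the form $\varphi^\Psi$ for a positive map $\Psi\colon (A,A^+)\to (L^\ast,(L^+)^\ast)$. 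The bridge between the two formulations is the identity
$$\langle \varphi^\Psi, y_{f,z}\rangle = \langle \chi_L, (\Psi\otimes\mathrm{id}_L)(y_{f,z})\rangle,$$
which follows directly from Equation~\eqref{eq:def_phiPhi} and the defining property \eqref{eq:property-of-the-max-ent-state} of $\chi_L$. Lemma \ref{lem:chi-in-max} gives $\chi_L \in L^+\otimesmax(L^+)^\ast = ((L^+)^\ast \otimesmin L^+)^\ast$, so as soon as $(\Psi\otimes\mathrm{id}_L)(y_{f,z})\in (L^+)^\ast\otimesmin L^+$, the right-hand side above is non-negative.

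The main (mild) obstacle is that the hypothesis only grants the compatibility condition for those $\Psi$ satisfying $\Psi(\mathds 1)\in\operatorname{int}(L^+)^\ast$, whereas the bipolar argument demands arbitrary positive $\Psi$. To remove this restriction, I would fix an auxiliary positive map $\Psi_0\colon A\to L^\ast$ of the form $\Psi_0(a):=\alpha(a)\, y_0$ for a state $\alpha\in V^+$ and a point $y_0\in\operatorname{int}(L^+)^\ast$ (non-empty since $L^+$ is generating, as $(L,L^+)$ is proper), and then for any positive $\Psi$ consider the perturbation $\Psi_\varepsilon := \Psi+\varepsilon\Psi_0$. This $\Psi_\varepsilon$ is positive and $\Psi_\varepsilon(\mathds 1)\in\operatorname{int}(L^+)^\ast$ for every $\varepsilon>0$, so the hypothesis yields $\langle \varphi^{\Psi_\varepsilon}, y_{f,z}\rangle\geq 0$; letting $\varepsilon\to 0^+$ and using continuity of $\Psi\mapsto\varphi^\Psi$ gives $\langle \varphi^\Psi, y_{f,z}\rangle\geq 0$ for all positive $\Psi$, and the bipolar theorem concludes.
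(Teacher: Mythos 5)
Your proposal is correct and follows essentially the same route as the paper: both reduce the spectrahedral inclusion to the non-negativity of $\langle \varphi^{\Psi}, y_{f,z}\rangle$ over all positive maps $\Psi$ and all $z$ in the jewel, and both identify this pairing with compatibility on the GPT $(L,L^+,\Psi(\mathds 1))$ (the paper via Corollary \ref{cor:compatible-witness}, you via Theorem \ref{thm:inclusion_compatible} together with evaluation against $\chi_L$, which amounts to the same thing). The only notable difference is that you replace the paper's density argument (elements of $\operatorname{int}\left(V^+\otimes_{\mathrm{max}}(L^+)^*\right)$ already satisfy $\Psi(\mathds 1)\in\operatorname{int}(L^+)^*$) by an explicit perturbation $\Psi+\varepsilon\Psi_0$, which is equally valid --- just note that $\operatorname{int}(L^+)^*\neq\emptyset$ because $L^+$ is closed and pointed, so that $(L^+)^*$ is generating, rather than because $L^+$ is generating.
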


\begin{proof}
We start by observing that for a proper cone $C$, $\phi \in \operatorname{int} C^\ast$ if and only if $\phi(c) > 0$ for all $c \in C \setminus \{0\}$ \cite[Theorem 3.5]{Aliprantis2007}. Let $\psi \in \operatorname{int} \left(V^+\otimesmax (L^+)^*\right)$. Then, $\langle \psi, \alpha \otimes v \rangle > 0$ for all $\alpha \in A^+$, $v \in L^+$ such that $\alpha \otimes v \neq 0$. Going to the associated positive map $\Psi: (A,A^+)\to (L^*, (L^+)^*)$, we have
\begin{equation*}
    \langle \psi, \alpha \otimes v \rangle = \langle \Psi(\alpha), v \rangle.
\end{equation*}
Thus, $\Psi(\mathds 1) \in \operatorname{int} (L^+)^\ast$ for all  $\psi \in \operatorname{int} \left(V^+\otimesmax (L^+)^*\right)$.

For  $z=(z_0,z^{(i)}_j)_{ij}\in   \mathcal D_{\mathrm{GPT} \jewel}(\mathbf k; L, L^+)$ let $y_z$ be the corresponding
element in $E_\kk^+\otimesmax L^+$. Then the inclusion is equivalent to
\begin{equation} \label{eq:intermediate-subspaces}
\langle \psi, (\Phi^{(f)}\otimes \mathrm{id})(y_z)\rangle \ge 0,\quad \forall z\in   \mathcal D_{\mathrm{GPT} \jewel}(\mathbf k;
L, L^+),\ \forall \psi\in V^+\otimesmax (L^+)^*.
\end{equation}
Let $\Psi: (A,A^+)\to (L^*,(L^+)^*)$ be the positive map corresponding to $\psi$. Then 
\[
\langle \psi, (\Phi^{(f)}\otimes \mathrm{id})(y_z)=\langle \Psi(\mathds 1),z_0\rangle + \sum_{i=1}^g \sum_{j=1}^{k_i-1} \langle 2\Psi(f_j^{(i)}) -
\frac{2}{k_i}\Psi(\mathds 1), z_j^{(i)}
\rangle.
\]
 By density, we can restrict to $\psi$ such that $\Psi(\mathds 1) \in \operatorname{int} (L^+)^\ast$ in Equation \eqref{eq:intermediate-subspaces} since the set of such elements contains $\operatorname{int}\left( V^+\otimesmax (L^+)^* \right)$ as argued above. The statement now follows by Corollary \ref{cor:compatible-witness}.
\end{proof}

\begin{remark} \label{rem:question1} The set of positive maps $\Psi$, or vectors $\psi\in V^+\otimesmax (L^+)^*$, can be restricted  to
extremal elements. In general, however, extremal positive maps are difficult to characterize. 
If the generalized spectrahedra $\mathcal D_f(\kk, L,L^+)$ are defined as in Remark \ref{rem:cones_in_sh}  by a
family of cones $\mathcal C_L\subset  A\otimes L$, then it is enough to consider maps such that $\Psi \otimes \mathrm{id}$ is positive with respect to this family and using extremal maps with this property might be more convenient.

 As an example, consider the case of quantum systems $\mathrm{QM}_d$ from Example \ref{ex:QM}. Here, 
the canonical tensor $\chi_V$ is the maximally entangled state which is
clearly an element in $\mathrm{PSD}_{d^2}$. So if we restrict the intermediate cases to $(L,L^+)=(\mathcal M_k^{\mathrm{sa}}(\mathbb C),\mathrm{PSD}_k)$, we can replace the minimal tensor product $\mathrm{PSD}_d\otimesmin \mathrm{PSD}_k$ by the larger cone $\mathrm{PSD}_{dk}$. Accordingly, we can reduce to extremal completely positive maps $\mathcal M_d(\mathbb C)\to \mathcal M_k(\mathbb C)$, 
 which are of the form $X\mapsto A^*XA$ for some 
 $A: \mathbb C^k\to \mathbb C^d$. The requirement $\Psi(\mathds 1) \in \operatorname{int} \mathrm{PSD}_k$ is met if $A^\ast A > 0$. If $k\le d$, it is easy to see that $A^*f^{(i)}A$ are compatible for all $A$ if and only if they
are compatible for all isometries $A:\mathbb C^k\to \mathbb C^d$, see 
\cite[Theorem 5.3]{bluhm2018joint}. 
\end{remark}

\subsection{Compatibility region and inclusion constants}

We can now relate the compatibility region  $\Gamma(\kk, V,V^+)$ (Definition \ref{def:compreg})
 to the inclusion constants defined in Section \ref{sec:gen-spectrahedra} (Definition \ref{def:inclusion-constants}).
The following is a restriction of the set of inclusion constants, where we require the coefficients by which we scale to be the same on some elements
\begin{defi}
Given a GPT $(V,V^+,\mathds{1})$, $g \in \mathbb N$ and $\mathbf{k} \in \mathbb N^g$, we define the set of \emph{inclusion constants for the $(\mathbf k; V,V^+)$-jewel} as
\begin{align*}\Delta(\mathbf{k}; V,V^+):=&\{s \in [0,1]^g \, : \, \forall f_j^{(i)} \in A,~j \in [k_i-1],~ i \in [g],~ \, \mathcal D_{\mathrm{GPT}\jewel}(\mathbf k; \mathbb R, \mathbb R^+) \subseteq \mathcal D_f(\mathbf k; \mathbb R, \mathbb R^+) \\ &\implies 
(1,s_1^{\times(k_1-1)},\ldots, s_g^{\times(k_g-1)} )\cdot \mathcal D_{\mathrm{GPT}\jewel}(\mathbf k; V,V^+) \subseteq \mathcal D_f(\mathbf k; V,V^+) \}.
\end{align*}
If $\mathbf k = 2^{\times g}$, we will just write $\Delta(g; V,V^+)$.
\end{defi}
\begin{remark}
The notation $\Delta(\mathbf{k}; V,V^+)$ introduced above should not be confused with the notation $\Delta_k$ for the $(k-1)$-dimensional probability simplex, used extensively in Section \ref{sec:maps}.
\end{remark}

We can now prove that the set of inclusion constants for the GPT jewel is precisely the compatibility region of the GPT introduced in Definition \ref{def:compreg}. The result below connects the operationally defined compatibility region with the geometrical set of inclusion constants. 

\begin{thm} \label{thm:delta-is-gamma}
Given a GPT $(V,V^+,\mathds{1})$, it holds that $\Gamma(\mathbf k; V,V^+) = \Delta(\mathbf k; V,V^+)$.
\end{thm}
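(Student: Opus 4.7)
The plan is to observe that the two sets are parameterized by the same data (arbitrary tuples of measurements), and that the scaling by $s$ on the $\mathcal D_{\mathrm{GPT}\jewel}$ side corresponds exactly to adding white noise with parameters $s_i$ on the measurement side. Then Theorem \ref{thm:inclusion_compatible} bridges the two pictures.

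\emph{First step.} By Proposition \ref{prop:level_1}, the hypothesis
\[
\mathcal D_{\mathrm{GPT}\jewel}(\mathbf k; \mathbb R,\mathbb R_+)\subseteq \mathcal D_f(\mathbf k; \mathbb R,\mathbb R_+)
\]
in the definition of $\Delta(\mathbf k; V,V^+)$ is equivalent to saying that, setting $f^{(i)}_{k_i}:=\mathds 1-\sum_{j=1}^{k_i-1} f^{(i)}_j$, the tuples $(f^{(i)}_1,\dots,f^{(i)}_{k_i})$ are measurements for every $i\in[g]$; if they are not, the implication in the definition of $\Delta$ is vacuous. Hence we may restrict $f$ to range over genuine $g$-tuples of measurements with outcomes $\mathbf k$.

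\emph{Second step.} For such an $f$, let $\tilde f^{(i)}_j := s_i f^{(i)}_j + (1-s_i)\mathds 1/k_i$ denote the noisy measurements. A direct computation gives
\[
2\tilde f^{(i)}_j -\tfrac{2}{k_i}\mathds 1 \;=\; s_i\!\left(2f^{(i)}_j -\tfrac{2}{k_i}\mathds 1\right),
\]
so that for any $(z_0, z^{(i)}_j)\in V^{1-g+\sum_i k_i}$,
\[
\mathds 1\otimes z_0+\sum_{i,j}\!\left(2\tilde f^{(i)}_j-\tfrac{2}{k_i}\mathds 1\right)\otimes z^{(i)}_j \;=\; \mathds 1\otimes z_0+\sum_{i,j}\!\left(2f^{(i)}_j-\tfrac{2}{k_i}\mathds 1\right)\otimes s_i z^{(i)}_j.
\]
Comparing with Definition \ref{def:sh_f}, this shows
\[
(z_0,z^{(i)}_j)\in \mathcal D_{\tilde f}(\mathbf k; V,V^+) \iff (z_0, s_i z^{(i)}_j) \in \mathcal D_f(\mathbf k; V,V^+).
\]
Therefore the scaled inclusion
\[
(1,s_1^{\times(k_1-1)},\dots,s_g^{\times(k_g-1)})\cdot \mathcal D_{\mathrm{GPT}\jewel}(\mathbf k; V,V^+)\subseteq \mathcal D_f(\mathbf k; V,V^+)
\]
is equivalent to the unscaled inclusion $\mathcal D_{\mathrm{GPT}\jewel}(\mathbf k; V,V^+)\subseteq \mathcal D_{\tilde f}(\mathbf k; V,V^+)$.

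\emph{Third step.} Applying Theorem \ref{thm:inclusion_compatible} to $\tilde f$ converts the latter inclusion into the statement that $\tilde f$ is a compatible collection of measurements, which is exactly the condition defining $\Gamma(\mathbf k;V,V^+)$. Quantifying over all $f$, we conclude $\Delta(\mathbf k;V,V^+)=\Gamma(\mathbf k;V,V^+)$. The only subtlety is the bookkeeping needed to verify that scaling the entries $z^{(i)}_j$ by the \emph{same} $s_i$ across $j\in[k_i-1]$ corresponds to adding white noise uniformly within the $i$-th measurement; this is precisely why $\Delta$ was defined with repeated entries $s_i^{\times(k_i-1)}$ rather than with $g+\sum_i(k_i-1)$ independent parameters. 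No real obstacle arises, because Theorem \ref{thm:inclusion_compatible} has already absorbed all the functional-analytic content.
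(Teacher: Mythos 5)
Your proof is correct and follows essentially the same route as the paper's: Proposition \ref{prop:level_1} to reduce the hypothesis of $\Delta$ to genuine measurements, the identity $2\tilde f^{(i)}_j - \tfrac{2}{k_i}\mathds 1 = s_i(2f^{(i)}_j - \tfrac{2}{k_i}\mathds 1)$ to convert the scaled inclusion into the unscaled inclusion for the noisy tuple, and Theorem \ref{thm:inclusion_compatible} to translate that into compatibility. No gaps.
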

\begin{proof}
From Proposition \ref{prop:level_1} we infer that the inclusion $\mathcal D_{\mathrm{GPT}\jewel}(\mathbf k; \mathbb R, \mathbb R^+) \subseteq \mathcal D_f(\mathbf k; \mathbb R, \mathbb R^+)$ holds if and only if the $\{f_1^{(i)}, \ldots f_{k_i}^{(i)}\}$ are measurements for all $i \in [g]$. Here, $f_{k_i}^{(i)} = \mathds 1 - \sum_{j=1}^{k_i-1}f_j^{(i)}$ for all $i \in [g]$. The statement then is an easy consequence of Theorem \ref{thm:inclusion_compatible} and the following equivalence:
\begin{align*}
&(1,s_1^{\times(k_1-1)}.\ldots, s_g^{\times(k_g-1)} )\cdot \mathcal D_{\mathrm{GPT}\jewel}(\mathbf k; V,V^+)\subseteq \mathcal D_f(\mathbf k; V,V^+) \\ &\qquad
\iff \mathds{1} \otimes z_0 + \sum_{i = 1}^g \sum_{j = 1}^{k_i-1}s_i\left(2f_j^{(i)} - \frac{2}{k_i}\mathds{1}\right) \otimes z_j^{(i)} \in A^+ \otimes_{\mathrm{min}}V^+ \quad\forall z \in \mathcal D_{\mathrm{GPT}\jewel}(\mathbf k; V,V^+) \\
&\qquad\iff \mathcal D_{\mathrm{GPT}\jewel}(\mathbf k; V,V^+)\subseteq \mathcal D_{f'}(\mathbf k; V,V^+),
\end{align*}
where $(f')^{(i)}_j = s_i f^{(i)}_j + (1-s_i)\frac 1{k_i} \mathds{1}$ for $j \in [k_i-1]$, $i \in [g]$.
\end{proof}

\subsection{Inclusion constants from symmetrization} \label{sec:symmetrization}
The aim of this section is to show that we can obtain bounds on the inclusion set of the GPT jewel from consideration of the inclusion set of the GPT diamond. As the latter has more symmetries, it is much easier to work with. Using Theorem \ref{thm:delta-is-gamma}, we obtain bounds on measurements with $\mathbf k$ outcomes derived from the compatibility region for dichotomic measurements. This section is inspired by \cite[Section 7]{bluhm2020compatibility}.

The following is the GPT analogue of Theorem 7.2 of \cite{bluhm2020compatibility}. Its proof is very similar. 
\begin{thm}\label{thm:symmetrization}
Let $g \in \mathbb N$, $k_j \in \mathbb N$ for all $j \in [g]$. Let $(V,V^+,\mathds 1)$ be a GPT. Then,
\begin{equation*}
    \left((k_1-1)^{-2}, \ldots,  (k_g-1)^{-2}\right) \cdot \Delta\left(\sum_{i = 1}^g(k_i-1);V,V^+\right) \subseteq \Delta(\mathbf k; V, V^+).
\end{equation*}
\end{thm}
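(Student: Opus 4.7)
The plan is to decompose the scaling factor $(k_i-1)^{-2}$ into two ingredients whose product is precisely this factor: a rescaling of the $\mathbf k$-outcome effects into valid dichotomic effects, and a shrinking map from the GPT jewel into the GPT diamond. Given $s'\in\Delta(g';V,V^+)$ with $g'=\sum_i(k_i-1)$ and $s'_{ij}=\bar s_i$ constant within each block $i\in[g]$, I would produce $(\bar s_i/(k_i-1)^2)_{i\in[g]}\in\Delta(\mathbf k;V,V^+)$.

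For the effects side, given measurements $f=(f^{(1)},\ldots,f^{(g)})$ with $k_i$ outcomes, I would define
\begin{equation*}
h_{ij} := \lambda_i\, f_j^{(i)} + \tfrac{k_i-2}{2(k_i-1)}\,\mathds 1,\qquad \lambda_i := \tfrac{k_i}{2(k_i-1)},\quad i\in[g],\ j\in[k_i-1].
\end{equation*}
Using $0\le f_j^{(i)}\le\mathds 1$, a routine check shows $0\le h_{ij}\le\mathds 1$ for all $k_i\ge 2$, so each $h_{ij}$ is a valid dichotomic effect; moreover a direct computation yields the identity $2h_{ij}-\mathds 1 = \lambda_i(2f_j^{(i)}-\tfrac{2}{k_i}\mathds 1)$.

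The main technical step is a scaling lemma: for every $z=(z_0,(z_j^{(i)})_{i,j})\in\mathcal D_{\mathrm{GPT}\jewel}(\mathbf k;V,V^+)$, the rescaled element $z' := (z_0, (\beta_i z_j^{(i)})_{i,j})$ with $\beta_i := 2/(k_i(k_i-1))$ lies in $\mathcal D_{\mathrm{GPT}\diamond}(g';V,V^+)$. By Remark~\ref{rem:GPT-diamond}, this amounts to showing $z_0+\sum_{i,j}\epsilon_{ij}\beta_i z_j^{(i)}\in V^+$ for every $\epsilon\in\{\pm 1\}^{g'}$. Lemma~\ref{lem:elements-in-the-max} already supplies the positive elements $z_\kappa := z_0+\sum_{i,j}w_j^{(i)}(\kappa)\, z_j^{(i)}\in V^+$ for each $\kappa\in[k_1]\times\cdots\times[k_g]$, so I would exhibit the desired vector as the convex combination $\sum_\kappa\lambda_\kappa z_\kappa$ where $\lambda_\kappa := \prod_i p_i(\kappa_i)$ is a product coupling with marginals
\begin{equation*}
p_i(j) := \tfrac{1}{k_i} + \tfrac{\epsilon_{ij}}{k_i(k_i-1)}\ \text{for}\ j\in[k_i-1],\qquad p_i(k_i) := 1-\sum_{j=1}^{k_i-1} p_i(j).
\end{equation*}
The nonnegativity of each $p_i(j)$ in the worst-case sign configuration reduces to $k_i\ge 2$, and using $w_j^{(i)}(\kappa) = -2/k_i + 2\delta_{\kappa_i,j}$ the identity $\sum_\kappa\lambda_\kappa w_j^{(i)}(\kappa) = -\tfrac{2}{k_i}+2p_i(j) = \beta_i\epsilon_{ij}$ delivers the claim.

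Combining the two steps, the key arithmetic identity is $\lambda_i\beta_i = (k_i-1)^{-2}$. Applying the defining property of $s'\in\Delta(g';V,V^+)$ (via Theorems~\ref{thm:inclusion_compatible} and~\ref{thm:delta-is-gamma}) to the dichotomic effects $h_{ij}$ and to the element $z'$ yields
\begin{equation*}
\mathds 1\otimes z_0 + \sum_{i,j}\bar s_i\,(2h_{ij}-\mathds 1)\otimes \beta_i z_j^{(i)} = \mathds 1\otimes z_0 + \sum_{i,j}\tfrac{\bar s_i}{(k_i-1)^2}\,\bigl(2f_j^{(i)}-\tfrac{2}{k_i}\mathds 1\bigr)\otimes z_j^{(i)} \in A^+\otimes_{\min}V^+,
\end{equation*}
which is precisely the inclusion $\mathcal D_{\mathrm{GPT}\jewel}(\mathbf k;V,V^+)\subseteq\mathcal D_{\tilde f}(\mathbf k;V,V^+)$ certifying compatibility of the noisy measurements at parameters $s_i := \bar s_i/(k_i-1)^2$. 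The main obstacle is the scaling lemma: the tight value $\beta_i=2/(k_i(k_i-1))$ is forced by worst-case nonnegativity of $p_i(k_i)$, and it is the clean cancellation $\lambda_i\beta_i = (k_i-1)^{-2}$ between the largest admissible effect rescaling and this tightest spectrahedral shrinkage that reproduces the bound of the theorem.
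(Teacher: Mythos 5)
Your proof is correct, and it arrives at the theorem by the same decomposition as the paper: the factor $(k_i-1)^{-2}$ is split as $\lambda_i\mu_i$ with $\lambda_i=k_i/(2(k_i-1))$ (rescaling the $k_i$-outcome effects into dichotomic ones) and $\mu_i=\beta_i=2/(k_i(k_i-1))$ (shrinking the jewel into the diamond), and the argument is then routed through $\Delta\bigl(\sum_i(k_i-1);V,V^+\bigr)$ exactly as in the paper. Where you differ is in how the two key inclusions are certified. The paper works at level $(\mathbb R,\mathbb R_+)$ throughout: it reduces to single blocks via Corollary \ref{cor:direct-sum-splits}, reads off the admissible $\lambda_i$ and $\mu_i$ from the extreme rays of the level-one jewel and diamond (imported from an external lemma), and then lifts the $\mu$-inclusion to level $V$ abstractly, using that the diamond is a maximal generalized spectrahedron together with the monotonicity in Proposition \ref{prop:min-max_sandwich} and Remark \ref{rem:jewel-is-max}. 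You instead verify $0\le h_{ij}\le\mathds 1$ by hand and prove the scaling lemma directly at level $V$ by exhibiting $z_0+\sum_{i,j}\epsilon_{ij}\beta_i z_j^{(i)}$ as an explicit convex combination $\sum_\kappa\bigl(\prod_i p_i(\kappa_i)\bigr)z_\kappa$ of the positive elements supplied by Lemma \ref{lem:elements-in-the-max}; the marginal computation $-2/k_i+2p_i(j)=\beta_i\epsilon_{ij}$ and the worst-case nonnegativity of $p_i(k_i)$ recover exactly the tight constant. Your product-coupling argument is self-contained and makes transparent why $\beta_i$ cannot be improved, at the cost of re-deriving what the abstract $\mathcal D_{\max}$ machinery gives for free; both routes are valid, and your restriction to block-constant $s'$ is the reading under which the theorem's entrywise product typechecks (and is harmless, since $\Delta$ is downward closed).
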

\begin{proof}
Let $\bar k = \sum_{i = 1}^g(k_i-1)$. 
First, we shall find conditions for $\lambda_i \in [0,1]$ and $\mu_i \in [0,1]$, $i \in [g]$, such that 
\begin{equation*}
    \lambda \cdot \mathcal D_{\mathrm{GPT}\diamond}(\bar k; \mathbb R,\mathbb R^+) \subseteq \mathcal D_{\mathrm{GPT}\jewel}(\mathbf k; \mathbb R,\mathbb R^+) 
\end{equation*}
and
\begin{equation*}
      \mu \cdot \mathcal D_{\mathrm{GPT}\jewel}(\mathbf k; \mathbb R,\mathbb R^+)  \subseteq \mathcal D_{\mathrm{GPT}\diamond}(\bar k; \mathbb R,\mathbb R^+),
\end{equation*}
where $\lambda = (1,\lambda_1^{\times(k_1-1)}, \ldots, \lambda_g^{\times(k_g-1)})$ and $\mu = (1,\mu_1^{\times(k_1-1)}, \ldots, \mu_g^{\times(k_g-1)})$.

By Corollary \ref{cor:direct-sum-splits}, it is enough to consider 
\begin{equation*}
    (1,\lambda_i^{\times(k_i-1)}) \cdot \mathcal D_{\mathrm{GPT}\diamond}((k_i-1); \mathbb R,\mathbb R^+) \subseteq \mathcal D_{\mathrm{GPT}\jewel}(k_i; \mathbb R,\mathbb R^+) 
\end{equation*}
and 
\begin{equation*}
      (1,\mu_i^{\times(k_i-1)}) \cdot \mathcal D_{\mathrm{GPT}\jewel}(k_i; \mathbb R,\mathbb R^+)  \subseteq \mathcal D_{\mathrm{GPT}\diamond}(k_i-1; \mathbb R,\mathbb R^+).
\end{equation*}
Note that $\mathcal D_{1_{\mathbf k} \otimes a}(L,E_{\mathbf k} \otimes_{\max} N^+) = \mathcal D_{a}(L, N^+)$ by Lemma \ref{lem:elements-in-the-max}, where $(L, L^+)$ and $(M,M^+)$ are proper ordered vector, $N^+$ a tensor cone for $M^+$ and $L^+$ and $1_{\mathbf k} \otimes a = (1_{\mathbf k} \otimes a_1, \ldots, 1_{\mathbf k} \otimes a_n) \in (E_{\mathbf k} \otimes M)^n$, $n \in \mathbb N$. From Lemma 4.3 of \cite{bluhm2020compatibility}, we know that the extreme rays of $\mathcal D_{\mathrm{GPT}\diamond}(k_i-1; \mathbb R,\mathbb R^+)$ are
\begin{equation*}
    \mathbb R_+ (1, \pm e_j), \qquad j \in [k_i-1]
\end{equation*}
while the extreme rays of $\mathcal D_{\mathrm{GPT}\jewel}(k_i; \mathbb R,\mathbb R^+)$ are
\begin{equation*}
    \mathbb R_+ \left(1,\frac{k_i}{2} e_j\right),~j \in [k_i-1] \quad \mathrm{and} \quad \mathbb R_+ \left(1,-\frac{k_i}{2}\underbrace{(1, \ldots, 1)}_{k_i-1}\right).
\end{equation*}
We find thus that $(1,\pm \lambda_i e_j) \in \mathcal D_{\mathrm{GPT}\jewel}(k_i; \mathbb R,\mathbb R^+)$ if 
\begin{equation*}
    \pm \lambda_i e_j \in \left[-\frac{k_i}{2(k_i-1)}e_j, \frac{k_i}{2}e_j  \right]
\end{equation*}
since
\begin{equation*}
    -\frac{k_i}{2(k_i-1)}e_j = \frac{1}{k_i-1}\left(\sum_{l \in[k_i-1]\setminus j} \frac{k_i}{2}e_l - \frac{k_i}{2}(1, \ldots, 1)\right)
\end{equation*}
Thus, $\lambda_i \leq k_i/(2(k_i-1))$. Considering the $\ell_1$ norm of the extreme points of $D_{\mathrm{GPT}\jewel}(k_i; \mathbb R,\mathbb R^+) \cap\{(x_0, \ldots, x_{k_i-1})~:~x_0 = 1\}$, we infer moreover that $\mu_i \leq 2/(k_i(k_i-1))$.
Proposition \ref{prop:min-max_sandwich} and Remark \ref{rem:jewel-is-max} together imply furthermore that from
\begin{equation*}
      \mu \cdot \mathcal D_{\mathrm{GPT}\jewel}(\mathbf k; \mathbb R,\mathbb R^+)  \subseteq \mathcal D_{\mathrm{GPT}\diamond}(\bar k; \mathbb R,\mathbb R^+),
\end{equation*}
follows
\begin{equation*}
      \mu \cdot \mathcal D_{\mathrm{GPT}\jewel}(\mathbf k; V,V^+)  \subseteq \mathcal D_{\mathrm{GPT}\diamond}(\bar k; V,V^+),
\end{equation*}
Let $s \in \Delta(\bar k; V, V^+)$ and let $\circ$ denote the entrywise product of vectors. Let $f_j^{(i)} \in A$ for all $i \in [g]$ and $j \in [k_i-1]$. Then, by the above, we have the implication
\begin{align*}
        &\lambda \cdot \mathcal D_{\mathrm{GPT}\diamond}(\bar k; \mathbb R,\mathbb R^+) \subseteq \mathcal D_{\mathrm{GPT}\jewel}(\mathbf k; \mathbb R,\mathbb R^+) \subseteq \mathcal D_{f}(\mathbf k; \mathbb R, \mathbb R^+) \\
        \implies & \mu \circ \lambda \circ s \cdot \mathcal D_{\mathrm{GPT}\jewel}(\mathbf k; V,V^+)  \subseteq \lambda \circ s \cdot \mathcal D_{\mathrm{GPT}\diamond}(\bar k; V,V^+) \subseteq \mathcal D_{f}(\mathbf k; V, V^+),
\end{align*}
from which $\mu \circ \lambda \circ s \in \Delta(\mathbf k, V, V^+)$ and the assertion follows since $\lambda_i \cdot \mu_i \leq (k_i-1)^{-2}$.
\end{proof}

\section{Tensor crossnorms and compatibility of effects} \label{sec:tensor-norms}

In this section, we will only consider dichotomic measurements, that is, with outcomes in $\{1,2\}$; in other words, we set $k_1=\cdots=k_g = 2$.  Any such measurement is determined by an effect in $A^+$ related to  the probability of obtaining the outcome $1$. We will say that a collection of effects is compatible if the corresponding
collection of binary
measurements is compatible. For simplicity, we will write $E_g$ for $E_{\kk}$ with  $\kk=2^{\times g}$.

\subsection{\texorpdfstring{$E_g$}{Eg} as a centrally symmetric GPT}\label{sec:Eg-sym-GPT}

Let us begin by studying more in-depth the space $E_g$ and see that it gives itself rise to a GPT. Recalling the notations of Sections \ref{sec:maps} and \ref{sec:jewel-ordered-vector-spaces}, the distinguished basis of
 $E_g$ is $w=\{1_g,c_i\}$, where \begin{align*}
	1_g &= (1,1, \ldots 1)\\
	c_i &= (1,1)^{\otimes (i-1)} \otimes (1,-1) \otimes (1,1)^{\otimes (g-i)}.
\end{align*}
and the  dual basis $w^*$ in the space $E_g^* \cong E_g$ is $w^*=\{\check{1}_g, \check c_i\}$, where
\begin{align*}
\check{1}_g:=2^{-g}1_g,\qquad \check c_i &:= 2^{-g} c_i.
\end{align*}

 Let us point out that $(E_g,E_g^+, \check{ 1}_g)$ is a centrally symmetric GPT in the sense of \cite[Definition
25]{lami2018ultimate} corresponding to the $\ell_1$ norm in $\mathbb R^g$ (see also Section
\ref{sec:centrally-symmetric-GPTs}). Indeed, we clearly have $E_g=\mathbb R 1_g\oplus \bar E_g$, where
\[
\bar E_g=\mathrm{span}\{c_i,\ i\in [g]\}=\mathrm{span}\{\check c_i,\ i\in [g]\}
\]
and an element $a1_g+\sum_i x_ic_i\in E_g$ is in $E_g^+$ if and only if 
\[
a+ \sum_{i = 1}^g \epsilon_i x_i\ge0 ,\forall \epsilon \in \{\pm 1\}^g, 
\]
as can be seen e.g. from Remark \ref{rem:GPT-diamond} (with $L=\mathbb R$). This is equivalent to
\[
a\ge \sum_{i=1}^g |x_i|= \left\|\sum_i x_ie^{(g)}_i\right\|_{1},
\]
here $\{e^{(g)}_i\}$ is the canonical basis in $\mathbb R^g$ and $\|\cdot\|_1$ is the $\ell_1$ norm.
It follows that 
the subspace  $\bar E_g$ endowed with the restriction of the  base norm $\max\{a, \sum_i|x_i|\}$ can be identified with $\ell^g_1$, via the
isometry
\[
c_i\mapsto e^{(g)}_i,\qquad i\in [g].
\]
By duality, positivity in  $(E_g^*,(E_g^+)^*)$ is characterized by the $\ell_\infty$ norm  and 
the map 
\[
\check c_i\mapsto e^{(g)}_i,\qquad i\in [g]
\]
is an isometry of  $(\bar E_g, \|\cdot\|_{\check{1}_g})$ onto  $\ell_\infty^g$. These identifications will be used
throughout.

Since the $\ell_1$ and $\ell_\infty$ norms are invariant under sign changes of the coordinates, the map
\begin{equation}\label{eq:sign-change-map}
    \sigma_\epsilon: a1_g + \sum_i x_i c_i\mapsto a1_g + \sum_i \epsilon_i x_ic_i
\end{equation}
is an order isomorphism of both $(E_g, E_g^+)$ and $(E_g^*,(E_g^+)^*)$, for any sign  vector
$\epsilon \in \{\pm 1\}^g$. 
This implies that for any proper ordered vector space $(L,L^+)$ and  $z\in E_g\otimes L$ we have
\begin{align*}
z \in E_g^+ \otimesmin L^+ &\iff (\sigma_\epsilon\otimes \mathrm{id})(z) \in E_g^+ \otimesmin L^+\\
z \in E_g^+ \otimesmax L^+ &\iff (\sigma_\epsilon\otimes \mathrm{id})(z)\in E_g^+ \otimesmax L^+
\end{align*}
for any $\epsilon\in \{\pm1\}^g$, and similarly for $(E_g^+)^*$.

\subsection{Effects and tensor crossnorms}

Now we can make the connection between the compatibility of effects and norms on their corresponding tensors. Let $(V, V^+, \mathds 1)$ be a GPT. In addition, let  $f=(f_1, \ldots, f_g)$ be  $g$-tuple of elements in $A$ and let 
\begin{equation}\label{eq:def-phi}
\phi^{(f)} = \check{1}_g  \otimes \mathds 1 + \underbrace{\sum_{1=1}^g \check c_i \otimes (2f_i-\mathds 1)}_{\bar \phi^{(f)}}.
\end{equation}
By the results of Section \ref{sec:maps}, $f$ is a collection of  effects if and only if $\varphi^{(f)}\in (E_g^+)^*\otimesmax A^+$
and $f$ is compatible if and only if $\varphi^{(f)}\in (E_g^+)^*\otimesmin A^+$.
Let $\sigma_\epsilon$ for $\epsilon\in \{\pm1\}^g$ be as in Equation~\eqref{eq:sign-change-map}.
Then
\begin{equation}\label{eq:f_eps}
(\sigma_\epsilon\otimes \mathrm{id})(\phi^{(f)})=\check{1}_g  \otimes \mathds 1 + \sum_{1=1}^g \check c_i \otimes
\epsilon_i(2f_i-\mathds 1)=\phi^{(f_\epsilon)},
\end{equation}
where $(f_\epsilon)_i=f_i$ if $\epsilon_i=1$ and $(f_\epsilon)_i=1-f_i$ if $\epsilon_i=-1$. 
The invariance of the maximal and minimal tensor products under $\sigma_\epsilon \otimes \mathrm{id}$ is a  manifestation of
the fact that a relabelling of  measurement outcomes defines  again a  measurement and that such a relabelling  does not change compatibility  of the effects under study.

Our aim in this section is to make a connection between compatibility and reasonable crossnorms of Banach spaces (see Section \ref{sec:tensor-norms-prelim}). The results of \cite{lami2018ultimate} imply that any base norm on a bipartite GPT is a reasonable crossnorm.
\begin{prop} \label{prop:base-norms-are-crossnorms}
Consider the GPT $(V_A \otimes V_B, V_A^+ \otimes V_B^+, \mathds 1_A \otimes \mathds 1_B)$, where $(V_{\#}, V_{\#}^+, \mathds 1_{\#})$ are GPTs for ${\#} \in \{A, B\}$ and $V_A^+ \otimes V_B^+$ is some proper tensor cone for $V_A^+$, $V_B^+$. Then, the base norm $\norm{\cdot}_{V_A^+ \otimes V_B^+}$ is a reasonable crossnorm.
\end{prop}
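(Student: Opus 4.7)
\medskip

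\noindent\textbf{Proof plan for Proposition \ref{prop:base-norms-are-crossnorms}.}

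The plan is to verify the two defining conditions of a reasonable crossnorm directly, exploiting the duality between the base norm and the order unit norm together with the sandwiching $V_A^+ \otimes_{\min} V_B^+ \subseteq V_A^+ \otimes V_B^+ \subseteq V_A^+ \otimes_{\max} V_B^+$. Dualising this chain yields $A_A^+ \otimes_{\min} A_B^+ \subseteq (V_A^+ \otimes V_B^+)^* \subseteq A_A^+ \otimes_{\max} A_B^+$, and the minimal-tensor endpoints are exactly what make the product bounds go through.

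For condition (1), fix $x \in V_A$ and $y \in V_B$ and $\epsilon > 0$. By definition of the base norm pick decompositions $x = x_+ - x_-$, $y = y_+ - y_-$ with $x_\pm \in V_A^+$, $y_\pm \in V_B^+$, and
\[
\mathds 1_A(x_+) + \mathds 1_A(x_-) \leq \|x\|_{V_A} + \epsilon, \qquad \mathds 1_B(y_+) + \mathds 1_B(y_-) \leq \|y\|_{V_B} + \epsilon.
\]
Then $x \otimes y = p - n$ where
\[
p := x_+ \otimes y_+ + x_- \otimes y_-, \qquad n := x_+ \otimes y_- + x_- \otimes y_+,
\]
and both $p, n \in V_A^+ \otimes_{\min} V_B^+ \subseteq V_A^+ \otimes V_B^+$. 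Since $(\mathds 1_A \otimes \mathds 1_B)(p) + (\mathds 1_A \otimes \mathds 1_B)(n) = (\mathds 1_A(x_+) + \mathds 1_A(x_-))(\mathds 1_B(y_+) + \mathds 1_B(y_-))$, this decomposition witnesses $\|x \otimes y\|_{V_A^+ \otimes V_B^+} \leq (\|x\|_{V_A} + \epsilon)(\|y\|_{V_B} + \epsilon)$, and condition (1) follows by letting $\epsilon \to 0$.

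For condition (2), the dual norm is the order unit norm on $(V_A \otimes V_B)^*$ with order unit $\mathds 1_A \otimes \mathds 1_B$ and positive cone $(V_A^+ \otimes V_B^+)^*$. Given $\phi \in A_A$ and $\psi \in A_B$ with $\|\phi\|_{A_A} \leq s$ and $\|\psi\|_{A_B} \leq t$, the characterisation of the order unit norm yields $\phi_\pm := \tfrac{1}{2}(s\mathds 1_A \pm \phi) \in A_A^+$ with $\phi_+ + \phi_- = s\mathds 1_A$ and $\phi_+ - \phi_- = \phi$; analogously $\psi_\pm \in A_B^+$ with $\psi_+ + \psi_- = t\mathds 1_B$, $\psi_+ - \psi_- = \psi$. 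A straightforward computation then gives
\begin{align*}
st\, \mathds 1_A \otimes \mathds 1_B + \phi \otimes \psi &= 2(\phi_+ \otimes \psi_+ + \phi_- \otimes \psi_-),\\
st\, \mathds 1_A \otimes \mathds 1_B - \phi \otimes \psi &= 2(\phi_+ \otimes \psi_- + \phi_- \otimes \psi_+),
\end{align*}
and both right-hand sides lie in $A_A^+ \otimes_{\min} A_B^+ \subseteq (V_A^+ \otimes V_B^+)^*$. Hence $\phi \otimes \psi \in st\,[-\mathds 1_A \otimes \mathds 1_B, \mathds 1_A \otimes \mathds 1_B]$ in the ordered sense, which by definition of the order unit norm gives $\|\phi \otimes \psi\|_{(V_A \otimes V_B)^*} \leq st$. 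Taking infima over admissible $s, t$ delivers condition (2).

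The computation is essentially bookkeeping; the only conceptual point worth flagging is the use of the \emph{minimal} tensor product on both sides, on the primal side to construct the positive decomposition of $x \otimes y$, and on the dual side to ensure the decomposition of $st\,\mathds 1 \otimes \mathds 1 \pm \phi \otimes \psi$ lies in $(V_A^+ \otimes V_B^+)^*$ regardless of which admissible tensor cone $V_A^+ \otimes V_B^+$ is chosen. This is the step that would break if one tried to work with an arbitrary cone on $V_A \otimes V_B$ not sandwiched between the min and max tensor products.
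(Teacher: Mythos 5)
Your proof is correct, but it takes a different route from the paper's. The paper does not verify the two axioms of a reasonable crossnorm directly; instead it imports the chain $\norm{x}_{\epsilon} \leq \norm{x}_{\mathrm{LO}} \leq \norm{x}_{V_A^+ \otimes V_B^+} \leq \norm{x}_{V_A^+ \otimes_{\min} V_B^+} \leq \norm{x}_{\pi}$ for \emph{all} tensors $x$ from \cite[Proposition 22, Equation (26) and Definition 6]{lami2018ultimate}, and then invokes the equivalence in point (a) of Proposition \ref{prop:reasonable-cross-norms}. You instead check the definition on elementary tensors: the decomposition $x \otimes y = (x_+\otimes y_+ + x_-\otimes y_-) - (x_+\otimes y_- + x_-\otimes y_+)$ with both parts in $V_A^+ \otimesmin V_B^+ \subseteq V_A^+ \otimes V_B^+$ gives the crossnorm bound on the primal side, and the identity $st\,\mathds 1_A \otimes \mathds 1_B \pm \phi\otimes\psi = 2(\phi_\pm\otimes\psi_+ + \phi_\mp\otimes\psi_-)$ with $\phi_\pm = \tfrac12(s\mathds 1_A \pm \phi)$, combined with $A_A^+\otimesmin A_B^+ = (V_A^+\otimesmax V_B^+)^* \subseteq (V_A^+\otimes V_B^+)^*$, gives the dual bound; both computations check out, and the duality of base and order unit norms you use is stated in the paper. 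The two arguments are of course equivalent in content via Proposition \ref{prop:reasonable-cross-norms}(a), but yours is self-contained and makes explicit exactly where the sandwiching $V_A^+\otimesmin V_B^+ \subseteq V_A^+\otimes V_B^+ \subseteq V_A^+\otimesmax V_B^+$ enters (once on each side of the duality), whereas the paper's version outsources that bookkeeping to \cite{lami2018ultimate} at the cost of relying on the local-operations norm $\norm{\cdot}_{\mathrm{LO}}$ as an intermediary.
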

\begin{proof}
Let $x \in V_A \otimes V_B$. From \cite[Equation (26)]{lami2018ultimate}, it holds that $\norm{x}_{V_A^+ \otimes V_B^+} \leq \norm{x}_{V_A^+ \otimes_{\min} V_B^+}$. Together with \cite[Proposition 22]{lami2018ultimate}, this implies that $\norm{x}_{V_A^+ \otimes V_B^+} \leq \norm{x}_{\pi}$. Moreover,  \cite[Proposition 22]{lami2018ultimate} implies $\norm{x}_{\epsilon} \leq \norm{x}_{\mathrm{LO}}$, where we refer to \cite{lami2018ultimate} for the definition of the latter norm. From the discussion after \cite[Definition 6]{lami2018ultimate}, it follows that $\norm{x}_{\mathrm{LO}} \leq \norm{x}_{V_A^+ \otimes V_B^+}$. Thus,
\begin{equation*}
\norm{x}_{\epsilon} \leq \norm{x}_{V_A^+ \otimes V_B^+} \leq \norm{x}_{\pi}.
\end{equation*}
Therefore, the assertion follows from point (a) of Proposition \ref{prop:reasonable-cross-norms}.
\end{proof}

We now return to compatible tuples of effects. In this paragraph, $A$ and $\bar E_g$ will always be endowed with the
order unit norms $\|\cdot\|_{\mathds 1}$ and $\|\cdot\|_{\check{1}_g}$.

\begin{thm} \label{thm:effect-tensors} Let $f=(f_1,\dots,f_g)$ be a $g$-tuple of elements in $A$ and let 
\[
\bar \phi^{(f)}= \sum_{i = 1}^g \check c_i \otimes (2f_i-\mathds 1). 
\]
Then we have
\begin{enumerate}
\item Let $\|\cdot\|_\epsilon$ be the injective crossnorm in $\ell_\infty^g\otimes A$. Then $f$ is a collection of effects  if and only if 
\[
\|\bar \phi^{(f)}\|_{\epsilon}\leq 1.
\]
\item There is a reasonable crossnorm $\|\cdot\|_c$ in $\ell_\infty^g \otimes A$ such that $f$ is a compatible
collection of effects if and only if
 \[
\|\bar \phi^{(f)}\|_{c}\leq 1.
\]

\end{enumerate}
\end{thm}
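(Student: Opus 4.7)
I would approach Theorem~\ref{thm:effect-tensors} in two parts corresponding to the two claims, building heavily on the centrally symmetric structure of $E_g$ from Section~\ref{sec:Eg-sym-GPT}.

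For part~(1), the plan is to compute the injective norm explicitly under the identifications $\bar E_g^*\cong \ell_\infty^g$ via $\check c_i\mapsto e_i$ and $\bar E_g\cong \ell_1^g$ via $c_i\mapsto e_i$, so that $\bar\phi^{(f)}=\sum_i \check c_i\otimes (2f_i-\mathds 1)$. Using the injective-norm definition together with the dualities $(\ell_\infty^g)^*=\ell_1^g$ and $A^*=V$, the supremum over the $\ell_1^g$-unit ball is attained at the extreme points $\pm e_i$, giving
\[
\|\bar\phi^{(f)}\|_\epsilon \;=\; \max_{i\in[g]}\|2f_i-\mathds 1\|_A.
\]
Since $f_i$ is an effect iff $-\mathds 1\le 2f_i-\mathds 1\le \mathds 1$ iff $\|2f_i-\mathds 1\|_A\le 1$, this immediately proves (1).

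For part~(2) the key step is constructing $\|\cdot\|_\rho$ as a Minkowski gauge anchored at $\check 1_g\otimes\mathds 1$:
\[
\|\bar\phi\|_\rho \;:=\; \inf\bigl\{\lambda>0 \,:\, \lambda\,\check 1_g\otimes\mathds 1 + \bar\phi \;\in\; (E_g^+)^*\otimesmin A^+\bigr\}.
\]
I would first check that the defining set is non-empty by showing (by a standard tensor decomposition of an element $\alpha\otimes a$ as a combination of positive tensors) that $\check 1_g\otimes\mathds 1$ is an order unit of $(E_g^+)^*\otimesmin A^+$, hence lies in its interior; and that the infimum is attained using closedness of the minimal tensor product. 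The norm axioms then follow in a standard way: positive homogeneity by substitution; symmetry $\|\bar\phi\|_\rho=\|-\bar\phi\|_\rho$ from the order isomorphism $\sigma_{(-1,\dots,-1)}\otimes\mathrm{id}$ which fixes $\check 1_g\otimes\mathds 1$ and negates $\bar\phi$; the triangle inequality by directly summing two witnesses in the cone; and definiteness from pointedness of $(E_g^+)^*\otimesmin A^+$ combined with the observation that an element of this cone whose $\check 1_g$-component vanishes must itself be zero.

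Next I would verify the reasonable-crossnorm property by proving the two inequalities $\|\cdot\|_\epsilon \le \|\cdot\|_\rho \le \|\cdot\|_\pi$ and invoking Proposition~\ref{prop:reasonable-cross-norms}(a). The lower bound is immediate: if $\lambda\,\check 1_g\otimes\mathds 1+\bar\phi\in (E_g^+)^*\otimesmin A^+\subseteq (E_g^+)^*\otimesmax A^+$, part~(1) together with Section~\ref{sec:maps} gives $\|\bar\phi/\lambda\|_\epsilon\le 1$. For the upper bound it suffices, by subadditivity, to treat simple tensors $x\otimes a$: normalizing so that $\|x\|_{\ell_\infty^g},\|a\|_A\le 1$, we have $\check 1_g\pm x\in (E_g^+)^*$ and, setting $a_\pm:=(\mathds 1\pm a)/2\in A^+$, the explicit decomposition
\[
\check 1_g\otimes\mathds 1 + x\otimes a \;=\; (\check 1_g+x)\otimes a_+ \,+\, (\check 1_g-x)\otimes a_-
\]
lies in $(E_g^+)^*\otimesmin A^+$, so $\|x\otimes a\|_\rho\le 1$.

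Finally, to see that $\|\bar\phi^{(f)}\|_\rho\le 1$ characterizes compatibility of effects, I would use the description of the extreme rays of $(E_g^+)^*$ as $\mathbb R_+(\check 1_g+\sum_i\kappa_i \check c_i)$ for $\kappa\in\{\pm1\}^g$, which follows from Proposition~\ref{prop:E_dual}. The condition $\|\bar\phi^{(f)}\|_\rho\le 1$ is equivalent to $\check 1_g\otimes\mathds 1+\bar\phi^{(f)}\in (E_g^+)^*\otimesmin A^+$, and expanding this as $\sum_\kappa (\check 1_g+\sum_i\kappa_i \check c_i)\otimes h_\kappa$ and matching coefficients on the basis $w^*$ recovers exactly the joint measurement equations $\sum_\kappa h_\kappa=\mathds 1$ and $\sum_\kappa \kappa_i h_\kappa = 2f_i-\mathds 1$; by Theorem~\ref{thm:etb_ext} this is compatibility. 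The main obstacle is keeping the three parallel identifications (cones via $(E_g^+)^*=J_g(\mathbb R_+^{2^g})$, order units $\check 1_g,\mathds 1$, and the Banach-space norms on $\bar E_g^*\cong\ell_\infty^g$ and $A$) straight so that the Minkowski-gauge definition is both manifestly a norm and manifestly captures the min-cone condition; once this bookkeeping is set up, the reasonable-crossnorm inequalities reduce to the short cone-theoretic arguments above.
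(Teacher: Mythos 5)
Your proof is correct, and its core is the same as the paper's: characterize compatibility via Theorem \ref{thm:etb_ext} as membership of $\phi^{(f)}=\check 1_g\otimes\mathds 1+\bar\phi^{(f)}$ in $(E_g^+)^*\otimesmin A^+$, define $\|\cdot\|_\rho$ as the gauge of $\check 1_g\otimes\mathds 1$ with respect to that cone, and use the sign symmetry $\sigma_{(-1,\dots,-1)}$ to identify the one-sided gauge with the two-sided order-unit norm (the paper makes exactly this identification in Proposition \ref{prop:cross_norms}). Where you diverge is in establishing that $\|\cdot\|_\rho$ is a reasonable crossnorm: the paper obtains this by viewing $\|\cdot\|_\rho$ as the restriction of the order unit norm of the bipartite GPT $(E_g\otimes V, E_g^+\otimesmax V^+,\check 1_g\otimes\mathds 1)$ and invoking Proposition \ref{prop:base-norms-are-crossnorms} (which rests on results of \cite{lami2018ultimate} about base norms being reasonable crossnorms), whereas you verify $\|\cdot\|_\epsilon\le\|\cdot\|_\rho\le\|\cdot\|_\pi$ by hand — the lower bound from $\otimesmin\subseteq\otimesmax$ together with part (1), and the upper bound from the explicit decomposition $\check 1_g\otimes\mathds 1+x\otimes a=(\check 1_g+x)\otimes a_++(\check 1_g-x)\otimes a_-$ — and then apply Proposition \ref{prop:reasonable-cross-norms}(a). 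Your route is more self-contained (no appeal to the external base-norm results) at the price of checking the norm axioms and the crossnorm inequalities explicitly; the paper's route is shorter but leans on the bipartite-GPT machinery. Both correctly handle the attainment of the infimum via closedness of the minimal tensor product, and your extreme-ray expansion of $(E_g^+)^*$ recovering the joint-measurement equations is a legitimate alternative to citing condition (3) of Theorem \ref{thm:etb_ext} directly.
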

\begin{proof}
(1) can be easily observed directly from  
\[
\|\bar \phi^{(f)}\|_{\epsilon}=
\|\sum_{i = 1}^g \check c_i\otimes (2f_i-\mathds 1)\|_{\epsilon }= \max_{i \in [g]} \|2f_i-\mathds 1\|_{\mathds 1},
\]
since $\ell_\infty \otimes_\epsilon X \cong \ell_\infty(X)$ as Banach spaces \cite[Example 3.3]{Ryan2002}. Recall that $\norm{x}_\mathds{1} = \inf \{t \geq 0: x \in t[-\mathds 1, \mathds 1]\}$.
For (2), note that we have by Theorem \ref{thm:etb_ext} and Equation~\eqref{eq:f_eps} that $f$ is a compatible  
collection of effects if and only if
\[
\check{1}_g\otimes \mathds 1 \pm \bar\phi^{(f)}\in (E_g^*)^+ \otimesmin A^+,
\]
which is equivalent to
\[
\|\bar\phi^{(f)}\|_{\check{1}_g\otimes \mathds 1}\le 1.
\] 
Here $\|\cdot\|_{\check{1}_g\otimes \mathds 1}$ is the order unit norm for the tensor product GPT 
 $(E_g\otimes V,  E_g^+\otimesmax V^+, \check{1}_g\otimes\mathds 1)$. As the dual to the base norm which is a reasonable
crossnorm by Proposition \ref{prop:base-norms-are-crossnorms}, $\|\cdot\|_{\check{1}_g\otimes \mathds 1}$ is a reasonable crossnorm 
 in $E_g^*\otimes A$. 
Hence we may define the norm $\|\cdot\|_c$
as the restriction of the order unit norm  to $\bar E_g \otimes A$. 

Note that part (1) can be proved similarly using the order unit norm for the 
GPT $(E_g\otimes V,  E_g^+\otimesmin V^+, \check{1}_g\otimes\mathds 1)$. The result follows from the fact that by \cite[Proposition 22]{lami2018ultimate} the base norm in this case is equal to the projective norm of the two base norm spaces. 
\end{proof}

\begin{question}
The symmetry of $E_g$ is crucial in the above proof. For general $E_{\mathbf{k}}$, however, such symmetry is not present, so the proof does not immediately extend. It is an interesting question whether there is a characterization similar to Theorem \ref{thm:effect-tensors} also for measurements with more than two outcomes.
\end{question}

We derive now a more explicit form of the crossnorm $\|\cdot\|_c$.

\begin{prop}\label{prop:cross_norms}
Let $\bar \phi\in \ell^g_\infty\otimes A$, then
\[ 
\|\bar \phi\|_{c}=\inf\left\{\left\|\sum_j h_j\right\|_{\mathds 1}, \ \bar \phi=\sum_j
z_j\otimes h_j,\ \|z_j\|_\infty=1,\ h_j\in A^+\right\}.
\]

\end{prop}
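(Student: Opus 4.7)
The plan is to prove the two inequalities by direct computation, starting from the definition of $\|\cdot\|_\rho$ given in the proof of Theorem \ref{thm:effect-tensors}: since $\|\cdot\|_\rho$ is the restriction of the order unit norm with respect to the order unit $\check 1_g \otimes \mathds 1$ and the cone $(E_g^+)^* \otimesmin A^+$, we have
\[
\|\bar\phi\|_\rho = \inf\bigl\{t \geq 0 : t\check{1}_g \otimes \mathds 1 \pm \bar\phi \in (E_g^+)^* \otimesmin A^+\bigr\}.
\]
The key structural fact I will exploit is the centrally symmetric description from Section \ref{sec:Eg-sym-GPT}: every element of $(E_g^+)^*$ has the form $\lambda(\check{1}_g + z)$ with $\lambda \geq 0$ and $z \in \bar E_g$ satisfying $\|z\|_\infty \leq 1$.

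For the upper bound $\|\bar\phi\|_\rho \leq \text{RHS}$, I would start with an arbitrary decomposition $\bar\phi = \sum_j z_j \otimes h_j$ with $\|z_j\|_\infty = 1$ and $h_j \in A^+$, and set $t := \|\sum_j h_j\|_{\mathds 1}$. Since $\sum_j h_j \in A^+$, we have $t\mathds 1 - \sum_j h_j \in A^+$, and the identity
\[
t\check{1}_g \otimes \mathds 1 \pm \bar\phi = \sum_j (\check{1}_g \pm z_j) \otimes h_j + \check{1}_g \otimes \Bigl(t\mathds 1 - \sum_j h_j\Bigr)
\]
exhibits both $t\check{1}_g \otimes \mathds 1 \pm \bar\phi$ as sums of elementary tensors in $(E_g^+)^* \otimesmin A^+$ (using that $\check{1}_g \pm z_j \in (E_g^+)^*$ because $\|z_j\|_\infty \leq 1$). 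Hence $\|\bar\phi\|_\rho \leq t$, and taking the infimum over such decompositions gives the upper bound.

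For the lower bound, I would take any $t > \|\bar\phi\|_\rho$ and use the centrally symmetric description to write $t\check{1}_g \otimes \mathds 1 + \bar\phi = \sum_j (\check{1}_g + z_j) \otimes h_j$ with $\|z_j\|_\infty \leq 1$ and $h_j \in A^+$ (absorbing the scalars $\lambda_j$ into the $h_j$). Applying the functional $\langle 1_g,\cdot\rangle \otimes \mathrm{id}_A$ to this identity and using $\langle 1_g, \check{1}_g\rangle = 1$, $\langle 1_g, z_j\rangle = 0$ (as $1_g \perp \bar E_g$) together with $\bar\phi \in \bar E_g \otimes A$ yields $t\mathds 1 = \sum_j h_j$, while the complementary projection onto $\bar E_g \otimes A$ recovers $\bar\phi = \sum_j z_j \otimes h_j$. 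To match the normalization $\|z_j\|_\infty = 1$ required on the RHS, I drop any terms with $z_j = 0$ and rescale each remaining term via $z_j \mapsto z_j/\|z_j\|_\infty$, $h_j \mapsto \|z_j\|_\infty h_j$; this preserves the tensor $\bar\phi$ and yields a new sum $\sum_j h_j' \leq \sum_j h_j = t\mathds 1$, so $\|\sum_j h_j'\|_{\mathds 1} \leq t$. Letting $t \downarrow \|\bar\phi\|_\rho$ finishes the proof. The main obstacle is this normalization step, but it is handled cleanly because shrinking an $h_j$ inside $A^+$ only decreases $\|\sum_j h_j\|_{\mathds 1}$; the equality $\sum_j h_j = t\mathds 1$ (rather than a mere inequality) extracted from the $\check{1}_g$-component is what gives us sharp control on $t$.
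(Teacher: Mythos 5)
Your proof is correct and follows essentially the same route as the paper's: both arguments rest on the identification of elements of $(E_g^+)^*$ as pairs $(a_j, z_j)$ with $\|z_j\|_\infty \leq a_j$, the extraction of $\sum_j h_j = t\mathds 1$ from the $\check 1_g$-component of the decomposition, and the final rescaling $z_j \mapsto z_j/\|z_j\|_\infty$, $h_j \mapsto \|z_j\|_\infty h_j$. The only cosmetic difference is that the paper first invokes the sign-flip symmetry of Equation~\eqref{eq:f_eps} to reduce the two conditions $t\check 1_g\otimes\mathds 1 \pm \bar\phi \in (E_g^+)^*\otimesmin A^+$ to a single sign, whereas you verify both signs directly in the easy direction and observe that one sign suffices for the other.
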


\begin{proof}  Note that $\bar\phi=\bar\phi^{(f)}$ for some $g$-tuple of elements in $A$. 
By definition, 
\begin{align*}
\|\bar \phi^{(f)}\|_{c }&=\inf\{\lambda>0, \lambda\check{1}_g\otimes\mathds 1\pm \bar \phi^{(f)}\in
(E_g^+)^*\otimesmin A^+\}\\
&=\inf\{\lambda>0, \lambda\check{1}_g\otimes\mathds 1- \bar \phi^{(f)}\in
(E_g^+)^*\otimesmin A^+\}\\
&= \inf\{\lambda >0,\ \lambda\check{1}_g\otimes\mathds 1- \bar \phi^{(f)} =\sum_j x_j\otimes h_j,\ x_j\in (E_g^+)^*, h_j\in A^+\},
\end{align*}
where we have used Equation~\eqref{eq:f_eps} in the second equality. Assume that 
$\lambda\check{1}_g\otimes\mathds 1- \bar \phi^{(f)} =\sum_j x_j\otimes h_j$ for some $\lambda>0$ and $x_j\in (E_g^+)^*$,
$h_j\in A^+$. We may identify any $x_j\in (E_g^+)^*$ with a pair $(a_j,z_j)$ with $\|z_j\|_\infty\le a_j$, and then
\[
\sum_j x_j\otimes h_j=\check{1}_g\otimes \sum_j a_jh_j+ \sum_j z_j\otimes h_j.
\]
By some easy reshuffling, we may always assume that $a_j=1$. Hence
\[
\bar \phi^{(f)}= \check{1}_g\otimes \left(\lambda\mathds
1-\sum_j h_j\right)+ \sum_j z_j\otimes h_j
\]
and since $\bar \phi^{(f)}\in \bar E_g\otimes A$, we must have $\lambda\mathds 1=\sum_j h_j$. Since $\|z_j\|_\infty\le 1$, we may normalize $z_j$ and obtain
\[
\bar \phi^{(f)}=\sum_j \|z_j\|_\infty^{-1}z_j\otimes \|z_j\|_\infty h_j,\quad \|z_j\|_\infty h_j\in A^+,\ \sum_j
\|z_j\|_\infty h_j\le \sum_j h_j=\lambda\mathds 1.
\]
This finishes the proof.
\end{proof}

\begin{remark}\label{rem:unitball_rho} Note that the unit ball of $\|\cdot\|_c$ can be written as the set
\[
\left\{\sum_i z_i\otimes h_i, \ \|z_i\|_\infty\le 1,\ h_i\in A^+,\ \sum_i h_i=\mathds 1\right\}.
\]
Alternatively, we can identify the unit ball of $\norm{\cdot}_c$ with
\begin{equation*}
    \left\{(v_1, \ldots, v_g) \in A^g: (\mathds 1, v_1, \ldots, v_g) \in \mathcal D_{\min}(\mathcal C; A, A^+) \right\},
\end{equation*}
where 
\begin{equation*}
    \mathcal C := \{(x, \bar x): x \in \mathbb R, \bar x \in \mathbb R^g, x \geq \norm{\bar x}_\infty\}.
\end{equation*}
\end{remark}

It is almost immediate from the definition that $\norm{\cdot}_c$ can be computed using a conic program (see Section \ref{sec:computing-rho} in the Appendix for a nicer dual formulation).

\bigskip

We next describe the compatibility region as the set  of inclusion constants for the crossnorms $\|\cdot\|_\epsilon$ and 
$\|\cdot\|_c$.
For any element $\bar \phi=\sum_i \check c_i\otimes p_i$ and any $s\in \mathbb R^g$, we
define
\[
s.\phi:=\sum_i \check c_i\otimes s_i p_i.
\]
Given a $g$-tuple of elements  $f=(f_1,\dots, f_g)\in A^g$ and $\bar \varphi^{(f)}=\sum_i\check c_i \otimes (2f_i-\mathds
1)$, we have
\[
s. \bar\phi^{(f)}=\sum_i \check c_i\otimes s_i(2f_i-\mathds 1)=\sum_i \check c_i \otimes (2(s_if_i + (1-s_i)\mathds
1/2)-\mathds 1)=\bar\phi^{(f_s)},
\]
where $f_s=(s_1f_1+(1-s_1)\mathds 1/2,\dots, s_gf_g+(1-s_g)\mathds 1/2)$. If $f$ is a $g$-tuple of effects, then $f_s$
is a $g$-tuple of effects as well, obtained by mixing each $f_i$ with the trivial effect $\mathds 1/2$. 
From the definition of the compatibility region and Theorem \ref{thm:effect-tensors}, we obtain
\begin{equation}\label{eq:Gammaf}
\Gamma(f)=\{s\in[0,1]^g,\ \|s.\bar\varphi^{(f)}\|_c\le 1\},
\end{equation}
in particular
\begin{equation}\label{eq:gammaf}
\gamma(f)=1/\|\bar\varphi^{(f)}\|_c.    
\end{equation}
The following result is now immediate. 

\begin{thm}\label{thm:inclusion_Gamma} We have
\[
\Gamma(g;V,V^+)=\{s\in [0,1]^g,\ \|s.\varphi\|_c\le 1,\ \forall \varphi\in \ell^g_\infty\otimes A,\
\|\varphi\|_\epsilon \le 1\}.
\]
In particular, the compatibility degree satisfies
\[
\gamma(g; V,V^+)= 1/\max_{\|\varphi\|_{\epsilon}\le 1} \|\varphi\|_{c}.
\]

\end{thm}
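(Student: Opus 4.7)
The plan is to unpack the definitions and combine them with the two equivalences already established in Theorem \ref{thm:effect-tensors}. The starting point is the expression $\Gamma(g;V,V^+) = \bigcap_{f} \Gamma(f)$ from Definition \ref{def:compreg}, where the intersection runs over $g$-tuples of effects $f = (f_1,\dots,f_g) \in A^g$. For each such tuple, Equation \eqref{eq:Gammaf} gives $\Gamma(f) = \{s \in [0,1]^g : \|s.\bar\varphi^{(f)}\|_\rho \le 1\}$, so the task reduces to reparametrizing the intersection by the tensors $\bar\varphi^{(f)}$ rather than by the tuples $f$ themselves.

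The key observation is that the assignment $f \mapsto \bar\varphi^{(f)} = \sum_i \check c_i \otimes (2f_i - \mathds 1)$ is a linear bijection $A^g \to \bar E_g \otimes A$, and that under the identification $\bar E_g \cong \ell_\infty^g$ from Section \ref{sec:Eg-sym-GPT} (sending $\check c_i \mapsto e_i$), this becomes a bijection $A^g \to \ell_\infty^g \otimes A$. Indeed, given any $\varphi = \sum_i \check c_i \otimes a_i$, setting $f_i := (a_i + \mathds 1)/2$ recovers $\bar\varphi^{(f)} = \varphi$. By Theorem \ref{thm:effect-tensors}(1), the preimage of the unit $\|\cdot\|_\epsilon$-ball under this bijection is precisely the set of $g$-tuples of effects. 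Substituting into the intersection yields
\begin{equation*}
\Gamma(g;V,V^+) = \{s \in [0,1]^g : \|s.\varphi\|_\rho \le 1,\ \forall \varphi \in \ell^g_\infty \otimes A \text{ with } \|\varphi\|_\epsilon \le 1\},
\end{equation*}
which is the first claim.

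For the second claim, apply the characterization to $s = (s,\dots,s)$ with $s \in [0,1]$. By the linearity of $\varphi \mapsto s.\varphi$ in each coordinate and the homogeneity of $\|\cdot\|_\rho$, we have $\|s.\varphi\|_\rho = s\|\varphi\|_\rho$, so membership in $\Gamma(g;V,V^+)$ is equivalent to $s \cdot \max_{\|\varphi\|_\epsilon \le 1}\|\varphi\|_\rho \le 1$. Since $\|\cdot\|_\epsilon \le \|\cdot\|_\rho$ (injective is minimal among reasonable crossnorms, by Proposition \ref{prop:reasonable-cross-norms}), the maximum is at least $1$, so its reciprocal lies in $[0,1]$ and gives the exact value of $\gamma(g;V,V^+)$.

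There is no serious obstacle here; the theorem is essentially a rewriting of Definition \ref{def:compreg} using Theorem \ref{thm:effect-tensors}. The only point requiring a small sanity check is the surjectivity of $f \mapsto \bar\varphi^{(f)}$ onto $\ell^g_\infty \otimes A$, which must be verified without imposing the effect constraint $0 \le f_i \le \mathds 1$ — since that constraint is precisely what becomes the $\|\cdot\|_\epsilon \le 1$ condition after the change of variables.
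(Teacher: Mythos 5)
Your proposal is correct and follows essentially the same route as the paper, which derives the theorem as an immediate consequence of Equation \eqref{eq:Gammaf} together with the reparametrization of effect tuples by tensors in the unit $\|\cdot\|_\epsilon$-ball via Theorem \ref{thm:effect-tensors}; you merely spell out the change of variables $f \mapsto \bar\varphi^{(f)}$ that the paper leaves implicit. One cosmetic point: that map is affine rather than linear (because of the shift $f_i \mapsto 2f_i - \mathds 1$), but it is bijective, which is all the argument needs.
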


\begin{ex}\label{ex:QM_rho}
For quantum mechanics $\mathrm{QM_d}$ ($A^+ = \mathrm{PSD}_d$, $\mathds 1 = \mathrm{Tr}$), the expression for the 
unit ball of $\|\cdot\|_c$ from Remark \ref{rem:unitball_rho} becomes the minimal matrix convex set (of a fixed
dimension) $\mathcal W_d^{\min}(B_{\ell^g_\infty})$ (over the unit ball $B_{\ell^g_\infty}$, cf.~\cite[Section 4]{fritz2017spectrahedral} or \cite[Eq.~(1.4)]{passer2018minimal}). Since the unit ball of the 
 injective norm $\|\cdot\|_\epsilon$ in $\ell^g_\infty\otimes (\mathcal M^{\mathrm{sa}}_d(\mathbb C), \norm{\cdot}_\infty)$ is the maximal matrix convex set $\mathcal W_d^{\max}(B_{\ell^g_\infty})$, the expressions  for $\Gamma$ and $\gamma$ in Theorem \ref{thm:inclusion_Gamma} correspond
precisely to the inclusion constants for minimal and maximal matrix convex sets. 

\end{ex}

This gives the following lower bound on the compatibility degree, which uses that $\rho$ is a reasonable crossnorm, thus being upper bounded by the projective tensor norm.

\begin{cor}\label{cor:norm-ratio} We have
\[
\gamma(g; V,V^+)\ge 1/\rho(\ell^g_\infty, A),
\]
where $\rho(\ell^g_\infty, A)$ as in Equation \eqref{eq:rho-quotient}.
\end{cor}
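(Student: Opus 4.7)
The plan is to chain together two facts that have already been established in the paper: the formula
$$\gamma(g;V,V^+) = 1/\max_{\|\varphi\|_\epsilon \le 1} \|\varphi\|_\rho$$
from Theorem \ref{thm:inclusion_Gamma}, and the fact, proved in Theorem \ref{thm:effect-tensors}(2), that $\|\cdot\|_\rho$ is a reasonable crossnorm on $\ell_\infty^g \otimes A$. Once these are in hand, the corollary follows by comparing $\|\cdot\|_\rho$ with the projective norm.

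Concretely, I would first invoke part (a) of Proposition \ref{prop:reasonable-cross-norms}: since $\|\cdot\|_\rho$ is a reasonable crossnorm, every $\varphi \in \ell_\infty^g \otimes A$ satisfies
$$\|\varphi\|_\epsilon \le \|\varphi\|_\rho \le \|\varphi\|_\pi.$$
In particular, taking the supremum over $\|\varphi\|_\epsilon \le 1$ gives
$$\max_{\|\varphi\|_\epsilon \le 1} \|\varphi\|_\rho \le \max_{\|\varphi\|_\epsilon \le 1} \|\varphi\|_\pi = \rho(\ell_\infty^g, A),$$
where the last equality is just the definition of $\rho(X,Y)$ from Equation \eqref{eq:rho-quotient}. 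Inverting and applying Theorem \ref{thm:inclusion_Gamma} yields the announced bound
$$\gamma(g;V,V^+) \ge 1/\rho(\ell_\infty^g, A).$$

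There is essentially no obstacle here; the content is entirely packaged in the earlier results. The only subtle point worth checking explicitly is that the norm $\|\cdot\|_\rho$ introduced in Theorem \ref{thm:effect-tensors} is genuinely a norm (not merely a seminorm) and is indeed reasonable with respect to the order unit norms placed on $\ell_\infty^g$ and $A$, but this was already secured in the proof of that theorem via the identification of $\|\cdot\|_\rho$ as the restriction of the order unit norm on the tensor product GPT $(E_g \otimes V, E_g^+ \otimesmax V^+, \check 1_g \otimes \mathds 1)$ to $\bar E_g \otimes A$, together with Proposition \ref{prop:base-norms-are-crossnorms}.
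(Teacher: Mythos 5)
Your proposal is correct and follows exactly the paper's own argument: the corollary is obtained by combining Theorem \ref{thm:inclusion_Gamma} with the fact that $\|\cdot\|_\rho$ is a reasonable crossnorm, hence dominated by the projective norm, so that $\max_{\|\varphi\|_\epsilon\le 1}\|\varphi\|_\rho\le\rho(\ell_\infty^g,A)$. Your closing remark about $\|\cdot\|_\rho$ being a genuine reasonable crossnorm is indeed the point already secured in the proof of Theorem \ref{thm:effect-tensors}, so nothing further is needed.
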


\begin{question} It is not clear whether this bound is tight in general. 
In the case when $g\le \dim(V)$, it is known that the bound $1/g$ is attained by the hypercube GPT, see \cite{jencova2018incompatible}. We will find below a tight bound in the special case of centrally symmetric GPTs (see Section \ref{sec:centrally-symmetric-tensor}), which is larger than the above bound.
\end{question}

\subsection{Incompatibility witnesses} \label{sec:witnesses}

 In this section, we will consider different notions of witnesses. We will start with objects certifying that the elements of $A$ under study are effects before considering objects certifying compatibility. This viewpoint is dual to the one we have hitherto adopted in this section.

We introduce the set of \emph{effect witnesses}
\begin{equation}
\mathcal Q_{\mathrm{GPT} \diamond}(g;V,V^+) := \left\{ z \in V^g : \sum_{i=1}^g \|z_i\|_V \leq 1 \right\}.
\end{equation}
Note that the set above is the unit ball of $\ell_1^g \otimes_\pi (V, \|\cdot\|_V)$ (see Equation~\eqref{eq:projective-norm}). 

\begin{prop} \label{prop:effect-witness}
Elements $f_1, \ldots, f_g \in A$ are effects (i.e.~$f_i, \mathds 1 - f_i \in A^+$ for all $i \in [g]$) if and only if
\begin{equation*}
\sum_{i = 1}^g \langle 2 f_i - \mathds 1, z_i \rangle \leq 1 \qquad \forall (z_1, \ldots, z_g) \in \mathcal Q_{\mathrm{GPT} \diamond}(g;V,V^+).
\end{equation*}
\end{prop}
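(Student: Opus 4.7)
The plan is to deduce the statement from Banach space duality together with part (1) of Theorem~\ref{thm:effect-tensors}. By definition, $\mathcal{Q}_{\mathrm{GPT}\diamond}(g;V,V^+)$ is the closed unit ball of the projective tensor norm on $\ell_1^g\otimes_\pi (V,\|\cdot\|_V)$. Since the base norm $\|\cdot\|_V$ and the order unit norm $\|\cdot\|_A$ are mutually dual, and $(\ell_1^g)^*=\ell_\infty^g$, the Banach space dual of $\ell_1^g\otimes_\pi V$ is $\ell_\infty^g\otimes_\epsilon A$. Under the identification $\check c_i\leftrightarrow e_i$ from Section~\ref{sec:Eg-sym-GPT}, one then views $\bar\phi^{(f)}=\sum_i\check c_i\otimes(2f_i-\mathds{1})$ as an element of $\ell_\infty^g\otimes A$, and the witness inequality of the statement is, by symmetry of $\mathcal{Q}_{\mathrm{GPT}\diamond}$ under $z\mapsto -z$, equivalent to its absolute-value version, which in turn reads $\|\bar\phi^{(f)}\|_{\ell_\infty^g\otimes_\epsilon A}\le 1$. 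Theorem~\ref{thm:effect-tensors}(1) then gives exactly the characterization of effects.

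For a self-contained alternative that bypasses Theorem~\ref{thm:effect-tensors}, the first step is the elementary observation that $f_i$ is an effect if and only if $\|2f_i-\mathds{1}\|_A\le 1$, which follows directly from the definition $\|\alpha\|_A=\inf\{t\ge 0:\alpha\in t[-\mathds{1},\mathds{1}]\}$ applied to $\alpha=2f_i-\mathds{1}$. For the forward direction, for any $z\in \mathcal{Q}_{\mathrm{GPT}\diamond}(g;V,V^+)$ the duality between $\|\cdot\|_V$ and $\|\cdot\|_A$ yields $|\langle 2f_i-\mathds{1},z_i\rangle|\le \|2f_i-\mathds{1}\|_A\,\|z_i\|_V\le\|z_i\|_V$, and summing over $i\in[g]$ gives the bound $1$. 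For the converse, fix $i$ and set $z_j=0$ for $j\ne i$ and $z_i=\pm v$ with $\|v\|_V\le 1$; then $z\in\mathcal{Q}_{\mathrm{GPT}\diamond}$, so the hypothesis forces $|\langle 2f_i-\mathds{1},v\rangle|\le 1$ for all such $v$, i.e.\ $\|2f_i-\mathds{1}\|_A\le 1$.

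The only thing requiring care is checking that the identifications of norms employed are isometric: the isometry $(\bar E_g,\|\cdot\|_{\check 1_g})\cong\ell_\infty^g$ is explicit in Section~\ref{sec:Eg-sym-GPT}, and the duality of the base and order unit norms is built into the definition of a GPT. No genuine obstacle arises.
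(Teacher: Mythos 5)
Your first argument is exactly the paper's proof: it invokes Theorem \ref{thm:effect-tensors}(1) together with the duality of the injective norm on $\ell_\infty^g \otimes_\epsilon A$ and the projective norm on $\ell_1^g \otimes_\pi V$ (whose unit ball is $\mathcal Q_{\mathrm{GPT}\diamond}(g;V,V^+)$), plus the identity $\langle z, \bar\phi^{(f)}\rangle = \sum_i \langle 2f_i - \mathds 1, z_i\rangle$ and the symmetry $z \mapsto -z$ of the ball. Your self-contained alternative is also correct; it simply unpacks the same content by hand, using that $f_i$ is an effect iff $\|2f_i-\mathds 1\|_A \le 1$ and testing against witnesses supported on a single coordinate, which avoids the tensor-norm language at the cost of re-deriving what Theorem \ref{thm:effect-tensors}(1) already encodes.
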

\begin{proof}
This is straightforward from Theorem \ref{thm:effect-tensors}, the duality of the injective and projective norms and the fact that 
\[
\langle z,\bar\phi^{(f)}\rangle =\sum_{i = 1}^g \langle 2 f_i - \mathds 1, z_i \rangle.
\]
\end{proof}

Similarly, using the results of the previous section, we introduce the following definition:

\begin{defi}[Strict incompatibility witnesses]
Let $\|\cdot\|_{c^*}$ be the dual norm to $\|\cdot\|_c$ in $\mathbb R^g\otimes V$ and let $\mathcal P_{\mathrm{GPT} \diamond}(g;V,V^+)$ denote the unit ball of $\|\cdot\|_{c^*}$. 
An element $z=(z_1,\dots,z_g)\in \mathcal P_{\mathrm{GPT}
\diamond}(g;V,V^+)$ is called a  \emph{strict incompatibility witness} if 
\[
\|z\|_\pi=\sum_i\|z_i\|_V>1.
\]

\end{defi}

The idea behind this definition is clear:  a strict incompatibility witness is a functional such that 
the value $\langle z,\bar\phi^{(f)}\rangle\le 1$ for all compatible $g$-tuples of effects but there exists some
  $f$ with $\langle z,\bar\phi^{(f)}\rangle >1$, so that $z$ witnesses incompatibility of $f$.

\begin{prop}
Effects $f_1, \ldots, f_g \in A^+$ are compatible if and only if $\langle z,\bar\phi^{(f)}\rangle\le 1$ for all strict incompatibility witnesses $z \in \mathcal P_{\mathrm{GPT} \diamond}(g;V,V^+)$.
\end{prop}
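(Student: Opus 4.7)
The plan is to deduce this directly from Theorem \ref{thm:effect-tensors}(2) together with Proposition \ref{prop:effect-witness}. By Theorem \ref{thm:effect-tensors}(2), compatibility of the $g$-tuple of effects $f$ is equivalent to $\|\bar\phi^{(f)}\|_\rho \leq 1$, and by the standard norm-dual norm duality, this is in turn equivalent to $\langle z,\bar\phi^{(f)}\rangle \leq 1$ for every $z$ in the unit ball $\mathcal P_{\mathrm{GPT}\diamond}(g;V,V^+)$ of $\|\cdot\|_{\rho^*}$. Since strict incompatibility witnesses are by definition a subset of $\mathcal P_{\mathrm{GPT}\diamond}(g;V,V^+)$, the forward implication is immediate.

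For the reverse implication, I would fix an arbitrary $z \in \mathcal P_{\mathrm{GPT}\diamond}(g;V,V^+)$ and split into two cases according to the value of $\sum_i \|z_i\|_V$. If $\sum_i \|z_i\|_V > 1$, then $z$ is a strict incompatibility witness, so $\langle z,\bar\phi^{(f)}\rangle \leq 1$ by hypothesis. Otherwise $\sum_i \|z_i\|_V \leq 1$, meaning $z \in \mathcal Q_{\mathrm{GPT}\diamond}(g;V,V^+)$; but then Proposition \ref{prop:effect-witness}, applied to the $g$-tuple of effects $f_1, \ldots, f_g$, gives
\begin{equation*}
\langle z,\bar\phi^{(f)}\rangle \;=\; \sum_{i=1}^g \langle 2f_i - \mathds 1, z_i\rangle \;\leq\; 1.
\end{equation*}
In either case the inequality holds for $z$, so it holds for every $z$ in $\mathcal P_{\mathrm{GPT}\diamond}(g;V,V^+)$. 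Therefore $\|\bar\phi^{(f)}\|_\rho \leq 1$, and Theorem \ref{thm:effect-tensors}(2) yields compatibility of $f$.

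There is no real obstacle: the argument is just a clean case split using the duality between $\|\cdot\|_\rho$ and $\|\cdot\|_{\rho^*}$. The only conceptual point worth flagging is that the restriction to \emph{strict} witnesses is exactly compatible with the hypothesis that the $f_i$ are effects, since the non-strict part of $\mathcal P_{\mathrm{GPT}\diamond}(g;V,V^+)$ lies in $\mathcal Q_{\mathrm{GPT}\diamond}(g;V,V^+)$ and is automatically handled by Proposition \ref{prop:effect-witness}. This mirrors the usual distinction in quantum information between entanglement witnesses and arbitrary positive functionals separating states.
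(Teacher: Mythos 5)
Your proposal is correct and follows essentially the same route as the paper: both reduce the statement to the norm duality $\|\bar\phi^{(f)}\|_\rho\le 1 \iff \langle z,\bar\phi^{(f)}\rangle\le 1$ for all $z\in\mathcal P_{\mathrm{GPT}\diamond}(g;V,V^+)$ (Theorem \ref{thm:effect-tensors}), and then observe that the non-strict witnesses lie in $\mathcal Q_{\mathrm{GPT}\diamond}(g;V,V^+)$, where Proposition \ref{prop:effect-witness} handles them automatically because the $f_i$ are effects. Your write-up just makes the paper's two-line argument explicit via the case split.
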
  
\begin{proof}
It is clear from Theorem \ref{thm:effect-tensors} that $\langle z,\bar\phi^{(f)}\rangle \leq 1$ for all $z \in \mathcal P_{\mathrm{GPT} \diamond}(g;V,V^+)$ if and only if the $f_i$ are compatible effects. The assertion follows since for effects $f_i$, $\langle z,\bar\phi^{(f)}\rangle \leq 1$ for all $z \in \mathcal Q_{\mathrm{GPT} \diamond}(g;V,V^+)$.
\end{proof}

A notion of incompatibility witness was also introduced in \cite{bluhm2020compatibility}.
 Translated to the GPT setting,  this would be an element $z=(z_0,z_1,\dots,z_g)$ from $\mathcal
D_{\mathrm{GPT}\diamond}(g;V,V^+)$ and  $z$ detects incompatibility for some $g$-tuple of effects $f$ if 
\[
z \not \in  \mathcal D_f(g;V,V^+).
\] 
We next show that this notion is closely related to the one  introduced here.

\begin{prop} \label{prop:incomp_witness}
Let $z=(z_1,\dots,z_g)\in V^g$. Then $z\in\mathcal P_{\mathrm{GPT}\diamond}(g;V,V^+)$ if and only if 
 there is some $z_0\in K$
 such that 
\[
(z_0,z_1,\dots,z_g)\in \mathcal D_{\mathrm{GPT}\diamond}(g;V,V^+).
\]
Moreover, in this case, 
\[
(z_0,z_1,\dots,z_g)\not \in  \mathcal D_f(g;V,V^+)
\] 
for a tuple of effects $f$ if and only if 
there exists a positive map $Y:(V,V^+) \to (V,V^+)$ such that  $\langle (\mathrm{id} \otimes Y) (z) ,\bar\phi^{(f)}\rangle >1$.
\end{prop}

\begin{proof}
Note that  the set 
\[
\{z\in V^g,\ \exists z_0\in K,\ (z_0,z)\in \mathcal D_{\mathrm{GPT}\diamond}(g;V,V^+)\}
\]
is convex, closed and contains 0, therefore,  by Corollary \ref{cor:compatible-witness} and Theorem
\ref{thm:effect-tensors}, it is the polar of the unit ball $B_{\|\cdot\|_c}$. This implies the first
statement.

Furthermore, we have that 
\begin{equation*}
0>s_{\Phi^{(f)}}\left(1_g \otimes z_0 + \sum_{i = 1}^g c_i \otimes z_i \right) =\langle \chi, \mathds 1 \otimes z_0 + \sum_{i = 1}^g
(2f_i-\mathds 1) \otimes z_i \rangle =1-\langle z, \bar\phi^{(f)}\rangle. 
\end{equation*}
implies that $(z_0,z)\not\in \mathcal D_f(g;V,V^+)$. Conversely, $(z_0,z)\not\in \mathcal D_f(g;V,V^+)$ implies that there is a $y \in V^+ \otimes_{\max} A^+$ such that
\begin{equation*}
    0 > \langle y, \mathds 1 \otimes z_0 + \sum_{i = 1}^g
(2f_i-\mathds 1) \otimes z_i \rangle.
\end{equation*}
We can find a positive map $Y^\prime: (V,V^+) \to (V,V^+)$ such that $y = (Y^\prime \otimes \mathrm{id}) (\chi_V)=(\mathrm{id}\otimes Y^{'*})(\chi_V)$. 
Observe that we must have $\mathds{1}(Y'(z_0))>0$. Indeed, assume that $\mathds{1}(Y'(z_0))=0$, then since $Y'$ is positive,  we obtain from \eqref{eq:barycenter} and \eqref{eq:z_from_kappa} that $Y'(z_i)=0$ for all $i$, which would imply $\langle y,  \mathds 1 \otimes z_0 + \sum_{i = 1}^g
(2f_i-\mathds 1) \otimes z_i \rangle =0$. By normalizing $Y = (1/\mathds{1}(Y^{\prime}(z_0))) Y^{\prime}$ we obtain that $Y(z_0) \in K$. It follows thus that
\begin{align*}
    0 >& \langle (\mathrm{id} \otimes Y^\ast)(\chi_V), \mathds 1 \otimes z_0 + \sum_{i = 1}^g
(2f_i-\mathds 1) \otimes z_i \rangle \\&= s_{\Phi^{(f)}}\left(1_g \otimes Y(z_0) + \sum_{i = 1}^g c_i \otimes Y(z_i) \right) = 1 - \langle (\mathrm{id} \otimes Y)z,\bar\phi^{(f)}\rangle.
\end{align*}
\end{proof}

A third notion of entanglement witness has been introduced in \cite{jencova2018incompatible}. There, an entanglement witness is a positive map $W:(V(
P_{\mathbf k}), V(P_{\mathbf k})^+) \to (V, V^+)$ such that $\mathrm{Tr}(
(\Phi^{(f)})^{\ast} W) < 0$; recall that $P_\kk = \Delta_{k_1} \times \cdots \times \Delta_{k_g}$ is the polysimplex from Section \ref{sec:maps}. Using the order isomorphism between $(V(P_{\mathbf k}), V(P_{\mathbf k})^+)$ and $(E_{\mathbf k}^\ast,(E_{\mathbf k}^\ast)^+)$, these correspond to positive maps $W:(E^\ast_\mathbf{k}, (E^\ast_\mathbf{k})^+) \to (V,V^+)$ such that $\mathrm{Tr}[(\Phi^{(f)})^{\ast} W] < 0$. We make the connection to elements of $\mathcal P_{\mathrm{GPT}\diamond}(g;V,V^+)$ below.

\begin{prop} \label{prop:incomp_witness_2}
There is a one-to-one correspondence between $z \in \mathcal D_{\mathrm{GPT}\diamond}(g;V,V^+)$ and positive maps $W:(E^\ast_g, (E^\ast_g)^+) \to (V,V^+)$. Moreover, $z_0 \in K$ if and only if $\mathds 1(W(\check{1}_{g})) = 1$. Restricting to such $W$,
\begin{equation*}
\mathrm{Tr}[(\Phi^{(f)})^{\ast} W] < 0 \qquad \iff \qquad  \langle z, \bar \varphi^f \rangle > 1
\end{equation*}
for a collection of effects $f$.
\end{prop}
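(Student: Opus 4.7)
The plan is to pass between three equivalent viewpoints on an element of $\mathcal D_{\mathrm{GPT}\diamond}(g;V,V^+)$: the tuple $z = (z_0, z_1, \ldots, z_g) \in V^{g+1}$, the tensor $1_g \otimes z_0 + \sum_i c_i \otimes z_i \in E_g \otimes V$, and the linear map $W \colon E_g^* \to V$ whose associated tensor in the sense of Section \ref{sec:positive_maps} is that very tensor. These correspondences are canonical once the dual bases $\{1_g, c_1, \ldots, c_g\}$ of $E_g$ and $\{\check{1}_g, \check c_1, \ldots, \check c_g\}$ of $E_g^*$ have been fixed.

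For the first claim I would set $z_0 := W(\check{1}_g)$ and $z_i := W(\check c_i)$ for $i \in [g]$, so that $\varphi^W = 1_g \otimes z_0 + \sum_i c_i \otimes z_i$ by Equation \eqref{eq:phiPhi-chi}. Positivity of $W \colon (E_g^*, (E_g^*)^+) \to (V, V^+)$ is equivalent, by the standard correspondence between positive maps and the maximum tensor product (cf.\ Proposition \ref{prop:EB} applied with source $E_g^*$ and target $V = A^*$), to $\varphi^W \in E_g^+ \otimesmax V^+$; this is precisely the defining condition for $z \in \mathcal D_{\mathrm{GPT}\diamond}(g;V,V^+)$ in view of Remark \ref{rem:jewel-is-max}. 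This establishes the bijection. For the normalisation claim, Remark \ref{rem:normalized_jewel} guarantees $z_0 \in V^+$ automatically, so $z_0 \in K$ reduces to the condition $\mathds 1(z_0) = \mathds 1(W(\check{1}_g)) = 1$.

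The trace identity is an application of Lemma \ref{lem:trace-of-maps}. With $\Psi := W \colon E_g^* \to V$ and $\Phi := \Phi^{(f)} \colon E_g \to V^* = A$, the lemma and cyclicity of the trace give $\mathrm{Tr}[(\Phi^{(f)})^* W] = \mathrm{Tr}[W (\Phi^{(f)})^*] = \langle \varphi^{(f)}, \varphi^W \rangle$. Substituting the explicit forms $\varphi^{(f)} = \check{1}_g \otimes \mathds 1 + \bar \varphi^{(f)}$ from Equation \eqref{eqq:def_phi_f} (specialised to $\kk = 2^{\times g}$) and $\varphi^W = 1_g \otimes z_0 + \sum_i c_i \otimes z_i$, the bi-orthogonality relations $\langle \check{1}_g, 1_g \rangle = 1$, $\langle \check c_i, c_j \rangle = \delta_{ij}$ and the vanishing of the cross pairings collapse the tensor pairing to $\mathds 1(z_0) + \langle \bar\varphi^{(f)}, z \rangle$, exactly as in the computation already performed inside the proof of Proposition \ref{prop:incomp_witness}. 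Imposing the normalisation $\mathds 1(z_0) = 1$ identifies this trace with $1 - \langle z, \bar\varphi^{(f)} \rangle$ in the sign convention used there, and the equivalence $\mathrm{Tr}[(\Phi^{(f)})^* W] < 0 \iff \langle z, \bar\varphi^{(f)}\rangle > 1$ is then immediate.

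The only point requiring real attention is the bookkeeping of the sign convention for $\langle z, \bar\varphi^{(f)}\rangle$, so that the final equivalence inherits the same orientation as in Proposition \ref{prop:incomp_witness}; everything else is either a direct application of the general map/tensor dictionary of Section \ref{sec:positive_maps} or a routine evaluation of a bilinear pairing against dual bases.
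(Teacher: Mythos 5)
Your proof is correct and follows essentially the same route as the paper: the map--tensor dictionary identifying positivity of $W$ with $\varphi^W \in E_g^+ \otimesmax V^+$ and hence with membership of $z$ in the GPT diamond, the evaluation $\mathds 1(W(\check 1_g)) = \mathds 1(z_0)$, and Lemma \ref{lem:trace-of-maps} followed by a dual-basis computation of $\langle \varphi^W, \varphi^{\Phi^{(f)}}\rangle$. Your extra remark that $z_0 \in V^+$ is automatic (via Remark \ref{rem:normalized_jewel}), so that $z_0 \in K$ reduces to the normalisation condition, is a small but welcome detail the paper leaves implicit, and your handling of the sign convention matches the one used in Proposition \ref{prop:incomp_witness}.
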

\begin{proof}
The map $W:(E^\ast_g, (E^\ast_g)^+) \to (V,V^+)$ is positive if and only if $\varphi^W \in  E_g^+ \otimes_{\max} V^+ $. We can decompose
\begin{equation*}
\varphi^W = 1_g\otimes z_0 + \sum_{i = 1}^g  c_i \otimes z_i
\end{equation*}
and $\varphi^W \in  E_g^+ \otimes_{\max} V^+ $ if and only if $z \in \mathcal D_{\mathrm{GPT}\diamond}(g;V,V^+)$. This proves the first assertion. We have $\mathds 1(W(\check{1}_{g})) = \langle \varphi^W, \check{1}_{g} \otimes \mathds{1} \rangle$ from Equation \eqref{eq:def_phiPhi}, which proves the second assertion. Moreover, $\mathrm{Tr}[(\Phi^{(f)})^{\ast} W]=\mathrm{Tr}[W (\Phi^{(f)})^{\ast}] < 0$ if and only $\langle \varphi^{W}, \varphi^{\Phi^{(f)}} \rangle < 0$ by Lemma \ref{lem:trace-of-maps}. Furthermore,
\begin{equation*}
\langle \varphi^{W}, \varphi^{\Phi^{(f)}} \rangle  = \mathds{1}(z_0) + \sum_{i \in [g]} \langle 2 f_i - \mathds{1}, z_i \rangle = 1 - \langle z,\bar \varphi^f \rangle 
\end{equation*}
This proves the last assertion.
\end{proof}
Lastly, we can also connect compatibility witnesses of \cite{jencova2018incompatible} and \cite{bluhm2020compatibility} directly with each other. To clarify this point was one of the motivations for this work. 
\begin{cor}
    There is a one-to-one correspondence between $z \in \mathcal D_{\mathrm{GPT}\jewel}(\mathbf k;V,V^+)$ and positive maps $W:(E^\ast_\mathbf{k}, (E^\ast_\mathbf{k})^+) \to (V,V^+)$. Moreover,
    \begin{equation*}
\exists\text{ positive map }Y:(V,V^+) \to (V,V^+)  \text{ s.t. } 
\mathrm{Tr}[(\Phi^{(f)})^{\ast} Y W] < 0 \quad \iff \quad  z \neq \mathcal D_f(g;V,V^+)
\end{equation*}
for a collection of measurements $f$.
\end{cor}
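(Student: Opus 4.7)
The plan is to prove both parts by lifting the arguments of Propositions~\ref{prop:incomp_witness} and~\ref{prop:incomp_witness_2}, which handle the dichotomic case, to arbitrary outcome tuples $\mathbf{k}$. The bijection comes from Lemma~\ref{lem:elements-in-the-max} plus the standard correspondence between positive maps and the maximal tensor product (Section~\ref{sec:positive_maps}). The trace criterion then follows by combining Lemma~\ref{lem:trace-of-maps} with a separation argument between $A^+ \otimesmin V^+$ and its dual cone $V^+ \otimesmax A^+$.

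For the first assertion, I would expand $\varphi^W \in E_{\mathbf k} \otimes V$ in the distinguished basis from Section~\ref{sec:jewel-ordered-vector-spaces},
\[
\varphi^W = 1_{\mathbf k} \otimes z_0 + \sum_{i=1}^g \sum_{j=1}^{k_i-1} w^{(i)}_j \otimes z^{(i)}_j,
\]
thereby assigning to the map $W:E^*_{\mathbf k}\to V$ the tuple $z=(z_0,z^{(i)}_j)_{ij}$. The correspondence of Section~\ref{sec:positive_maps} says $W$ is positive if and only if $\varphi^W\in E_{\mathbf k}^+\otimesmax V^+$, and Lemma~\ref{lem:elements-in-the-max} identifies the latter with $z\in\mathcal D_{\mathrm{GPT}\jewel}(\mathbf k;V,V^+)$. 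This yields the desired bijection.

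For the second assertion, note that the element whose membership in $A^+\otimesmin V^+$ defines $\mathcal D_f(\mathbf k;V,V^+)$ is precisely
\[
(\Phi^{(f)}\otimes\mathrm{id}_V)(\varphi^W)=\mathds 1\otimes z_0+\sum_{i,j}\Bigl(2f^{(i)}_j-\tfrac{2}{k_i}\mathds 1\Bigr)\otimes z^{(i)}_j \in A\otimes V.
\]
Since $A^+\otimesmin V^+$ is a closed convex cone with dual $V^+\otimesmax A^+$, the bipolar theorem gives that $z\notin\mathcal D_f(\mathbf k;V,V^+)$ if and only if there exists a separating element $y\in V^+\otimesmax A^+$ with $\langle y,(\Phi^{(f)}\otimes\mathrm{id}_V)(\varphi^W)\rangle<0$. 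As in Proposition~\ref{prop:incomp_witness}, positive maps $Y:(V,V^+)\to(V,V^+)$ are in bijection with tensors $y\in V^+\otimesmax A^+$ via $y=(Y\otimes\mathrm{id}_A)(\chi_V)$. It remains to rewrite the pairing as a trace: using cyclicity, $\mathrm{Tr}[(\Phi^{(f)})^* Y W]=\mathrm{Tr}[YW(\Phi^{(f)})^*]$, and applying Lemma~\ref{lem:trace-of-maps} with $\Psi=YW:E^*_{\mathbf k}\to V$ and $\Phi=\Phi^{(f)}:E_{\mathbf k}\to A$ together with the naturality identity $\varphi^{YW}=(\mathrm{id}_{E_{\mathbf k}}\otimes Y)(\varphi^W)$ yields
\[
\mathrm{Tr}[(\Phi^{(f)})^* Y W]=\langle \varphi^{\Phi^{(f)}},\varphi^{YW}\rangle=\langle y,(\Phi^{(f)}\otimes\mathrm{id}_V)(\varphi^W)\rangle.
\]
Combining these equivalences closes the argument.

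The main obstacle is purely bookkeeping: keeping straight the various dual pairings (between $E_{\mathbf k}\otimes V$ and $E^*_{\mathbf k}\otimes A$, and between $A\otimes V$ and $V\otimes A$), the adjoint $(\Phi^{(f)})^*:V\to E^*_{\mathbf k}$, and the naturality of the tensor-map correspondence under post-composition. No new ideas beyond Lemmas~\ref{lem:elements-in-the-max}, \ref{lem:trace-of-maps}, Proposition~\ref{prop:incomp_witness}, and the bipolar theorem are required.
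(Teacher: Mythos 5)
Your proof is correct and follows the same route the paper intends: it simply carries the arguments of Propositions~\ref{prop:incomp_witness} and~\ref{prop:incomp_witness_2} over from the diamond to the jewel, which is exactly what the paper's one-line proof asserts, with the bijection coming from $\varphi^W\in E_{\mathbf k}^+\otimesmax V^+$ via Lemma~\ref{lem:elements-in-the-max} and the trace criterion from Lemma~\ref{lem:trace-of-maps} plus duality of $A^+\otimesmin V^+$ and $V^+\otimesmax A^+$. The only (harmless) deviation is that you work directly with the unnormalized pairing $\langle y,(\Phi^{(f)}\otimes\mathrm{id}_V)(\varphi^W)\rangle<0$ instead of normalizing $Y$ so that $Y(z_0)\in K$ as in Proposition~\ref{prop:incomp_witness}; that normalization is not needed for the trace formulation of the corollary.
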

\begin{proof}
    This follows from Propositions \ref{prop:incomp_witness} and \ref{prop:incomp_witness_2}, realizing that similar statements can be obtained for $\mathcal D_{\mathrm{GPT}\jewel}(\mathbf k;V,V^+)$ instead of $\mathcal D_{\mathrm{GPT}\diamond}(g;V,V^+)$.
\end{proof}

The following characterization of elements of the dual unit ball  follows easily from Remark \ref{rem:GPT-diamond}:
\begin{cor}
Let $z\in V^g$, then $z\in \mathcal P_{\mathrm{GPT}\diamond}(g;V,V^+)$ if and only if there is some $z_0\in K$ such that
\[
z_0+\sum_i \epsilon_iz_i\in V^+,\quad \forall \epsilon \in \{\pm1\}^g.
\]
\end{cor}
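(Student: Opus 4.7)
The plan is to simply chain two previously established statements and unpack a definition; no new ingredients are needed. Concretely, Proposition~\ref{prop:incomp_witness} already provides the first half of the desired equivalence: it states that $z=(z_1,\dots,z_g)\in\mathcal P_{\mathrm{GPT}\diamond}(g;V,V^+)$ if and only if there exists $z_0\in K$ such that the extended tuple $(z_0,z_1,\dots,z_g)$ lies in $\mathcal D_{\mathrm{GPT}\diamond}(g;V,V^+)$. So the corollary will reduce at once to giving an explicit description of membership in the GPT diamond.

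For that description, I would invoke Remark~\ref{rem:GPT-diamond}, specialized to $L=V$ and $L^+=V^+$, which says
\[
\mathcal D_{\mathrm{GPT}\diamond}(g;V,V^+)=\Bigl\{(z_0,z_1,\ldots,z_g)\in V^{g+1}:\ z_0+\sum_{i=1}^g\epsilon_iz_i\in V^+\ \forall \epsilon\in\{\pm1\}^g\Bigr\}.
\]
This identity is itself an immediate consequence of Lemma~\ref{lem:elements-in-the-max} applied to $\mathbf k=2^{\times g}$, so it is available without further work.

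The proof will therefore be a one-line chain: $z\in\mathcal P_{\mathrm{GPT}\diamond}(g;V,V^+)$ iff there exists $z_0\in K$ with $(z_0,z_1,\dots,z_g)\in\mathcal D_{\mathrm{GPT}\diamond}(g;V,V^+)$ (by Proposition~\ref{prop:incomp_witness}), iff there exists $z_0\in K$ with $z_0+\sum_i\epsilon_iz_i\in V^+$ for every sign vector $\epsilon\in\{\pm1\}^g$ (by Remark~\ref{rem:GPT-diamond}). There is no real obstacle here; the only thing worth remarking is that the normalization $z_0\in K$, i.e.\ $\mathds 1(z_0)=1$, is already built into the formulation of Proposition~\ref{prop:incomp_witness}, so the statement of the corollary requires no additional rescaling argument.
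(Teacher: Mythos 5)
Your proposal is correct and coincides with the paper's intended argument: the paper states that the corollary ``follows easily from Remark~\ref{rem:GPT-diamond}'', implicitly combining it with the first part of Proposition~\ref{prop:incomp_witness} exactly as you do. Nothing further is needed.
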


\begin{defi}
The set 
\begin{equation}\label{eq:def-Pi}
\Pi(g; V, V^+) := \left\{(s_1, \ldots, s_g) \in [0,1]^g: \sum_{i = 1}^g s_i \norm{z_i}_V \leq 1 ~\forall (z_1, \ldots, z_g) \in \mathcal P_{\mathrm{GPT} \diamond, g} \right\}
\end{equation}
is called the \emph{$(g;V,V^+)$-blind region}.
\end{defi}

\begin{thm} \label{thm:pi=gamma=delta} It holds that
\begin{equation*}
\Pi(g; V, V^+) = \Gamma(g; V, V^+)
\end{equation*}
\end{thm}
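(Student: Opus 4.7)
The plan is to verify $\Pi(g;V,V^+) = \Gamma(g;V,V^+)$ by a direct duality argument, starting from the characterization of $\Gamma$ furnished by Theorem \ref{thm:inclusion_Gamma}.

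First, I would recall that
\[
\Gamma(g;V,V^+) = \{s \in [0,1]^g : \|s.\varphi\|_\rho \le 1,\ \forall \varphi \in \ell^g_\infty \otimes A,\ \|\varphi\|_\epsilon \le 1\},
\]
and that by construction $\mathcal P_{\mathrm{GPT}\diamond}(g;V,V^+)$ is exactly the unit ball of the dual norm $\|\cdot\|_{\rho^\ast}$ on $\ell^g_1 \otimes V \cong V^g$. Hence, for any fixed $s$ and $\varphi$,
\[
\|s.\varphi\|_\rho \le 1 \iff \langle z, s.\varphi \rangle \le 1 \quad \forall z \in \mathcal P_{\mathrm{GPT}\diamond}(g;V,V^+).
\]

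Next, I would write $\varphi = \sum_i \check c_i \otimes \alpha_i$ with $\alpha_i \in A$, so that $\|\varphi\|_\epsilon = \max_i \|\alpha_i\|_{\mathds 1}$ (as used in the proof of Theorem \ref{thm:effect-tensors}), and $z = \sum_i c_i \otimes z_i$ with $z_i \in V$. Since $\{c_i\}$ and $\{\check c_i\}$ are dual bases of $\bar E_g$, the pairing becomes $\langle z, s.\varphi \rangle = \sum_i s_i \langle \alpha_i, z_i \rangle$. Therefore membership in $\Gamma$ is equivalent to
\[
\sup_{z \in \mathcal P_{\mathrm{GPT}\diamond}(g;V,V^+)}\ \sup_{\max_i \|\alpha_i\|_{\mathds 1}\le 1}\ \sum_{i=1}^g s_i \langle \alpha_i, z_i\rangle \le 1.
\]

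The key step is then to evaluate the inner supremum. Since the constraints on the $\alpha_i$ decouple and the coefficients $s_i \ge 0$ are nonnegative, and since $\{\alpha : \|\alpha\|_{\mathds 1} \le 1\}$ is symmetric, the duality between the order-unit norm $\|\cdot\|_{\mathds 1}$ on $A$ and the base norm $\|\cdot\|_V$ on $V$ gives
\[
\sup_{\max_i \|\alpha_i\|_{\mathds 1}\le 1} \sum_{i=1}^g s_i \langle \alpha_i, z_i\rangle = \sum_{i=1}^g s_i \sup_{\|\alpha_i\|_{\mathds 1}\le 1} \langle \alpha_i, z_i\rangle = \sum_{i=1}^g s_i \|z_i\|_V.
\]
Substituting this back, $s \in \Gamma(g;V,V^+)$ if and only if $\sum_i s_i \|z_i\|_V \le 1$ for every $z \in \mathcal P_{\mathrm{GPT}\diamond}(g;V,V^+)$, which is precisely the definition of $\Pi(g;V,V^+)$ in Equation \eqref{eq:def-Pi}.

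There is no real obstacle here; the only point requiring care is matching conventions between the three equivalent descriptions of $\mathcal P_{\mathrm{GPT}\diamond}$: as the polar of the unit ball $B_{\|\cdot\|_\rho}$ (from Proposition \ref{prop:incomp_witness}), as the dual unit ball $\{z : \|z\|_{\rho^\ast}\le 1\}$, and as elements admitting a $z_0 \in K$ with $(z_0,z) \in \mathcal D_{\mathrm{GPT}\diamond}(g;V,V^+)$. Using the first/second descriptions, the duality argument above is immediate.
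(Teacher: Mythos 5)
Your proposal is correct and follows essentially the same route as the paper: the paper's proof of Theorem \ref{thm:pi=gamma=delta} is exactly "Theorem \ref{thm:inclusion_Gamma} and duality of the norms," and you have simply unpacked that duality — identifying $\mathcal P_{\mathrm{GPT}\diamond}(g;V,V^+)$ as the unit ball of $\|\cdot\|_{\rho^*}$, swapping the two suprema, and using that the inner supremum over $\max_i\|\alpha_i\|_{\mathds 1}\le 1$ decouples and evaluates to $\sum_i s_i\|z_i\|_V$ by the duality of the order unit and base norms. The computation is accurate throughout.
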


\begin{proof}
The statement follows from Theorem \ref{thm:inclusion_Gamma} and duality of the norms. 
\end{proof}

\begin{ex} In quantum mechanics $\mathrm{QM}_d$, we see by the relation  (described in  Example \ref{ex:jewel}) of the corresponding  GPT diamond $\mathcal D_{\mathrm{GPT} \diamond, g}$ to the
matrix diamond $\mathcal D_{\diamond,g}(d)$ that we have
\[
\mathcal P_{\mathrm{GPT} \diamond, g}=\{(\rho^{1/2}X_1\rho^{1/2},\dots, \rho^{1/2}X_g\rho^{1/2}), \ X\in \mathcal
D_{\diamond,g}(d),\ \rho\in \mathcal{S}_d\}.
\]
It follows that any element $X$ in the matrix diamond defines a family of incompatibility witnesses. Moreover, some of
these witnesses are strict if and only if we have
\[
\|X\|_\sigma:=\sup_{\rho\in \mathcal S_d} \sum_i\|\rho^{1/2}X_i\rho^{1/2}\|_1>1.
\]
It follows that we can write the  Equation~\eqref{eq:def-Pi} as
\[
\Pi(g, \mathcal M_d^{\mathrm{sa}}, \mathrm{PSD}_d)=\{ (s_1,\dots,s_g)\in [0,1]^g,\ \|s.X\|_\sigma \le 1 \ \forall X\in \mathcal D_{\diamond,g}(d)\}.
\]
Note that  the matrix diamond is precisely the unit ball of the injective norm 
 $\|\cdot\|_\epsilon$ in $\ell^g_1\otimes (\mathcal M^{\mathrm{sa}}_d(\mathbb C), \norm{\cdot}_\infty)$, so that  similarly as in Example \ref{ex:QM_rho} it is 
the maximal matrix convex set $\mathcal W_d^{\max}(B_{\ell^g_1})$. From the results of  Example \ref{ex:QM_rho} and 
duality of $\ell^g_\infty$ and $\ell^g_1$, one would expect that the unit ball of the norm $\|\cdot\|_\sigma$ is
precisely the minimal matrix convex set $\mathcal W^{\min}_d(B_{\ell^g_1})$.
One can prove directly that this is indeed the case.

\end{ex}

\section{Compatibility of  effects in centrally symmetric GPTs } \label{sec:centrally-symmetric-tensor}

While Theorem \ref{thm:effect-tensors} nicely characterizes compatible effects in terms of their associated tensors, it is in general hard to compute the crossnorm $\norm{\cdot}_c$. In this section, we will show that in centrally symmetric GPTs, one can substitute, in certain situations, the norm $\norm{\cdot}_c$ by the \emph{projective tensor norm}. This can be done, for example, in the case of the most incompatible effects, see Remark \ref{rem:min-comp-deg}.

In this section we will assume that $(V, V^+,\mathds 1)$ is a \emph{centrally symmetric} GPT.  In this setting, the vector space of
un-normalized states decomposes as $V = \mathbb R v_0 \oplus \bar V$, and we shall write $V \ni z = (z^\circ, \bar z)$
to denote the vector $z = z^\circ v_0 + \bar z$, with $z^\circ \in \mathbb R$ and $\bar z \in \bar V$. The decomposition is such that $\mathds 1(z) = z^\circ$. Similarly, any
element of $A$ has the form $(\alpha,\bar f)$ for $\alpha=\langle
f,v_0\rangle\in \mathbb R$ and $\bar f\in \bar A$. See also Section \ref{sec:centrally-symmetric-GPTs}.

Let $f_1,\dots,f_g\in A$,  $f_i=(\alpha_i,\bar f_i)$. Then each  $f_i$ is
an effect if and only if $\|\bar f_i\|_{\bar A}\le \min\{\alpha_i,(1-\alpha_i)\}$, which implies that we must have 
$\|\bar f_i\|_{\bar A}\le 1/2$. 
Hence if $f_i$ is an effect, then also $(1/2,\bar f_i)$ is an effect. Effects of the form $(1/2,\bar f)$  are called
\emph{unbiased}. See also Remark \ref{rmk:unbiased-symmetric-effects} below.

For $f=(f_1,\dots, f_g)$, $f_i=(a_i,\bar f_i)$,  we have
\[
\bar\phi^{(f)}= y_f\otimes \mathds 1+ \bar\xi^{(f)},\qquad y_f:=\sum_i(2\alpha_i-1)\check c_i,\quad \bar\xi^{(f)}:=2\sum_i
\check c_i\otimes \bar f_i.
\]
Let us denote by $\|\cdot\|_{\bar\epsilon}$ and $\|\cdot\|_{\bar\pi}$ the injective and projective norm on the space $\ell_\infty^g\otimes (\bar A, \norm{\cdot}_{\bar A})$; importantly, note that $\dim \bar A = \dim A - 1$.

\begin{prop}\label{prop:rho_symmetric} Let $\bar \phi=y\otimes \mathds 1+ \bar \xi$, for some $y\in
\ell_\infty^g$ and $\bar \xi\in \ell_\infty^g\otimes \bar A$. Then 
\begin{align*}
\max\{\|y\|_\infty, \|\bar \xi\|_{\bar\epsilon}\}&\le \|\bar\phi\|_\epsilon\le \|y\|_\infty+ \|\bar \xi\|_{\bar\epsilon}\\
\max\{\|y\|_\infty, \|\bar \xi\|_{\bar \pi}\}&\le \|\bar\phi\|_c \le \|y\|_\infty+ \|\bar \xi\|_{\bar\pi}
\end{align*}

\end{prop}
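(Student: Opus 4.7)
The plan is to treat the $\epsilon$ bounds and the $\rho$ bounds separately, since the $\epsilon$ case is essentially a direct computation while the $\rho$ case requires using the explicit description of $\|\cdot\|_\rho$ from Proposition \ref{prop:cross_norms}.

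For the $\epsilon$ bounds, I would first use the isometric identification $\ell_\infty^g \otimes_\epsilon A \cong \ell_\infty^g(A)$ from the proof of Theorem \ref{thm:effect-tensors}, which identifies $\bar\phi$ with the tuple of coordinates $(y_i\mathds 1 + \bar\xi_i)_{i\in[g]}$, where $\bar\xi = \sum_i \check c_i \otimes \bar\xi_i$. Lemma \ref{lem:order-unit-norm-cs-GPT} then gives $\|y_i\mathds 1 + \bar\xi_i\|_A = |y_i| + \|\bar\xi_i\|_{\bar A}$, so $\|\bar\phi\|_\epsilon = \max_i(|y_i|+\|\bar\xi_i\|_{\bar A})$. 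The claimed two-sided bound is now the elementary inequality $\max\{\max_i|y_i|, \max_i\|\bar\xi_i\|_{\bar A}\} \leq \max_i(|y_i|+\|\bar\xi_i\|_{\bar A}) \leq \max_i|y_i| + \max_i\|\bar\xi_i\|_{\bar A}$, together with the obvious identifications $\|y\|_\infty = \max_i|y_i|$ and $\|\bar\xi\|_{\bar\epsilon} = \max_i\|\bar\xi_i\|_{\bar A}$.

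For the $\rho$ bounds, the lower bound $\|y\|_\infty \leq \|\bar\phi\|_\rho$ is immediate because $\rho$ is a reasonable crossnorm (Theorem \ref{thm:effect-tensors}), so $\|\bar\phi\|_\epsilon \leq \|\bar\phi\|_\rho$ and we can chain with the $\epsilon$ lower bound. For the other lower bound $\|\bar\xi\|_{\bar\pi} \leq \|\bar\phi\|_\rho$, I would take any decomposition $\bar\phi = \sum_j z_j \otimes h_j$ with $\|z_j\|_\infty = 1$ and $h_j \in A^+$ as in Proposition \ref{prop:cross_norms}, and write $h_j = \beta_j \mathds 1 + \bar h_j$ with $\beta_j \geq \|\bar h_j\|_{\bar A}$. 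Projecting onto the $\bar A$-component of $A = \mathbb R\mathds 1 \oplus \bar A$ gives $\bar\xi = \sum_j z_j \otimes \bar h_j$, so that $\|\bar\xi\|_{\bar\pi} \leq \sum_j \|\bar h_j\|_{\bar A} \leq \sum_j \beta_j = \mathds 1(\sum_j h_j) = \|\sum_j h_j\|_\mathds 1$; taking the infimum yields the bound.

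For the upper bound $\|\bar\phi\|_\rho \leq \|y\|_\infty + \|\bar\xi\|_{\bar\pi}$, I would use the triangle inequality $\|\bar\phi\|_\rho \leq \|y\otimes\mathds 1\|_\rho + \|\bar\xi\|_\rho$, where $\|y\otimes\mathds 1\|_\rho = \|y\|_\infty$ since $\rho$ is a reasonable crossnorm and $\|\mathds 1\|_A = 1$. The key step is then $\|\bar\xi\|_\rho \leq \|\bar\xi\|_{\bar\pi}$, which I would prove by fixing a decomposition $\bar\xi = \sum_k u_k \otimes \bar v_k$ in $\ell_\infty^g \otimes \bar A$ and splitting each $\bar v_k$ as $\bar v_k = h_k^+ - h_k^-$ with $h_k^\pm := \tfrac{1}{2}\|\bar v_k\|_{\bar A}\mathds 1 \pm \tfrac{1}{2}\bar v_k \in A^+$. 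After normalizing the $u_k$ to unit $\ell_\infty$ norm, this produces a Proposition \ref{prop:cross_norms} decomposition of $\bar\xi$ with
\begin{equation*}
\sum_k \|u_k\|_\infty(h_k^+ + h_k^-) = \sum_k \|u_k\|_\infty\|\bar v_k\|_{\bar A}\mathds 1,
\end{equation*}
which has $\mathds 1$-norm $\sum_k \|u_k\|_\infty\|\bar v_k\|_{\bar A}$. Taking the infimum over decompositions gives $\|\bar\xi\|_\rho \leq \|\bar\xi\|_{\bar\pi}$, completing the proof. The main subtlety is the upper bound $\|\bar\xi\|_\rho \leq \|\bar\xi\|_{\bar\pi}$, since one must engineer a decomposition satisfying the positivity constraint in Proposition \ref{prop:cross_norms} from an arbitrary projective decomposition of $\bar\xi$; the Jordan-type splitting $\bar v = h^+ - h^-$ available because $\mathds 1$ is an order unit in a centrally symmetric GPT is exactly the trick that makes this work.
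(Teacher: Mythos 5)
Your proof is correct and follows essentially the same route as the paper's: compute $\|\bar\phi\|_\epsilon$ coordinatewise via $\ell_\infty^g\otimes_\epsilon A\cong\ell_\infty^g(A)$, and handle $\|\cdot\|_\rho$ through the decomposition of Proposition \ref{prop:cross_norms} for the lower bound and the triangle inequality for the upper bound. The one place you genuinely diverge is the step $\|\bar\xi\|_\rho\le\|\bar\xi\|_{\bar\pi}$: the paper gets it abstractly from the chain $\|\bar\xi\|_\rho\le\|\bar\xi\|_{\pi}\le\|\bar\xi\|_{\bar\pi}$ (reasonable crossnorm plus the isometric inclusion $\bar A\hookrightarrow A$), whereas you construct an admissible Proposition \ref{prop:cross_norms} decomposition explicitly via the Jordan-type splitting $\bar v_k=h_k^+-h_k^-$ with $h_k^\pm=\tfrac12\|\bar v_k\|_{\bar A}\mathds 1\pm\tfrac12\bar v_k$. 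Your version is more hands-on and makes the mechanism visible; the paper's is shorter but leans on Proposition \ref{prop:reasonable-cross-norms}. One small correction: in your lower-bound chain, the final step should read $\sum_j\beta_j=\langle v_0,\sum_j h_j\rangle\le\|\sum_j h_j\|_{\mathds 1}$, not an equality --- by Lemma \ref{lem:order-unit-norm-cs-GPT} the order unit norm of $\sum_j h_j$ is $\sum_j\beta_j+\|\sum_j\bar h_j\|_{\bar A}$, which exceeds $\sum_j\beta_j$ unless the $\bar A$-parts cancel. Since you only need the inequality $\le$ before taking the infimum, the argument goes through unchanged.
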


\begin{proof} Let $y=\sum_i y_i\check c_i$ and $\bar\xi=\sum_i \check c_i\otimes \bar g_i$. Then 
\[
\|\bar \phi\|_\epsilon=\max_i \|(y_i, \bar g_i)\|_{\mathds 1}=\max_i(|y_i|+\|\bar
g_i\|_{\bar A})
\]
since $\ell_\infty^g \otimes_\epsilon X \cong \ell_\infty^g(X)$ \cite[Example 3.3]{Ryan2002} and the inequalities for the injective norm are immediate.

Next, assume  that $\|\bar\phi\|_c \le 1$. By Proposition \ref{prop:cross_norms} that there are some $z_j\in \ell_\infty^g$,
$\|z_j\|_\infty=1$
and $h_j=a_j\mathds 1+ \bar h_j$ such that $\|\bar h_j\|_{\bar A}\le a_j$, $\|\sum_j \bar h_j\|_{\bar A}\le 1-\sum_j a_j$ and 
\[
\bar\phi= \sum_j z_j \otimes h_j=\sum_j a_j z_j \otimes \mathds 1+ \sum_j z_j\otimes \bar h_j.
\]
It follows that 
\[
y=\sum_j a_jz_j,\qquad \bar\xi=\sum_j z_j\otimes \bar h_j.
\]
Hence $\|y\|_\infty\le \sum_j a_j\le 1$, moreover, $\sum_j \|\bar h_j\|_{\bar A}\le \sum_j a_j\le 1$, so that  
$\|\bar \xi\|_{\bar\pi}\le 1$. This proves the first inequality for $\|\cdot\|_c$. For the second inequality, note that 
 since $\rho$ is a crossnorm, we have
\[
\|\bar \phi\|_c\le \|y\otimes \mathds 1\|_c+ 
\|\bar \xi\|_c\le \|y\|_\infty+ \|\bar\xi\|_{\bar\pi}.
\]
In the last inequality we also used the fact that
\[
\|\bar\xi\|_{\pi}\le \|\bar\xi\|_{\bar\pi}
\]
 since the inclusion $\bar A \to A$ is an isometry.
 \end{proof}

\begin{cor}\label{cor:unbiased}
We have, for a tuple $f \in A^g$:
\begin{equation}\label{eq:effect-tensor-norm}
	f \text{ effects } \implies \|\bar\xi^{(f)}\|_{\bar\epsilon}  = 
2 \max_{i=1}^g \|\bar f_i\|_{\bar A} \leq 1
\end{equation}
and 
$$f \text{ compatible effects } \implies \|\bar\xi^{(f)}\|_{\bar\pi} \leq 1.$$
If, moreover, the effects $f_i$ are {unbiased}, the reverse implications hold.
\end{cor}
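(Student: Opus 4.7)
The plan is to derive the corollary as a direct specialization of Proposition \ref{prop:rho_symmetric} applied to the decomposition $\bar\phi^{(f)} = y_f \otimes \mathds 1 + \bar\xi^{(f)}$, combined with the norm characterizations of effects and compatible effects from Theorem \ref{thm:effect-tensors}. First I would invoke the isomorphism $\ell_\infty^g \otimes_\epsilon \bar A \cong \ell_\infty^g(\bar A)$ (used already in the proof of Proposition \ref{prop:rho_symmetric}) to read off the equality
\[
\|\bar\xi^{(f)}\|_{\bar\epsilon} = \Bigl\| \sum_i \check c_i \otimes 2\bar f_i \Bigr\|_{\bar\epsilon} = 2 \max_{i\in[g]} \|\bar f_i\|_{\bar A}.
\]
The constraint $\|\bar f_i\|_{\bar A} \leq \min\{\alpha_i, 1-\alpha_i\} \leq 1/2$ that characterizes effects in centrally symmetric GPTs (noted at the opening of this section) then delivers the upper bound $\leq 1$.

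For the second forward implication I would use the lower bound from Proposition \ref{prop:rho_symmetric}, namely $\|\bar\xi^{(f)}\|_{\bar\pi} \leq \|\bar\phi^{(f)}\|_\rho$; compatibility of $f$ together with Theorem \ref{thm:effect-tensors}(2) gives $\|\bar\phi^{(f)}\|_\rho \leq 1$, whence the claim.

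For the reverse implications under the unbiased assumption ($\alpha_i = 1/2$ for all $i$), the key simplification is that $y_f = 0$, so $\bar\phi^{(f)} = \bar\xi^{(f)}$ as tensors. Since the canonical inclusion $\bar A \hookrightarrow A$ is isometric (by Lemma \ref{lem:order-unit-norm-cs-GPT}), the injective norms coincide, $\|\bar\phi^{(f)}\|_\epsilon = \|\bar\xi^{(f)}\|_{\bar\epsilon}$, and Proposition \ref{prop:rho_symmetric}'s upper bound with $\|y_f\|_\infty = 0$ gives $\|\bar\phi^{(f)}\|_\rho \leq \|\bar\xi^{(f)}\|_{\bar\pi}$. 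Applying the converse direction of Theorem \ref{thm:effect-tensors} then yields each reverse implication (noting that the bound $\|\bar\xi^{(f)}\|_{\bar\pi} \leq 1$ implies $\|\bar\xi^{(f)}\|_{\bar\epsilon} \leq 1$, so effect-hood comes for free in the compatibility case).

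Since the statement is essentially a transcription of Proposition \ref{prop:rho_symmetric} in the unbiased specialization, there is no substantive obstacle; the only small point to verify is the isometric identification of $\bar A$ inside $(A, \|\cdot\|_\mathds 1)$, which the ambient proposition already implicitly relies on.
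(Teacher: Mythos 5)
Your proposal is correct and follows essentially the same route as the paper: both derive the forward implications from Theorem \ref{thm:effect-tensors} together with the sandwich inequalities of Proposition \ref{prop:rho_symmetric}, and both handle the unbiased case by observing $y_f=0$ and using the isometric identifications $\|\bar\xi\|_\epsilon=\|\bar\xi\|_{\bar\epsilon}$ and $\|\bar\xi\|_\rho=\|\bar\xi\|_{\bar\pi}$ on $\ell_\infty^g\otimes\bar A$. Your write-up simply makes explicit a few steps (the $\ell_\infty^g(\bar A)$ computation of $\|\bar\xi^{(f)}\|_{\bar\epsilon}$ and the effect bound $\|\bar f_i\|_{\bar A}\leq 1/2$) that the paper leaves to the reader.
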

\begin{proof} The implications follows easily by Theorem \ref{thm:effect-tensors} and Proposition \ref{prop:rho_symmetric}. If the $f_i$ are unbiased, $y_f = 0$. We moreover have
for $\bar \xi\in \ell_\infty^g\otimes \bar A$ that
\[
\|\bar\xi\|_\epsilon=\|\bar\xi\|_{\bar\epsilon}\ \text{ and } \ \|\bar\xi\|_c=\|\bar\xi\|_\pi=\|\bar\xi\|_{\bar\pi}.
\]
Thus, Proposition \ref{prop:rho_symmetric} implies the last statement. See also
\cite[Proposition 2.25]{lami2018non} or \cite[Proposition 27]{lami2018ultimate}. 

\end{proof}

\begin{remark} \label{rmk:unbiased-symmetric-effects}
	Note that the reverse implication in Equation~\eqref{eq:effect-tensor-norm} is true if and only if the effect $f$ is unbiased (we are considering here $g=1$). Indeed, an element  $f = f_0 \mathds 1 + \bar f \in A$ is an effect if and only if 
	\begin{align*}
		f \in A^+ &\iff \|\bar f\|_{\bar A} \leq f_0 \qquad \qquad \text{and}\\
		\mathds 1 - f \in A^+ &\iff \|\bar f\|_{\bar A} \leq 1-f_0,
	\end{align*}
	which give together $\|\bar f\|_{\bar A} \leq \min(f_0, 1-f_0)$. However, Equation~\eqref{eq:effect-tensor-norm} gives $\|\bar f\|_{\bar A} \leq 1/2$ which is identical to the previous condition if and only if $f_0 = \langle f, v_0 \rangle = 1/2$, i.e.~if the effect is unbiased. 
\end{remark}

\bigskip

We now turn to incompatibility witnesses. Let us denote
\begin{align*}
\mathcal P'_{\mathrm{GPT}\diamond}(g;V,V^+)&:=\{ (z_1,\dots,z_g)\in \mathcal P_{\mathrm{GPT}\diamond}(g;V,V^+),\
z_i=(0,\bar z_i),\ i\in [g]\}\\
&= (0\oplus \bar V)^g\cap\mathcal P_{\mathrm{GPT}\diamond}(g;V,V^+).
\end{align*}

\begin{lem}\label{lem:witness_central} Let $\bar z=(\bar z_1,\dots,\bar z_g)\in \bar V^g$ and let $z=((0,\bar
z_1),\dots,(0,\bar z_g))$. Then 
\[
\|z\|_{c^*}=\|z\|_\epsilon=\|\bar z\|_{\bar\epsilon},\quad \|z\|_\pi=\|\bar z\|_{\bar\pi}
\]
here $\epsilon$, resp.~$\bar \epsilon$, is the injective crossnorm in $\ell_1^g\otimes (V, \norm{\cdot}_V)$, resp.~$\ell_1^g\otimes(\bar V, \norm{\cdot}_{\bar V})$, similarly for $\pi$ and $\bar\pi$. In particular, 
\[
z\in \mathcal P'_{\mathrm{GPT}\diamond}(g;V,V^+) \iff
(v_0,(0,\bar z_1),\dots,(0,\bar z_g))\in \mathcal D_{\mathrm{GPT}\diamond}(g;V,V^+)\iff \|\bar z\|_{\bar\epsilon}\le 1.
\]

\end{lem}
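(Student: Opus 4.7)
The plan is to handle the two elementary identities $\|z\|_\pi = \|\bar z\|_{\bar\pi}$ and $\|z\|_\epsilon = \|\bar z\|_{\bar\epsilon}$ directly from the closed forms in \eqref{eq:projective-norm} and \eqref{eq:injective-norm}, then establish the key identity $\|z\|_{\rho^*} = \|\bar z\|_{\bar\epsilon}$ by a duality argument built on Proposition \ref{prop:rho_symmetric}, and finally read off the characterization of $\mathcal P'_{\mathrm{GPT}\diamond}(g;V,V^+)$.

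For the elementary part, the base-norm identity in a centrally symmetric GPT gives $\|z_i\|_V = \|(0,\bar z_i)\|_V = \|\bar z_i\|_{\bar V}$, so \eqref{eq:projective-norm} immediately yields $\|z\|_\pi = \sum_i \|\bar z_i\|_{\bar V} = \|\bar z\|_{\bar\pi}$. For the injective norm I would combine \eqref{eq:injective-norm} with $\sum_i \epsilon_i z_i = (0,\sum_i \epsilon_i \bar z_i)$ to obtain $\|z\|_\epsilon = \sup_{\epsilon \in \{\pm 1\}^g} \|\sum_i \epsilon_i \bar z_i\|_{\bar V} = \|\bar z\|_{\bar\epsilon}$.

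The core of the argument is $\|z\|_{\rho^*} = \|\bar z\|_{\bar\epsilon}$. I would dualize:
\[
\|z\|_{\rho^*} = \sup\{\langle z, \bar\phi \rangle : \bar\phi \in \ell_\infty^g \otimes A,\ \|\bar\phi\|_\rho \leq 1\},
\]
decompose $\bar\phi = y \otimes \mathds 1 + \bar\xi$ as in Proposition \ref{prop:rho_symmetric}, and observe that
\[
\langle z, y \otimes \mathds 1 \rangle = \sum_i y_i\,\mathds 1(z_i) = 0,
\]
because each $z_i$ has vanishing first component, so $\langle z, \bar\phi \rangle = \langle \bar z, \bar\xi \rangle$. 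Proposition \ref{prop:rho_symmetric} also gives $\|\bar\xi\|_{\bar\pi} \leq \|\bar\phi\|_\rho$, whence the supremum is bounded above by $\sup\{\langle \bar z, \bar\xi \rangle : \|\bar\xi\|_{\bar\pi} \leq 1\} = \|\bar z\|_{\bar\epsilon}$ via the standard Banach-space duality $(\ell_\infty^g \otimes_{\bar\pi} \bar A)^* \cong \ell_1^g \otimes_{\bar\epsilon} \bar V$. The reverse inequality follows by taking $y = 0$ and invoking the identity $\|\bar\xi\|_\rho = \|\bar\xi\|_{\bar\pi}$ for $\bar\xi \in \ell_\infty^g \otimes \bar A$ recorded in the proof of Corollary \ref{cor:unbiased}.

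For the final chain of equivalences, Remark \ref{rem:GPT-diamond} applied to $(v_0, z_1, \ldots, z_g)$ gives membership in $\mathcal D_{\mathrm{GPT}\diamond}(g;V,V^+)$ iff $\|\sum_i \epsilon_i \bar z_i\|_{\bar V} \leq 1$ for every $\epsilon \in \{\pm 1\}^g$, i.e.\ iff $\|\bar z\|_{\bar\epsilon} \leq 1$. Combined with $\|z\|_{\rho^*} = \|\bar z\|_{\bar\epsilon}$, this also identifies the unit ball of $\|\cdot\|_{\rho^*}$ intersected with $(0 \oplus \bar V)^g$ with the set $\{\bar z : \|\bar z\|_{\bar\epsilon} \leq 1\}$, i.e.\ with $\mathcal P'_{\mathrm{GPT}\diamond}(g;V,V^+)$. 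If one prefers to route through Proposition \ref{prop:incomp_witness}, the existence of \emph{some} $z_0 \in K$ with $(z_0,z) \in \mathcal D_{\mathrm{GPT}\diamond}$ reduces to the case $z_0 = v_0$ by the symmetry $\epsilon \leftrightarrow -\epsilon$: writing $z_0 = (1, \bar z_0)$, the triangle inequality applied to $\|\bar z_0 \pm \sum_i \epsilon_i \bar z_i\|_{\bar V} \leq 1$ yields $\|\sum_i \epsilon_i \bar z_i\|_{\bar V} \leq 1$ for all $\epsilon$. The main obstacle is the $\rho^*$-identity; once one notices that the scalar part $y \otimes \mathds 1$ is annihilated by $z$, the remainder is a mechanical application of Proposition \ref{prop:rho_symmetric} and Banach-space duality.
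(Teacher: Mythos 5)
Your proof is correct, and the elementary identities for $\|\cdot\|_\pi$, $\|\cdot\|_\epsilon$ as well as the final chain of equivalences coincide with the paper's. Where you genuinely diverge is in how the key identity $\|z\|_{\rho^*}=\|\bar z\|_{\bar\epsilon}$ is obtained. The paper never evaluates $\|z\|_{\rho^*}$ directly: it runs the one-way chain $\|\bar z\|_{\bar\epsilon}\le 1 \iff v_0+\sum_i\epsilon_i(0,\bar z_i)\in V^+ \iff (v_0,z)\in\mathcal D_{\mathrm{GPT}\diamond}(g;V,V^+) \implies z\in\mathcal P'_{\mathrm{GPT}\diamond}(g;V,V^+)\iff\|z\|_{\rho^*}\le 1$ using Proposition \ref{prop:incomp_witness}, and closes the loop with the crossnorm inequality $\|z\|_\epsilon\le\|z\|_{\rho^*}$, so the norm equality falls out of the coincidence of unit balls by homogeneity. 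You instead compute $\|z\|_{\rho^*}$ as a supremum over the $\rho$-unit ball, note that the scalar component $y\otimes\mathds 1$ is annihilated because $\mathds 1(z_i)=z_i^\circ=0$, and then squeeze with the two-sided estimate of Proposition \ref{prop:rho_symmetric} together with $\pi$--$\epsilon$ duality between $\ell_\infty^g\otimes_{\bar\pi}\bar A$ and $\ell_1^g\otimes_{\bar\epsilon}\bar V$ (the lower bound coming from the choice $y=0$ and $\|\bar\xi\|_\rho=\|\bar\xi\|_{\bar\pi}$, which is indeed just Proposition \ref{prop:rho_symmetric} at $y=0$). Your route gives a self-contained analytic identity that bypasses the witness machinery entirely; the paper's route is shorter because it recycles Proposition \ref{prop:incomp_witness} and only needs that $\rho^*$ is a reasonable crossnorm. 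Both arguments are non-circular, since Proposition \ref{prop:rho_symmetric} precedes the lemma and does not depend on it.
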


\begin{proof} We have
\[
\|z\|_\pi=\sum_{i=1}^g\|z_i\|_V=\sum_{i=1}^g\|\bar z_i\|_{\bar V}=\|\bar z\|_{\bar \pi}.
\]
Further, 
\begin{align*}
\|z\|_\epsilon\le 1 &\iff \left\|\sum_{i=1}^g\epsilon_i (0,\bar z_i)\right\|_V=\left\|\sum_{i=1}^g\epsilon_i\bar z_i\right\|_{\bar V}\le 1,\quad \forall
\epsilon\in \{\pm1\}^g\\
&\iff \|\bar z\|_{\bar\epsilon}\le1.
\end{align*}
We have used Equation \eqref{eq:injective-norm}. On the other hand, using Proposition \ref{prop:incomp_witness}, for any $\epsilon\in \{\pm1\}^g$,
\begin{align*}
\left\|\sum_{i=1}^g\epsilon_i\bar z_i\right\|_{\bar V}\le 1,\quad \forall \epsilon\in \{\pm1\}^g
 &\iff v_0+\sum_i \epsilon_i (0,\bar z_i)\in V^+, \quad \forall \epsilon\in \{\pm1\}^g\\
&\iff (v_0,(0,\bar z_1),\dots,(0,\bar z_g))\in \mathcal D_{\mathrm{GPT}\diamond}(g;V,V^+)\\
&\implies z\in \mathcal P'_{\mathrm{GPT}\diamond}(g;V,V^+)\iff \|z\|_{c^*}\le 1.
\end{align*}
Since $\|\cdot\|_{c^*}$ is a crossnorm, we have 
\[
z\in \mathcal P'_{\mathrm{GPT}\diamond}(g;V,V^+)\iff \|z\|_{c^*}\le 1\implies \|z\|_\epsilon \le 1,
\]
this finishes the proof.
\end{proof}

Remarkably, we have the following result where we write $\bar A$ and $\bar V$ for the Banach spaces $(\bar A, \norm{\cdot}_{\bar A})$ and $(\bar V, \norm{\cdot}_{\bar V})$.

\begin{thm} \label{thm:pi=piprime}
For a centrally symmetric GPT $(V, V^+, \mathds 1)$, we have 
\begin{align*}
\Gamma(g;V,V^+)=\Pi'(g; V, V^+) &:= \{s\in [0,1]^g: \|s.\bar z\|_{\ell_1^g\otimes_\pi \bar V} \leq 1, \ \forall 
 \|\bar z\|_{\ell_1^g\otimes_\epsilon\bar V}\le 1 \}\\
&=\{s\in [0,1]^g: \|s.\bar \phi\|_{\ell^g_\infty\otimes_\pi\bar A}\le 1,\ \forall 
\|\bar\phi\|_{\ell_\infty^g\otimes_\epsilon\bar A}\le 1\}.
\end{align*}
In particular, the compatibility degree is
\[
\gamma(g;V,V^+)=1/\rho(\ell_\infty^g, \bar A) = 1/\rho(\ell_1^g, \bar V),
\]
where the quantity $\rho$ is as in Equation \eqref{eq:rho-quotient}:
\[
\rho(\ell_\infty^g,\bar A)= \max_{\bar\phi}\frac{\|\bar\phi\|_{\ell_{\infty}^g\otimes_\pi\bar
A}}{\|\bar\phi\|_{\ell_{\infty}^g\otimes_\epsilon\bar A}}
\]
\end{thm}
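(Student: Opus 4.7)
The plan is to prove the central equality $\Gamma(g;V,V^+) = \Pi'(g;V,V^+)$ and then extract the compatibility-degree formula by specializing to a constant vector $s$. The equivalence of the two presentations of $\Pi'$ in the statement is an immediate application of standard Banach-space duality: using $\bar A = \bar V^*$ together with the identifications $(\ell_1^g \otimes_\pi \bar V)^* \cong \ell_\infty^g \otimes_\epsilon \bar A$ and $(\ell_1^g \otimes_\epsilon \bar V)^* \cong \ell_\infty^g \otimes_\pi \bar A$, both conditions amount to $\|T_s\|_{\epsilon \to \pi} \leq 1$ for the diagonal scaling operator $T_s: \bar z \mapsto s.\bar z$, which is self-adjoint under these identifications. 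The same duality produces $\rho(\ell_\infty^g, \bar A) = \rho(\ell_1^g, \bar V)$ for free by applying the identity $\|\mathrm{id}\|_{\epsilon\to\pi} = \|\mathrm{id}^*\|_{\pi^*\to\epsilon^*}$.

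For $\Gamma \subseteq \Pi'$, Theorem \ref{thm:pi=gamma=delta} combined with Lemma \ref{lem:witness_central} gives the inclusion at once: any tuple $\bar z$ with $\|\bar z\|_{\ell_1^g\otimes_\epsilon\bar V}\le 1$ lifts to a witness $((0,\bar z_i))_i \in \mathcal P'_{\mathrm{GPT}\diamond}(g;V,V^+) \subseteq \mathcal P_{\mathrm{GPT}\diamond}(g;V,V^+)$ with matching norms $\|z_i\|_V = \|\bar z_i\|_{\bar V}$, so the $\Pi'$ inequalities are a subset of the $\Pi = \Gamma$ inequalities. For the reverse inclusion $\Pi' \subseteq \Gamma$, I fix $s \in \Pi'$; by Theorem \ref{thm:effect-tensors} the task reduces to showing $\|\bar\phi^{(\tilde f)}\|_\rho \le 1$ for every tuple of effects $f \in [0,\mathds 1]^g$, with $\tilde f_i = s_i f_i + (1-s_i)\mathds 1/2$. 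Since $f \mapsto \|\bar\phi^{(\tilde f)}\|_\rho$ is a norm composed with an affine map, it is a convex function on the compact convex domain $[0,\mathds 1]^g$, and the Bauer maximum principle allows me to verify the bound only at extreme tuples.

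The heart of the argument is the analysis of these extreme tuples. A short direct computation using the decomposition $A = \mathbb R\mathds 1 \oplus_1 \bar A$ (and the parametrization $f = (\alpha,\bar f)$ with $\|\bar f\|_{\bar A}\le\min(\alpha,1-\alpha)$) identifies the extreme effects in a centrally symmetric GPT as precisely $\{0,\mathds 1\} \cup \{(1/2, \bar e/2) : \bar e \in \mathrm{ext}(B_{\bar A})\}$. Given an extreme tuple, I partition $[g] = J \sqcup I$ into trivial coordinates ($f_i \in \{0,\mathds 1\}$) and sharp unbiased coordinates ($f_i = (1/2,\bar e_i/2)$). For $i\in J$, the noisy effect $\tilde f_i$ is a scalar multiple of $\mathds 1$ and is therefore universally compatible (one can always augment a joint measurement by an independent product), so compatibility of $\tilde f$ is equivalent to compatibility of the unbiased subtuple $(\tilde f_i)_{i\in I}$. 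Corollary \ref{cor:unbiased} translates this into $\|\sum_{i\in I} s_i\check c_i \otimes \bar e_i\|_{\ell_\infty^I \otimes_\pi \bar A} \le 1$, and since the embedding $\ell_\infty^I \hookrightarrow \ell_\infty^g$ (equipped with a norm-one complementary projection) is isometric for projective tensor products, this is the same as $\|s.\bar\xi\|_{\ell_\infty^g\otimes_\pi\bar A}\le 1$ for the zero-padded $\bar\xi$. The padded $\bar\xi$ satisfies $\|\bar\xi\|_{\ell_\infty^g\otimes_\epsilon\bar A} = \max_i\|\bar e_i\|_{\bar A} \le 1$, so the bound is implied by $s\in\Pi'$. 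Setting $s = (\gamma,\ldots,\gamma)$ in $\Gamma = \Pi'$ produces $\gamma(g;V,V^+) = 1/\rho(\ell_\infty^g,\bar A)$, and the second equality follows from Step 1. I expect the main obstacle to be the classification of extreme effects together with the trivial-coordinate reduction; both are intrinsic to the centrally symmetric structure and are precisely what forces the compatibility problem for arbitrary effects to collapse to the unbiased case, where $\|\cdot\|_\rho$ coincides with the projective norm.
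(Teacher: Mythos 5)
Your proposal is correct, but the key inclusion is established by a genuinely different route from the paper's. The easy direction and the duality bookkeeping coincide with the paper's argument: $\Gamma=\Pi\subseteq\Pi'$ follows from Theorem \ref{thm:pi=gamma=delta} together with Lemma \ref{lem:witness_central}, and the two presentations of $\Pi'$ (and the identity $\rho(\ell_\infty^g,\bar A)=\rho(\ell_1^g,\bar V)$) are the standard $\pi$--$\epsilon$ duality. For the converse, the paper works entirely on the witness (dual) side: it introduces $\beta_s=\max_{z\in\mathcal P_{\mathrm{GPT}\diamond}(g;V,V^+)}\sum_i s_i\|z_i\|_V$ and its primed variant, takes a maximizing witness, partitions the coordinates according to whether the scalar part $z_i^\circ$ or the traceless part $\bar z_i$ dominates, renormalizes the traceless block into $\mathcal P'_{\mathrm{GPT}\diamond}$ via Lemma \ref{lem:P-subset-0}, and closes with a concavity/interpolation property of $\beta'$. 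You instead argue on the primal (effect) side: since $f\mapsto\|s.\bar\varphi^{(f)}\|_\rho$ is a norm composed with an affine map, its maximum over the compact convex set of effect tuples is attained at an extreme tuple; the decomposition $\|\phi\|_A=|\phi_0|+\|\bar\phi\|_{\bar A}$ of Lemma \ref{lem:order-unit-norm-cs-GPT} shows the extreme effects are exactly $0$, $\mathds 1$, and the sharp unbiased effects $(1/2,\bar e/2)$ with $\bar e$ extreme in $B_{\bar A}$; trivial coordinates are discarded by tensoring a joint measurement with an independent coin; and the surviving unbiased block is handled by Corollary \ref{cor:unbiased} plus the fact that $\ell_\infty^I$ is $1$-complemented in $\ell_\infty^g$, so zero-padding changes neither the injective nor the projective norm. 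Both arguments are complete. Yours buys a cleaner conceptual picture — it makes Remark \ref{rem:min-comp-deg} (the worst case is unbiased) an explicit intermediate step and isolates exactly where central symmetry enters, namely in the classification of extreme effects — at the cost of two routine verifications you should write out in full: that every extreme point of $[0,\mathds 1]$ has the claimed form (the equality case of the concavity of $\alpha\mapsto\min(\alpha,1-\alpha)$ plus extremality of $\bar e$), and the $1$-complementation argument for the padded projective norm. The paper's witness decomposition avoids any extreme-point analysis and works coordinate-by-coordinate on a single maximizer, which is why it needs the auxiliary Lemma \ref{lem:P-subset-0} and the concavity of $\beta'$ instead.
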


\begin{proof} By Theorem \ref{thm:pi=gamma=delta}, to prove the first equality it is enough to show that for a centrally symmetric GPT we have
 \[
\Pi(g;V,V^+)=\Pi'(g;V,V^+).
\]
 The inclusion $\Pi\subseteq \Pi'$ is clear from Lemma \ref{lem:witness_central}. 
To prove the reverse inclusion, let us define, for a tuple $s \in [0,1]^g$, 
\begin{align*}
\beta_s &:=  \max_{z \in \mathcal P_{\mathrm{GPT} \diamond}(g;V,V^+)} \|s.z\|_{\ell_1^g\otimes_\pi V}=
\max_{z \in \mathcal P_{\mathrm{GPT} \diamond}(g;V,V^+)} \sum_{i=1}^g s_i \|z_i\|_V\\
\beta'_s &:= \max_{z \in \mathcal P'_{\mathrm{GPT} \diamond}(g;V,V^+)}\|s.z\|_{\ell_1^g\otimes_\pi V}= \max_{\|\bar
z\|_{\ell_1^g\otimes_\epsilon \bar V} \leq 1}  \sum_{i=1}^g s_i \|\bar z_i\|_{\bar V}.
\end{align*}
Obviously, $s \in \Pi(g; V, V^+)$ if and only if $\beta_s \leq 1$, the same being true for the primed variants. We shall
prove that $\beta_s \leq \beta'_s$, for all $s \in [0,1]^g$. To this end, fix some $s$ and let $z \in \mathcal
P_{\mathrm{GPT} \diamond}(g;V,V^+)$, $z_i=(z_i^\circ,\bar z_i)$,  achieve the maximum in $\beta_s$. We consider the set partition $[g] = I \sqcup J$, where $i \in I \iff z_i^\circ > \|\bar z_i \|_{\bar V}$. We have thus
\begin{equation}\label{eq:beta-s}
\beta_s = \sum_{i \in I} s_i|z_i^\circ| + \sum_{j \in J} s_j\|\bar z_j\|_{\bar V}.
\end{equation}
Let us denote by $s_I$, resp.~$s_J$, the restriction of the $g$-tuple $s$ to the index set $I$, resp.~$J$. Putting 
$$\lambda:= \sum_{i \in I} |z_i^\circ|,$$
we have $\sum_{i \in I} s_i|z_i^\circ| \leq \lambda \|s_I\|_\infty\leq \lambda \beta'_{s_I}$, where the last inequality
follows from the fact that for any $\bar v\in \bar V$ with $\|\bar v\|_{\bar V}\leq1$, we have $\|(0,\dots,0,\bar v,0,\dots,0)\|_{\bar\epsilon}\le 1$ .

 Note that $\lambda \leq 1$, since $z \in \mathcal P_{\mathrm{GPT} \diamond}(g;V,V^+)$ implies using Proposition \ref{prop:incomp_witness} that $z_0 + \sum_{i = 1}^g \epsilon_i z_i \in V^+$ for some $z_0 \in K$ and all $\epsilon \in \{\pm 1\}^g$. Therefore, an application of $\mathds 1$ yields $\sum_{i = 1}^g |z_i^\circ| \leq 1$. 

Let us now focus on the second term in the RHS of Equation \eqref{eq:beta-s}. First, note that if the set $J$ is empty we are
done: $\beta_s \leq \lambda \beta'_s \leq \beta'_s$. Assume now $\lambda \in [0,1)$. Using Lemma \ref{lem:P-subset-0}
below, we have that 
$$\frac{1}{1-\lambda}\left( (0,\bar z_j) \right)_{j \in J} \in \mathcal P'_{\mathrm{GPT} \diamond}(|J|;V,V^+),$$
and thus 
$$\sum_{j \in J} s_j\|\bar z_j\|_{\bar V} \leq (1-\lambda)\beta'_{s_J}$$
by the definition of $\beta'_{s_J}$. We have proven, up to this point, that $\beta_s \leq \lambda \beta'_{s_I} + (1-\lambda)\beta'_{s_J}$. To conclude, we need to show that the function $\beta'_\cdot$ is ``concave'', which we do next. Let $\bar x_0, \bar x_1, \ldots, \bar x_{|I|} \in \bar V$ be such that $\|\bar x_0\|_{\bar V} \leq 1$ and
\begin{align*}
&\left( (1,\bar x_0), (0, \bar x_1), \ldots, (0, \bar x_{|I|}) \right) \in \mathcal D_{\mathrm{GPT}\diamond}(|I|; V, V^+) \qquad \text{and}\\
&\beta'_{s_I} = \sum_{i = 1}^{|I|} s_i \|\bar x_i\|_{\bar V}.
\end{align*}
Consider similar elements $\bar y_0, \bar y_1, \ldots, \bar y_{|J|} \in \bar V$. The claim follows from Proposition \ref{prop:incomp_witness} and the following fact: 
$$\left( (1,\lambda \bar x_0 + (1-\lambda)\bar y_0), \lambda(0, \bar x_1), \ldots, \lambda(0, \bar x_{|I|}), (1-\lambda)(0, \bar y_1), \ldots, (1-\lambda)(0, \bar y_{|J|}) \right) \in \mathcal D_{\mathrm{GPT}\diamond}(g;V, V^+).$$
\end{proof}

\begin{lem}\label{lem:P-subset-0}
Let $z \in \mathcal P_{\mathrm{GPT} \diamond}(g;V,V^+)$, and consider a subset $J \subseteq [g]$ such that $\sum_{i \notin J} |z_i^\circ| < 1$. Then, 
$$\frac{1}{1-\sum_{i \notin J} |z_i^\circ|}\left( (0,\bar z_j) \right)_{j \in J} \in \mathcal P^\prime_{\mathrm{GPT} \diamond}(|J|;V,V^+).$$
\end{lem}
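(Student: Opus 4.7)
My plan is to unpack both the hypothesis and the conclusion using Proposition~\ref{prop:incomp_witness} and Lemma~\ref{lem:witness_central}, and then pass between them by a sign-choice plus triangle-inequality trick. Write $z_i = (z_i^\circ, \bar z_i)$ and $\mu := \sum_{i\notin J}|z_i^\circ| < 1$. By Proposition~\ref{prop:incomp_witness}, the hypothesis gives a state $z_0 = (1,\bar z_0) \in K$ such that
$$z_0 + \sum_{i=1}^g \eta_i z_i \in V^+\qquad \forall\,\eta \in \{\pm 1\}^g.$$
By Lemma~\ref{lem:witness_central}, the desired conclusion is equivalent to the injective-norm bound
$$\left\|\sum_{j\in J} \epsilon_j \bar z_j\right\|_{\bar V} \leq 1-\mu \qquad \forall\,\epsilon \in \{\pm 1\}^{|J|}.$$
So the whole task reduces to proving this inequality from the positivity condition above, which in centrally symmetric form reads: for every $\eta\in \{\pm 1\}^g$, $\bigl\|\bar z_0 + \sum_i \eta_i \bar z_i\bigr\|_{\bar V} \leq 1 + \sum_i \eta_i z_i^\circ$.

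The key step is the choice of signs on $J^c$. For $i\notin J$, let $\sigma_i := \mathrm{sgn}(z_i^\circ) \in \{\pm 1\}$ (picked arbitrarily if $z_i^\circ = 0$). Fix any $\epsilon \in \{\pm 1\}^{|J|}$ and apply the positivity condition with signs $\epsilon$ on $J$ and $-\sigma$ on $J^c$. Setting $\bar w := \bar z_0 - \sum_{i\notin J}\sigma_i \bar z_i$, the inequality becomes
$$\left\|\bar w + \sum_{j\in J}\epsilon_j \bar z_j\right\|_{\bar V} \leq 1 - \mu + \sum_{j\in J}\epsilon_j z_j^\circ.$$
Running the same argument with $-\epsilon$ in place of $\epsilon$ yields
$$\left\|\bar w - \sum_{j\in J}\epsilon_j \bar z_j\right\|_{\bar V} \leq 1 - \mu - \sum_{j\in J}\epsilon_j z_j^\circ.$$
Adding these and applying the reverse triangle inequality $\|a+b\|+\|a-b\|\geq 2\|b\|$ kills both $\bar w$ and $\sum_j \epsilon_j z_j^\circ$, leaving exactly $2\|\sum_{j\in J}\epsilon_j \bar z_j\|_{\bar V} \leq 2(1-\mu)$. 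Invoking Lemma~\ref{lem:witness_central} closes the proof.

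There is no real obstacle here — the conceptual move is recognizing that $\mathcal P'$ is characterized by a simple injective-norm inequality (Lemma~\ref{lem:witness_central}), and that the ``nuisance'' terms on $J^c$ can be pushed into $\bar w$ by the sign choice $\sigma$, after which the symmetric pair of inequalities combines via the triangle inequality. The only mild subtlety is checking that $1 - \mu \pm \sum_{j\in J}\epsilon_j z_j^\circ$ are both nonnegative (so the inequalities are consistent); this follows automatically from the fact that the LHS of each one is a norm, hence nonnegative.
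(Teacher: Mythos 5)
Your proof is correct and follows essentially the same route as the paper: both arguments unpack membership in $\mathcal P_{\mathrm{GPT}\diamond}$ via Proposition~\ref{prop:incomp_witness}, apply the resulting norm inequality with the two opposite sign patterns $\pm\epsilon$ on $J$ and the sign choice $-\operatorname{sign}(z_i^\circ)$ on $J^c$, and combine the two bounds by the triangle inequality to isolate $\bigl\|\sum_{j\in J}\epsilon_j\bar z_j\bigr\|_{\bar V}\le 1-\mu$, concluding with Lemma~\ref{lem:witness_central}. The only cosmetic difference is that the paper writes $\sum_{j\in J}\eta_j\bar z_j$ as half a difference before taking norms, whereas you add the two inequalities and use the reverse triangle inequality — the same computation.
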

\begin{proof}
Let us put $I := [g] \setminus J$. From the hypothesis $z \in \mathcal P_{\mathrm{GPT} \diamond}(g;V,V^+)$, using Proposition \ref{prop:incomp_witness} we have that, for any signs $\epsilon_i$, $\eta_j$, 
$$1 + \sum_{i \in I} \epsilon_i z_i^\circ + \sum_{j \in J} \eta_j z_j^\circ \geq \left\| \bar z_0 + \sum_{i \in I} \epsilon_i \bar z_i + \sum_{j \in J} \eta_j \bar z_j\right\|_{\bar V},$$
where $\bar z_0 \in \bar V$ is such that 
$(z_0,z_1, \ldots, z_g) \in \mathcal D_{\mathrm{GPT}\diamond}(g;V,V^+)$.
Writing 
$$\sum_{j \in J} \eta_j \bar z_j = \frac 1 2 \left[ \left( \bar z_0 + \sum_{i \in I} \epsilon_i \bar z_i + \sum_{j \in J} \eta_j \bar z_j \right) - \left( \bar z_0 + \sum_{i \in I} \epsilon_i \bar z_i + \sum_{j \in J} (-\eta_j) \bar z_j \right) \right],$$
we get
\begin{align*}
\left\|\sum_{j \in J} \eta_j \bar z_j \right\|_{\bar V} &\leq \frac 1 2 \left[ \left( 1 + \sum_{i \in I} \epsilon_i z_i^\circ + \sum_{j \in J} \eta_j z_j^\circ \right) + \left( 1 + \sum_{i \in I} \epsilon_i z_i^\circ + \sum_{j \in J} (-\eta_j) z_j^\circ \right) \right] \\
&= 1 + \sum_{i \in I} \epsilon_i z_i^\circ.
\end{align*}
The conclusion follows from the expression for the injective norm in Equation \eqref{eq:injective-norm}, Lemma \ref{lem:witness_central} and by setting, for all $i \in I$, $\epsilon_i = - \operatorname{sign}(z_i^\circ)$.
\end{proof}

\begin{remark}\label{rem:min-comp-deg}
Note that Theorem \ref{thm:pi=piprime} and Proposition \ref{prop:rho_symmetric} imply that minimal
compatibility degree is attained on tuples of unbiased effects.
\end{remark}

\section{Inclusion constants} \label{sec:incl-const}
In this section, we use the connection to tensor crossnorms exhibited in Sections \ref{sec:tensor-norms} and \ref{sec:centrally-symmetric-tensor} to give bounds on the compatibility region of various GPTs of interest. We have seen that, in order to find the most incompatible $g$-tuples of effects, one needs to compare the injective with the $c$-tensor norms for an arbitrary GPT, and the injective with the projective tensor norms for centrally symmetric GPTs.
We are thus interested in the ratio
\begin{equation*}
  \frac{1}{\rho(\ell_1^g, X)} =  \min_{z \in \ell_1^g \otimes X}\frac{\|z\|_{\ell_1^g \otimes_\epsilon X}}{\|z\|_{\ell_1^g \otimes_\pi X}} = \min_{z \in \ell_\infty^g \otimes X^\ast}\frac{\|z\|_{\ell_\infty^g \otimes_\epsilon X^\ast}}{\|z\|_{\ell_\infty^g \otimes_\pi X^\ast}}.
\end{equation*}

Proposition 12 of \cite{aubrun2018universal} implies the generic bound
\begin{equation*}
    \rho(\ell_1^g, X) \leq \min\{g, \dim{X}\}.
\end{equation*}
Thus,
\begin{equation} \label{eq:gamma-generic-lower-bound}
    \gamma(g;V,V^+) \geq \frac{1}{\min\{g,\dim{V}\}}
\end{equation}
for a GPT $(V,V^+,\mathds 1)$ by Corollary \ref{cor:norm-ratio}.
If $(V,V^+,\mathds 1)$ is a centrally symmetric GPT of dimension $n+1$, we find by Theorem \ref{thm:pi=piprime} that
\begin{equation}\label{eq:trivial-lower-bounds-csGPTs}
    \gamma(g;V,V^+) \geq \frac{1}{\min\{g,n\}}.
\end{equation}
For concrete GPTs, however, we can often find better bounds using Theorem \ref{thm:pi=piprime}. This is the topic of the remainder of this section. If $(V, V^+, \mathds 1)$ is the centrally symmetric GPT giving rise to the Banach space $X$ as in Section \ref{sec:centrally-symmetric-GPTs}, we will write $\Gamma(g; X)$ and $\gamma(g; X)$ for brevity.

\subsection{Hypercubes}

Let us consider the hypercube GPT $\mathrm{HC}_n$ from Example \ref{ex:hypercube}, where the state space $K$ is the $n$-dimensional hypercube, the unit ball of the Banach space $\ell_\infty^n$. We have $\dim V = n+1$. 

Let us compute the injective norm $\ell_1^g \otimes_\epsilon \ell_\infty^n$ of a vector $z$ using duality:
$$\|z\|_{\ell_1^g \otimes_\epsilon \ell_\infty^n} = \sup_{y \, : \, \|y\|_{\ell_\infty^g \otimes_\pi \ell_1^n} \leq 1} |\langle y, z \rangle|.$$
But the extremal points of the unit ball of the Banach space $\ell_\infty^g \otimes_\pi \ell_1^n$ are the vectors $\epsilon \otimes e_j$ \cite[Proposition 2.2]{Ryan2002}, where $\epsilon \in \{\pm 1\}^g$ is an extremal point of the unit ball of $\ell_\infty^g$ and the $e_j$ are the standard basis vectors, $j \in [n]$. We have thus 
$$\|z\|_{\ell_1^g \otimes_\epsilon \ell_\infty^n} = \sup_{j, \epsilon}  \left |\sum_{i=1}^g \epsilon_i z_{ij}  \right |,$$
where $z_i = \sum_{j=1}^n z_{ij} e_j$ (see also Equation \eqref{eq:injective-norm}). Choosing the appropriate $\epsilon$ solves easily the maximization problem at fixed $j$. We have thus, in the case of the hypercube, 
\begin{align*}
\|z\|_{\ell_1^g \otimes_\epsilon \ell_\infty^n} &= \sup_{j \in [n]} \sum_{i=1}^g |z_{ij}|\\
\|z\|_{\ell_1^g \otimes_\pi \ell_\infty^n} &=  \sum_{i=1}^g \sup_{j \in [n]} |z_{ij}|,
\end{align*}
where the second expression follows from Equation \eqref{eq:projective-norm}.

\begin{prop} \label{prop:comp-region-hypercube}
	For the centrally symmetric GPT with hypercube state space, the compatibility region is the \emph{enlarged} simplex
	$$\Gamma(g; \ell_\infty^n) = \{s \in [0,1]^g \, : \, \forall I \subseteq [g] \text{ s.t. } |I| \leq n, \ \sum_{i \in I} s_i \leq 1\}.$$
\end{prop}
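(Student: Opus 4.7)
The plan is to apply Theorem~\ref{thm:pi=piprime} with $\bar V = \ell_\infty^n$ and then solve the resulting optimization problem explicitly using the formulas for the injective and projective norms on $\ell_1^g \otimes \ell_\infty^n$ that were derived just above the proposition. By Theorem~\ref{thm:pi=piprime}, $s \in \Gamma(g;\ell_\infty^n)$ if and only if
\begin{equation*}
    M(s) := \sup\left\{\sum_{i=1}^g s_i \max_{j \in [n]} |z_{ij}| \;:\; \sup_{j \in [n]} \sum_{i=1}^g |z_{ij}| \leq 1\right\} \leq 1,
\end{equation*}
so the proposition reduces to showing that $M(s) = \max\{\sum_{i \in I} s_i : I \subseteq [g], |I| \leq n\}$.

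Next I would reformulate the optimization combinatorially. Passing to absolute values we may assume $z_{ij} \geq 0$. For each $i$ set $b_i := \max_j z_{ij}$ and let $j(i) \in [n]$ achieve this maximum; then $\sum_{i : j(i)=j} b_i \leq \sum_i z_{ij} \leq 1$ for every $j$, and conversely given any such colouring $j(\cdot)$ and values $b_i \geq 0$ with $\sum_{i \in C_j} b_i \leq 1$ (where $C_j = j^{-1}(j)$), the choice $z_{ij} = b_i \mathbf 1[j=j(i)]$ is feasible and realizes $\sum_i s_i b_i$. Thus
\begin{equation*}
    M(s) = \max_{(C_j)_{j=1}^n}\; \max_{\substack{b_i \geq 0 \\ \sum_{i \in C_j} b_i \leq 1}}\; \sum_{j=1}^n \sum_{i \in C_j} s_i b_i,
\end{equation*}
where the outer maximum runs over partitions of $[g]$ into at most $n$ (possibly empty) parts.

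The third step is the two-sided comparison of $M(s)$ with $\max_{|I| \leq n} \sum_{i \in I} s_i$. For the upper bound, inside each non-empty class $C_j$ one has $\sum_{i \in C_j} s_i b_i \leq (\max_{i \in C_j} s_i)(\sum_{i \in C_j} b_i) \leq \max_{i \in C_j} s_i$; picking a representative $i_j^* \in \arg\max_{i \in C_j} s_i$ yields a set $I = \{i_j^*\}$ of size $\leq n$ with $M(s)$-value $\leq \sum_{i \in I} s_i$. For the lower bound, given any $I \subseteq [g]$ with $|I| \leq n$, place each element of $I$ in its own colour class (using the $n$ available colours), distribute the remaining indices arbitrarily among these classes, and set $b_i = 1$ for $i \in I$ and $b_i = 0$ otherwise; this is feasible and realises $\sum_{i \in I} s_i$. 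Combining these proves $M(s) = \max_{|I| \leq n} \sum_{i \in I} s_i$, and the condition $M(s) \leq 1$ is by definition the desired ``enlarged simplex'' description.

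I do not anticipate a serious obstacle: the only subtlety is the equivalence between the tensor-norm optimization and the combinatorial partition problem, which is straightforward once one uses that the optimum over $b_i$ inside a single colour class is attained at an extreme point (all mass on one coordinate). Everything else is bookkeeping using Theorem~\ref{thm:pi=piprime} and the explicit formulas for $\|\cdot\|_{\ell_1^g \otimes_\epsilon \ell_\infty^n}$ and $\|\cdot\|_{\ell_1^g \otimes_\pi \ell_\infty^n}$.
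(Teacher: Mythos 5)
Your proposal is correct and follows essentially the same route as the paper: both reduce the claim to the norm comparison via Theorem~\ref{thm:pi=piprime}, prove the upper bound by grouping indices according to the coordinate achieving $\max_j|z_{ij}|$ and bounding each group's contribution by the largest $s_i$ occurring in it (your colour classes $C_j$ and representatives $i_j^*$ are exactly the paper's $J^{-1}(j)$ and $i_*(j)$), and prove the lower bound with the same indicator tensors supported on distinct coordinates of $\ell_\infty^n$. The extra combinatorial packaging of the optimum as $M(s)$ over partitions is just a reformulation of the same argument.
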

Before proving the result, let us point out that when $g \leq n$, the enlarged simplex above is just the usual probability simplex $\Delta_g$; when $g>n$, the set is strictly larger than $\Delta_g$, see Figure \ref{fig:enlarged-simplex}.
\begin{proof}
	Let us first show the ``$\supseteq$'' inclusion. Take $s \in [0,1]^g$ with the property in the statement. For $i \in [g]$, choose
	$$J(i) \in \operatorname{argmax}_{j\in [n]}|z_{ij}|.$$
	Let us write $J([g]) = \{J(i): i \in [g]\}$ for the range of $J$. We also choose, for every $j \in J([g])$, 
		$$i_*(j) \in \operatorname{argmax}_{i \in J^{-1}(j)} s_i;$$
	note that $i_*$ is an injective function, since $J^{-1}(j) \cap J^{-1}(j^\prime) = \emptyset$ for $j \neq j^\prime$. We have then, for $z$ such that $\|z\|_{\ell_1^g \otimes_\epsilon \ell_\infty^n} \leq 1$, 
	$$\|s.z\|_{\ell_1^g \otimes_\pi \ell_\infty^n} = \sum_{i=1}^g s_i |z_{iJ(i)}| = \sum_{j \in J([g])} \sum_{i \in  J^{-1}(j)} s_i |z_{ij}| \leq \sum_{j \in J([g])} s_{i_*(j)} \sum_{i \in  J^{-1}(j)}  |z_{ij}|  \leq \sum_{j \in J([g])} s_{i_*(j)} \leq 1,$$
	where the last inequality follows from the hypothesis and the fact that $|J([g])| \leq n$.
	
	For the reverse implication, fix a subset $I \in [g]$ of cardinality at most $n$. For $z_{ij} = 1$ if $i=j \in I$ and $0$ elsewhere, we have $\|z\|_{\ell_1^g \otimes_\epsilon \ell_\infty^n} \leq 1$ and 
	$$\|s.z\|_{\ell_1^g \otimes_\pi \ell_\infty^n}  = \sum_{i \in I} s_i.$$
	The assertion follows again from Theorem \ref{thm:pi=piprime}.
\end{proof}

\begin{cor}
	The largest diagonal element in $\Gamma(g; \ell_\infty^n)$ is $s=1/\min(g,n)$. 
\end{cor}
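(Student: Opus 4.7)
The plan is to directly apply the characterization of $\Gamma(g; \ell_\infty^n)$ given by Proposition \ref{prop:comp-region-hypercube}. A diagonal element takes the form $s \cdot (1,1,\ldots,1) \in [0,1]^g$ for some scalar $s \in [0,1]$, so I simply need to specialize the defining inequalities to this case and identify the tightest one.

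First, I would observe that $s(1,\ldots,1) \in \Gamma(g; \ell_\infty^n)$ if and only if, for every subset $I \subseteq [g]$ with $|I| \leq n$, the inequality $|I| \cdot s \leq 1$ holds. Since $|I|$ ranges over $\{0, 1, \ldots, \min(g,n)\}$, the binding constraint is achieved by any $I$ of maximal cardinality $\min(g,n)$, giving the necessary and sufficient condition $s \leq 1/\min(g,n)$.

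The only thing to verify is that such a maximal subset exists, which is immediate: if $g \leq n$ take $I = [g]$, and if $g > n$ take any $I \subseteq [g]$ of size $n$. There is no real obstacle here; the argument is a one-line consequence of Proposition \ref{prop:comp-region-hypercube}. The largest diagonal element is therefore precisely $s = 1/\min(g,n)$, as claimed.
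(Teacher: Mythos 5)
Your proof is correct and is exactly the intended argument: the paper states this corollary without proof as an immediate consequence of Proposition \ref{prop:comp-region-hypercube}, and your specialization of the defining inequalities to the diagonal, with the binding constraint at $|I|=\min(g,n)$, is the same one-line deduction. Nothing is missing.
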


\subsection{Euclidean balls}\label{sec:euclidean}

We consider now the case of a centrally symmetric GPT with the unit ball of $\ell_2^n$ as the state space. This GPT is of particular interest because for $n=3$, one recovers the state space of qubits: the Bloch ball (see Example \ref{ex:qubits}). In this setting, we have, using Equation \eqref{eq:injective-norm},
$$\|z\|_{\ell_1^g \otimes_\epsilon \ell_2^n} = \sup_{y \, : \, \|y\|_{\ell_\infty^g \otimes_\pi \ell_2^n} \leq 1} |\langle y, z \rangle| = \sup_{\|y\|_2 \leq 1}\sum_{i=1}^g |\langle y, z_i \rangle| = \sup_{\epsilon \in \{\pm 1\}^g} \left\|\sum_{i=1}^g \epsilon_iz_i\right\|_2.$$

\begin{prop} \label{prop:comp-eucl-balls}
	For all $g,n$, 
	$$\mathrm{QC}_g := \left\{s \in [0,1]^g: \sum_{i = 1}^g s_i^2 \leq 1\right\} \subseteq \Gamma(g;\ell_2^n).$$
	If $g \leq n$, we have equality above. 
\end{prop}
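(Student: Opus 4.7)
The plan is to apply Theorem \ref{thm:pi=piprime}, which for $\bar V = \ell_2^n$ (which is self-dual) characterizes
\[
\Gamma(g;\ell_2^n) = \left\{ s \in [0,1]^g : \|s.z\|_{\ell_1^g \otimes_\pi \ell_2^n} \leq 1 \text{ whenever } \|z\|_{\ell_1^g \otimes_\epsilon \ell_2^n} \leq 1 \right\}.
\]
Using Equations \eqref{eq:projective-norm} and \eqref{eq:injective-norm}, together with the computation of $\|\cdot\|_{\ell_1^g \otimes_\epsilon \ell_2^n}$ performed just before the proposition, this becomes: $s \in \Gamma(g;\ell_2^n)$ if and only if
\[
\sum_{i=1}^g s_i \|z_i\|_2 \leq 1 \quad \text{for all } z \in (\ell_2^n)^g \text{ with } \sup_{\epsilon \in \{\pm 1\}^g} \Bigl\| \sum_{i=1}^g \epsilon_i z_i \Bigr\|_2 \leq 1.
\]

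For the first inclusion $\mathrm{QC}_g \subseteq \Gamma(g;\ell_2^n)$, fix $s$ with $\sum_i s_i^2 \leq 1$. The key step is the Euclidean orthogonality identity
\[
\mathbb{E}_\epsilon \Bigl\| \sum_{i=1}^g \epsilon_i z_i \Bigr\|_2^2 = \sum_{i=1}^g \|z_i\|_2^2,
\]
valid because cross terms $\mathbb{E}_\epsilon[\epsilon_i \epsilon_j] = \delta_{ij}$ vanish. Combined with Cauchy--Schwarz,
\[
\sum_{i=1}^g s_i \|z_i\|_2 \leq \Bigl(\sum_{i=1}^g s_i^2\Bigr)^{1/2} \Bigl(\sum_{i=1}^g \|z_i\|_2^2\Bigr)^{1/2} \leq 1 \cdot \sup_\epsilon \Bigl\|\sum_i \epsilon_i z_i\Bigr\|_2 \leq 1.
\]

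For the equality statement when $g \leq n$, I would show the contrapositive: if $\sum_i s_i^2 > 1$ then $s \notin \Gamma(g;\ell_2^n)$. The witness is obtained by picking orthonormal vectors $e_1,\dots,e_g \in \ell_2^n$ (this is where $g \leq n$ is used) and setting $z_i := \alpha s_i\, e_i$ with $\alpha := (\sum_j s_j^2)^{-1/2}$. Orthonormality gives $\sup_\epsilon \|\sum_i \epsilon_i z_i\|_2 = \alpha (\sum_i s_i^2)^{1/2} = 1$, while
\[
\sum_{i=1}^g s_i \|z_i\|_2 = \alpha \sum_{i=1}^g s_i^2 = \Bigl(\sum_{i=1}^g s_i^2\Bigr)^{1/2} > 1,
\]
violating the characterization above. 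There is no real obstacle here; the only subtle point is that the reverse inclusion fails for $g > n$ because there is no room to fit $g$ orthonormal vectors in $\ell_2^n$ (any Cauchy--Schwarz-style saturator must align the $z_i$ with $s$, which requires mutual orthogonality to avoid dilating the injective norm).
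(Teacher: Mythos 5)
Your proof is correct and follows essentially the same route as the paper: the inclusion $\mathrm{QC}_g \subseteq \Gamma(g;\ell_2^n)$ via Cauchy--Schwarz combined with the averaging identity $\frac{1}{2^g}\sum_{\epsilon}\|\sum_i \epsilon_i z_i\|_2^2 = \sum_i\|z_i\|_2^2$, and the equality for $g\le n$ via the witness $z_i$ proportional to $s_i e_i$ with orthonormal $e_i$ (the paper takes $z_i=s_ie_i$ and uses homogeneity, which is your normalization by $\alpha$ in disguise). No gaps.
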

\begin{proof}
	Consider $s \in [0,1]^g$ with the property that $ \sum_{i=1}^g s_i^2 \leq 1$. Let $z \in \ell_1^g \otimes \ell_2^n$ be such that $\|z\|_{\ell_1^g \otimes_\epsilon \ell_2^n} \leq 1$. We have
	$$\norm{s.z}_{\ell^g_1  \otimes_{\pi} \ell_2^n} = \sum_{i=1}^g s_i \|z_i\|_2 \leq \left(\sum_{i=1}^g s_i^2 \right)^{1/2} \left(\sum_{i=1}^g \|z_i\|_2^2 \right)^{1/2}$$
	by Cauchy-Schwarz and then use the expression for the injective norm to get
	$$\sum_{i=1}^g \|z_i\|_2^2 = \frac{1}{2^g}\sum_{\epsilon \in \{\pm 1\}^g} \left\| \sum_{i=1}^g \epsilon_iz_i\right\|_2^2\leq 1.$$
	
	For the upper bound in the case $g \leq n$, use $z_i = s_i e_i$, where $e_i$ is the standard basis of $\mathbb R^g$. The assertion follows from the previous calculations using Theorem \ref{thm:pi=piprime}.
\end{proof}

Let us consider now pairs of effects ($g = 2$). The projective crossnorm in $\ell^2_\infty\otimes X$ has an explicit expression (see e.g.~the proof of \cite[Proposition 15]{aubrun2018universal}): for $\phi=e_1\otimes x_1+ e_2\otimes x_2$,
we have
\[
\|\phi\|_\pi=\frac12 (\|x_1+x_2\|_X+\|x_1-x_2\|_X). 
\]
Using this expression, we obtain from \eqref{eq:gammaf} and  Corollary \ref{cor:unbiased}:

\begin{enumerate} 
\item For any GPT $(V,V^+,\mathds 1)$ and any pair $f=(f_1,f_2)$ of effects the compatibility degree is
\[
\gamma(f)\le \|f_1-f_2\|_A+ \|f_1-(\mathds 1-f_2)\|_A.
\]
\item For a centrally symmetric GPT and a pair $f=((1/2,\bar f_1),(1/2,\bar f_2))$, we have
\[
\gamma(f)=\|\bar f_1+\bar f_2\|_{\bar A}+\|\bar f_1-\bar f_2\|_{\bar A}.
\]
This is a generalization of a known result for qubits ($n = 3$), see e.g.~\cite[Proposition 4]{busch2008approximate}.
\end{enumerate}

\subsection{Cross polytopes} \label{sec:cross-polytopes}
Let us now consider the GPT with the $n$-dimensional $\ell_1$ unit ball as state space. Again, $\dim V = n+1$. 

Let 
\begin{equation*}
    x = \sum_{i = 1}^g e_i \otimes x^{(i)} \in \ell_1^g \otimes \ell_1^n,
\end{equation*}
where $\{e_i\}_{i \in [g]}$ is the canonical basis of $\mathbb R^g$. In this case, we have by Equation \eqref{eq:projective-norm}
\begin{equation*}
    \norm{x}_{\ell_1^g \otimes_{\pi} \ell_1^n} = \sum_{i = 1}^{g} \norm{x^{(i)}}_1 = \sum_{i = 1}^g \sum_{j = 1}^n |x_j^{(i)}|.
\end{equation*}
Moreover, by Equation \eqref{eq:injective-norm}
\begin{equation*}
      \norm{x}_{\ell_1^g \otimes_{\epsilon} \ell_1^n} = \sup_{\epsilon \in \{\pm 1\}^g} \norm{\sum_{i = 1}^g\epsilon_i x^{(i)}}_1 =\sup_{\epsilon \in \{\pm 1\}^g} \sum_{j = 1}^{n} \left| \sum_{i = 1}^g \epsilon_i x^{(i)}_j\right|.
\end{equation*}

\begin{prop} \label{prop:degree-cross-poly}
 For all $n$, $g \in \mathbb N$, it holds that
 \begin{equation*}
     \gamma(g;\ell_1^n) \geq \frac{1}{g 2^{g-1}} \left(\left\lfloor \frac{g}{2}\right\rfloor + 1\right) \binom{g}{\lfloor \frac{g}{2}\rfloor + 1}.
 \end{equation*}
 This bound is achieved for $n \geq 2^{g-1}$.
\end{prop}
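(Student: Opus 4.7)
The plan is to combine Theorem~\ref{thm:pi=piprime} with an averaging-and-symmetrization argument for the Rademacher norm. Since the GPT in question is centrally symmetric with $\bar V=\ell_1^n$, Theorem~\ref{thm:pi=piprime} reduces the problem to computing
\[
\gamma(g;\ell_1^n)\;=\;\frac{1}{\rho(\ell_1^g,\ell_1^n)}\;=\;\min_{0\neq x\in\ell_1^g\otimes\ell_1^n}\frac{\|x\|_{\ell_1^g\otimes_\epsilon\ell_1^n}}{\|x\|_{\ell_1^g\otimes_\pi\ell_1^n}}.
\]
Writing $x=\sum_{j=1}^n y_j\otimes e_j$ with $y_j\in\mathbb R^g$, the formulas derived at the start of Section~\ref{sec:cross-polytopes} give $\|x\|_\pi=\sum_j\|y_j\|_1$ and $\|x\|_\epsilon=\sup_{\eta\in\{\pm1\}^g}\sum_j|\langle\eta,y_j\rangle|$.

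For the lower bound I would first dominate the supremum by the uniform average over $\eta$, reducing matters to the pointwise inequality $f(y)\geq c_g\|y\|_1$, where $f(y):=\mathbb E_\eta|\langle\eta,y\rangle|$ and $c_g:=\mathbb E|S_g|/g$ with $S_g:=\eta_1+\cdots+\eta_g$. The functional $f$ is (i) invariant under coordinate sign-flips (so one may reduce to $y\geq 0$), (ii) invariant under permutations of coordinates, and (iii) convex as a seminorm. Jensen's inequality applied to the averaging over the symmetric group then gives
\[
f(y)\;=\;\frac{1}{g!}\sum_\pi f(\pi y)\;\geq\;f\Bigl(\tfrac{1}{g!}\sum_\pi \pi y\Bigr)\;=\;f\bigl(\tfrac{\|y\|_1}{g}\mathbf{1}\bigr)\;=\;\tfrac{\|y\|_1}{g}\,\mathbb E|S_g|,
\]
and summing over $j$ yields $\|x\|_\epsilon\geq c_g\|x\|_\pi$. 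To match the stated expression I would invoke the identity
\[
2^g\,\mathbb E|S_g|\;=\;\sum_{k=0}^g\binom{g}{k}|2k-g|\;=\;2\,(\lfloor g/2\rfloor+1)\binom{g}{\lfloor g/2\rfloor+1},
\]
which reduces to a short calculation using $k\binom{n}{k}=n\binom{n-1}{k-1}$ after treating the cases $g=2m$ and $g=2m+1$ separately.

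For the matching upper bound when $n\geq 2^{g-1}$, I would enumerate the $2^{g-1}$ sign vectors $\epsilon^{(1)},\ldots,\epsilon^{(2^{g-1})}\in\{\pm1\}^g$ with first coordinate $+1$ and set $y_j:=\epsilon^{(j)}$ for $j\in[2^{g-1}]$, taking $y_j:=0$ otherwise (which fits since $n\geq 2^{g-1}$). Then $\|x\|_\pi=g\cdot 2^{g-1}$. For the injective norm, the change of variables $\epsilon\mapsto \eta\cdot\epsilon$ (entrywise product) shows that $\sum_{j}|\langle\eta,\epsilon^{(j)}\rangle|$ is independent of $\eta$ and equals $\tfrac12\sum_{\epsilon\in\{\pm1\}^g}|\langle\mathbf 1,\epsilon\rangle|=(\lfloor g/2\rfloor+1)\binom{g}{\lfloor g/2\rfloor+1}$ by the same combinatorial identity. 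Hence $\|x\|_\epsilon/\|x\|_\pi=c_g$, and the lower bound is attained.

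The main technical point I anticipate is just careful bookkeeping with the parity of $g$ in the binomial identity and with the sign-flip reduction on the left-hand side of $f(y)\geq c_g\|y\|_1$; the symmetrization argument giving that key inequality is short and conceptually clean, being essentially a Schur-convexity statement in disguise.
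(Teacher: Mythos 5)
Your proof is correct and follows essentially the same route as the paper: reduce to the norm ratio $\rho(\ell_1^g,\ell_1^n)$ via Theorem \ref{thm:pi=piprime}, bound the supremum over signs by the average, symmetrize/convexify down to the all-ones direction, evaluate the binomial sum, and exhibit the same extremal configuration of $2^{g-1}$ sign vectors for $n\ge 2^{g-1}$. The only (cosmetic) difference is that you apply Jensen over $\mathfrak S_g$ column-by-column, while the paper first symmetrizes over the $n$ columns and then invokes Schur convexity; incidentally, your version of the identity $\sum_{\epsilon}\bigl|\sum_i\epsilon_i\bigr| = 2\,(\lfloor g/2\rfloor+1)\binom{g}{\lfloor g/2\rfloor+1}$ carries the factor of $2$ that the paper's displayed computation omits, though the paper's final bound is stated correctly.
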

\begin{proof}
We will again use Theorem \ref{thm:pi=piprime} to relate the ratio of norms to the compatibility degree. Let us now consider $y^{(i)} \in \mathbb R^{g}$, $i \in [n]$, such that $y_j^{(i)} = x_i^{(j)}$. Then, 
\begin{equation*}
     \norm{x}_{\ell_1^g \otimes_{\pi} \ell_1^n} = \sum_{i = 1}^g \sum_{j = 1}^n |y_i^{(j)}| \qquad \norm{x}_{\ell_1^g \otimes_{\epsilon} \ell_1^n} = \sup_{\epsilon \in \{\pm 1\}^g} \sum_{j =1}^{n} |\langle \epsilon, y^{(j)} \rangle|.
\end{equation*}
Our aim is to estimate $\inf\{\norm{x}_{\ell_1^g \otimes_{\epsilon} \ell_1^n}:  \norm{x}_{\ell_1^g \otimes_{\pi} \ell_1^n} = 1\}$. We find
\begin{equation*}
    \norm{x}_{\ell_1^g \otimes_{\epsilon} \ell_1^n} \geq \frac{1}{2^g} \sum_{\epsilon \in \{\pm 1\}^g} \sum_{j = 1}^n |\langle \epsilon, y^{(j)} \rangle| =: f(y^{(1)}, \ldots, y^{(n)}).
\end{equation*}
Let us therefore minimize $f(y^{(1)}, \ldots, y^{(n)})$ under the constraint imposed by the projective norm. Without loss of generality, we can assume $y^{(j)} \geq 0$ for all $j \in [n]$. Since $f$ is convex and invariant under permutation of the $y^{(j)}$ we have
\begin{align*}
    f(y^{(1)}, \ldots, y^{(n)}) &= \frac{1}{n!} \sum_{\sigma \in \mathfrak S_n} f(y^{(\sigma(1))}, \ldots, y^{(\sigma(n))}) \\
    &\geq f\left(\frac{1}{n!} \sum_{\sigma \in \mathfrak S_n} (y^{(\sigma(1))}, \ldots, y^{(\sigma(n))})\right) = f(\bar y, \ldots, \bar y).
\end{align*}
where $\mathfrak S_n$ is the symmetric group on $n$ symbols and $\bar y = 1/n \sum_{i = 1}^n y^{(i)}$. We thus need only minimize over $\bar y \in \mathbb R_+^{g}$ such that $\norm{\bar y}_1 = 1/n$. Since
\begin{equation*}
    \bar y \mapsto \sum_{\epsilon \in \{\pm 1\}^g} |\langle \epsilon, \bar y\rangle|
\end{equation*}
is Schur convex, the minimum is achieved at $\bar y = 1/(gn)(1, \ldots, 1)$. Thus,
\begin{equation*}
    \norm{x}_{\ell_1^g \otimes_{\pi} \ell_1^n} \geq \frac{\sum_{\epsilon \in \{\pm 1\}^g}\left| \sum_{i = 1}^g \epsilon_i\right|}{2^g g}
\end{equation*}
For $n \geq 2^{g-1}$, it can be checked that $\{y^{(j)}\}_{j \in [n]} = \{\{y^{(\epsilon)}\}_{\epsilon \in \{\pm 1\}^{g-1}}, 0, \ldots, 0\}$ achieves this bound, where $y^{(\epsilon)} = 1/(g2^{g-1})(1, \epsilon_1, \ldots \epsilon_{g-1})$. It is finally easy to compute that
\begin{align*}
    \sum_{\epsilon \in \{\pm 1\}^g}\left| \sum_{i = 1}^g \epsilon_i\right| &= \sum_{k = 0}^g |g - 2k| \binom{g}{k}\\
    & = \left(\left\lfloor \frac{g}{2}\right\rfloor + 1\right) \binom{g}{\lfloor \frac{g}{2}\rfloor + 1}.
\end{align*}
\end{proof}

\subsection{Relation to 1-summing norms} \label{sec:1-summing}

A different way to obtain bounds on the $\rho(\ell^g_1,X)$ quantity, which were mentioned at the beginning of this section, is to use results on  \emph{1-summing norms} of Banach spaces; this method was already used in \cite{aubrun2018universal}.  The 1-summing norm of (the identity operator of) a Banach space $X$ is the smallest constant $c$ with the property that for all $g \geq 1$	and $z_1, \ldots, z_g \in X$ we have
$$\sum_{i=1}^g \|z_i\|_X \leq c \sup_{\|y\|_{X^\ast} \leq 1} \sum_{i=1}^g |\langle y, z_i\rangle|.$$
We denote it by $\pi_1(X)$. 

\begin{prop}\label{prop:1-summing}
	The largest $s$ such that
	$$s(1, \ldots, 1) \in \bigcap_{g \geq 1}\Gamma(g; V, V^+)$$
	is $s = 1/\pi_1(\bar V)$, in the case where $(V, V^+, \mathds 1)$ is a centrally symmetric GPT. For a general GPT $(V, V^+, \mathds 1)$,
	\begin{equation*}
	    \gamma(g; V, V^+) \geq \frac{1}{\pi_1(V)},
	\end{equation*}
	where $V$ is the corresponding base norm space.
\end{prop}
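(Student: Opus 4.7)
The plan is to reduce both statements to the identification of $\rho(\ell_1^g, X)$ with the ``$g$-truncated'' $1$-summing constant of $X$, and then pass to the limit $g\to\infty$.

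For the centrally symmetric case, I would start from Theorem~\ref{thm:pi=piprime}, which already provides the clean formula
\[
\gamma(g;V,V^+)=\frac{1}{\rho(\ell_1^g,\bar V)}.
\]
Then, expanding $\rho(\ell_1^g,\bar V)=\max_w \|w\|_{\ell_1^g\otimes_\pi\bar V}/\|w\|_{\ell_1^g\otimes_\epsilon\bar V}$ using the explicit formulas \eqref{eq:projective-norm} and \eqref{eq:injective-norm}, this ratio becomes exactly the smallest $c$ such that $\sum_{i=1}^g\|z_i\|_{\bar V}\le c\sup_{\|y\|_{\bar A}\le 1}\sum_{i=1}^g|\langle y,z_i\rangle|$ for all $g$-tuples $(z_1,\dots,z_g)\in \bar V^g$; this is the ``$g$-truncated'' $1$-summing constant of the identity of $\bar V$. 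By the very definition of $\pi_1(\bar V)$ as the infimum of such $c$ valid for arbitrary $g$, we have $\pi_1(\bar V)=\sup_{g\ge 1}\rho(\ell_1^g,\bar V)$ (the truncated constants are moreover monotone non-decreasing, since any $g$-tuple embeds as a $g'$-tuple with extra zero entries).

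Now the largest $s\in[0,1]$ with $s(1,\dots,1)\in\Gamma(g;V,V^+)$ is by definition $\gamma(g;V,V^+)$, so the largest $s$ with $s(1,\dots,1)\in\bigcap_{g\ge 1}\Gamma(g;V,V^+)$ is $\inf_g\gamma(g;V,V^+)=1/\sup_g\rho(\ell_1^g,\bar V)=1/\pi_1(\bar V)$, proving the first assertion.

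For a general GPT, I would start from Corollary~\ref{cor:norm-ratio}, which gives $\gamma(g;V,V^+)\ge 1/\rho(\ell_\infty^g,A)$. The key step is the duality identity $\rho(\ell_\infty^g,A)=\rho(\ell_1^g,V)$. In finite dimensions one has $(X\otimes_\epsilon Y)^*=X^*\otimes_\pi Y^*$ and $(X\otimes_\pi Y)^*=X^*\otimes_\epsilon Y^*$ isometrically, so the identity map on $\ell_\infty^g\otimes A$ from the injective to the projective norm is the dual of the identity map on $\ell_1^g\otimes V$ from the injective to the projective norm, hence has the same operator norm. Applying exactly the same argument as in the centrally symmetric case to $X=V$ shows $\rho(\ell_1^g,V)\le\pi_1(V)$, whence $\gamma(g;V,V^+)\ge 1/\pi_1(V)$.

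I expect the main (mild) obstacle to be the tensor-norm duality step $\rho(\ell_\infty^g,A)=\rho(\ell_1^g,V)$ in the general case, which requires invoking finite-dimensionality (so that the projective–injective dualities hold isometrically and the biduals behave well). The rest is essentially unpacking the definitions and recognizing the $1$-summing constant of $\bar V$ (respectively $V$) inside the formulas already established earlier in the paper.
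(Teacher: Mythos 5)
Your proof is correct and follows essentially the same route as the paper: Theorem \ref{thm:pi=piprime} (resp.\ Corollary \ref{cor:norm-ratio}) plus the identification $\sup_{g}\rho(\ell_1^g,X)=\pi_1(X)$ read off from Equations \eqref{eq:projective-norm} and \eqref{eq:injective-norm}. The duality step $\rho(\ell_\infty^g,A)=\rho(\ell_1^g,V)$ that you flag as the main obstacle is exactly the identity the paper records (without proof) in the displayed equation at the start of Section \ref{sec:incl-const}, so your argument just makes the paper's terse proof explicit.
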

\begin{proof}
It is clear from the expressions for the injective (Equation \eqref{eq:injective-norm}) and projective (Equation \eqref{eq:projective-norm}) norm that $\sup_{g \in \mathbb N} \rho(\ell_1^g, X) = \pi_1(X)$. The assertion follows from Theorem \ref{thm:pi=piprime} and Corollary \ref{cor:norm-ratio}.
\end{proof}

The following constants were computed in \cite[Theorem 2]{gordon1969onp}:
\begin{itemize}
	\item $\pi_1(\ell_1^n) = \frac{2^{n-1}n}{(1+\lfloor n/2 \rfloor) \binom{n}{1+\lfloor n/2 \rfloor}} \sim \sqrt{\frac{\pi n}{2}}$
	\item $\pi_1(\ell_2^n)= 
	\frac{\sqrt{\pi}\Gamma(\frac{n+1}{2})}{\Gamma(\frac{n}{2})}
	\sim \sqrt{\frac{\pi n}{2}}$
	\item $\pi_1(\ell_\infty^n) = n$,
\end{itemize}
where $\sim$ expresses the asymptotic behavior for large $n \in \mathbb N$. 

\begin{remark}
As seen above, the 1-summing norms give bounds on the compatibility degree of arbitrarily many dichotomic measurements. The asymptotic growth in the $\ell^n_1$ and the $\ell_2^n$ case show that the bounds improve over the naive bound $1/n$.
\end{remark}

\begin{prop} \label{prop:1-summing-QM}
Let $d \in \mathbb N$ and let $S^d_1$ be the Banach space of self-adjoint matrices equipped with the Schatten 1 norm. Then,
\begin{equation*}
    \pi_1(S^d_1) \leq c d,
\end{equation*}
where $c$ is a constant. In particular, we can take $c = 7.79$.
\end{prop}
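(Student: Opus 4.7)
The plan is to prove the bound by exhibiting an explicit Pietsch measure and computing the relevant integrals. By Pietsch's factorization theorem, the bound $\pi_1(\mathrm{id}_{S_1^d})\le c$ is equivalent to the existence of a probability measure $\mu$ on the unit ball of the dual space $(S_1^d)^\ast = S_\infty^d$ such that $\|z\|_1 \le c\int|\mathrm{Tr}(Yz)|\,d\mu(Y)$ for every self-adjoint $z\in\mathcal M_d^{\mathrm{sa}}(\mathbb C)$. Unfolding the definition of $\pi_1$ via the sign trick, this is the same as saying that for every finite family of self-adjoint matrices, $\sum_{i=1}^g\|z_i\|_1 \le 7.79\,d \cdot \sup_{\eta\in\{\pm1\}^g}\|\sum_i\eta_i z_i\|_1$, or equivalently that every linear map $T:\ell_\infty^g\to S_1^d$ has nuclear norm at most $7.79\,d\,\|T\|$.

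First I would choose $\mu$ to be invariant under unitary conjugation: concretely, $\mu$ is supported on self-adjoint unitaries of the form $Y=UDU^\ast$, with $U$ Haar-distributed on $\mathcal U(d)$ and $D$ a fixed signed diagonal matrix (most naturally, $D=\mathrm{diag}(+1,\dots,+1,-1,\dots,-1)$ with balanced numbers of signs, together with a small point mass at $D=I$ to handle the trace component of $z$). By unitary invariance of Haar measure, $\int|\mathrm{Tr}(Yz)|\,d\mu(Y)$ depends only on the eigenvalues of $z$, reducing the matrix problem to a scalar integral in $\lambda_1,\dots,\lambda_d$.

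Second, I would lower-bound this scalar integral using Weingarten calculus for Haar unitaries, or equivalently by replacing $U$ by a Ginibre proxy $G/\sqrt d$ and paying an explicit price for the discrepancy. The key estimate is a Khintchine-type bound $\int|\mathrm{Tr}(Yz)|\,d\mu(Y)\gtrsim c_1\,\|z\|_2/\sqrt d$ for traceless $z$, which combined with the tight Cauchy--Schwarz inequality $\|z\|_1\le\sqrt d\,\|z\|_2$ yields the claimed $\|z\|_1\le c\,d\int|\mathrm{Tr}(Yz)|\,d\mu(Y)$, with $c$ depending on $c_1$.

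The main obstacle will be tightening the resulting constant down to the specific value $7.79$. An asymptotic $O(d)$ bound follows fairly cheaply from either the Weingarten or Gaussian route, but the sharp numerical constant requires optimising the free parameters of the Pietsch measure (the rank of the reflecting projection and the weight on the trace component) and evaluating the integrals exactly rather than asymptotically. An alternative route, which may produce a better constant more efficiently, is to invoke the non-commutative Grothendieck theorem of Pisier--Haagerup (with Haagerup's sharp constant $2$) applied to the bilinear form $u(Y,x)=\sum_i x_i\mathrm{Tr}(Yz_i)$ on $\mathcal M_d(\mathbb C)\times\ell_\infty^g$, and to combine it with the classical Grothendieck constant $K_G$ and Cauchy--Schwarz estimates; the number $7.79$ is then expected to arise as an explicit product of these universal constants with a Khintchine-type Gaussian expectation.
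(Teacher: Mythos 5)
Your strategy is genuinely different from the paper's, so let me first record the contrast. The paper does not construct a Pietsch measure at all: it invokes the Garling--Gordon trace-duality identity $\pi_1(X)\,\lambda(X)=\dim X$, valid for finite-dimensional spaces with enough symmetries \cite{Garling1971}, thereby converting the upper bound on $\pi_1(S^d_1)$ into a lower bound on the projection constant $\lambda(S^d_1)$; that lower bound is $\lambda(X)\geq c'\,[C_2(X)(1+\log C_2(X))^{1/2}]^{-1}\sqrt{\dim X}$ from Tomczak--Jaegermann's book, and the only inputs specific to $S^d_1$ are $\dim S^d_1=d^2$ and the cotype-$2$ estimate $C_2(S^d_1)\leq\sqrt{e}$ of \cite{Tomczak-Jaegermann1974}. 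In particular the constant $7.79$ is \emph{not} a product of Grothendieck and Khintchine constants: it is $\inf_{q\in(2,\infty)}(\sqrt{q}\,C_2(S^d_1))^{q/(q-2)}$ evaluated at $C_2=\sqrt{e}$, so the closing speculation in your proposal misidentifies its provenance. Your first route --- a unitarily invariant Pietsch measure supported on balanced self-adjoint unitaries $UDU^*$ plus a point mass handling the trace component --- is a legitimate alternative with the right $d$-dependence: for traceless $z$ one has $\mathbb{E}\,[\operatorname{Tr}(UDU^*z)^2]=\tfrac{d}{d^2-1}\|z\|_2^2$, which together with $\|z\|_1\leq\sqrt{d}\,\|z\|_2$ gives $O(d)$, and in the Gaussian limit would even suggest a constant near $\sqrt{\pi/2}$, better than $7.79$.

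That said, there are two genuine gaps. First, everything hinges on the Khintchine-type lower bound $\int|\operatorname{Tr}(Yz)|\,d\mu(Y)\geq c_1\|z\|_2/\sqrt{d}$ with an \emph{explicit, $d$-uniform} $c_1$, and you only assert it. Passing from the second moment to the first moment needs an anti-concentration input, e.g.\ $\mathbb{E}|X|\geq(\mathbb{E}X^2)^{3/2}(\mathbb{E}X^4)^{-1/2}$ plus a fourth-moment Weingarten computation uniform in $d$ (or a quantified Gaussian comparison); until that is carried out you have neither a proof of the qualitative bound nor any numerical value of $c$, so the claim ``we can take $c=7.79$'' is not established. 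Second, the proposed shortcut through the non-commutative Grothendieck theorem cannot work as sketched, because nothing in it uses a property of $S^d_1$ beyond its dimension. A factorization argument of the form ``pass to $T^*:S^d_\infty\to\ell_1^g$, factor through Hilbert space, and apply Grothendieck-type $2$-summing estimates to both legs'' fails on the leg $\ell_2\to\ell_1^g$ (such maps are not uniformly $2$-summing), and if it did not fail it would prove $\pi_1(X)\lesssim\sqrt{\dim X}$ for every finite-dimensional $X$, contradicting $\pi_1(\ell_\infty^n)=n$. Even the natural factorization $S^d_1\to S^d_2\to S^d_1$ does not close: one can check, using an orthonormal basis of $S^d_2$ made of rescaled unitaries, that $\pi_2(\mathrm{id}:S^d_2\to S^d_1)\geq d^{3/2}$, so the $\Pi_2\circ\Pi_2\subseteq\mathcal N_1$ route overshoots $O(d)$. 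Any correct argument must inject the cotype-$2$ structure of $S^d_1$ somewhere --- which is precisely the ingredient the paper's short citation chain isolates.
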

\begin{proof}
By \cite{Garling1971}, it follows that $\pi_1(X) \lambda(X) = n$ for any $n$-dimensional Banach space $X$ with enough symmetries in the sense of \cite[§16]{Tomczak-Jaegermann1989}. Here, $\lambda(X)$ is the projection constant of $X$ \cite[§32]{Tomczak-Jaegermann1989}. From \cite[Proposition 32.7]{Tomczak-Jaegermann1989}, it follows that 
\begin{equation*}
    \lambda(X) \geq c^\prime [C_2(X)(1 + \log C_2(X))^{1/2}]^{-1} \sqrt{n},
\end{equation*}
where $c^\prime$ is a universal constant. Here, $C_2(X)$ is the Rademacher cotype 2 constant \cite[§4]{Tomczak-Jaegermann1989}. The space $X = S^d_1$ has enough symmetries, has dimension $n = d^2$ and $C_2(S^d_1) \leq \sqrt{e}$ \cite{Tomczak-Jaegermann1974}. This proves the assertion. A more careful analysis of the proof of Theorem 10.14 and Proposition 32.7 of \cite{Tomczak-Jaegermann1989} shows that we can take \cite[Equation (10.17)]{Tomczak-Jaegermann1989}
\begin{equation} \label{eq:1-summing-QM}
    c = \inf_{q \in (2,\infty)} (\sqrt{q} C_2(S^d_1))^{\frac{q}{q-2}}.
\end{equation}
With $C_2(S^d_1) \leq \sqrt{e}$, we obtain the desired bound on $c$ through numerical optimization.
\end{proof}

\subsection{Upper bounds for centrally symmetric GPTs}
For centrally symmetric GPTs, the results of \cite{aubrun2018universal} yield also upper bounds on the compatibility degree. In particular, we can use \cite[Prop. 15]{aubrun2018universal}, \cite[Lemma 19]{aubrun2018universal} and \cite[Theorem 6]{aubrun2018universal}:
\begin{enumerate}
\item For any centrally symmetric GPT with $\dim(\bar V)\ge 2$, we have 
\begin{equation}\label{eq:gamma-upper-2}
\gamma(2;V,V^+)\le 1/\sqrt{2}.
\end{equation}
\item For any centrally symmetric GPT with $\dim(\bar V)\ge g$, we have
\begin{equation}\label{eq:gamma-upper-g}
\gamma(g;V,V^+)\le \sqrt{2/g}
\end{equation}
\item For a centrally symmetric GPT related to $\ell_2^n$, we have for $g\ge n$
\begin{equation} \label{eq:gamma-upper-l2}
   \gamma(g;V,V^+)\le 1/\sqrt{n}. 
\end{equation}
\item There exists a constant $c > 0$ such that for any centrally symmetric GPT with $\dim(\bar V) = n$
\begin{equation}\label{eq:gamma-upper-complicated}
\gamma(g;V,V^+) \leq c \frac{\log(\min\{g,n\})}{(\min\{g,n\})^{1/8}}.
\end{equation}
\end{enumerate}

\begin{remark}
If one drops the condition that the GPT $(V,V^+, \mathds 1)$ be centrally symmetric, no general non-trivial upper bounds for the compatibility degree can be derived. Indeed, for the classical mechanics GPT $\mathrm{CM}_d$, since all crossnorms agree \cite{Aubrun2019,Aubrun2019a}, we have, for all $g$,$d \in \mathbb N$,
$$\Gamma(g; \mathrm{CM}_d) = [0,1]^g \quad \text{ and } \quad \gamma(g; \mathrm{CM}_d)=1.$$
The central symmetry condition rules out simplicial cones, forcing all incompatibility regions for $g \geq 2$ to be non-trivial.
\end{remark}

\section{Results} \label{sec:results}
In this section, we collect the results on the different compatibility regions for GPTs of interest obtained in this work and compare them to previously known bounds. We focus on results for dichotomic measurements. Note that one can always lift the results for dichotomic measurements to bounds on measurements with more outcomes using the symmetrization techniques from Theorem \ref{thm:symmetrization}.

\subsection{Centrally symmetric GPTs}

\subsubsection{Hypercubes}
If the state space is the unit ball of the Banach space $\ell_\infty^n$, Proposition \ref{prop:comp-region-hypercube} characterizes the compatibility region completely. We find that the compatibility region is
\begin{equation*}
    	\Gamma(g; \ell_\infty^n) = \{s \in [0,1]^g \, : \, \forall I \subseteq [g] \text{ s.t. } |I| \leq n, \ \sum_{i \in I} s_i \leq 1\}.
\end{equation*}
We have plotted this enlarged simplex in Figure \ref{fig:enlarged-simplex}. In particular, this implies that 
\begin{equation}
    \gamma(g; \ell_\infty^n)= \frac{1}{\min\{g,n\}}.
\end{equation}
Before this work, this result was shown for $g = 2$ and $n = 2$ in \cite[Proposition 2]{busch2013comparing} (see also \cite{Jencova2017}) and generalized to $g\leq n$, $n \in \mathbb N$ in \cite[Corollary 5]{jencova2018incompatible}. 

\begin{figure} 
    \centering
    \includegraphics[scale=.5]{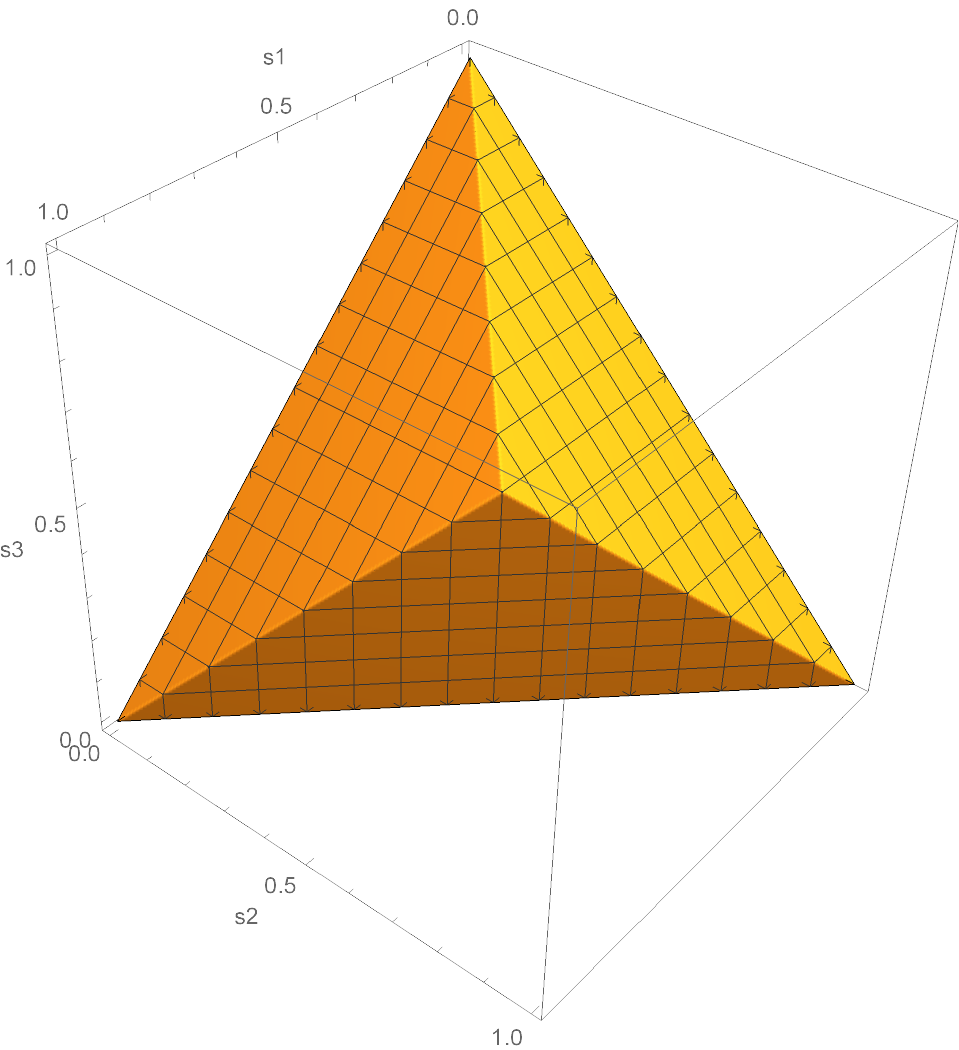}
    \caption{The compatibility region $\Gamma(g;\ell_\infty^n)$ for $n = 2$, $g = 3$. The polyhedron is the convex hull of the five points $(0,0,0)$, $(1,0,0)$, $(0,1,0)$, $(0,0,1)$, and $(1/2,1/2,1/2)$; it strictly contains the simplex $\Delta_3$.}
    \label{fig:enlarged-simplex}
\end{figure}

\subsubsection{Euclidean balls} \label{sec:results-eucl-balls}
For Euclidean balls, i.e.\ if the state space is the unit ball of $\ell_2^n$, Proposition \ref{prop:comp-eucl-balls} shows that
\begin{equation*}
    \Gamma(g;\ell_2^n) = \left\{s \in [0,1]^g: \sum_{i = 1}^g s_i^2 \leq 1\right\} =: \mathrm{QC}_g \qquad \forall g, n \in \mathbb N \mathrm{~s.t.~}g \leq n.
\end{equation*}
In particular,  $\gamma(g;\ell_2^n) = 1/\sqrt{g}$ if $g \leq n$. Thus, we have a complete characterization for few measurements.

For $g > n$,  Proposition \ref{prop:comp-eucl-balls} shows that 
\begin{equation}\label{eq:lower-l2}
      \Gamma(g;\ell_2^n) \supseteq \mathrm{QC}_g.
\end{equation}
Moreover, combining this with Equation \eqref{eq:gamma-upper-l2} and the $1$-summing constants in Section \ref{sec:1-summing}, we obtain
\begin{equation*}
    \frac{1}{\sqrt{n}} \geq \gamma(g;\ell_2^n) \geq \max \left\{1/\sqrt{g}, \frac{\Gamma(\frac{n}{2})}{\sqrt{\pi}\Gamma(\frac{n+1}{2})} \right\} \qquad \forall g,n \in \mathbb N \mathrm{~s.t.~} g \geq n.
\end{equation*}
Hence, for $g > (\sqrt{\pi} \Gamma(\frac{n+1}{2})/\Gamma(\frac{n}{2}))^2$ the inclusion in Equation \eqref{eq:lower-l2} is strict. See Figure \ref{fig:l2_phase_diagram} for a phase diagram of our findings. 

\begin{figure}
    \centering
    \includegraphics[scale=0.8]{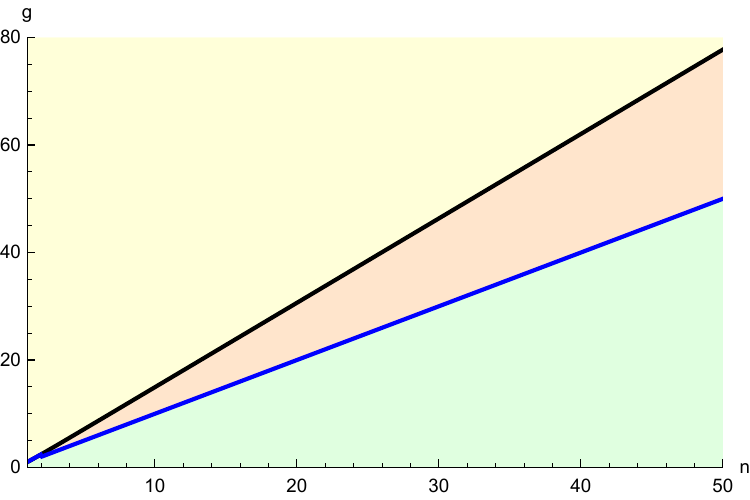}
    \caption{The blue function is $n \mapsto n$. The black function is $n \mapsto h(n)^{-2}$. In the green region, we know $\Gamma(g;\ell_2^n) = \mathrm{QC_g}$. In the yellow region, we know $\Gamma(g;\ell_2^n) \supsetneq \mathrm{QC_g}$. In the orange region, we only know $\Gamma(g;\ell_2^n) \supseteq \mathrm{QC_g}$, but not whether the inclusion is strict.}
    \label{fig:l2_phase_diagram}
\end{figure}

It is easy to compute that
\begin{equation*}
    h(n):=\frac{\Gamma(\frac{n}{2})}{\sqrt{\pi}\Gamma(\frac{n+1}{2})} = \begin{cases}\frac{4^m}{\pi m} \binom{2m}{m}^{-1} & n = 2m \\ 4^{-m} \binom{2m}{m} & n = 2m+1 \end{cases}
\end{equation*}
and asymptotically, the bound behaves like $\sqrt{2/(\pi n)}$. See Figure \ref{fig:h-function} for a plot of the function. Prior to this work, only results for $n = 3$ were known, see Section \ref{sec:results-QM} below.

\begin{figure}
    \centering
    \includegraphics[scale=0.8]{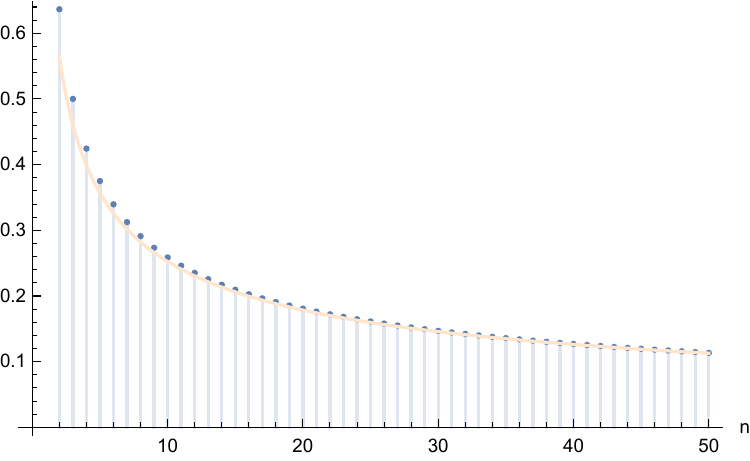}
    \caption{The functions $h(n)$ (in blue) and $\sqrt{2/(\pi n)}$ (in orange), for $n \in \{2, \ldots, 50\}$.}
    \label{fig:h-function}
\end{figure}

\subsubsection{Cross polytopes}
If the state space is the unit ball of $\ell_1^n$, we have only bounds on the compatibility degree. For $n \geq 2^{g-1}$, Proposition \ref{prop:degree-cross-poly} proves that the compatibility degree is given as
\begin{equation*}
    \gamma(g;\ell_1^n) = \frac{1}{g 2^{g-1}} \left(\left\lfloor \frac{g}{2}\right\rfloor + 1\right) \binom{g}{\lfloor \frac{g}{2}\rfloor + 1}=:f(g).
\end{equation*}
Asymptotically, $f(g)$ behaves like $\sqrt{2/(\pi g)}$. See Figure \ref{fig:f-function} for a plot of this function. For $n<2^{g-1}$, Proposition \ref{prop:degree-cross-poly} and the 1-summing constant in Section \ref{sec:1-summing} merely provide lower bounds on the compatibility degree. Combined with Equation \eqref{eq:gamma-upper-2}, we obtain for $n \geq 2$:
\begin{equation*}
    \frac{1}{\sqrt{2}} \geq \gamma(g;\ell_1^n) \geq \max\{f(n), f(g)\}. 
\end{equation*}
For $n \geq g$, the upper bound can be improved to $\sqrt{2/g} \geq \gamma(g;\ell_1^n)$ using Equation \eqref{eq:gamma-upper-g}. Using \cite[Equation (61)]{aubrun2018universal}, Proposition 30 of \cite{lami2018ultimate} shows that in general $\sqrt{2/\min\{n,g\}} \geq \gamma(g;\ell_1^n)$.
It remains an open question to find any bounds on $\Gamma(g; \ell_1^n)$ for non-diagonal elements. Our lower bounds improve over the lower bounds derived in \cite{lami2018ultimate}.
\begin{figure}
    \centering
    \includegraphics[scale=0.8]{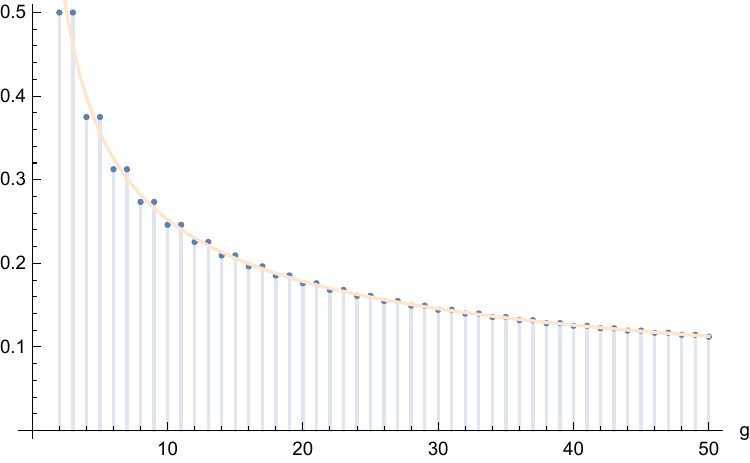}
    \caption{The functions $f(g)$ (in blue) and $\sqrt{2/(\pi g)}$ (in orange), for $g \in \{2, \ldots, 50\}$.}
    \label{fig:f-function}
\end{figure}

\subsection{General centrally symmetric GPTs}
For general centrally symmetric GPTs, we have only bounds on the compatibility degree. Let $(V,V^+, \mathds 1)$ be a centrally symmetric GPT where $V$ has dimension $n+1$. Equations \eqref{eq:trivial-lower-bounds-csGPTs} and \eqref{eq:gamma-upper-2} imply for $n \geq 2$:
\begin{equation*}
    \frac{1}{\sqrt{2}} \geq \gamma(g;V,V^+) \geq \frac{1}{\min\{n,g\}}.
\end{equation*}
The upper bound shows that all centrally symmetric GPTs with $n \geq 2$ have incompatible dichotomic measurements (otherwise we would have $\gamma(g;V,V^+) = 1$), which is consistent with the findings of \cite{Plavala2016}.
For $n \geq g$, the upper bound can be improved to $ \sqrt{2/g} \geq \gamma(g;V,V^+)$ using Equation \eqref{eq:gamma-upper-g}. Moreover, Equation \eqref{eq:gamma-upper-complicated} implies that there exists a constant $c > 0$ such that
\begin{equation*}
\gamma(g;V,V^+) \leq c \frac{\log(\min\{g,n\})}{(\min\{g,n\})^{1/8}}.
\end{equation*}
This bound is useful for us in the regime $n < g$ and large $n$. 

\subsection{Other GPTs}
\subsubsection{Quantum mechanics}\label{sec:results-QM}
For quantum systems of dimension $2$, the state space corresponds to the unit ball of $\ell_2^3$ via the Bloch ball and is therefore centrally symmetric. We can hence use the results outlined in Section \ref{sec:results-eucl-balls} for $n=3$. For $g=2$ and $g=3$, we have that
\begin{equation*}
    \Gamma(g;\mathrm{QM}_2) = \mathrm{QC}_g.
\end{equation*}
This recovers the results of \cite{busch1986unsharp, Brougham2007, busch2008approximate, Pal2011, busch2013comparing, busch2016quantum, bluhm2018joint} (see Section 9.1 of \cite{bluhm2018joint} for the contributions of the respective works). In particular, the compatibility degree is $\gamma(g;\mathrm{QM}_2) = 1/\sqrt{g}$ in this case.

For $g \geq 4$, we still have 
\begin{equation*}
    \Gamma(g;\mathrm{QM}_2) \supseteq \mathrm{QC}_g,
\end{equation*}
which recovers the result of \cite{bluhm2018joint}. However, we find that the compatibility degree is bounded by
\begin{equation}\label{eq:gamma-QM2}
   \forall g \geq 4, \qquad 0.58 \approx \frac{1}{\sqrt{3}} \geq \gamma(g;\mathrm{QM}_2) \geq \frac{1}{2}.
\end{equation}
The upper bound follows from Equation \eqref{eq:gamma-upper-l2} and the lower bound from the 1-summing norm of $\ell_2^3$. This improves over the lower bound $\max\{1/\sqrt{g}, 1/4\}$ obtained in \cite{bluhm2018joint} and recovers the best known upper bound imposed by the compatibility degree for $g=3$ (see Proposition \ref{prop:gamma-decreases}).
Numerics using the SDP in \cite{wolf2009measurements} indicate that the SIC-POVM constructed by choosing vertices of a tetrahedron on the Bloch sphere has $\gamma(f) = 1/\sqrt{3}$, whereas we can find random examples with $\gamma(f) \leq 0.56<1/\sqrt3$ for $g=4$. Thus, $\gamma(4;\mathrm{QM}_2)<1/\sqrt3$. However, the exact value of $\gamma(4;\mathrm{QM}_2)$ remains presently open.

\begin{question}
We know, from Proposition \ref{prop:1-summing}, that
$$\lim_{g \to \infty} \gamma(g; \mathrm{QM}_2) = \inf_{g \geq 1} \gamma(g; \mathrm{QM}_2) = \frac{1}{\pi_1(\ell_2^3)} = \frac 1 2.$$
Is the lower bound in Equation \eqref{eq:gamma-QM2} optimal, that is, 
\begin{equation*}
    \forall g \geq 4 \quad  \gamma(g; \mathrm{QM}_2) = \frac{1}{2} \iff \gamma(4; \mathrm{QM}_2) = \frac{1}{2} \quad ?
\end{equation*}
\end{question}

Beyond qubits, Proposition \ref{prop:1-summing-QM} gives
\begin{equation*}
    \gamma(g; \mathrm{QM}_d) \geq \frac{1}{7.79 d} \qquad \forall g \in \mathbb N.
\end{equation*}
This bound has the same dimension dependence as the bound $ \gamma(g; \mathrm{QM}_d) \geq \frac{1}{2 d}$ derived in \cite{bluhm2018joint}, but the constant is worse. One could therefore hope to improve the lower bound on the compatibility degree using better bounds on $C_2(S^d_1)$. However, even putting the cotype 2 constant to $1$ in Equation \eqref{eq:1-summing-QM}, which is the minimal value possible, gives a constant of approximately $3.81$, which is still larger than $2$. Thus, we need a better way to compute $\pi_1(S^d_1)$ in order to improve over the bound on $\gamma(g;\mathrm{QM}_d)$ in \cite{bluhm2018joint}.

\subsubsection{General bounds}
For general GPTs, we have the generic lower bound in Equation \eqref{eq:gamma-generic-lower-bound}:
\begin{equation*}
    \gamma(g;V,V^+) \geq \frac{1}{\min\{g,\dim{V}\}}
\end{equation*}
While the bound in terms of $g$ was known (see e.g.\ \cite{Heinosaari2016}), the bound in terms of the dimension seems to be new.

\bigskip

\emph{Acknowledgements.} A.B.\ and I.N.\ would like to thank Guillaume Aubrun for helpful discussions and for explaining to us how to bound the $1$-summing norm of $S_1$. Moreover, the authors would like to thank Guillaume Aubrun also for organizing the workshop GPT \& QIT in Lyon. Furthermore, A.B.\ and A.J.\ would like to thank Mil\'an Mosonyi for organizing the QIMP 2018 conference, during which the idea for this project was born. A.B.\ acknowledges support from the VILLUM FONDEN via the QMATH Centre of Excellence (Grant no.  10059) and from the QuantERA ERA-NET Cofund in Quantum Technologies implemented within the European Union’s Horizon 2020 Programme (QuantAlgo project) via the Innovation Fund Denmark. A.J.\ was supported by the grants APVV-16-0073 and VEGA 2/0142/20. I.N.~ was supported by the ANR project \href{https://esquisses.math.cnrs.fr/}{ESQuisses}, grant number ANR-20-CE47-0014-01.

\appendix

\section
{Conic programming}
\label{sec:conic-programming}

\subsection{Background}
To begin, let us briefly recapitulate the theory of conic programming. We follow \cite[Section 4]{Gaertner2012}. All vector spaces we consider will be finite dimensional. We will equip them with an inner product choosing a pair of dual bases for the vector space and its dual.

\begin{defi}[Conic program {\cite[Definition 4.6.1]{Gaertner2012}}] \label{defi:conic}
Let $L^+ \subseteq L$, $M^+ \subseteq M$ be closed convex cones, let $b \in M$, $c \in L^\ast$ and let $A: L \to M$ be a linear operator. A \emph{conic program} is an optimization problem of the form
\begin{align*}
\mathrm{maximize} \qquad& \langle c, x \rangle \\
\mathrm{subject~to} \qquad& b -A(x) \in M^+ \\
& x \in L^+
\end{align*}
\end{defi}
The dual problem is then given by \cite[Section 4.7]{Gaertner2012}
\begin{align*}
\mathrm{minimize} \qquad& \langle b, y \rangle \\
\mathrm{subject~to} \qquad& A^\ast(y) - c \in (L^+)^\ast \\
& y \in (M^+)^\ast
\end{align*}
Weak duality always hold, i.e.\ the value of the primal problem is upper bounded by the value of the dual program if the dual program is feasible. A sufficient condition for strong duality to hold is the following version of Slater's condition:
\begin{thm}[{\cite[Theorem 4.7.1]{Gaertner2012}}] \label{thm:slater} If the conic program in Definition \ref{defi:conic} is feasible, has finite value $\gamma$ and has an interior point $\tilde x$, then the dual program is also feasible and has the same value $\gamma$.
\end{thm}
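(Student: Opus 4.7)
The plan is to combine weak duality with a value-function/supergradient argument in which Slater's condition plays the decisive role.

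For weak duality, for any primal feasible $x$ and dual feasible $y$ the two cone inequalities $\langle A^\ast(y)-c,x\rangle\geq 0$ and $\langle y,b-A(x)\rangle\geq 0$ combine to give $\langle c,x\rangle \leq \langle y,b\rangle$, so the dual infimum is automatically $\geq \gamma$.

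For the nontrivial direction, I would introduce the perturbed value function $v:M\to\mathbb R\cup\{-\infty,+\infty\}$ defined by
\begin{equation*}
v(u) := \sup\{\langle c,x\rangle \colon x\in L^+,\ b-A(x)+u\in M^+\}.
\end{equation*}
Then $v(0)=\gamma$, and $v$ is concave (convex combining feasible pairs $(x_1,u_1)$ and $(x_2,u_2)$ gives a feasible pair for the convex combination). Slater's condition at $\tilde x$ means $b-A(\tilde x)\in \operatorname{int}(M^+)$, so $\tilde x$ remains feasible for $v(u)$ whenever $u$ is small, yielding $v\geq \langle c,\tilde x\rangle > -\infty$ on a neighbourhood of $0$. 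Combined with $v(0)=\gamma<\infty$, this says $v$ is concave and finite on a neighbourhood of $0$; in finite dimensions its hypograph then admits a supporting hyperplane at $(0,\gamma)$, producing $y\in M^\ast$ with $v(u)\leq \gamma+\langle y,u\rangle$ for every $u\in M$.

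It remains to check that $y$ is dual feasible with objective $\gamma$. Plugging $u=\lambda m$ with $m\in M^+$, $\lambda>0$ and using $v(u)\geq v(0)$ (relaxing the $M^+$-slack only enlarges the feasible set) yields $\langle y,m\rangle\geq 0$, so $y\in (M^+)^\ast$. Plugging $u=A(x_0)-b$ for an arbitrary $x_0\in L^+$ makes $x_0$ feasible for $v(u)$, whence
\begin{equation*}
\langle c,x_0\rangle \leq v(u) \leq \gamma + \langle y,A(x_0)-b\rangle;
\end{equation*}
rearranging gives $\langle A^\ast(y)-c,x_0\rangle\geq \langle y,b\rangle-\gamma$ for all $x_0\in L^+$, and scaling $x_0\mapsto \lambda x_0$ with $\lambda\to\infty$ forces $A^\ast(y)-c\in (L^+)^\ast$ while the choice $x_0=0$ forces $\langle y,b\rangle\leq \gamma$. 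Weak duality upgrades this to equality. The main obstacle is the supergradient-existence step: without Slater's condition one could have $v(0)$ finite but $v\equiv -\infty$ in every neighbourhood of $0$, which blocks the supporting-hyperplane argument; Slater's hypothesis is precisely the geometric input that rules this pathology out.
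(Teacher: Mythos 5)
The paper does not prove this statement at all: it is imported verbatim from \cite[Theorem 4.7.1]{Gaertner2012} and used as a black box in the appendix, so there is no in-paper argument to compare against. Your proposal is the standard perturbation-function proof of conic strong duality and it is essentially correct: weak duality is right, the value function $v$ is concave, Slater's condition gives $v>-\infty$ on a neighbourhood of $0$, and the supergradient $y$ at $0$ is shown to be dual feasible with objective $\gamma$ by exactly the substitutions you describe ($u=\lambda m$ for $y\in (M^+)^\ast$, $u=A(x_0)-b$ with the scaling $x_0\mapsto\lambda x_0$ for $A^\ast(y)-c\in (L^+)^\ast$, and $x_0=0$ plus weak duality for the value). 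Two small points worth tightening. First, ``$v$ is finite on a neighbourhood of $0$'' is slightly more than you have literally established (you have $v>-\infty$ there); but $v(0)=\gamma<\infty$ together with $v>-\infty$ near $0$ and concavity forces $v<+\infty$ everywhere, since $v(u')=+\infty$ combined with $v(-\epsilon u'/\|u'\|)>-\infty$ would push $v(0)$ to $+\infty$ along the segment --- so $v$ is proper concave with $0\in\operatorname{int}(\operatorname{dom}v)$ and the non-vertical supporting hyperplane exists. Second, the paper's definition of an interior point also demands $\tilde x\in\operatorname{int}(L^+)$, whereas your argument only uses $b-A(\tilde x)\in\operatorname{int}(M^+)$ because you never perturb the $L^+$ constraint; this is not a gap --- you are simply proving the theorem under a weaker hypothesis, which is consistent with the cited statement.
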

If $M^+ \neq \{0\}$, $\tilde x$ is an interior point if $\tilde x \in \mathrm{int}(L^+)$ and $b - A(\tilde x) \in \mathrm{int}(M^+)$ \cite[Definition 4.6.4]{Gaertner2012}. 

\subsection{Map extension} \label{sec:conic-programming-map}

In this section, we will show that the existence of a positive extension of a positive map can be checked using a conic program. We can give the following generalization of the results in \cite{Heinosaari2012}.

\begin{thm} \label{thm:conic_extension}
Let $E \subseteq \mathbb R^d$ be a $g$-dimensional subspace such that $E \cap \operatorname{ri}(\mathbb R_+^d) \neq \emptyset$ and $E^+ = E \cap \mathbb R^d_+$. Moreover, let $(V,V^+, \mathds 1)$ be a GPT. Finally, let $\Phi: E \to A$ be given by a basis $\{e_i\}_{i \in [g]} \subset E$ of $E$, $g \in \mathbb N$, and $\{f_i\}_{i \in [g]} \subset A$ such that $\Phi(e_i) = f_i$ for all $i \in [g]$. Then, there exists a positive extension $\tilde \Phi: (\mathbb R^{d}, \mathbb R_+^{d}) \to (A, A^+)$ of $\Phi$ if and only if the conic program
\begin{align*}
\mathrm{maximize} \qquad& -\langle s, (h_1^+-h_1^-, \ldots, h_g^+-h_g^-) \rangle \\
\mathrm{subject~to} \qquad& \sum_{i \in [g]} e_i \otimes (h_i^+-h_i^-) \in E^+ \otimes_{\mathrm{max}} V^+ \\
& 1 - \langle \mathds 1, h_i^\pm \rangle \geq 0 \\
& h^{\pm}_i \in V^+ \qquad \forall i \in [g]
\end{align*}
has value $0$. Here, $s: V^g \to \mathbb R$ is given as
\begin{equation*}
\langle s, h_1, \ldots, h_g \rangle = \langle \chi_V, \sum_{i \in [g]} f_i \otimes (h^+_i-h_i^-) \rangle  = \sum_{i \in [g]} f_i(h_i^+ - h_i^-).
\end{equation*}
\end{thm}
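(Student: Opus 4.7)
The plan is to invoke Proposition \ref{prop:positivity-extension} to translate the existence of a positive extension $\tilde\Phi : (\mathbb R^d, \mathbb R^d_+) \to (A, A^+)$ into the positivity of the associated bilinear form $s_\Phi : (E\otimes V, E^+ \otimes_{\max} V^+) \to \mathbb R$, and then to recognize that the conic program above precisely tests this positivity on a bounded, cone-exhaustive subset of $E^+ \otimes_{\max} V^+$. Note that if $z = \sum_{i\in [g]} e_i \otimes h_i \in E\otimes V$, then by Equation \eqref{eq:def_sPhi} and the definition of $\chi_V$ one has $s_\Phi(z) = \sum_{i\in [g]} \langle f_i, h_i\rangle$, so the objective functional is exactly $-s_\Phi(z)$ evaluated at $z = \sum_{i} e_i \otimes (h_i^+ - h_i^-)$.

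Next I would analyze the conic program. The feasible set is nonempty (take $h^\pm_i = 0$, which gives objective value $0$), so the optimum is $\geq 0$. Moreover the constraints $h_i^\pm \in V^+$ with $\mathds 1(h_i^\pm)\leq 1$ confine the variables to a compact set, so the optimum is finite. Hence the value equals $0$ if and only if $-s_\Phi(z) \leq 0$ for every feasible point, equivalently $s_\Phi(z) \geq 0$ for every feasible point. The key observation is that $\sum_i e_i \otimes (h_i^+-h_i^-) \in E^+ \otimes_{\max} V^+$ precisely when the tensor lies in $\mathbb R^d_+ \otimes V^+$ (by Proposition \ref{prop:rightcone}), so the feasibility constraint carves out a bounded slice of the cone $E^+\otimes_{\max} V^+$ once one expresses elements in the basis $\{e_i\}$.

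The remaining step is a scaling argument. Given an arbitrary $\tilde z \in E^+ \otimes_{\max} V^+$ written as $\tilde z = \sum_i e_i \otimes \tilde h_i$, the generating property of $V^+$ lets us decompose $\tilde h_i = \tilde h_i^+ - \tilde h_i^-$ with $\tilde h_i^\pm \in V^+$. Choosing $\lambda > 0$ small enough that $\mathds 1(\lambda \tilde h_i^\pm) \leq 1$ for all $i$ produces a feasible point $\lambda \tilde z$ of the conic program. Because both the cone membership and the objective are positively homogeneous, $s_\Phi(\tilde z) \geq 0 \iff s_\Phi(\lambda \tilde z) \geq 0$. Consequently, the value of the program equals $0$ if and only if $s_\Phi$ is positive on the entire cone $E^+ \otimes_{\max} V^+$, which by Proposition \ref{prop:positivity-extension} is equivalent to the existence of a positive extension of $\Phi$ to $\mathbb R^d$.

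The bookkeeping between the two formulations, specifically making sure that the normalization $\mathds 1(h_i^\pm)\leq 1$ does not inadvertently shrink the set of directions tested, will be the main point to handle carefully; everything else reduces to unwinding definitions and applying previously established results. I expect no use of strong duality or of Theorem \ref{thm:slater} to be necessary for this characterization, though those tools would be needed if one wished to relate the dual program to concrete extensions.
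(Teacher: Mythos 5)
Your proposal is correct and follows essentially the same route as the paper's proof: both reduce the claim to Proposition \ref{prop:positivity-extension}, identify the objective with $-s_\Phi$ evaluated at $z=\sum_i e_i\otimes(h_i^+-h_i^-)$, and observe that by positive homogeneity the normalization $\mathds 1(h_i^\pm)\le 1$ loses no generality, so the program has value $0$ exactly when $s_\Phi$ is nonnegative on all of $E^+\otimes_{\max}V^+$. Your explicit scaling argument just spells out the "by linearity" step that the paper states in one line, and you are right that no duality or Slater-type argument is needed here.
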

\begin{proof}
Let $z \in E \otimes V$. Using that the $e_i$ form a basis, we can write
\begin{equation*}
z = \sum_{i \in [g]} e_i \otimes z_i,
\end{equation*}
where $z_i \in V$ for all $i \in [g]$. Since $V^+$ is proper, we can decompose each $z_i = z_i^+ - z_i^-$, where $z_i^\pm \in V^+$. Then, $\langle s, (z_1, \ldots, z_g)\rangle = s_\Phi(z)$, where $s_\Phi$ is defined as in Proposition \ref{prop:positivity-extension}.  By linearity, it suffices to restrict to $z$ such that $\langle \mathds{1}, z_i^\pm \rangle \leq 1$ for all $i \in [g]$ in order to check positivity of $s_\Phi$.  Thus, the conic program has value $0$ if and only if $s_\Phi$ is positive. The assertion then follows from Proposition \ref{prop:positivity-extension}.
\end{proof}

\begin{remark}
Of course, we could write down a trivial conic program for map extension just checking whether the corresponding tensor $\phi^\Phi$ is in $(E^+)^\ast \otimes_{\min} A^+$ using Proposition \ref{prop:positivity-extension}. However, checking membership in this cone might be hard in practice. If $E^+$ is a polyhedral cone, 
\begin{equation*}
    \sum_{i \in [g]} e_i \otimes (h_i^+-h_i^-) \in E^+ \otimes_{\mathrm{max}} V^+ 
\end{equation*}
can be checked as
    \begin{equation*}
    \sum_{i \in [g]} \langle \alpha_j, e_i \rangle (h_i^+-h_i^-) \in V^+ \qquad \forall j,
\end{equation*}
where the $\alpha_j$ are the extremal rays of $(E^+)^\ast$ (compare to Lemma \ref{lem:elements-in-the-max}). This is arguably easier since it does not involve checking the membership in a tensor cone directly. Moreover, it recovers the result that in quantum mechanics, where $V^+ = \mathrm{PSD}_d$: compatibility via map extension can be checked using a semidefinite program \cite{Heinosaari2012}. 
\end{remark}

\begin{thm} The conic program in Theorem \ref{thm:conic_extension} is feasible and satisfies strong duality.
\end{thm}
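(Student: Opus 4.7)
The plan is to verify the hypotheses of Theorem \ref{thm:slater}: feasibility, finiteness of the optimal value, and existence of a strictly feasible (Slater) point.

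Feasibility is immediate: the choice $h_i^\pm = 0$ for all $i \in [g]$ satisfies every constraint, since $0 \in V^+$, $0 \in E^+ \otimesmax V^+$, and $1 - \langle \mathds 1, 0\rangle = 1 \geq 0$. For finiteness of the optimal value, I would use the constraints $h_i^\pm \in V^+$ and $\langle \mathds 1, h_i^\pm \rangle \leq 1$, together with the fact that on $V^+$ one has $\|x\|_V = \mathds 1(x)$, to conclude that each $h_i^\pm$ lies in a compact set. The objective is a continuous linear functional, so its supremum over the feasible region is finite (and in fact bounded by $2\sum_i \|f_i\|_A$).

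For Slater's condition, the goal is to produce a point $x = (h_i^\pm)$ in the interior of the feasible region. Pick $u \in E \cap \operatorname{ri}(\mathbb R_+^d)$, which exists by hypothesis; this point lies in $\operatorname{int}(E^+)$ since $E^+ = E \cap \mathbb R_+^d$ is full-dimensional in $E$. Expand $u = \sum_{i=1}^g c_i e_i$ in the given basis. Pick an order unit $w \in \operatorname{int}(V^+)$, and set
\[
h_i^+ := (c_i^+ + \epsilon)\, \delta\, w, \qquad h_i^- := (c_i^- + \epsilon)\, \delta\, w,
\]
where $c_i^\pm = \max(\pm c_i, 0)$ and $\epsilon, \delta > 0$. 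Each $h_i^\pm$ is a strictly positive multiple of the interior point $w$, hence lies in $\operatorname{int}(V^+)$. Moreover $h_i^+ - h_i^- = c_i \delta w$, so
\[
\sum_{i=1}^g e_i \otimes (h_i^+ - h_i^-) = \delta\, u \otimes w.
\]
Choosing $\delta$ and $\epsilon$ small enough also ensures $\langle \mathds 1, h_i^\pm \rangle < 1$ strictly, so the scalar constraints are strictly feasible.

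The remaining (and main) step is to check that $u \otimes w \in \operatorname{int}(E^+ \otimesmax V^+)$. The key lemma I would invoke is: for proper cones $C_1, C_2$ in finite-dimensional spaces, if $u \in \operatorname{int}(C_1)$ and $w \in \operatorname{int}(C_2)$ then $u \otimes w \in \operatorname{int}(C_1 \otimesmax C_2)$. Its proof uses the duality $(C_1 \otimesmax C_2)^* = C_1^* \otimesmin C_2^*$: any nonzero $\phi \in C_1^* \otimesmin C_2^*$ can be written $\phi = \sum_k \alpha_k \otimes \beta_k$ with $\alpha_k \in C_1^+$, $\beta_k \in C_2^+$, and $\langle \phi, u \otimes w\rangle = \sum_k \alpha_k(u)\beta_k(w) > 0$, since every summand is nonnegative and they cannot all vanish (as $u, w$ are strictly positive and $\phi \neq 0$). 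Strict positivity of $\phi$ evaluated on $u \otimes w$ for every nonzero $\phi$ in the dual cone characterizes interior points, so this completes the verification of Slater's condition and strong duality follows from Theorem \ref{thm:slater}.
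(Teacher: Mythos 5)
Your argument is correct, but it proves strong duality by a genuinely different route than the paper. The paper first computes the dual program explicitly (whose variables are $y \in \mathbb R_+^{2g}$ and $z \in (E^+)^\ast \otimes_{\min} A^+$) and exhibits a Slater point \emph{for the dual}: there the interior point is almost free, because the dual constraint contains the term $\sum_i y_i \mathds 1 \otimes \mu_i^\ast$ and $\mathds 1 \otimes (1,\dots,1)$ is an order unit of $A^+ \otimes \mathbb R_+^{2g}$, so one simply scales $y$ up; finiteness of the dual value follows from the trivial lower bound $0$, and strong duality is obtained from the primal--dual interchanged version of Theorem \ref{thm:slater}. You instead verify Slater's condition directly for the primal, which requires two ingredients the paper avoids: compactness of the primal feasible set (via $\|h_i^\pm\|_V = \mathds 1(h_i^\pm) \leq 1$ on $V^+$) to get finiteness, and the lemma that $u \otimes w \in \operatorname{int}(C_1 \otimes_{\max} C_2)$ for $u \in \operatorname{int} C_1$, $w \in \operatorname{int} C_2$. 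That lemma is correctly proved: $(C_1 \otimes_{\max} C_2)^\ast = C_1^\ast \otimes_{\min} C_2^\ast$ (the min tensor product of closed cones being closed), every nonzero element of the latter is a finite sum of nonzero simple tensors of positive functionals, and each such term evaluates strictly positively on $u \otimes w$; your choice $u \in E \cap \operatorname{ri}(\mathbb R_+^d)$ does lie in the interior of $E^+$ relative to $E$, which is the correct ambient space since $M = (E \otimes V) \times \mathbb R^{2g}$. The trade-off: your approach never needs the explicit form of the dual program and in addition yields attainment of the dual optimum, while the paper's approach sidesteps the tensor-product-of-interiors lemma and the construction of a strictly positive primal point at the cost of writing out the dual. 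Both establish feasibility and a zero duality gap, so the theorem follows either way.
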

\begin{proof}
In the following, we identify $z = \sum_{i \in [2g]}z_i \otimes \mu_i \in  L \otimes \mathbb R^{2g}$ with the vector $(z_1, \ldots z_{2g})$, where $\{\mu_i\}_{i \in [2g]}$ is an orthonormal basis of $\mathbb R^{2g}$. Comparing the conic program to Definition \ref{defi:conic}, we identify
\begin{align*}
M & = (E \otimes V) \times \mathbb R^{2g} \\
M^+ & = (E^+ \otimes_{\mathrm{max}} V^+) \times \mathbb R_+^{2g}   \\
L &= V \otimes \mathbb R^{2g} \\
L^+ &= (V^+)\otimes \mathbb R_+^{2g} \\
c & = -s \\
b &= (0, \underbrace{1, \ldots, 1}_{2g}) \in (E \otimes V) \times \mathbb R^{2g} \\
A(h_1^\pm, \ldots, h_g^\pm) & = \left(- \sum_{i \in [g]} e_i \otimes (h^+_i-h^-_i), \langle \mathds{1}, h^\pm_1 \rangle, \ldots, \langle \mathds{1}, h^\pm_g \rangle\right).
\end{align*}
It can be verified that dual conic program is thus given by
\begin{align*}
\mathrm{minimize} \qquad& \sum_{i \in [2g]} y_i \\
\mathrm{subject~to} \qquad& s + \sum_{i \in [2g]} y_i \mathds{1} \otimes \mu_i^\ast + B(z) \in A^+ \otimes \mathbb R_+^{2g} \\
& z \in (E^+)^\ast \otimes_{\mathrm{min}} A^+ \\
& y_i \in \mathbb R_+ \qquad \forall i \in [2g].
\end{align*}
Here, $B(z) \in A \otimes \mathbb R^{2g}$ is given as $B(z)(h_1, \ldots, h_{g}) = \langle z, \sum_{i \in [g]} e_i \otimes h_i \rangle$ and $\mu^\ast_i$ is the dual basis of $\epsilon_i$. Letting $y_1 = \ldots = y_{2g}$ and realizing that $\mathds{1} \otimes (1, \ldots 1)$ is an order unit in $A^+ \otimes \mathbb R_+^{2g}$, for any $z \in (E^+)^\ast \otimes_{\mathrm{min}} A^+$ we can find a $y_1 > 0$ such that 
\begin{equation*}
s +  y_1 \mathds{1} \otimes (1, \ldots, 1) + B(z) \in \mathrm{int}\left(A^+ \otimes \mathbb R_+^{2g}\right).
\end{equation*}
This is true, since the order unit is an interior point of $A^+ \otimes \mathbb R_+^{2g}$, hence there is a $y_1$ such that 
\begin{equation*}
\frac{1}{y_1} (s + B(z)) +  \mathds{1} \otimes (1, \ldots, 1) \in \mathrm{int}\left(A^+ \otimes \mathbb R_+^g\right).
\end{equation*}
Since the interior points of $A^+ \otimes \mathbb R_+^g$ are those points $w$ such that $\langle w, x \rangle > 0$ for all $x \in V^+ \otimes \mathbb R_+^{2g} \setminus \{0\}$, multiplication by $y_1$ preserves the fact that the point is in $\mathrm{int}\left(A^+ \otimes \mathbb R_+^{2g}\right)$. Therefore, the dual problem has an interior point. This also implies that the value of the dual program is finite, since it is lower bounded by $0$. The remarks at the beginning of \cite[Section 4.7]{Gaertner2012} imply that Theorem \ref{thm:slater} still applies if we interchange the primal and the dual problem. Thus, the assertion follows.
\end{proof}

\subsection{Computing \texorpdfstring{$\norm{\cdot}_c$}{norm-rho}} \label{sec:computing-rho}

Finally, we note that the norm $\norm{\cdot}_c$, introduced in Theorem \ref{thm:effect-tensors} and Proposition \ref{prop:cross_norms}, can also be computed by a conic program, namely: for any $\bar \phi \in E_g^\ast \otimes A$, $-\norm{\bar \phi}_c$ is the value of the following conic program
\begin{align*}
    \mathrm{maximize}\qquad& - \lambda\\
    \mathrm{subject~to} \qquad& \lambda \check 1_g \otimes \mathds 1 - \bar \phi \in (E^+_g)^\ast \otimes_{\min} A^+ \\
    & \lambda \in \mathbb R_+
\end{align*}
Since $\norm{\bar \phi}_c$ is finite and since $\lambda \check 1_g \otimes \mathds 1 - \bar \phi \in \operatorname{int}{(E^+_g)^\ast \otimes_{\min} A^+}$ for $\lambda$ large enough since $\check 1_g \otimes \mathds 1 \in  (E^+_g)^\ast \otimes_{\min} A^+$ is an order unit, strong duality holds by Theorem \ref{thm:slater}. The dual conic program is
\begin{align*}
    \mathrm{minimize}\qquad& - \langle \bar \phi, y \rangle\\
    \mathrm{subject~to} \qquad& 1- \langle \check 1_g \otimes \mathds 1, y \rangle \geq 0 \\
    & y \in E_g^+ \otimes_{\max} V^+
\end{align*}
The membership in $E_g^+ \otimes_{\max} V^+$ can be decided only evaluating the equations in Equation \eqref{eq:entrywise-condition} by Lemma \ref{lem:elements-in-the-max}. So we can give the alternative formulation of the dual conic program which only checks membership in $V^+$:
\begin{align*}
    \mathrm{minimize}\qquad& - \langle \bar \phi,  1_g \otimes y_0 + \sum_{i = 1}^g c_i \otimes (y_i^+ - y_i^-) \rangle\\
    \mathrm{subject~to} \qquad& 1- \langle \mathds 1, y_0 \rangle \geq 0 \\
    & y_0 + \sum_{i = 1}^g \epsilon_i(y_i^+ - y_i^-) \in V^+ \qquad \forall \epsilon \in \{\pm 1\}^g\\
    & y_0, y_i^\pm  \in  V^+ \qquad \forall i \in [g].
\end{align*}
Due to Theorem \ref{thm:effect-tensors}, this recovers the result that in quantum mechanics, where $V^+ = \mathrm{PSD}_d$, compatibility of effects can be checked using a semidefinite program (see e.g.~\cite{wolf2009measurements}). It also recovers the result that compatibility in GPTs can be checked with a conic program \cite{Plavala2016}, although the one we give here is different.

\bibliographystyle{alpha}
\bibliography{spectralit}
\end{document}